\newtheorem{theorem}{Theorem}[section]
\newtheorem{definition}[theorem]{Definition}
\newtheorem{lemma}[theorem]{Lemma}
\newtheorem{proposition}[theorem]{Proposition}
\newtheorem{corollary}[theorem]{Corollary}
\newtheorem{conjecture}[theorem]{Conjecture}
\def\BQP{\mathbf{BQP}}
\def\BPP{\mathbf{BPP}}
\def\PP{\mathbf{PP}}
\def\NP{\mathbf{NP}}
\def\P{\mathbf{P}}
\def\L{\mathbf{L}}
\def\ParL{\mathbf{\oplus L}}
\def\FH{\mathbf{FH}}
\def\NC{\mathbf{NC}}
\def\QNC{\mathbf{QNC}}
\def\Bb#1#2{{BP}_{#2}(#1)}
\def\Postb#1#2{{Post}_{#2}(#1)}
\def\DQC#1{\mathbf{BQ}[#1]\mathbf{P}}
\def\PDQC#1{\mathbf{PQ}[#1]\mathbf{P}}
\def\H{\mathbf{H}}
\def\0{\mathbf{0}}
\def\a{\mathbf{a}}
\def\b{\mathbf{b}}
\def\c{\mathbf{c}}
\def\d{\mathbf{d}}
\def\e{\mathbf{e}}
\def\p{\mathbf{p}}
\def\r{\mathbf{r}}
\def\s{\mathbf{s}}
\def\t{\mathbf{t}}
\def\x{\mathbf{x}}
\def\X{\mathbf{X}}
\def\y{\mathbf{y}}
\def\Y{\mathbf{Y}}
\def\z{\mathbf{z}}
\def\cFS{\mathcal{FS}}
\def\IQP{\mathcal{IQP}}
\def\cC{\mathcal{C}}
\def\Code{\mathcal{C}}
\def\cE{\mathcal{E}}
\def\cG{\mathcal{G}}
\def\cL{\mathcal{L}}
\def\cM{\mathcal{M}}
\def\cO{\mathcal{O}}
\def\cP{\mathcal{P}}
\def\cQ{\mathcal{Q}}
\def\cT{\mathcal{T}}
\def\cU{\mathcal{U}}
\def\cV{\mathcal{V}}
\def\cX{\mathcal{X}}
\def\cZ{\mathcal{Z}}
\def\CC{\mathbbm{C}}
\def\FF{\mathbbm{F}}
\def\NN{\mathbbm{N}}
\def\RR{\mathbbm{R}}
\def\ZZ{\mathbbm{Z}}
\def\Pr{\mathbbm{P}}
\def\Ex{\mathbbm{E}}
\def\ketbra#1#2{\vert#1\rangle\langle#2\vert}
\def\ket#1{\left|#1\right>}
\def\bra#1{\langle#1\vert}
\def\span#1{\left<~#1~\right>}
\def\eps{\varepsilon}
\def\rhoS{\rho_{\rm start}}
\def\rhoS{\rho_{\mbox{start}}}
\def\naive{\emph{na\"\i{}ve}}
\def\etc{\emph{\&c.}}
\def\ie{\emph{i.e.}}
\def\eg{\emph{e.g.}}
\def\cf{\emph{cf.}}
\def\viz{\emph{viz}}
\begin{document}

\pagestyle{empty}

\begin{center}

\vspace{20mm}

{\huge \bf Quantum Complexity~:\\[2mm] restrictions on algorithms\\ and architectures}

\vspace{70mm}

{\Large \textbf{Daniel James Shepherd}}

\vspace{80mm}

A dissertation submitted to the University of Bristol in accordance with the requirements of the degree of Doctor of Philosophy (PhD) in the Faculty of Engineering, Department of Computer Science, July 2009.

\vspace{8mm}

\begin{flushright}40,000 words\end{flushright}

\end{center}


\pagebreak


\begin{center}
\vspace{16mm}

{\huge \bf Quantum Complexity~:\\[2mm] restrictions on algorithms\\ and architectures}

\vspace{8mm}

{\textbf{ \Large{Daniel James Shepherd,} \normalsize{MA (Cantab)}}}

\copyright ~2009

\vspace{7mm}

{\Large \bf Abstract}

\end{center}

\bigskip
\vbox{
We study discrete-time quantum computation from a theoretical perspective.

We describe some frameworks for universal quantum computation with limited control---namely the one-dimensional cellular automaton, and the qubit spin-chain with all access limited to one end of the chain---and we obtain efficient constructions for them.  These two examples help show how little control is necessary to make a universal framework, provided that the gates which are implemented are noiseless.  It is hoped that the latter example might help motivate research into novel solid-state computing platforms that are almost totally isolated from couplings to the environment.

Recalling concepts from the theory of algorithmic complexity and quantum circuits, we show some positive results about the computational power of the so-called ``one clean qubit'' model and its ensuing naturally defined bounded-probability polytime complexity class, showing that it contains the class $\ParL$ and giving oracles relative to which it is incomparable with $\P$.

We study quantum computing models based on the Fourier Hierarchy, which is a conceptually straightforward way of regarding quantum computation as a direct extension of classical computation.  The related concept of the \emph{Fourier Sampling oracle} provides a subtly different perspective on the same mathematical constructions, again establishing the centrality of the Hadamard transform as one way of extending classical ideas to quantum ones.  We examine quantum algorithms that naturally employ these concepts, recasting some well-known number-theoretic algorithms into these models.  In particular, a detailed example is given of how, using only gates that would preserve the computational basis, it is possible to render a version of Shor's algorithm where initialisation and readout are performed in the Hadamard basis.

In a similar vein, we study models based on the Clifford-Diagonal Hierarchy, and introduce the \emph{$\IQP$ oracle}, illustrating that temporal complexity need not be necessary for some notions of quantum complexity.  We examine simple protocols that arise from these notions, in particular providing an example of a protocol which it is hoped could be of significant use in testing quantum computers that are rather limited in terms of computational depth.  We also provide some analysis of the classical techniques that approximate the signalling required within such protocols, arguing that some specifically `quantum' complexity can appear in the absence of temporal structure.
}


\pagebreak


\begin{center}
\vspace{16mm}

{\Large \bf Acknowledgements}

\end{center}

\bigskip

I'd like to thank Richard Jozsa, my supervisor at Bristol, who has provided continual support and encouragement throughout my time here.  My time at the university has been made especially enjoyable by the friendship and interesting discussions with many university colleagues, and special mention should go to Michael Bremner, Sean Clark, Rapha\"el Clifford, Toby Cubitt, Aram Harrow, Nick Jones, Richard Low, Will Matthews, Ashley Montanaro, Tobias Osborne, and Tony Short, for patiently putting up with (and even encouraging) so many of my crazy suggestions and distractions.

Thanks also go to my colleagues at CESG, GCHQ, the Heilbronn Institute, and elsewhere, some of whom made possible my research through arranging funding and allowing me much liberty in direction, and many more of whom offered significant support and friendship throughout.  Peter Smith in particular deserves a special mention for introducing me to quantum algorithmics in the first place.

I'm grateful to all those who have invited me to speak on the subjects of this research.  Naturally, my co-authors have been a big help in the development of ideas and in the process of writing~: I thank Torsten Franz also for some of the images used in \S\ref{sect:QCA}.  Thanks are due to Chris Major, who taught me to code in Java and helped with the software I developed for graphical representation of the architecture described in \S\ref{sect:spinChains}.  

Of course, this research would have been frustrating without LANL's quant-ph archive and Google's search engine, and the write-up would not have been so straightforward without Donald Knuth's \LaTeX{} typesetter.

\bigskip
This thesis is dedicated to my sister, Anna, with much love.

\medskip
\hfill  $\mathcal{DJS}$, 2009.


\pagebreak


\begin{center}
\vspace{20mm}

{\Large \bf Author's Declaration}

\end{center}
\vspace{5mm}

I declare that the work in this dissertation was carried out in accordance with the Regulations of the University of Bristol. The work is original, except where indicated by special reference in the text, and no part of the dissertation has been submitted for any other academic award. Any views expressed in the dissertation are those of the author.
\\[20mm]

\begin{tabular}{ll}
SIGNED:&.........................................\\[10mm]
DATE:  &.........................................
\end{tabular}


\pagebreak


\onehalfspacing
\pagestyle{plain}

\tableofcontents  \label{sect:contents}
\setcounter{chapter}{-1}

\clearpage
\pagebreak
\setboolean{@twoside}{true}

\cleardoublepage

\chapter{Preface}  \label{chap:intro}

\section{Overview}  \label{sect:overview}

Quantum algorithmics became widely recognised as a subject in its own right after the publication in 1994 of Shor's Algorithm \cite{lit:Shor95} and later Grover's Algorithm \cite{lit:Grover96} in 1996, these in turn having been inspired by the Deutsch-Jozsa algorithm of 1992.  These ideas underpin algorithmic primitives that illustrate a superiority of quantum information processing over classical information processing for solving certain problems whose statements and solutions are definable in purely classical terms, that is, without reference to the theory of Quantum Information Processing.  

This subject relates to many other disciplines, and as such enables many different valid approaches to be made to the potential `real-world solution' of such problems.
We employ a mathematical methodology (based on complexity theory) to formulate some new algorithmic constructions.  It is worthwhile briefly exploring some of the more philosophical issues surrounding the subject of quantum information and algorithmics before engaging with the mathematics.
Thus Chapter~\ref{chap:approach} is written in a non-rigorous style, freely borrowing notions from a range of authors, simply to put in place a few of the concepts that will be referred to in the more formally written, mathematically oriented, later sections.  

Following that, our goal is to cast some quantum algorithms into particular structures or frameworks that reflect some kind of physical limitation.  The motivation for doing this is to obtain new insight into such questions as
\begin{itemize}
  \item 
Which physical limitations do not significantly inhibit quantum computation?  What is the `simplest' architecture for a quantum computer?  (Chapter~\ref{chap:UCLC}.)

  \item
How useful are mixed states in quantum computing?  (Chapter~\ref{chap:PMC}.)

  \item
Which physical limitations enable a ready comparison with classical computation?  Which quantum algorithms are `close' to being classical?  What is the `simplest' quantum subroutine?  (Chapter~\ref{chap:FH}.)

  \item
Which physical limitations correspond to natural structures within the underlying mathematics?  What is the `simplest' fundamentally quantum protocol?  (Chapter~\ref{chap:IQP}.)
\end{itemize}

The answers we give to these questions are by no means complete, unconditional, or uncontroversial; rather, they present a particular style and particular ways of thinking about quantum complexity that may be useful in the development of new algorithms or in the future design of quantum computing architectures.

\newpage
\section{Previous publications}  \label{sect:previous}

Much of the content of this dissertation has been published previously, and some of it is joint work.

\subsubsection*{Chapter \ref{chap:UCLC}}

The first part is based on my paper \cite{me:UPQCA}, ``Universally programmable quantum cellular automaton''.  This was joint work with Torsten Franz and Reinhard Werner.  It was the first paper to give an explicit construction for a `universal' one-dimensional cellular automaton, in the physically-motivated sense introduced by Schumacher and Werner (2004).

\subsubsection*{Chapter \ref{chap:PMC}}

The example of the ``One pure qubit'' model contains material from an unpublished paper of mine that is available from the quant-ph archive, \href{http://arXiv.org/abs/quant-ph/0608132}{quant-ph/0608132}.

\subsubsection*{Chapter \ref{chap:FH}}

This draws heavily from my paper \cite{me:RQLN}, ``On the role of Hadamard gates in quantum circuits''.  The theme of the `Fourier Hierarchy', introduced by Yaoyun Shi (2003) and used in that paper, is developed further in this dissertation.  Following \cite{lit:BV97}, the term \emph{Fourier Sampling Oracle} is used here.

\subsubsection*{Chapter \ref{chap:IQP}}

This is based on my paper \cite{me:IQC}, ``Temporally unstructured quantum computation''; a joint work with Michael Bremner.  In that paper, we introduced a new quantum protocol using the `IQP model', on which basis the term \emph{IQP Oracle} is defined.  The work is recounted in this dissertation, with a slightly different emphasis. 



\newpage
\section{Notations}  \label{sect:notations}

Here we collect together some of the notations used in the dissertation that are perhaps not standard.

Pauli matrices are denoted $X$, $Y$ and $Z$.  These often occur as unitary transforms, but they are also Hermitian operators which may be used to define projectors.  For example, $\frac{1+Z}2$ is a projector from a two-dimensional space to a one-dimensional space, written in the Dirac notation as $\ketbra00$.
\begin{eqnarray*}
  X := \left( \begin{array}{cc} 0&1\\1&0 \end{array} \right),~~
  Y := \left( \begin{array}{cc} 0&-i\\i&0 \end{array} \right),~~
  Z := \left( \begin{array}{cc} 1&0\\0&-1 \end{array} \right).
\end{eqnarray*}
Subscripts on such symbols are used to indicate which qubits they pertain to.  For example, $X_a$ can refer to a unitary transform that `flips' bit $a$ according to $\ket0 \leftrightarrow \ket1$ by applying $X$ to the qubit labelled by $a$, or alternatively $X_a$ can refer to an Hermitian operator that `observes' the qubit labelled by $a$.
The Hadamard operator (or matrix) is given by
\begin{eqnarray*}
  H  &:=&  (~X+Z~)/{\sqrt2}.
\end{eqnarray*}

Many of the unitary operators we employ are `controlled gates', whose matrix representation with respect to the computational basis takes the form
\begin{eqnarray*}
  \left( \begin{array}{cc} 1&0\\0&U \end{array} \right).
\end{eqnarray*}
Such a matrix may be written $\Lambda(U)$, where the $\Lambda$ symbol denotes `control' (informally, ``apply gate $U$ to some qudit conditioned on some other control qubit being set'').  
When there are multiple controls, we use a superscript to count them, so $\Lambda^2(X) = \Lambda(\Lambda(X))$ for example would denote the \emph{Toffoli} gate.
Superscripts on unitaries (\emph{e.g.} $U^2$) denote sequential application of a unitary transform to a qudit, which is the same concept as raising to a power, algebraically.  Where parallel application is intended, we write $U^{\otimes 2}$ to mean $U$ applied on both of two qudits in parallel.
Again, subscripts are generally used to indicate which qudits are acted to be on by $U$ and which qubits are `controls' for the gate (the ordering of control qubits is immaterial).  Ranges of qudits may be specified for large unitaries, \eg{} $\Lambda^4_{[1..4]}(U_{[5..7]})$ would denote applying a three-qudit unitary $U$ across sites numbered 5 through 7, conditional on the four qubits in sites 1 through 4 being in the state $\ket{1111}$.

\cleardoublepage

\chapter{Approach to Complexity}  \label{chap:approach}

This chapter presents a little background to quantum complexity~: it is by no means a complete introduction.  It can readily be skipped by the reader already familiar with the field.

\section{Ontology}  \label{sect:ontology}

There is famously much diversity in describing what various quantum statements might really be saying about the world.  Bearing in mind the basic notions of statistical mechanics and of Everett's interpretation (see \cite{lit:Everett}), we will begin by providing a brief sketch of \emph{how one might choose to understand the ontology of quantum processes}, with the hope that this may help provide clarity for some of the phraseology used later in this dissertation.  (Our goal in this opening section is to `tell one story about reality', rather than contrast the various options.)

\subsection{A model for dynamics}  \label{sect:model-dynamics}

Let TIME be modelled as a real number line, parameterised by $t$.
Consider a \emph{state vector} $\Psi = \Psi(t)$, a mathematical function of TIME, whose role is to encapsulate the total description of all that may be said about a system; that is, a complete objective description of a system.
One may wish to ask questions about how the system \emph{evolves} with time, and this line of thinking we refer to as DYNAMICS.
\begin{eqnarray*}
   \Psi' &=& \frac{d\Psi}{dt}.
\end{eqnarray*}
There is an underlying anticipation that this model should provide a way of approximating reality, using smooth functions for the state vector.  Of course there are plenty of reasons to think that the model is far too \naive{} to capture anything like the `whole' of physics, not least because the model for TIME here is entirely non-relativistic.
These concerns aside, we see that the immediate \emph{ontological} problem intrinsic to the model arises from the notion of \emph{linearity}.  That is, since (real) analysis makes it clear that the derivative operator is linear, we have it axiomatically that
\begin{eqnarray*}
   \left( \Psi + \Phi \right)' &=& \Psi' + \Phi',
\end{eqnarray*}
whatever $\Psi$ and $\Phi$ might really be; but the model so far says nothing about the meaning of the $+$ signs on either side of this equation.  
Put another way, there is nothing mysterious about why DYNAMICS should be linear, but linearity itself does not come automatically equipped with a physical (semantic) interpretation.

If the role of $\Psi$ is taken to be the encapsulation of all that describes a system at a given time, and if $\Phi$ is supposed to be of the same category, then we see that these symbols are indicating potential different \emph{possible} configurations of the same system.  Then it is most natural to infer that $\Psi + \Phi$ might encode a state of the system `being $\Psi$ and/or also being $\Phi$', and so we might associate linearity with the intuitive (yet quantitative) notion of \emph{probability}.

The underlying philosophical notions of Probability Theory are notoriously difficult to define rigrously (\cf \cite{lit:Fuchsthesis}).  With the `Frequentists', we could attempt to reject all notions of probability that do not ultimately depend upon the counting of ontologically real conditions or events.
Alternatively, with the `Bayesians', we could adopt Probability Theory as a means of describing relationships between prior and posterior subjective states, according to experimental data.  A problem with the first approach is that it is not especially powerful, since it is limited in scope to cases where there is something definite to count.  A problem with the second approach is that it is hard to find a meaning for a prior distribution, and even then, the resulting posterior distribution depends heavily on the choice of experiments made and data collected.
Instead, we could choose to overlook the precise meaning of probability for the time being, identifying it simply as \emph{propensity} in the modern sense (\cf~\cite{lit:Alb06}), and simply ask about the possible forms of solutions to the equation above.

\subsection{Classical computation}  \label{sect:classical-comp}

For a \emph{classical} model, one would take $\Psi$ and $\Phi$ to be stochastic vectors (assumed finite-dimensional for this discussion) and, on discretising TIME, take the set of allowable linear transformations to be the stochastic linear maps.
Then the allowable linear combinations would be the convex ones, and the interpretation of the state vectors themselves would be as belonging to a vector space having a basis that constitutes the possible `actual' (\ie{} objective) configurations of the system; we'll call this the \emph{computational basis}.  The convex combination of `actual' states then merely denotes a probabilistic mix (a derived concept subject to whatever we later decide `probability' means).

The dynamics of this kind of computation come with no guarantee of time symmetry, since many stochastic maps do not possess an inverse.  Thus there is the possibility of `computational heating' of a state vector (increasing Shannon entropy, perhaps by setting a bit of memory to be random, for example) or `computational cooling' of a state vector (\eg{} perhaps by resetting a bit of memory to 0).  
This provides something of a backdrop for classical randomised computation.  Indeed, the classical theory of Turing machines requires little more than this for an ontological framework.  (By restricting to the rational field instead of the real field, we can even make a \emph{purely} Frequentist interpretation, because then one can normalise the vectors onto the integer lattice and speak reasonably unambiguously about \emph{counting computational paths}, interpreting probabilities in the ordinary fashion.)

There is also a notion of \emph{Reduction}, which is a subjective operation to be applied on a state vector, to reduce it (stochastically) to a computational basis vector within its support, that is, to choose to `realise' one of the possibilities for the state. 
This operation is subjective not least because it is non-linear (and therefore cannot be part of DYNAMICS), but also because its meaning depends on what we decide \emph{probability} really is.  One can think of \emph{Reduction} (also called `state collapse' in the quantum world) as a spontaneous change in the scope of what is actually being modelled, rather than a change or evolution of objective state itself, as when, for example, one chooses to consider a single possibility instead of considering many possibilities at the same time.
Yet the most important aspect of the classical model---as contrasted with the quantum alternative---is that this subjective Reduction effectively commutes with the objective DYNAMICS of the model.  In symbols, if $R$ denotes the subjective stochastic choosing of a computational basis vector, and $S$ denotes any objective stochastic transformation, then
\begin{eqnarray*}
  R( S \cdot \Psi ) &=& R( S \cdot R( \Psi ) ).
\end{eqnarray*}
The implication of this is that it makes no difference to the meaning of the computation how we understand $R$, because the action of $R$ can always be pushed through to the end of the computational procedure, and therefore effectively ignored.  This seems to be commensurate with the common understanding of what randomness really is, \ie{} a purely subjective uncertainty that can be effectively ignored until required.
The relevant maxim here is, ``Classical computation paths do not interfere.''

\subsection{Quantum computation}  \label{sect:quantum-comp}

The \emph{quantum} model, as we describe it, takes a different approach. 
For reasons of quantification, it is still appropriate to think of vector spaces with a metric.  This time, we expect $\Psi$ and $\Phi$ to be normalised vectors, \ie{} having unit Euclidean length.  The set of transformations that preserves this property is constituted by the orthogonal group, or more generally the unitary group.  
The symbol $U$ denotes for us an arbitrary unitary transform, and replaces the stochastic transform $S$ of classical dynamics.
Now there is time-symmetry, in the sense that, being a group, every transformation possesses a valid inverse.  
To ensure that the Lie algebra is algebraically closed, we may as well take the underlying field to be complex, in which case, the appropriate kind of (positive-definite) metric is the Hermitian inner product.  With respect to this inner product, the computational basis is taken to be orthonormal.  The inner product between an `object' vector and a `reference' unit vector is called an \emph{amplitude} (and we speak of the amplitude of the object in the direction of the reference).

The `specialness' of the computational basis is no longer geometrically significant~: without reference to a specific physical model, there is much symmetry within the Lie group of transformations, and so no particular reason to prefer one orthonormal basis over another.
And so, within a closed system, there is no notion of `heating' or `cooling', because the (von Neumann) entropy of a state vector is always zero.  The full unitary group acts transitively on the projective space, and so no one state is intrinsically different from any other.  Notions of entropy, entanglement, and mixedness, do not properly arise until we consider dividing a system into two parts, or consider the meta-system of `system-plus-environment'.  

By taking a tensor decomposition of a finite-dimensional system into two (or more) parts, it is well understood that notions of entaglement (Everett's ``relative states'', \cite{lit:Everett}) are possible.  Furthermore, by imposing limits on the allowable dynamics \emph{across} the two parts, and by considering one's computation to take place on the smaller of the two parts, it is possible to recreate all of the features of the classical model within the smaller component, including heating, cooling, and irreversibility.  This phenomenon of \emph{decoherence}\footnotemark{} goes some way to justify the claim that classical mechanics is a `subset' of the more complete quantum mechanics.
\footnotetext{The term \emph{decoherence} is used a little differently in so-called `non-Everettian' interpretations of quantum mechanics.}
Decoherence is both necessary for quantum computation (for example to enable a system to be cooled into an initial starting state), and yet also problematic (because it can inhibit the `quantum' features of the computation, if not precisely handled).

The notion of Reduction for a quantum system turns out to be closely related to the notion of decoherence.  
Everett derives the Born rule from simple considerations about normalisation; this being the only stochastic measure valid for all states.  This rule tells us that the action of Reduction on a quantum state consists in the stochastic choice of a computational basis vector according to the square of the modulus of the amplitude of that state in the direction of the choice.
\begin{eqnarray*}
  R(\Psi) &\rightarrow& \mathbf{b}_j  \mbox{~~~~w.p.~~~~} 
  \left| \left< \Psi, \mathbf{b}_j \right> \right|^2.
\end{eqnarray*}
Everett writes \cite{lit:Everett}, ``In other words, pure [unitary] wave mechanics, without any initial probability assertions, leads to all the probability concepts of the familiar formalism.''  
As with the classical case, this Reduction operation is not linear, and so not a part of DYNAMICS, \ie{} not to be taken as intrinsically objective.
The sometimes counterintuitive aspect of quantum information processing could then be said to derive from the fact that this Reduction does \emph{not} commute with the set of allowable transformations, in sharp contrast to the more familiar classical theory. 
This means that one is not at liberty to apply this simplifying Reduction at arbitrary stages of processing.  Indeed, \emph{a priori} one is not at liberty to apply this simplifying Reduction \emph{at all}, without a clear justification.  

The justification we need, at least for making sense of ideas within the field of computation and algorithmics, comes from \emph{measurement}.  Besides initial input preparation, measurement is the main place where decoherence becomes an essential physical feature of the description of quantum information processing.  \emph{Complete measurement} describes the action of choosing an orthonormal basis for a target system and then applying an entangling operation between that system, with respect to that basis, and a suitably prepared external system (a measurement apparatus), before separating the two systems to prevent further interaction.
The entangling operation is simply a unitary map that will have the effect of simulating decoherence of the target system in the required basis when it is subsequently considered separately from the measurement apparatus.  To be precise, measurement does more than introduce decoherence, because it also records between the target system and the measurement device a quantum correlation pertaining to the information which has been \emph{ipso facto} measured.  

The important point here is that measurement \emph{supervenes} Reduction, so that (in symbols) if $M$ denotes measurement (with respect to an unrepresented measurement apparatus) and $R$ denotes subjective Reduction, then
\begin{eqnarray*}
  M \cdot R(\Psi) = M \cdot \Psi = R( M \cdot \Psi ).
\end{eqnarray*}
Note that $M$ is objective; it is perfectly linear (indeed unitary) on the space of the system \emph{tensored} with the measurement apparatus, though it acts non-linearly if considered on the target system alone.  Thus, we use external measuring systems---and the decoherence they bring---in order to give a physical and operational meaning to the otherwise vague notion of Reduction.  When measurement happens, states `collapse', whether we like it or not.

If a quantum information process begins with a suitably decoherent initialisation procedure (such as the preparation of an array of qubits into separate unentangled computational basis states) and ends with a (complete) measurement, then it makes perfect sense to regard the Reduction process $R$ as happening at the beginning and end of the computation, where the system is `apparently classical'.  But since $R$ and $U$ do not necessarily commute, we must avoid imposing $R$ at intermediate points within the computation, in between unitary dynamics.  Pragmatically this means that we must avoid `accidental measurement', or indeed any undesired decoherence of `important' data, throughout the lifetime of the computational process.  This is what serves to distinguish a quantum computer from a classical one.  (See the lecture transcripts at \cite{lit:Democritus} for a gentle---yet remarkably effective---introduction to this kind of abstract approach.)

This provides enough of an ontological framework for the definition and analysis of quantum Turing machines and the various other conceptual devices one comes across in the theory of quantum computational complexity theory, without direct recourse to the physics of quantum mechanics itself.

\subsection{Refining the model}  \label{sect:refine-model}

\subsubsection*{Dimensions}

For convenience, we have been restricting attention to finite dimensional vector spaces, and will continue to do so for studying algorithmics, for the most part.  
On the few occasions where infinite dimensional vector spaces are more appropriate, a sufficient mathematical treatment will be given.

\subsubsection*{Time}

As well as thinking of TIME as a real line, and DYNAMICS as proceeding via the unitary Lie algebra of Hamiltonian actions on the vector space, we have also found it often convenient to \emph{discretise} TIME, working with (a countable subgroup of) the Lie group of unitary gates as though they were `atomic' transformations.  The study of algorithmics uses both notions, continuous and discrete, usually depending upon assumptions about the underlying physical architecture.  

None of our treatments uses a \emph{generally covariant} treatment of TIME as an aspect of SPACETIME, since relativistic effects are considered unlikely to be of significant philosophical relevance to complexity theories underpinned by a pragmatic control theory, and such notions require a distinctly deeper ontology to make sense.  See \eg{} \cite{book:Penrose} for a thoroughgoing guide to the geometric principles involved in `quantizing gravity'.

Mostly we shall prefer the discrete picture for TIME, since it is well adapted to discussing both classical and quantum models, whereas the continuous picture does not apply so well in the classical case.  In fact, this goes some way to illustrate how the classical model is simply not `native' to the set of assumptions that we began with in \S\ref{sect:model-dynamics}.
In the discrete picture, an `atomic' dynamic component or evolution is called a \emph{gate}, for both classical and quantum computing.

\subsubsection*{Computational paths}

It is usual within the study of classical algorithms to speak of computational paths, as alluded to previously.  Whenever a discrete time model is employed, one may understand a \emph{computational path}, in a counterfactual sense, to be the series of states that would be followed by the DYNAMICS of the computation were Reductions (with respect to the computational basis) to be made before and after each gate.  (One sometimes speaks of the Universe ``splitting into many worlds'' in this context, though this is really an artefact of the subjective Reduction process.)
For quantum algorithmics, the non-commutativity of gates with Reductions is tantamount to the maxim, ``Quantum computational paths can interfere.''

For quantum computing, it is more appropriate to regard a computational path as tracking not only the computational basis vector `realised' (counterfactually) at each point in (discrete) time, but also the amplitude that the objective state vector holds in that direction at that time~: in general, both of these kinds of information are relevant to a computational process.  The square of the modulus of the amplitude then provides the path with its own `weight', and there is also a \emph{phase}, which is the argument of the amplitude.  Significantly, phases may be negative as well as positive (and if we use an algebraically closed field, we may take them to be complex also).  Then the `interference' between computational paths derives precisely from the fact that when a set of paths is regarded together for Reduction (\ie{} when a measurement is made), it is the linear combination of paths terminating with the same computational basis vector that determines the probabilities relevant to the stochastic choice of an `output'.  In other words, we consider that there is no canonical `actual history' to a particular computation~: nothing of the sort, ``This is \emph{the path} which the computation took.''  Rather, two paths with the same final computational basis state will \emph{constructively interfere} or \emph{destructively interfere} according as to whether they have the same phase (are `in phase') or have opposite phases (are `out of phase'). 

Thus, computational paths have a distinctly more subjective flavour within a quantum process than within a classical one, arising from the fact that the computational basis is arbitrary within a quantum process, rather than a part of the objective description within a classical one.  Nonetheless, the concept has remained firmly entrenched within the conceptual framework of quantum algorithmics, and has its uses within some of the non-physical definitions in the field of algorithmic complexity.

\section{Complexity}  \label{sect:complexity}

What counts as (quantum) information, and how do we decide whether the processing that it has been subject to is `quantum'? 
To date, no truly convincing quantum computer of significant computational power has been presented, but many physical experiments have shed light on what quantum information processing might mean.

The popular method for giving quantitative rigour to the various notions of quantum information processing involves \emph{asymptotic computational complexity analysis}, which involves finding upper- and lower bounds on the resource requirements of certain algorithmic tasks (usually classically defined) in the asymptotic limit of arbitrarily large problem instances, when certain constraints apply.  Resource requirements can include a range of parameterisable constraints, most notably TIME and SPACE, in some sense.  It is appropriate that conditions for algorithmic tasks may also include certain non-physical constraints, such as quantified bounds on success probability, or costed (and well-defined) oracular access to particularly relevant mathematical functions.  Much of the literature on quantum algorithmic complexity derives from similar notions and results from the classical theory of algorithmic complexity, and many of the notions carry over (`quantize') very naturally.  

In this section, we briefly recall some of the various different elements and notions that will be useful in the forthcoming discussion.  
In particular, some mention is made of the Turing machine model and the circuit model, as these concepts are referred to throughout the dissertation; but nowhere do we use the \emph{random access memory} model, this latter being more relevant to the kinds of highly complex `large-SPACE' algorithms that are not the subject of this study. 
For more background on the classical concepts, we recommend reference to \cite{book:Papa}, and for the quantum ones, see \cite{book:NandC}.

\subsection{Architectures}  \label{sect:architectures}

Many architectures have been proposed for the construction of a quantum computer.  The earliest algorithms were considered in a model based on networks of small unitary gates, but recent years have seen ideas like the one-way quantum computer in which the non-unitary acts of measurement play a key role for data processing, or adiabatic computing in which continuous time dynamics are used.   
Studies showing how different quantum computational models can simulate each other are valuable in constructing universal paradigms.  They also provide perhaps the clearest expression of the primitives in each computational model that are responsible for generating computational power seemingly `stronger' than that of classical computation. 
Since the major obstacles against useful quantum computation are considered likely (for a long time) to be engineering difficulties in implementation, a further incentive for such alternatives in underlying architecture is to generate ideas of how to adapt the computational model to various different `limited' sets of primitives.
This thesis investigates some particular paradigms for architectures for quantum computing devices, exploring both \emph{universal computation} and \emph{limited computation}.  The emphasis is always on understanding how a particular limitation or restriction of some aspect of the computational process can (or does not) inhibit some particular kind of operational algorithmic process.

\subsection{Computational tasks}  \label{sect:comp-tasks}

\subsubsection*{Decision languages}

Usually it is possible to examine much about the computational power of some computing paradigm by asking about the complexity classes of decision languages associated to it.  In simple terms, a decision language is just a subset of some `simple' countably infinite set (usually the positive integers or the finite-length bitstrings) that can be `decided' by some operational (or more fanciful) means.
Note that the theory of \emph{computational complexity}---dealing with the resources needed to address a computational task---differs from the theory of \emph{recursion}---dealing with whether a task would be `possible' if resources were unconstrained.  Within the former theory, we will always have in mind \emph{some} pragmatic limit on some resource, such that the possibility of a machine's never halting is of absolutely no consequence.

By way of example, we recall a few common complexity classes (\cf{} \cite{book:Papa})~:
\begin{itemize}
  \item
$\cL \in \P$ if there exists a (deterministic) machine that accepts $x$ within \textbf{polynomial time} (\ie{} $\mbox{size}(x)^{O(1)}$) whenever $x \in \cL$, but which rejects those $x$ not in $\cL$.  Informally, $\P$ is often considered to be the class of languages ``efficiently decided''.

  \item
$\cL \in \L$ if there exists a (deterministic) machine that accepts $x$ within \textbf{logarithmic space} ($O(\log(\mbox{size}(x)))$) whenever $x \in \cL$, but which rejects those $x$ not in $\cL$.  (\emph{Space} here refers to the amount of computational storage/workspace required by the machine, not the space required to submit the actual input $x$.)

  \item
$\cL \in \ParL$ if there exists a (randomized, classical) machine that accepts $x$ within \textbf{logarithmic space} ($O(\log(\mbox{size}(x)))$) on an \emph{odd} number of computational paths whenever $x \in \cL$, but which accepts $x$ on an \emph{even} number of paths when it is not in $\cL$.  (Computational paths here are required to be all of equal length for a given input string $x$.)

  \item
$\cL \in \NP$ if and only if there exists a \textbf{nondeterministic} machine that accepts $x$ with some non-zero probability, within \textbf{polynomial time}, whenever $x \in \cL$.  This notion is given an operational meaning of sorts (and unambiguouly generalised in other contexts) by observing that it is equivalent to saying that for some other language $\cL' \in \P$, the item $x$ lies in $\cL$ if and only if there is some $w$ such that the concatenation item $(x,w)$ lies in $\cL'$ (and $w$ is then called the \emph{witness} to that fact).  To ensure that reductions compose, $\mbox{size}(w)$ will need to be polynomially bounded in $\mbox{size}(x)$.  Informally, $\NP$ is often considered to be the class of languages ``efficiently verified''.

  \item
$\cL \in \PP$ if there is a \textbf{probabilistic} machine that accepts $x$ with probability strictly greater than $\frac12$, within \textbf{polynomial time}, whenever $x \in \cL$, \etc{}  This class is again \emph{syntactic} in the sense that to specify a well-formed probabilistic machine is to specify a $\PP$ decision language; but it is not \emph{operational} in the sense that there is no particular way to make use of that machine to form an actual real-world decision, because the probabilities in question might turn out to be exponentially close to the $\frac12$ threshold.

  \item
$\cL \in \BPP$ if there is a probabilistic machine that accepts $x$ with probability at least $\frac34$, within polynomial time, whenever $x \in \cL$, but rejects $x$ with probability at least $\frac34$ (assuming a polynomial time bound), whenever $x \not\in \cL$.  This class is called \emph{semantic} (as opposed to \emph{syntactic}) because its definition does not make clear exactly when an arbitrary machine might happen to display the required \textbf{probability bounds} consistently, for all $x$.  For example, the existence of even a single $x$ with an acceptance probability strictly between $\frac14$ and $\frac34$ would prevent the machine in question from issuing a $\BPP$ decision language under this definition.  But the class $\BPP$ is nonetheless \emph{operational} in flavour, because by parallel or sequential repetition of the computation, when the probabilities are promised to be bounded away from $\frac12$ as described, the threshold value of $\frac34$ can be boosted to lie exponentially close to unity, still all within \textbf{polynomial time}, at which point it becomes pragmatically \emph{beyond doubt} whether or not $x$ lies within $\cL$.  \emph{Cf.} \S\ref{sect:op}.

  \item
$\cL \in \BQP$ if there is a \textbf{quantum} machine for $\cL$ that accepts or rejects $x$ with the same $\frac14$ \emph{versus} $\frac34$ probability \textbf{bounds} as for $\BPP$, again running in \textbf{polynomial time}.
Again, this is operationally meaningful independently of the details of the ontology used to interpret the meaning of probability in a quantum context, using the same ``Chernoff bounds'' argument as for $\BPP$.  
\end{itemize}

\subsubsection*{Interactive protocols}

There are tasks more general for computation than deciding whether $x \in \cL$ for a decision language, or computing a function~: simply taking a sample from a particular probability distribution constitutes a computation of sorts.  
Such tasks can sometimes be given operational roles by embedding them within multi-party \emph{protocols}.  
The complexity of such a protocol may be measured not only in terms of the computational resources required by each party, but also by the communication resources required for signalling between the parties, and intermediate storage requirements.  
The main example used in Chapter~\ref{chap:IQP} is provided by an interactive two-party protocol, rather than by a single-party algorithm.

\subsection{Turing machines}  \label{sect:Turing-machines}

For classical computing, Turing provided a rigorous foundation by making precise definitions for the kinds of machines that might be considered.  His machines are sufficiently general as to be able to simulate many other proposed paradigms.  We next sketch some of the ideas often used when thinking about Turing machines, although not all of these ideas appear in Turing's original considerations.  Equivalence between different models depends on the notion of algorithmic \emph{reduction} (expressing one task in terms of another), which we will also come to shortly.
For our purposes, a Turing machine will be an essentially classical device, having a finite (constant) number of internal states, and access to a finite (constant) number of `tapes'.  Each tape is to be thought of as a one-dimensional array, usually of bits, with certain restrictions governing the dynamics relating the tapes and the internal state of the machine.  On each tape there is to be a pointer, and `access' to the tape is via the pointer.

The usual idea for Turing machines is that they process `eager data', that is, input data which are all present at the time the machine is activated.  There are extensions in \emph{Domain Theory} for more general concepts of data processing, but these will not be relevant to the present thesis.  Furthermore, we shall be largely glossing over the important and thorny issue of error-correction, studying instead the idealised `perfect' instantiations of computing machines.

We will take an \emph{input tape} to be a read-only tape of bits.  There are some contexts where it is more preferable to allow \emph{algorithmic input} to consist of quantum data, especially where multi-party computations are being considered and quantum communication is allowed for.  But it will suffice for every topic of this dissertation to restrict algorithmic input and other communication always to be classical.  The length of the input tape (\ie{} size of the input) is usually denoted $n$.

There should be some convention for the tape so that its input bits are all contiguous and so that some sensible mechanism is allowed for to determine where the end of the input tape is located.  The input tape pointer starts at the beginning of the input tape.   In fact, general considerations of this kind apply to all tapes of a Turing machine.

Sometimes we allow for a \emph{random tape}, for a \emph{probabilistic} Turing machine, which is a read-once-read-only tape of arbitrarily long length, whose contents are set randomly when the machine commences computation.  This models a random number generator.

There is to be a \emph{work tape}, which is blank to begin with, but may be written to and read from multiple times.  The length of the work tape is usually taken to be some polynomial in $n$, but for some `smaller' computational classes (such as $\L$) it is interesting to consider work tapes whose length is limited to being logarithmic in $n$.  There are several different ways of extending this notion into the quantum realm, and usually it will be more convenient to select a specific description for the task in hand.  We recommend \cite{lit:Watthesis} as the definitive reference for space-bounded quantum computation. 

There is to be an \emph{output tape} onto which the results of computation can be written.  This tape of bits should be write-only, and is generally taken to be of arbitrary length.  Output is `achieved' when a machine halts, and we shall be studying the complexity of halting machines only.  For some computational tasks, such as deciding operationally-defined decision languages, only a single bit of output is required.  
For example, we could adopt a convention that if ``1'' is output within the time-bound then the machine is deemed to have ``accepted'' its input, the input being otherwise deemed ``rejected''.
For reductions in general, it is necessary to consider larger outputs, so that a Turing machine can act as a pre-processor or post-processor for another machine.

Sometimes we allow for an \emph{oracle tape}.  This enables the machine to have `black-box' access to some subroutine whose complexity we deliberately wish to place out of scope of analysis.  If the machine is attached to oracle $\cO$ and has data $z$ written on its oracle tape at a time when it calls its `oracle' function, then the contents of the oracle tape are to be replaced (albeit mysteriously) by the data $\cO(z)$, in unit time.  Such tapes are not, however, to be used as proxies for \emph{work tapes}, and care has to be taken when making rigorous definitions, to avoid hiding complexity in the oracular interface.  The most famous use for oracles is to separate complexity classes of decision languages which are otherwise inseparable by known analyses.  The most famous use for oracles in context of \emph{quantum} computational complexity is probably in establishing quadratic query separation (between lower bounds for classical access to the oracle and quantum access) as per Grover's algorithm (see \cite{lit:Grover96} and \cite{lit:BBBV97}).  In the quantum case, the oracle tape should constitute qubits which can be interacted with the work tape, and the oracle action should be defined carefully as a unitary action that degenerates to the classical oracle on input computational basis states.  As with other quantum generalisations, it is best to be specific whenever implementation details can make a significant difference to computational power.

\subsection{Algorithmic reduction}  \label{sect:algorithmic-reduction}

The notion of \emph{algorithmic reduction} of problems has to do with using one computing machine as a pre-processor, or oracle, for another.  Reduction is important for relating different complexity classes~: indeed, the most oft studied complexity classes tend to be the ones with suitable closure properties under reduction.  For example, it is easy to see (by composition of polynomials) that a Turing machine fitted with an oracle that decides a given language in $\P$ will not, in polynomial time, be able to compute anything that could not be computed, in polynomial time, by some (other) ordinary Turing machine not so equipped.  We write, for example, $\P^\cL$ to denote the analogue of $\P$ defined relative to the attachment of an oracle for deciding $\cL$.  If $\cL$ is itself in $\P$ then we have just seen that $\P^\cL = \P$.  More generally, when $\mathbf{A}^\cL = \mathbf{A}$ for all $\cL \in \mathbf{B}$ then we say that $\mathbf{B}$ is \emph{low} for $\mathbf{A}$.

\subsubsection*{Completeness}

There are some languages $\cL \in \P$ which have the property that $\L^\cL = \P$.  That is to say, there exists $\cL$, a `sufficiently complex' language in $\P$, that appending to a suitably designed ordinary Turing machine the `black-box' ability to decide $\cL$ immediately, enables that Turing machine to decide, in logarithmic space, the things an ordinary Turing machine would require polynomial time (and presumably polynomial space) for.  Such a language is said to be $\P$\textbf{-complete} with respect to logarithmic-space reduction.  Strictly speaking, the determination of \emph{completeness} requires not only a specification of the complexity of the preprocessing (in this case log-space processing) but also a specification of how many times, and with what adaptive control, the oracle calls are permitted to be made.  For the sake of brevity, we usually have in mind that the preprocessing will be log-space and/or poly-time, and a polynomial number of queries to the oracle are permitted, adaptively.  

Suppose $\cL \in \NP$ denotes an $\NP$\textbf{-complete} language with respect to poly-time reductions, so that $\P^\cL \supseteq \NP$.  Because examples of this form exist, it is appropriate and common practice to denote this fact with the notation $\P^\NP \supseteq \NP$.  Note that the $\NP$ appearing in the `index' here is effectively a placeholder for any $\NP$\textbf{-complete} decision language.  
(Note also that oracle use is sometimes employed not with a decision language but with an entire function, possibly probabilistic.  In the same manner, function classes can be used in the index as placeholders for particular complete functions from those classes.)

\subsubsection*{Simulation}

By using reductions together with an encoding of machine descriptions into bitstrings, Turing was able to introduce the concept of a \emph{universal} Turing machine, a concept of \emph{simulation} now entirely fundamental---indeed intuitive---to computer science.  The idea is that we can say that $\cU$ is a universal Turing machine with respect to some encoding $\cC$ if 
\begin{eqnarray*}
  \cT  &=&  \cC( t ), \\
  \cU(~ t, x ~) &=& \cT( x )
\end{eqnarray*}
whenever $t$ is a text describing a Turing machine $\cT$, and $x$ is a putative input string for $\cT$.  The `complexity' of the encoding $\cC$ itself is not so important, because it doesn't depend on any input string $x$, and so doesn't affect the asymptotics of any language or function being computed.
Thus $\cU$ is said to be capable of simulating $\cT$, because it can have effectively the same output behaviour as $\cT$, for each input $x$.  

In the case of \emph{efficient simulation}, the equality sign in the expression above is supposed to denote the fact that not only are the extrinsic machine outputs to match---$\cU$ `accepts' $(t,x)$ iff $\cT$ `accepts' $x$---but additionally that the consumption of resources is to match (\ie{} the space/time/query/randomness requirements of $\cU$ are on the same order, as a function of the size of $x$, as the corresponding requirements of $\cT$).
Thus, in specifying a computational paradigm, we are usually concerned with establishing some kind of universal machine that is capable of efficiently simulating a class of machines, with careful accounting being made of the different resources that are implicitly required.

\subsubsection*{Universality}

A machine is said to be \emph{universal for $\BQP$} if it can be used to decide any $\BQP$ language, with bounded probability on correctness of decisions, within polynomial time, and in this sense efficiently simulate a quantum version of a Turing machine.  
The (theoretic) existence of quantum Turing machines is proven in \cite{lit:BV97}. 
This is not, of course, the most powerful form of efficient simulation that we could ask of a quantum computer.  For example, being universal for $\BQP$ does not in any way guarantee that one can `manufacture the same states' in polynomial time that are `manufactured' in polynomial time by some other quantum computing architecture.  (In fact, it is a philosophically thorny issue to determine what is meant by the concept of `same state' across two different architectures, when no natural isomorphism between state spaces need exist, and when one's ontology need not even admit the existence of quantum states as objectively real.)
Instead, we can make various definitions of \emph{universality for quantum computation} by asking for a device that can efficiently simulate any other quantum device within the context of a multiparty interactive protocol, where the interfaces on such protocols are adequately specified as part of the concept of universality.  For example, an interface might limit the exchange of data to being purely classical, or it might allow for quantum data in essentially any form, or it might require that the quantum data be encoded onto two spatially colocated bosonic modes of an optical fibre, \etc{}  As before, it is necessary that the overhead of simulation in terms of resource consumption not be too large, and so the study of simulations involves the continual audit of many aspects~: physical resources (time, space, \etc); non-operational resources (nondeterminism, oracles, \etc); encodings (how $\cU$ implements $\cT$ via $\cC$; what control signals pass from software to hardware); and interfaces (what form the inputs and outputs are to take between rounds of a protocol).

\section{Circuits}  \label{sect:circuits}

Quantum circuits are a particularly good way of putting the theory of quantum computation on a mathematically rigorous footing, sometimes preferable to quantum Turing machines, for example.  The computational complexity classes are, by and large, unaffected by which paradigm one adopts, yet it is often considered more natural to work with circuits as the basic constructs.

\subsection{Classically described quantum circuits}  \label{sect:cdqc}

It is convenient to focus our discussion on quantum circuitry on \emph{qubits} (two-level systems), though in practice, circuits can be defined on larger systems or \emph{registers}.  The standard idea \cite{book:NandC} is to regard a circuit of gates acting on qubits as a device for mapping (pure, $2^n$-dimensional) quantum states onto (pure, $2^n$-dimensional) quantum states.  Thus circuits can be composed, and likewise deconstructed into their individual gate constituents.
Their deconstruction should involve gates drawn from a finite (or simply-characterised) \emph{alphabet} of possibile gates, and there are to be specific rules for using them to construct complexity classes~: the most important rule being that a quantum circuit must be `handled' via its fully explicit classical description in terms of its deconstruction into gates.

\subsubsection*{Quantum languages from classical Turing machines} 

Ultimately, the decision languages we study still arise from particular Turing machines, even when the circuit model is used.  For example, the usual approach to defining $\BQP$ would be to take a \emph{classical} Turing machine, $\cT$, which, on receiving the unary input $1^n$, outputs the classical \emph{explicit} description of a quantum circuit, $\cC_n = \cC(\cT(1^n))$.  The machine must be bounded in the resources it uses, so that the (family of) circuits thus produced can be described as \emph{uniform}.  Throughout this dissertation, we adopt the convention that \emph{uniformity} implies a logarithmic-space bound for the pre-processing machine's operation.  

Then we can decide whether a given (classical) bitstring $x$ is in $\cL = \cL(\cT)$ by inputting the quantum state $\ket{x}\ket{0}$ (in the computational basis) into circuit $\cC_{\mbox{\small size}(x)+a}$ ---where $a$ is a prescribed polynomial function of size$(x)$ describing the circuit's \emph{ancilla} requirement---and measuring the first bit of the output in the computational basis.  Provided there is the usual semantic guarantee, as with the definition of $\BPP$, that the measurement outcome is biased one way or the other with a significant (\ie{} non-negligible) bias, then the direction of this bias is (in theory) tomographically accessible within polynomial time and space; therefore it can be said to indicate \emph{operationally} whether or not $x \in \cL$.  When this guarantee is present, we say that the Turing machine $\cT$ issues the language $\cL \in \BQP$ (see~\S\ref{sect:comp-tasks}), via a uniform family of circuits.

There are a few caveats to make clear.  First of all, the output of the classical Turing machine $\cT$ should be an explicit description of the quantum circuit to be implemented, so that no complexity is hidden in this interface, and so that since the Turing machine was limited by logarithmic space, and hence polynomial time, we can be sure that the rendering of the circuit on state $\ket{x}$ ought theoretically be possible within polynomial time and space.  Secondly, the input and output of quantum information are here described explicitly in a computational basis, again to prevent those interfaces from encoding complexity which would make the definition sensitive to changes in the details.  
That said, it is worth observing that since this definition is so close to the definition used for defining uniform classical circuits for $\BPP$ (see \eg{}~\cite{book:Papa}), we can immediately use the classical theory to see that the definition is entirely stable under many natural changes to the definitions.
Because quantum circuits can easily simulate classical ones, provided only that the gate set for the quantum circuit is capable of simulating a finite gate set that is classically universal, we can encode much of the complexity of the classical Turing machine directly into the quantum circuitry.  Therefore the definition of $\BQP$ remains stable even if we allow the classical Turing machine more space, but still with a polynomial time bound.  Likewise, if we add to the circuit a measurement of \emph{all} the qubits, this output to be processed classically by another polynomial-time-bound Turing machine, the class definition remains the same.  All of this `interface stability' is well known and documented in the literature (see \eg{} \cite{lit:Watthesis}), but is especially key to the perspective taken in Chapter~\ref{chap:FH}.

\subsubsection*{Reversibility}

One difference between the way in which classical circuits are usually constructed and the `standard' way (presented above) for handling quantum circuits lies in the detail of how \emph{space} (\ie{} memory) is managed.  Classical circuits are usually presented with ancill\ae{} being brought in as necessary and then ditched after use, whereas quantum circuits are usually presented with all ancill\ae{} `declared' up front.  
Because of this, quantum gates are usually taken to be \emph{automorphisms} (unitary transforms) on unitary spaces (finite-dimensional Hilbert spaces), rather than more general quantum operators.  Perhaps the reason for this trend has to do with a desire to avoid having to process \emph{mixed} (non-pure, entropic) quantum states within a circuit, at least for the basic complexity definitions.  A change to allow the more `dynamic' use of ancill\ae{} would not, of course, affect any of the complexity classes that we care to define, provided the rules of quantum mechanics are respected, in ensuring that only \emph{completely positive trace-preserving} maps be employed as gates.  But for our present purposes, such a change introduces unnecessary complexity, and will be avoided.

\subsection{Circuit interfaces}  \label{sect:circuit-interfaces}

\subsubsection*{Oracles in quantum circuits}

Quantum circuits are naturally associated with the unitary transforms that they induce, which are to be interfaced in a standard way when defining complexity classes of decision languages, and likewise of (Boolean) functions more generally.  In defining these classes, no use is being made of quantum data \emph{outside} of the quantum circuits.  
One way in which quantum data can be conceptually `interfaced out' of a circuit (besides classically as measurement results) is with a \emph{quantum oracle}, the analogue of the kind of black-box subroutine used in classical complexity analysis.  The most common way in which these not-necessarily-operational `devices' are used again relies on the idea of quantum processing naturally extending classical processing~: a quantum oracle in the circuit model is generally taken to be (for example) a gate that acts on $m+n$ qubits and maps computational states $\ket{\x}\ket{\y}$ to $\ket{\x}\ket{\y + f(\x)}$, where $f : \FF_2^m \rightarrow \FF_2^n$ is a Boolean function, and so the gate is unitary.  The use of oracles of this kind enables comparison of quantum and classical complexity classes by relativisation, and has a natural operational interpretation in context of algorithms such as Grover's celebrated quadratic speed-up of computational exhaustion \cite{book:NandC,lit:Grover96,lit:BBBV97}.

\subsection{Universal gate-sets}  \label{sect:universal-gatesets}

It is well known that the two-qubit C-Not gate, together with the single-qubit Pauli gates, the single qubit Hadamard gate, and the single-qubit $\pi/8$ rotation are universal for quantum computing (\cf{}~\cite{lit:DBE95}).  By this we mean that for any given $c$-qubit unitary gate $G$, for any real $\eps$, one can construct a composite approximation to $G$ (up to global phase) using $\log(1/\eps)^{O(1)}$ gates and ancill\ae{}, that is within $\eps$ of $G$ under some standard metric such as trace-distance.  As an immediate corollary, in the limit of $\eps \rightarrow 0$, the entire group of (special) unitaries (for any given fixed circuit size above a fixed lower limit) is constructible.
Indeed, there is no need to include the stated poly-logarithmic convergence rate within the definition, since the Solovay-Kitaev theorem \cite{lit:Kit97,lit:DM0505} guarantees that if the group spanned by the gates is dense in the special unitary group then for constant gate size the simulation is efficient in this sense.
This notion of approximate simulation is slightly more general than asking for \emph{exact} reproduction of arbitrary $G$ with a constant circuit size, which is theoretically interesting but perhaps not so operationally (physically) meaningful.  

An important further generalisation of the notion of \emph{universal gate set} emphasises the key roles of simulation and reduction, rather than rendition of arbitrary elements from the whole of the special unitary group.  The observation~\cite{lit:Shi0205, lit:Aha0301} that any probability distribution efficiently producible using the universal gateset quoted above is also efficiently producible using just the 3-qubit Toffoli gate $(\Lambda^2 (X))$ together with the single-qubit Hadamard gate $(H)$ serves to show that construction of a group dense in the whole of the special unitary group is unnecessary for computational purposes.  It is readily shown that the span $\span{\Lambda^2 (X),H}$ is dense in the orthogonal group, when three or more qubits are used together with a single ancilla qubit. 
Thus we may consider restricting the design of quantum circuitry to use only these gates $\Lambda^2 (X)$ and $H$, without forfeiting universality.  This has the advantage of making the state space of a machine of a constant number of qubits more amenable to combinatoric analysis.  
See \cite{lit:Aha0301} for a fuller discussion of these issues.

\section{State of the Art}  \label{sect:stateofart}

\subsection{Open conjectures}  \label{sect:openconj}

At the time of writing, there are no known proofs for any of the following conjectures.  Nonetheless, it is convenient to adopt language that assumes these conjectures tentatively, using phrases such as ``sub-universal'' as a kind of shorthand for ``almost certainly not universal, unless a major conjecture be falsified.''
\begin{itemize}
  \item
$\L \not= \P$;
  \item
$\P \not= \BPP$;
  \item
$\P \not= \NP$;
  \item
$\BPP \not= \BQP$;
  \item
$\NP \not\subseteq \BQP$;
  \item
$\BQP \not= \PP$;
\end{itemize}
(See, \eg{} Thm 6.4 in \cite{lit:ADH97} for the construction $\BQP \subseteq \PP$.)
All comments about universality for these kinds of classes should be held in tension with the fact that we cannot \emph{definitively prove} that all these complexity classes of decision languages are not in fact equal.\footnotemark{}  It is even possible (as far as is known) that some of them may be independent from the standard sets of axioms used in the formal foundations of mathematics!
\footnotetext{These conjectures are not regarded on an `equal footing' by many researchers.  For example, many more people seem to believe the truth of the third than the truth of the second in the list above.}

\subsection{This dissertation}  \label{sect:dissertation}

Our approach is to look for constructions that force `artificial' limits and restrictions on the resources allowed within a computational paradigm, in order to see what kinds of structures are necessary to enable probabilistic algorithms to operate within such constraints.  

Chapter~\ref{chap:UCLC} deals with some paradigms universal for $\BQP$ whose interfaces are sufficiently constrained that their universality is somewhat surprising.
There we use quantum cellular automata with particular symmetries in their dynamics.  We show that even one-dimensional structures with very little control can be understood as viable architectures for quantum computers (in the absence of noise), giving two particularly interesting examples.

Chapter~\ref{chap:PMC} considers mixed quantum states and probability distributions.  We explore the ``one pure qubit'' model of computation in particular, and establish new relativisation results as well as a construction for solving $\ParL$-complete problems there.  We begin that Chapter with an abstract discussion of probability distributions and the idea of post-selection, so as to develop some of the conceptual tools that will be of use in the rest of the dissertation.

Chapter~\ref{chap:FH} bridges a gap between universality for $\BQP$ and certain sub-universal structures, studying the Fourier hierarchy of quantum complexity.
There we employ the circuit model of computing, which is far more commonly used than are direct quantum analogues of Turing machines.  We provide a description of how to use a so-called `Fourier Sampling Oracle' with a classical pre-processor and post-processor to implement solvers for some well-known number-theoretic problems (our solvers have novel features, even though efficient quantum solutions to these problems are by no means original to this work).  

Chapter~\ref{chap:IQP} bridges the gap between $\ParL$ and $\BQP$ in a different way, emphasising the role of \emph{inherent temporal structure} in a quantum process, using the Clifford-Diagonal hierarchy.  This gives rise to a novel quantum procedure---having apparently no classically efficient analogue---for performing the role of `Prover' in a certain two-party interactive proof game.  It is our hope that this algorithm can be appreciated as being `the simplest genuinely quantum algorithm'.  To that end, we provide some analysis of attempts to approximate it classically.

\cleardoublepage

\chapter{Universal Computing with Limited Control}
\label{chap:UCLC}

In this chapter, we describe some frameworks for universal quantum computation, \ie{} paradigms that allow for simulation of $\BQP$ in polynomial time, but where the `quantum memory' is laid out in a one-dimensional array of small quantum systems (qudits), and \emph{control} over those qudits is limited in some important regard.  These constraints take the form of certain spatial and temporal symmetries in the dynamics.  In \S\ref{sect:QCA} we consider \emph{quantum cellular automata}, where the constraints require that the `same processing' happens at every site, at every time-step.  In \S\ref{sect:spinChains} we consider an architecture based on \emph{spin chains}, where the constraints prevent almost all of the computer from interacting with the outside world.
In each case, a novel construction is provided.

The main purpose in each case is to show how highly symmetric systems that lack the possibility of local addressing can nonetheless perform powerful computation, if implemented without errors.  This principle has long been established for classical systems, and the more recent theory of quantum computing systems has shown this also to carry over to the quantum case; so the main technical contribution of this work is to show that it remains valid even for \emph{one-dimensional} quantum designs, and with the particular constraints that we consider.  The presumption motivating this purpose is that by enforcing various symmetries, both spatial and temporal, we potentially broaden the range of physical architectures on which one might consider implementing a computing paradigm, and by limiting the design dimension to one, we aspire to maximise flexibility for potential implementations.

\section{Quantum Cellular Automata}  \label{sect:QCA}

In this section\footnotemark{} we discuss the role of classical control in the context of reversible quantum cellular automata, giving a one-dimensional universal construction with single cell dimension 12.
\footnotetext{This section is largely taken from my 2006 publication with Torsten Franz and Reinhard Werner \cite{me:UPQCA}.}

\subsection{Overview}  \label{sect:QCA-overview}

Cellular automata are, broadly speaking, a way of doing computation whereby data are distributed across a computer that has much translational symmetry in its dynamics, so that every `site' of the computer is doing effectively the same kind of processing.  Perhaps the most famous example of a classical cellular automaton is Conway's Game of Life, whereby each site (cell) holds a single bit, and each such bit is modified based on the settings of the neighbouring bits.  To generalise this kind of idea to a quantum setting, one asks that the update rule for changing the `state' of a cell should have a unitary behaviour.  Now when we consider an infinite lattice of cells, it is hard to conceive of a single update rule as being an actual mapping from states to states, so it is more convenient to think of the rule in the so-called Heisenberg picture, whereby its action is understood on the algebra of \emph{quantum observables} rather than on Hilbert space vectors.  The structure theorem given in \cite{lit:SW04} shows that it is always possible to regard a single transition rule as being comprised of two time-slices of applications of a finite unitary map repeated in parallel, as shown in Fig.~\ref{fig:ccQCA}.

We consider within this section two different computational models, both of which are quantum cellular automata (QCAs), \ie, distributed systems of lattice cells with a spatially homogeneous discrete time dynamical evolution of strictly finite propagation speed.  These differ from the abstract notions of one-dimensional QCAs given in \cite{lit:Wat95}, which correspond with dynamics which may be unphysical or code for arbitrary complexity at the physical layer, not having any constructive local Hamiltonian representation.  The models we use have an explicit decomposition, being physical according to the definitions given in \cite{lit:SW04}.  (See also \cite{lit:vanDamthesis} for background and the general theory of quantum cellular automata, and \cite{lit:Wie0808} for a recent survey of QCAs.)

Our two models differ from each other in the way the program operates, or more precisely how the quantum part of the computer interacts with a classical controller, being somewhat analogous to the gate model and the Turing machine model respectively.
In the gate model, the classical controller has to be comparatively powerful~: on receiving the input, it will compile the program in a version adapted to the size of the input, and actually build a quantum circuit to run it. The flexibility of this model hence largely resides in the classical controller, and the quantum computer hardware is, so to speak, scrapped after each instance. 
In contrast, a classical universal Turing machine takes its flexibility from the possibility of writing both the program and the input data on its tape for initialization.  We can apply these ideas to running a quantum cellular automaton as a computer~: on the one hand, we can use a classical compiler to select a classically described sequence of operations each of which is a QCA time step in its own right.  Such a machine will be called a \emph{classically controlled QCA} (ccQCA).  On the other hand, we can insist that program and data are written into the system by the initial preparation, after which the machine runs autonomously for a certain number of steps, and with a fixed transition rule independent of the problem. The only role left for the classical controller is then final measurement to read out the result.
It is entirely possible that the absence of classical signalling from this second model (except at initialisation and readout), coupled with temporal translational symmetry, may prove to have the pragmatic value that an implementation can be more readily isolated from decoherence channels while it is `running' its program, thereby enabling lengthy computation without explicit error-correction.

We show constructively that these two ways of programming a QCA (see \S\ref{sect:QCA-general}) are computationally equivalent.  In the proof we use a structure theorem for cellular automata obtained in \cite{lit:SW04}. This theorem holds in any lattice dimension, and so do the ideas of our construction, but we stick to the one-dimensional case as it is sufficient for bounded error quantum probabilistic computation.  We then use this equivalence to build a universal autonomous QCA, with an explicitly given transition rule, where ``universal'' means that it simulates the gate model up to polynomial overhead.

The universality of a one-dimensional QCA may be seen as surprising, since recent research (\cite{lit:YS06}) has shown that a one-dimensional cluster-state computer is always simulable classically in polynomial time, and two dimensions are therefore necessary for that computing model to transcend $\BPP$.
More importantly, the practical importance of using just one dimension in the QCA lattice has been suggested by certain authors (\cite{lit:BB0401}) not for philosophical physical reasons but for practical engineering concerns, it being much easier in many cases to design equipment to interface with a low-dimensional structure.
Besides showing universality in one lattice dimension, our construction also employs a significantly smaller cell size than that of other similar machines discussed in the literature, \cite{lit:Rau05, lit:Vlasov}.

Here is a summary of the properties of the \emph{main} construction that we present~:
\begin{itemize}
  \item 
\textbf{Universal;}  a (physically reasonable) paradigm capable of simulating quantum circuits, with polynomial overhead in most reasonable measures.  
 \item
\textbf{Discrete space, infinite or unbounded;}  there is an infinite line (lattice) of cells (qudits), $c_i$, each cell associated with the algebra of a $d$-dimensional Hilbert space, where $d$ is some constant.
  \item 
\textbf{Discrete time;}  unital homomorphisms representing discrete steps, as opposed to a Hamiltonian description.  (Care must be taken with unitaries, now that the underlying Hilbert space is potentially infinite-dimensional.)
  \item 
\textbf{Non-adaptive;}  all `software' is encoded at the input stage, thereafter dynamics are completely fixed for the universal device.
  \item 
\textbf{Reversible;}  update rule for observables is given by a unital homomorphism, ${\mathcal T}$, on the (quasi-local) operator algebra, \ie{} an algebra homomorphism that transforms rank-1 Hermitian projectors to rank-1 Hermitian projectors, conformally preserving orthogonality.  (Physically, this can be understood as generating no entropy.)  
 \item
\textbf{One-dimensional;}  the rank of the lattice is 1, so that each cell has just two neighbouring cells, and the cell indices are integers.
  \item 
\textbf{Spatially symmetric;}  ${\mathcal T}$ commutes with all lattice translations $(c_i \mapsto c_{i+1})$.
  \item 
\textbf{Temporally symmetric;}  apart from initialisation and readout, the only dynamic is ${\mathcal T}$, repeated over and over.
  \item 
\textbf{Nearest-neighbour locality;}  if $H$ is an operator supported on cells belonging to $S$, then ${\mathcal T}(H)$ is supported on cells of $S$ and their nearest neighbours in the lattice.
\end{itemize}

Recent work \cite{lit:NW08} has shown that there is a one-dimensional QCA design in the \emph{continuous time} model that requires only ten levels per cell, rather than the 12 that we use.  It is still an open problem to establish tight bounds in either case.  Other aspects of the complexity of one-dimensional continuous systems, including the \emph{local Hamiltonian problem}, are discussed further in \cite{lit:AGIK07}.

\subsection{General construction techniques} \label{sect:QCA-general}

The description of a QCA is most readily given using the Heisenberg picture, which is to say that we describe evolutions by how they transform the $C^*$-algebra of local quantum observables for the system \cite{lit:SW04}.  This transformation must always have some spatial symmetry if it is to be called a QCA.

\subsubsection*{From circuit model to ccQCA}

\begin{definition}
  A \emph{classically controlled QCA (ccQCA)} is modelled as a list of unital homomorphisms of the observable algebra associated to an infinite line (lattice) of qudits.  The symmetry requirement is that there be some full-rank group of lattice translations, each element of which commutes with each homomorphism on the list.  Quantum data stored in the lattice is processed by the sequential application of these homomorphisms. 
\end{definition}

For universality with regard to $\L$-reductions, we look for there to be a log-space Turing machine that converts the description of an arbitrary quantum circuit (given in some standard explicit form) to a description of such a list of homomorphisms, so that the effect of the quantum circuit is emulated within the cells of the ccQCA as it works through applying the homomorphisms on the list.

Consider the circuit model of quantum computation wherein qubits are present in a one-dimensional lattice (also called a `band'), and any gate may act unitarily on just two neighbouring qubits.  Such models are seen to be $\BQP$-universal, when a sufficiently complex gate-set is admitted, \emph{e.g.} as exemplified in \cite{lit:Aha0301}. 
Then there are various direct ways of implementing such circuits as classically controlled QCAs.  For example, one could envisage increasing the cell size by a constant factor so that it can effectively represent two parallel bands, one (called the `data band') for encoding the qubits of a circuit, and one (called the `pointer band') for encoding a pointer, much like the `read/write head' of a Turing machine.  The transformations of the ccQCA could manipulate the location of the pointer and then use that pointer to break the spatial symmetry of the dynamics so that individual specific neighbouring data qubit pairs (encoded in the `data band') may be addressed, as required.  The data band and pointer band can of course be regarded as one single band, by interleaving their qubits; at the expense perhaps of having fewer of the translations of the lattice commute with the homomorphisms of the ccQCA.

\subsubsection*{From ccQCA to QCA}

\begin{definition}[\emph{Cf.} \cite{lit:SW04} Def 1]
  A QCA is modelled by a unital homomorphism $\cT$ of the algebra of observables on a lattice of qudit cells (Heisenberg picture).  $\cT$ must commute with all lattice translations.
\end{definition}

For a QCA to emulate a ccQCA, we look for there to be a log-space Turing machine that converts the list of homomorphisms associated to the ccQCA into a list of bits that can be interpreted as a `program' to be loaded into the cells of the QCA, alongside the data, at time $t=0$, so that after some polynomial number of applications of $\cT$ the `program' will have interacted with the `data' so as to emulate the desired transformation.  We allow for the possibility that the physical location of the `data' in the cells after the applications of $\cT$ may be different from its starting location, but naturally there ought to be no complexity hidden in this translation. 

The main conceptual tool for understanding the decomposition is the QCA structure theorem (Theorem 6 in \cite{lit:SW04}).  This theorem guarantees the existence of a \emph{Margolus decomposition~:} two finite unitary operations ($U_i$ and $V_i$) for each of the ccQCA transition rules which implement the time evolution by sequential application to non-overlapping neighbourhoods (as indicated in \mbox{Fig.~\ref{fig:ccQCA}}).  
This saves having to reason purely in terms of unital homomorphisms.  (Note that a \emph{single} finite unitary map will not generally suffice for a QCA homomorphism in any discrete model because it will have fixed eigenvalues---with algebraic multiplicities matching geometric multiplicities---and therefore be close to a unitary map having finite order, independent of the size of the computation.) 
The structure theorem applies to nearest-neighbour ccQCAs; so given an arbitrary ccQCA, one first needs to convert it into a ccQCA with nearest-neighbour interaction, which is always possible in a trivial fashion by merging cells and enlarging the dimension of the qudits that form the lattice.

\begin{SCfigure}
  \centering
     \includegraphics[width=75mm]{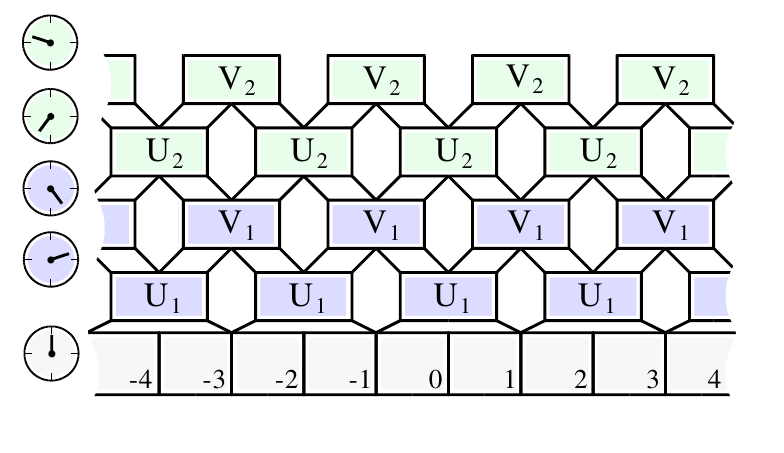}
     \caption{\label{fig:ccQCA} Two time-steps of the ccQCA. The clocks indicate the the time before and after application of the unitaries for comparison to Fig.~\ref{fig:QCA}.}
\end{SCfigure}

Consider an autonomous QCA that consists of a data band representing the one-dimensional lattice of the ccQCA being simulated and a program band containing information about the sequence of transition functions that the ccQCA would apply.  Let the ccQCA have access to $k$ different homomorphisms. Then the cell size of the program band is chosen to be $2k + 1$, enough to distinguish the $2k$ unitaries, allowing for an extra symbol representing the identity map.  Let the time evolution of the QCA be the product of a `shift step' shifting the program band two cells past the data band, followed by a `calculation step' performing the required unitary maps on pairs of cells of the data band, each controlled by the neighbouring contents of the program band.
As the program band moves past the data band, each data cell (qudit) undergoes the time evolution of the ccQCA being simulated, yet it should be noted that different time-steps in the ccQCA evolution are present at one time-step of the autonomous QCA (see \mbox{Fig.~\ref{fig:QCA}}).  In accordance with the definition of QCA, it is important that there arises no possibility of non-commuting unitaries operating on the same cell at any time; note that the unitaries $U_i$ and $V_i$ obtained from the QCA structure theorem work on different combinations of odd and even cells (see \mbox{Fig.~\ref{fig:ccQCA}}).  Therefore, to circumvent this possibility, one can design the autonomous QCA such that the localisation regions of the unitaries are separated by one \emph{idle} cell, as depicted in our example (Fig.~\ref{fig:QCA}).

This construction gives an autonomous QCA which, since its dynamics must by definition be (spatially) translationally symmetric, has cells composed of three cells of the original ccQCA plus one cell from the program band; and it turns out to have only nearest-neighbour interactions.  This general construction scheme can be optimized in an explicit situation to reduce the large cell-size.  Next we give such an explicit construction by starting from a universal ccQCA with homomorphisms that already have sequential structure, so the Margolus decomposition can be omitted.

\begin{SCfigure}
  \centering
     \includegraphics[width=75mm]{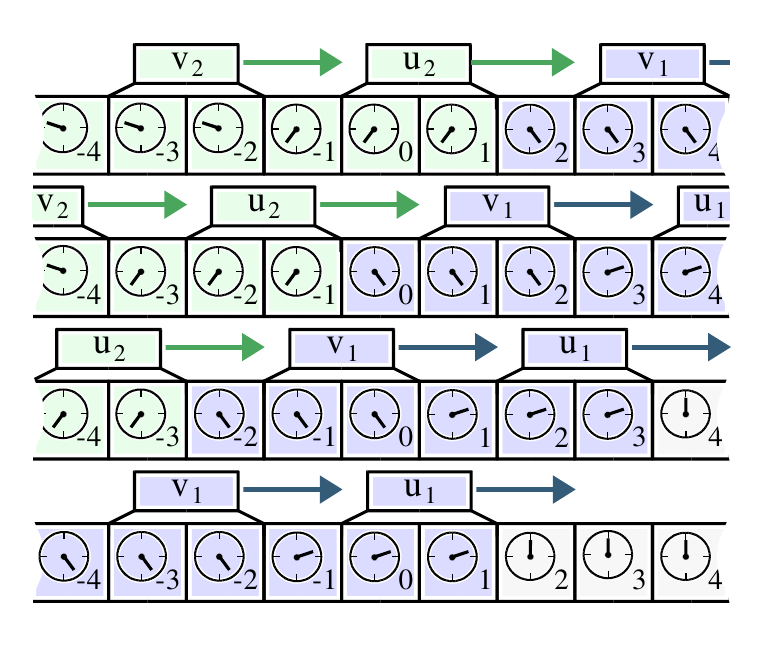}
     \caption{\label{fig:QCA} Four time-steps of the autonomous QCA. As the cells are updated sequentially, the clocks indicate the \textit{local time} of the cells corresponding to the time of the ccQCA.}
\end{SCfigure}

\subsection{Explicit construction} \label{sect:QCA-explicit}

In this subsection, we lay out a series of emulations, so as to make clear an explicit construction.

\subsubsection*{Circuits of ``controlled partial-$Y$''}

There is a two-qubit gate which, if not constrained to act always on neighbouring qubits but allowed to act on qubits within arbitrary range, serves as universal for computation within the standard gate model.  For example, we use the gate defined as 
\begin{equation} \label{def:G}
  G ~=~ \left( \begin{array}{cccc}
     1&0&0&0\\
     0&1&0&0\\
     0&0&\frac{1}{\sqrt2}&\frac{-1}{\sqrt2}\\
     0&0&\frac{1}{\sqrt2}&\frac{1}{\sqrt2} 
   \end{array} \right)
\end{equation}
in the computational basis, $\{\,\ket{00}, \ket{01}, \ket{10}, \ket{11}\,\}$. 
It performs a $\pi/4$ rotation on one qubit conditioned on the setting of another, and may equivalently be written $G_{ab} = \Lambda_a( \sqrt{-iY_b} )$.
To see that this gate is universal, it suffices to show that by use of simple (computational-basis) ancill\ae, one can simulate both the Hadamard and the Toffoli gates (\cf{}~\cite{book:NandC}), according to the well-known results of~\cite{lit:Aha0301}.  

\medskip
\begin{proposition}
  Gates $G = \Lambda(\sqrt{-iY})$ with no nearest-neighbour restriction, together with ancill\ae{} $\ket0$ and $\ket1$, can emulate ancill\ae{} $\ket+$ and $\ket-$ and all gates in the set $\{X, Y, Z, H, G^{-1}, \Lambda^2(\pm iY), \Lambda^3(\pm iY), \Lambda^2(Z), \Lambda^2(X) \}$.
\end{proposition}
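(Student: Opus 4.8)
The plan is to reduce everything to properties of the single-qubit gate $W:=\sqrt{-iY}$, the $\pi/4$-rotation about the $Y$-axis sitting in the lower-right block of $G$; it equals $HZ$ and has order $8$, with $W^{4}=-I$, $W^{2}=-iY$, $W^{6}=iY$. Because $G=\Lambda(W)$ and controlled operations sharing a control multiply on the target ($\Lambda(A)\Lambda(B)=\Lambda(AB)$), we get for free $G^{-1}=G^{7}$, $\Lambda(-iY)=G^{2}$, $\Lambda(iY)=G^{6}$, and $G^{4}=\Lambda(-I)$, the last of which is nothing but $Z$ on the control qubit tensored with the identity on the target. Hence a $\ket1$ ancilla fed in as control turns $G$ into a stand-alone $W$ on any chosen qubit, while a $\ket0$ ancilla fed in as target turns $G^{4}$ into a stand-alone $Z$ on any chosen qubit (the ancilla returning clean). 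From $W$ and $Z$ one then reads off $H=WZ$, $X=HZH=WZW^{-1}$, $\ket+=H\ket0$, $\ket-=H\ket1$, and $Y$ up to the usual global phase (since $W^{2}=XZ=-iY$). That settles every item on the list except the multiply-controlled gates.

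The heart of the matter is to build $\Lambda^{2}(-iY)$ out of copies of $G$ alone, \emph{before} any C-Not is available. One cannot simply plug $G^{2}=\Lambda(-iY)$ into the textbook two-control decomposition $\Lambda^{2}(W^{2})=\Lambda_{b}(W_{c})\,\Lambda(X)_{ab}\,\Lambda_{b}(W^{-1}_{c})\,\Lambda(X)_{ab}\,\Lambda_{a}(W_{c})$ (\cf{}~\cite{book:NandC}), both because no $\Lambda(X)$ is on hand and because $\Lambda(-iY)$ is not a C-Not. Instead I would replace the first control-flip by $G^{6}=\Lambda(iY)$ and the second by $G^{2}=\Lambda(-iY)$: concretely, apply $G$ to the pair $(b,c)$, then $G^{6}$ to $(a,b)$, then $G^{7}$ to $(b,c)$, then $G^{2}$ to $(a,b)$, then $G$ to $(a,c)$. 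The point is that the flips on $b$ carry signs, namely $iY\ket0=-\ket1$ and $-iY\ket1=-\ket0$ (the transitions $iY\ket1=\ket0$ and $-iY\ket0=\ket1$ being sign-free), and the sequence is arranged so that on the $\ket{10\,\cdot}$ branch the two stray signs meet and cancel, whereas on the $\ket{11\,\cdot}$ branch only sign-free transitions occur. Following each of the four control-settings through then shows the composite equals $\Lambda^{2}_{ab}((-iY)_{c})$ exactly, with no leftover phase. From there, conjugating the target by $X$ gives $\Lambda^{2}(iY)$ (as $X(-iY)X=iY$), and $\Lambda^{2}(-iY)^{2}=\Lambda^{2}(-I)$ is just $\mathrm{CZ}$ on the two control lines, so a genuine C-Not is obtained as $(I\otimes H)\,\mathrm{CZ}\,(I\otimes H)$.

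Once a C-Not is available the remaining gates come by a compute-then-uncompute argument. Since $-iY\ket0=\ket1$ with no phase, $\Lambda^{2}_{ab}((-iY)_{t})$ acting on a fresh $\ket0$ ancilla $t$ writes the bit $a\wedge b$ into $t$ cleanly; a C-Not from $t$ onto the intended target qubit, followed by $\Lambda^{2}_{ab}((iY)_{t})$ (which clears the ancilla cleanly because $iY\ket1=\ket0$), yields the Toffoli gate $\Lambda^{2}(X)$ exactly, whence $\Lambda^{2}(Z)=(I\otimes I\otimes H)\,\Lambda^{2}(X)\,(I\otimes I\otimes H)$. Running the same trick one level up---using a Toffoli to fold two of three controls onto an ancilla, applying $\Lambda^{2}(\pm iY)$, then uncomputing---produces $\Lambda^{3}(\pm iY)$. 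I expect the only genuinely delicate step to be the verification of the five-$G$ circuit for $\Lambda^{2}(-iY)$: one really must track each computational-basis branch, since a single uncancelled $-1$ would be a relative phase that destroys the gate. (That $\{\Lambda^{2}(X),H\}$ then delivers the universality for which $G$ is being used is the result of \cite{lit:Aha0301}.)
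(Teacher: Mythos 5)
Your proof is correct, and the core step---the ancilla-free five-$G$ circuit for $\Lambda^{2}(-iY)$---takes a genuinely different route from the paper's, so the comparison is worth making explicit. The paper's construction for $\Lambda^{2}_{ab}(\pm iY_{c})$ introduces a fresh $\ket0$ ancilla $p$, computes the parity $a\oplus b$ onto it with two $G^{\pm 2}$'s (the signs from $\pm iY$ acting on $\ket0,\ket1$ are arranged to cancel so that $p$ comes back clean), applies a $G$-power controlled on $p$ to $c$, uncomputes $p$, and finishes by composing with two more singly-controlled $G$-powers on $c$; in effect it is the identity $\Lambda^{2}(V^{2})=\Lambda_{a}(V)\,\Lambda_{b}(V)\,\Lambda_{a\oplus b}(V^{-1})$ rendered via a temporary parity wire. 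You instead work from the other Barenco decomposition $\Lambda^{2}(V^{2})=\Lambda_{a}(V)\cdot\mathrm{CX}_{ab}\cdot\Lambda_{b}(V^{-1})\cdot\mathrm{CX}_{ab}\cdot\Lambda_{b}(V)$ and replace the two (unavailable) CNOTs by $G^{6}_{ab}=\Lambda(iY)$ and $G^{2}_{ab}=\Lambda(-iY)$, observing that the two stray minus signs (from $iY\ket0=-\ket1$ and $-iY\ket1=-\ket0$) both land on, and cancel in, the $\ket{10}$ branch, while the $\ket{11}$ branch meets only the sign-free transitions; this gives $\Lambda^{2}(-iY)$ exactly with no extra wire. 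Both proofs hinge on the same elementary sign-cancellation idea, but yours deploys it inside a decomposition that saves the ancilla. Your downstream chain ($\Lambda^{2}(-iY)^{2}=\mathrm{CZ}$, hence $\mathrm{CX}$ by Hadamard conjugation, hence Toffoli by compute/uncompute on a fresh $\ket0$, hence $\Lambda^{2}(Z)$ by Hadamard conjugation, hence $\Lambda^{3}(\pm iY)$ by Toffoli-folding a control) visits the same targets as the paper but in a permuted order: the paper gets $\Lambda^{3}(\pm iY)$ directly from $\Lambda^{2}(\pm iY)$ and a $\ket0$ ancilla, then takes $\Lambda^{2}_{ab}(Z_{c})=\Lambda^{3}_{abc}(iY_{e})^{2}\ket0_{e}$, then $\Lambda^{2}(X)$ by Hadamard conjugation. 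Both chains are valid, and the single-qubit part ($W$, $Z$, $H$, $X$, $Y$, $G^{-1}$, and the Hadamard-basis ancill\ae{}) is essentially identical in the two proofs.
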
 

\begin{proof}
$\ket+_a$ and $\ket-_a$ are emulated respectively by $G_{ba}\ket1_b\ket0_a$ and $G_{ba}\ket1_b\ket1_a$, and $Y_a$ is emulated by $G^2_{ba}\ket1_b$, since global phase is unphysical.  $Z_a$ is emulated by $G_{ab}^4\ket\psi_b$ for any $\psi$.  $X$ is emulated (without ancill\ae{}) by $Y \cdot Z$, and $H_a$ by $G_{ba} \cdot Z_a \ket1_b$. 
The inverse $G^{-1}$ is equal to $G^7$ because its eigenvalues are all eighth roots of unity.

To emulate $\Lambda^2_{ab}(\pm iY_c)$, we use an ancilla $\ket0_p$ which will temporarily hold the `parity' of qubits $a$ and $b$.  Thus we first need the subroutine $\Lambda_{a \oplus b}(\pm iY_c)$ emulated by the sequence $G^{-2}_{ap} \cdot G^{-2}_{bp} \cdot G^{\mp 1}_{pc} \cdot G^2_{ap} \cdot G^2_{bp} \cdot \ket0_p$.  Then we have $\Lambda^2_{ab}(\pm iY_c) = G^{\mp1}_{ac} \cdot G^{\mp1}_{bc} \cdot \Lambda_{a \oplus b}(\pm iY_c)$.

The emulation for $\Lambda^3_{abc}(\pm iY_d)$ is rather similar, \emph{e.g.} it suffices to use the sequence $\Lambda^2_{ab}(iY_e) \cdot \Lambda^2_{ec}(\pm iY_d) \cdot \Lambda^2_{ab}(-iY_e) \cdot \ket0_e$.

Then $\Lambda^2_{ab}(Z_c)$ is emulated by $\Lambda^3_{abc}(iY_e)^2 \ket0_e$, and $\Lambda^2_{ab}(X_c) = H_c \cdot \Lambda^2_{ab}(Z_c) \cdot H_c$.
\end{proof}

(We offer no guarantee that these are the simplest emulations possible.  See \S\ref{sect:universal-gatesets} for more on universal circuit gate-sets.)

\subsubsection*{Construct qubit ccQCA}

Consider a ccQCA on a one-dimensional qubit lattice that allows the use of four different kinds of QCA-homomorphisms as described below, called $A$, $B$, $C$, and $D$.
These homomorphisms will be constructed from infinite tensor products of $G$ unitaries.   To prevent subscripts from becoming unreadable in what follows, we will also write $G(x,y)$ for $G$ acting on qubit $y$ controlled on qubit $x$, which was formerly denoted $G_{xy}$.

To show how this ccQCA can be used to simulate an arbitrary gate model circuit whose gates are all of the kind $G$ (line~(\ref{def:G})), we will think of the ccQCA's qubits as belonging to three interleaved one-dimensional lattices, and label them accordingly as $d_i, a_i, h_i$ with $i \in \ZZ$, as illustrated in Fig.~\ref{fig:three_bands}.
We will load the $d$-band with input corresponding to the input of the circuit being simulated, initialising its unused qubits to $\ket0$.
The $a$-band will be used as `ancilla space' and should be initialised to $\ket0$ everywhere.
The $h$-band is used to break spatial symmetry of the dynamics, and should contain a single `pointer' $\ket1$, with the rest of its qubits containing $\ket0$.
The four families of homomorphisms we consider here are given explicitly as (tensor) products of unitaries
\begin{eqnarray}  \label{eqn:homomorphisms1}
  A_i=\prod_{x \in \mathbbm{Z}} G(h_x,a_{x+i}),&&
  B_i=\prod_{x \in \mathbbm{Z}} G(h_x,d_{x+i}), \nonumber\\\\
  C_i=\prod_{x \in \mathbbm{Z}} G(d_x,a_{x+i}),&&
  D_i=\prod_{x \in \mathbbm{Z}} G(a_x,d_{x+i}).\nonumber
\end{eqnarray}

\begin{proposition}  \label{propos:ABCDG}
  For each $i,j \in \ZZ$, $i \not=j$, there exists a sequence of homomorphisms drawn from those of line~(\ref{eqn:homomorphisms1}) which, when applied to a tri-band lattice of cells initialised as described above, emulates the unitary gate $G(d_i, d_j)$ on the $d$-band and restores both the $a$-band and the $h$-band to their initial (separate) configurations.  The complexity of the sequence is constant, though its description complexity grows logarithmically with $i$ and $j$.
\end{proposition}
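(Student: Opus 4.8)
The plan is to split the construction into three stages. The first stage exploits the fact that none of the homomorphisms $A,B,C,D$ ever writes to the $h$-band --- it appears only as a control --- so the $h$-band stays in its initial configuration (a single $\ket1$ at site $0$, $\ket0$ elsewhere) throughout any sequence. Hence in $A_k=\prod_x G(h_x,a_{x+k})$ every factor with $x\neq0$ has a $\ket0$-control and acts trivially, so $A_k$ realises the single-qubit gate $\sqrt{-iY}$ on $a_k$; likewise $B_k$ realises $\sqrt{-iY}$ on $d_k$. Since $G$ has order $8$, this gives every power of $\sqrt{-iY}$ on any one $a$- or $d$-site, and in particular $A_k^2$ and $B_k^2$ act as $-iY$, a computational-basis bit-flip up to phase; these will be used to ``mark'' individual data qubits and to remove unconditional single-qubit residues at the end.

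The second stage routes the coupling of $d_i$ with $d_j$ through the $a$-band. To build $G(d_i,d_j)=\Lambda_{d_i}(\sqrt{-iY}_{d_j})$, the idea is: apply $C$ with offset $m-i$ twice, so the ancilla $a_m$ holds a computational-basis copy of $d_i$; then apply $D$ with offset $j-m$, whose $x=m$ factor is $G(a_m,d_j)$ and therefore acts as $\Lambda_{d_i}(\sqrt{-iY}_{d_j})$; then uncompute. The obstacle --- and the real content of the proposition --- is that $C_{m-i}$ and $D_{j-m}$ are translation invariant: a single application of $C_{m-i}$ copies \emph{every} data qubit into the $a$-band, $D_{j-m}$ then smears a whole translation-covariant family of spurious controlled gates $\Lambda_{d_y}(\sqrt{-iY}_{d_{y+(j-i)}})$, $y\in\ZZ$, across the $d$-band, and once any data qubit has moved the naive uncopy $C_{m-i}^{-2}$ no longer returns the $a$-band to $\ket0$ (the spurious targets overlap the sites the uncopy must reference).

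The third stage must arrange this collateral to cancel while restoring the $a$-band. A plausible device is to run a forward and an inverse copy/operate/uncopy cycle, flipping the computational value of $d_i$ (with $B_i^2$) between them and using inverse powers of $\sqrt{-iY}$ in the second $D$-step, so that the translation-covariant family cancels pairwise except for the member whose control is $d_i$ --- which survives because the flip makes the second cycle copy the wrong value of $d_i$ --- leaving the wanted gate up to an unconditional single-qubit gate on $d_j$ (removed by a power of $B_j$) and a disentangled $a$-band. Making this precise is where I expect most of the work: because $\sqrt{-iY}$ has order $8$, the powers used and the order of factors in the two cycles must be chosen so that the cancellations are exact and not merely up to global phase, so that the survivor is the single power $\sqrt{-iY}$ rather than $(-iY)=\sqrt{-iY}^2$, and so that flipping $d_i$ does not spoil other terms (for instance one in which $d_i$ occurs as a target). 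With so few moving parts, the cleanest route is to write the explicit short word in $A,B,C,D$ and evaluate it on the four computational-basis settings of $(d_i,d_j)$, together with the finitely many ancilla sites it touches, then extend by linearity.

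Finally, for the complexity claim: each cycle uses a number of homomorphisms that does not depend on $i$ or $j$, so the sequence has constant operational length; but every homomorphism carries an integer offset of size $O(|i|+|j|)$, whose binary encoding has length $O(\log|i|+\log|j|)$, so writing the word down --- which a log-space machine can do from $i$ and $j$ --- costs logarithmically many bits. This gives exactly the stated constant operational complexity with logarithmically growing description complexity.
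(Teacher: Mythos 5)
You correctly see that the pointer pins only $A$ and $B$ (so $B_i^2$ is a localised $Y_{d_i}$ up to phase) and correctly diagnose the obstacle: $C$ and $D$ stay translation-invariant, so a naive copy/operate/uncopy dumps data onto every ancilla site. But the idea that makes the proposition go through is not a cancellation of that collateral --- it is to avoid generating it at all. The paper conjugates the pointer-localised $B_i^2$ by $C_{-i}$: since $Y_{d_i}$ fails to commute with only the one factor $G(d_i,a_0)$ of $C_{-i}$, the conjugate $T := C_{-i}^{-1}B_i^2 C_{-i}$ collapses to the two-site operator $T' = G^{-1}(d_i,a_0)\,Y_{d_i}\,G(d_i,a_0)$. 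Sandwiching $D_j$ as $A_0^{-1}\,T^{-1}\,D_j\,T\,A_0$ then keeps every ancilla except $a_0$ in $\ket0$ throughout, so $D_j$ automatically contributes only its factor $G(a_0,d_j)$; the whole word acts on three qubits and is verified by an $8\times8$ matrix computation.

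Your stage~3, by contrast, is not just unfinished but headed into genuine trouble. A single cycle $C^{\pm2}\,D\,C^{\mp2}$ does not restore the $a$-band, because $D$'s targets change a generic fraction of the $d$-sites before the uncopy runs; a second inverse cycle then copies the already-changed values, so the hoped-for pairwise cancellation is not clean. Also, $d_i$ appears in the spurious family both as the control you want (at $x=m$) and as a target (at $x=m+i-j$), so flipping $d_i$ with $B_i^2$ neither cleanly isolates the one member nor commutes past the gate that targets it. And because your proposed word really does touch infinitely many lattice sites, the finite brute-force check you suggest at the end is unavailable --- it becomes available only after something like the conjugation trick has localised the action to a bounded region.
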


\begin{proof}
First note that each of $A_i, \ldots, D_i$ has order 8, since that is the order of the unitary $G$.
Suppose without loss of generality that $h_0$ is the present location of the pointer.
Consider the sequence $T := C_{-i}^{-1} \cdot B_i^2 \cdot C_{-i}$.  The only place it has net effect (because of the pointer) is between qubits $d_i$ and $a_0$, where it emulates $T' := G^{-1}(d_i,a_0) \cdot Y_{d_i} \cdot G(d_i,a_0)$.
The sequence required by the Proposition is then taken to be 
\begin{equation}
A_0^7\cdot C_{-i}^7 \cdot B_i^6 \cdot C_{-i} \cdot D_j \cdot C_{-i}^7 \cdot B_i^2 \cdot C_{-i} \cdot A_0,
\end{equation}
which we can reparse as
\begin{equation*}
A_0^{-1}\cdot T^{-1} \cdot D_j \cdot S \cdot A_0,
\end{equation*}
and which---given the promised initial conditions---emulates
\begin{equation*}
\sqrt{Y_{a_0}} \cdot {T'}^{-1} \cdot G(a_0,d_j) \cdot T' \cdot \sqrt{Y_{a_0}},
\end{equation*}
restoring all other qubits.  Since the $a$-band starts out entirely zero, this last line can be shown (by direct computation of 8-by-8 matrices) to emulate $G(d_i,d_j)$, restoring $a_0$ also.
(To simplify this final computation, it helps to notice that the product $T' \cdot \sqrt{Y_a}$ is given by a 4-by-4 \emph{integer} matrix, whose action can be perceived using `classical intuition'.)
\end{proof}

Of course, a clever compiler would find ways of simulating a given circuit that are more efficient than repeated application of this technique.

\begin{SCfigure}
  \centering
     \includegraphics[width=50mm]{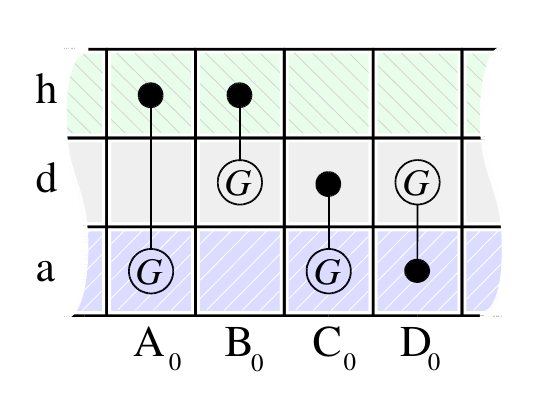}
     \caption{\label{fig:three_bands} Structure of the `three band' QCA with pointer, data and ancilla band. Localisations of the homomorphisms of line~(\ref{eqn:homomorphisms1}) are indicated.}
\end{SCfigure}

\subsubsection*{Construct nearest neighbour qubit ccQCA}

The ccQCA described above employs operations with arbitrarily large neighbourhood. With additional neighbour-swap operations we can first move two bands to the required interband-distance $i$, then apply a cellwise $G$-operation ($A_0,B_0,C_0$, or $D_0$) and finally shift back, in order to implement all operations of line~(\ref{eqn:homomorphisms1}) with a nearest neighbour ccQCA. Moreover, the inter-band $G$-operations are quite similar, which suggests interleaving the three bands into a single qubit band labelled $q_i$, with $i \in \ZZ$,
\begin{equation} \label{eqn:relabel_adh}
   (\ldots,q_{-2},q_{-1},q_0,q_1,q_2,\ldots) =
   (\ldots,a_{-1},h_{-1},d_0,a_0,h_0,\ldots).
\end{equation}
By Proposition~\ref{propos:ABCDEF}, a sufficient set of operations is then given by
\begin{eqnarray} \label{eqn:homomorphisms2}
  E_j & = & \prod_{x \in \ZZ} \mbox{Swap}(q_{3x+j},q_{3x+j+1}), \nonumber \\
  F_j & = & \prod_{x \in \ZZ} G(q_{3x+j},q_{3x+j+1}),
\end{eqnarray}
for $j \in \{\,0,1,2\,\}$. 

\medskip
\begin{proposition}  \label{propos:ABCDEF}
  With the relabelling of line~(\ref{eqn:relabel_adh}), for each $i \in \ZZ$, for each homomorphism $A_i, B_i, C_i, D_i$, there exists a sequence of `short-range' homomorphisms drawn from those of line~(\ref{eqn:homomorphisms2}) which, when applied to a single-band of qubits, emulates the required homomorphism.  The complexity of the sequence is linear in $i$, and so its description complexity also grows as $O(i)$.
\end{proposition}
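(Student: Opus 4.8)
The plan is to carry out, in the interleaved single-band picture of line~(\ref{eqn:relabel_adh}), the informal recipe stated just above the proposition: bring the two bands involved in the homomorphism to the required interband offset using the swap homomorphisms, apply a cellwise $G$-operation, and slide back. I would split the argument into two ingredients. Ingredient~(i): each of the four \emph{offset-zero} homomorphisms $A_0,B_0,C_0,D_0$ is a word of \emph{constant} length in $E_0,E_1,E_2$ and a single $F_j$. Ingredient~(ii): for each $i$ there is a word $S_i$ in $E_0,E_1,E_2$ alone, of length $O(|i|)$, which rearranges the home configuration of the lattice so that the control band and the target band of the homomorphism being emulated occupy two \emph{adjacent} residue classes modulo $3$ and are in sync with relative index-offset exactly $i$, the third band occupying the remaining residue class (its precise position within that class being immaterial, since a cellwise $G$ on two adjacent classes leaves the third class untouched). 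Granting these, each $X_i$ with $X\in\{A,B,C,D\}$ is obtained as the conjugate $S_i\cdot X_0'\cdot S_i^{-1}$, where $X_0'$ is the cellwise gate supplied by~(i), its control/target orientation matched (by an additional $O(1)$ conjugation by the $E_j$'s) to the way $S_i$ lays out the two bands; since each $E_j$ is an involution, $S_i^{-1}$ is just the reversed word, so both the gate-count and the description length of the whole sequence are $O(|i|)$.

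For ingredient~(i) the starting point is $F_0=\prod_x G(q_{3x},q_{3x+1})=\prod_x G(d_x,a_x)=C_0$ outright. The other three offset-zero homomorphisms differ from $C_0$ only in which two of the three consecutive cells $(d_x,a_x,h_x)$ serve as control and target; conjugating $F_0$ (or $F_1,F_2$) by a bounded number of swap homomorphisms cyclically permutes the three cells of every block and reverses the control/target roles as required --- for instance $E_1F_1E_1=\prod_x G(h_x,a_x)=A_0$, and $B_0,D_0$ are handled the same way after first using an $E_j$ to bring the relevant pair of cells physically adjacent --- so each of $A_0,B_0,D_0$ is a word of length $O(1)$. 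The same trick also covers the within-a-block variants of $G$ that are needed once $S_i$ has rearranged the lattice, as those too are cellwise $G$-operations on two adjacent classes up to a bounded relabelling.

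For ingredient~(ii) I would use a swap-network argument adapted to the periodic setting. Since the $E_j$ are translation-covariant by three cells and each band starts in order, every band stays in order throughout, so at all times the configuration is recorded by three integers $c_d,c_a,c_h$ --- the positions of $d_0,a_0,h_0$ --- whose residues mod $3$ always partition $\{0,1,2\}$. A direct check shows $E_0E_1E_2$ is exactly the global shift of the $q$-band by one cell. To realise, say, $C_i$ one applies a length-$O(|i|)$ product of $E_j$'s whose net effect is to walk the cell holding each $a_{x+i}$ stepwise --- through the $3i$ intervening cells, in lockstep for all $x$ --- until it sits immediately after the cell holding $d_x$, followed by an $O(1)$ correction (a few $E_j$'s, or a power of the block-rotation $E_0E_1$) that puts the two bands into adjacent residue classes; since each $E_j$ displaces every qubit by just one cell, $O(|i|)$ of them suffice to cover the gap. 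One then applies the appropriate cellwise gate from~(i) and the reversed word; any leftover global translation of the $q$-band is harmless, because every homomorphism in line~(\ref{eqn:homomorphisms1}) commutes with lattice translations and the circuit-simulation bookkeeping absorbs it. The cases $A_i,B_i,D_i$ are identical after the obvious change of which pair of bands is being brought together.

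The main obstacle is the verification inside ingredient~(ii). Global adjacent swaps necessarily \emph{shear} the three bands: there is no permutation of $\ZZ$ that translates one residue class while fixing the other two, so ``sliding'' the target band unavoidably drags the others along, and one must confirm that after $S_i$, the cellwise $G$, and $S_i^{-1}$, the three bands are restored up to a single global translation and the composite operator is \emph{exactly} $\prod_x G(\mathrm{control}_x,\mathrm{target}_{x+i})$ with no spurious gates left behind. Tracking the triple $(c_d,c_a,c_h)$ through each elementary swap turns this into a finite (if tedious) bookkeeping check, and it is this check that pins down both the constant hidden in the $O(|i|)$ bound and the exact choice of the swap word $S_i$. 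Everything then chains with Proposition~\ref{propos:ABCDG}: each circuit gate $G(d_i,d_j)$ becomes a constant-length word in the $A_*,B_*,C_*,D_*$ of indices $O(\max(|i|,|j|))$, hence an $O(\max(|i|,|j|))$-length word in $E_*,F_*$, which is the claimed polynomial overhead for the nearest-neighbour ccQCA.
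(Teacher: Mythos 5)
Your plan---conjugate a cellwise $F_j$ by a word $S_i$ in the $E_j$'s to bring two bands to the right relative offset---is the same idea as the paper's proof, which simply observes that ``it suffices to move two bands relative to each other,'' checks the single instance $A_3 = (E_2 E_0 E_1 E_2 E_0)\cdot F_1 \cdot(E_0 E_2 E_1 E_0 E_2)$, and asserts the other cases are similar. But your ingredient~(ii) contains a false claim: $E_0E_1E_2$ is \emph{not} a global shift of the interleaved band, and in fact no word in $E_0,E_1,E_2$ realises a nontrivial translation. Each $E_j$ adds $+1$ to the displacement $\pi(k)-k$ on one residue class mod~$3$ and $-1$ on another (e.g.\ for $E_2$ these are residues $2$ and $0$), so the quantity $\sum_{k=0}^{2}\bigl(\pi(k)-k\bigr)$ is conserved at $0$ for every word $\pi$ in the $E_j$'s, whereas a shift by $c\ne0$ would give $3c$. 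Concretely $E_0E_1E_2$ (with $E_2$ acting first) sends $(0,1,2)\mapsto(-3,2,4)$---a genuine shear, not a translation.

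The same conservation law also dissolves your worry about a ``leftover global translation'': $S_i\,(\cdot)\,S_i^{-1}$ restores every cell to its original location exactly. What can go wrong, and what your sketch leaves unverified, is that the cellwise gate in the middle may fail to land on the intended pair of residue classes at the intended relative offset $\pm(3i+O(1))$. That is exactly the ``finite but tedious bookkeeping'' you correctly flag as the main obstacle, and it is where the real content of the proposition lives: one would need to exhibit valid words $S_i$ of length $O(|i|)$ (or a closed-form recipe with a verified induction) for each of the four families $A_i,B_i,C_i,D_i$. The paper also omits this general check, showing only the $A_3$ example; but it does not additionally lean on a false intermediate claim, so the $E_0E_1E_2$ assertion should be struck and replaced by explicit swap words.
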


\begin{proof}
Note that it suffices to move two bands relative to each other, since the homomorphisms of line~(\ref{eqn:homomorphisms1}) act non-trivially on only two bands at a time.
Since all the cases are basically the same, we will illustrate emulation of $A_3$ only~:
\begin{eqnarray*}
  A_3  &=&  ( E_2 \cdot E_0 \cdot E_1 \cdot E_2 \cdot E_0 ) \cdot F_1 \cdot 
            ( E_0 \cdot E_2 \cdot E_1 \cdot E_0 \cdot E_2 )  
\end{eqnarray*}
\end{proof}

\subsubsection*{Construct universal nearest-neighbour QCA}

The homomorphisms of the ccQCA described above already work on non-overlapping neighbourhoods, and so there is no need for further Margolus decomposition here. For our main design of an autonomous nearest-neighbour QCA, we introduce a `program band', and focus on minimising the dimension of the individual cells.  

Take a one-dimensional lattice of qudits labelled $c_i$ (for \mbox{$i \in \mathbbm{Z}$}) of single cell dimension $d=12$, and regard these as incorporating one qutrit cell of a program band $t_i$ with two qubit cells $q_{2i}$ and $q_{2i+1}$ from the data band of the ccQCA.  The cell $c_i$ we define explicitly as the tensor product
\begin{equation}
  c_i = t_i \otimes q_{2i} \otimes q_{2i+1}.
\end{equation}
(Identification of data cells is indicated in Fig.~\ref{fig:interleaved}.)  As before, it is not necessary to have any `fine control' over the relative motion of the two sub-bands $t$ and $q$; rather we simply allow one to pass by the other with an invariant velocity.  This is achieved by decomposing the QCA transformation step into two parts, a unitary and a shift~:
\begin{eqnarray} \label{def:finalTransform}
  U\colon~ \mathbbm{C}^{12} \rightarrow \mathbbm{C}^{12} \mbox{ acting on every cell simultaneously,} \nonumber \\
  S\colon~ \left( \begin{array}{rcl}t_i&\mapsto&t_{i+1}\\
  q_i&\mapsto&q_{i-1} \end{array} \right) \mbox{ sliding the bands relatively.~~~~}
\end{eqnarray}

To simulate the nearest-neighbour ccQCA, we will interpret the data band $q_i$ exactly as before, but the program band $t_i$ must be initialised so as to execute the appropriate transformations on the data band as the two bands slide past one another. At initialisation, the cells $i>0$ will be used to hold the non-zero content of the data band in their qubits, while the cells $i<0$ will be used to hold the program band in their qutrits. We will initialise the $t_i$ in the computational basis, and the $U$ operation will be defined to leave these qutrits invariant.
Specifically, $t_i=\ket0$ will cause no transformation, $t_i=\ket1$ will cause a swap of data between $q_{2i}$ and $q_{2i+1}$, and $t_i=\ket2$ will cause the transformation $G(q_{2i}, q_{2i+1})$ described at line~(\ref{def:G}).

\begin{SCfigure}
  \centering
     \includegraphics[width=70mm]{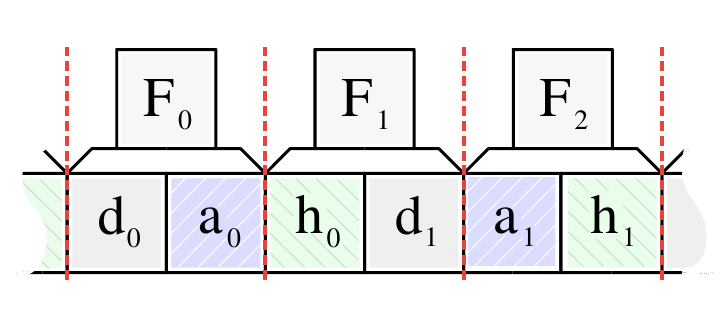}
     \caption{\label{fig:interleaved} Structure of the autonomous QCA.
     Indicated are three cells with one program qutrit and two data qubits each;
     identification of the data qubits corresponding to \mbox{Fig.~\ref{fig:three_bands}}.}
\end{SCfigure}

\medskip
\begin{proposition}
  There is a nearest-neighbour QCA on a 1-dimensional lattice that efficiently emulates each of the six homomorphisms of line~(\ref{eqn:homomorphisms2}), in each case using $O(1)$ cells to store the instruction.  
  The description complexity of the QCA program is therefore linear in the number of ccQCA homomorphisms it emulates, and the run-time of the QCA is linear in the sum of the program length and data length.
\end{proposition}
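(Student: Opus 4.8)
The plan is to take the autonomous QCA step $\cT$, the composite of the cellwise unitary $U$ and the shift $S$ of line~(\ref{def:finalTransform}), at face value, and to show directly that for each of the six homomorphisms of line~(\ref{eqn:homomorphisms2}) there is a content of the program band occupying only $O(1)$ qutrit cells whose passage over the data band reproduces that homomorphism exactly; the general claim then follows by concatenating these $O(1)$-sized gadgets. First I would clear the routine obligations. The map $U$ is a product of mutually commuting within-cell unitaries (identity, $\mathrm{Swap}(q_{2i},q_{2i+1})$, or $G(q_{2i},q_{2i+1})$ according to whether $t_i$ is $\ket0$, $\ket1$ or $\ket2$), so it is a genuine automorphism of the quasi-local $C^*$-algebra of propagation radius $0$; $S$ is the product of the two commuting band-shifts and has radius $1$; both commute with the cell translations $c_i\mapsto c_{i+1}$; hence $\cT$ is a bona fide nearest-neighbour, spatially homogeneous QCA, and since $U$ fixes every $t_i$ we may legitimately treat the qutrit band as a classical program.

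The substance of the proof is a trajectory calculation exploiting the matching of three periodicities: the period-$2$ grouping of data qubits into cells, the period-$3$ interleaving of line~(\ref{eqn:relabel_adh}), and the relative drift of the two bands. With the conventions of line~(\ref{def:finalTransform}) the qutrit band moves one cell per step and the qubit band one site per step in the opposite sense, so the bands slide past one another at a relative rate of exactly \emph{three} qubit sites per step --- the period of the interleaving. I would check that a program qutrit starting at cell $m$ is, at step $t$, co-located with the data pair whose content originated at sites $(2m+3t,\,2m+3t+1)$, up to a fixed offset depending only on the layout. Consequently a single qutrit set to $\ket2$ applies $G$ precisely to those data pairs whose left site lies in the progression $\{\,2m+3t:t\ge0\,\}$, which is one fixed residue class mod~$3$ selected by $m\bmod 3$; on the finite region where the data is non-trivial --- which broadens by at most one site per emulated homomorphism, hence stays within extent $O(L+D)$ --- this is exactly one of the three products $\prod_{x}G(q_{3x+j},q_{3x+j+1})$, \ie{} one of $F_0,F_1,F_2$, and likewise $\ket1$ yields one of $E_0,E_1,E_2$. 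The point is that because the relative speed \emph{equals} the period, the qutrit never lands on an intermediate pair, so a single program cell realises an entire homomorphism rather than one cell per gate; choosing the residue of $m$ appropriately and setting every other qutrit to $\ket0$ then emulates any one chosen homomorphism of line~(\ref{eqn:homomorphisms2}) using one program cell.

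To emulate a list $\cT_1,\dots,\cT_L$ of homomorphisms from line~(\ref{eqn:homomorphisms2}), applied in that order, I would lay the corresponding one-cell gadgets in the program band in reverse execution order, inserting at most two $\ket0$ spacers between consecutive gadgets so that each sits at the residue its phase demands; this uses $O(L)$ cells, and the full sweep past a data region of length $D$ takes $O(L+D)$ steps. The main obstacle is to verify that this concatenation is \emph{clean}: that gadgets active at the same step touch disjoint qubits, and that whenever a gate of $\cT_a$ and a gate of $\cT_b$ with $a<b$ share a data qubit, the $\cT_a$ gate is applied strictly first. Both reduce to elementary integer arithmetic in the formula above --- conflicting gates, sharing a qubit, have left sites differing by at most one, while gadget $a$ lies strictly to the right of gadget $b$ and hence sweeps ahead of it, so gadget $a$ reaches any given pair at a step strictly earlier than gadget $b$ reaches a conflicting one, the parity of the separation ruling out a tie. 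Granting this, the emulation is exact; and since the program band is a fixed pattern whose only per-instruction content is the symbol ($\ket1$ or $\ket2$) together with a residue class mod~$3$, a logarithmic-space machine can write it down from the list of ccQCA homomorphisms, which establishes the efficiency claim and the two counting statements.
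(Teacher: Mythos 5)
Your proposal is correct and follows essentially the same route as the paper: you identify the same matching between the relative drift rate (three data sites per step) and the period-3 interleaving, so that a single non-zero qutrit placed at the right residue mod~3 sweeps exactly one of the six homomorphisms, and you pad with $\ket0$ cells exactly as the paper's three-qutrit program-segments $\ket{100},\ldots,\ket{002}$ do. The only difference is one of exposition: you make explicit the co-location formula $(2m+3t,\,2m+3t+1)$ and the ordering check between concatenated gadgets, both of which the paper leaves implicit in its reading of Fig.~\ref{fig:interleaved} and its choice of triple-cell spacing.
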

Note that in our construction (below), the `instructions' or `program-segments' are given by triples of qutrits which remain in computational basis states throughout.

\begin{proof}
Take $U$ to be the 12-by-12 unitary described above, $S$ to be the shift operator described above that slides program qutrits ($t_i$) past data qubits ($q_i$), and $\cT = S \cdot U$ to be the nearest-neighbour unital homomorphism of the QCA.
Each of the six homomorphisms of line~(\ref{eqn:homomorphisms2}) is emulated on all the data qubits ($q_i$) by a specific pattern of three neighbouring qutrits of program completely sliding past all of the data qubits.  
That this $\cT$ describes a nearest-neigbour homomorphism is immediate from Fig.~\ref{fig:interleaved}.

The program-segment $\ket{100}$ on $t_{3i},t_{3i+1},t_{3i+2}$---as it moves rightwards---will simulate the homomorphism $E_0$.  This is because the $\ket0$ initially on $t_{3i+2}$ will hit every pair $q_{3j+1},q_{3j+2}$ having no effect, then the $\ket0$ initially on $t_{3i+1}$ will hit every pair $q_{3j+2},q_{3j+3}$ having no effect, then the $\ket1$ initially on $t_{3i}$ will hit every pair $q_{3j},q_{3j+1}$ thereby implementing $E_0$.
Similarly, the program-segments $\ket{010}$ and $\ket{001}$ will simulate the homomorphisms $E_2$ and $E_1$ respectively.
Likewise, the program-segments $\ket{200}$, $\ket{020}$, $\ket{002}$, will simulate the homomorphisms $F_0$, $F_2$, $F_1$, respectively. 
This is in accordance with the general construction idea outlined in \S\ref{sect:QCA-general}.
The cells with negative index should be initialised with program-segments of these kinds in order to induce the desired transformations on the data.  The cells with non-negative index should be loaded with the relevant data.  The computation output may be read (in the computational basis) any time after the content of the program band has moved past the content of the data band.
\end{proof}

To show that the composite simulation is efficient, one needs to estimate the necessary resources. Consider a quantum circuit (QC) consisting of \textit{Space}$_{QC}$ qubit-wires and \textit{Time}$_{QC}$ $G$-gates (assuming no exploitation of parallelism). 
In the first simulation step, the resources of the ccQCA depend linearly on the corresponding QC resources (Propos.~\ref{propos:ABCDG}). The use of swap gates in the next step increases the time (Propos.~\ref{propos:ABCDEF}); the encoding of the program into the program band increases the space, so one ends
up with an estimate for the autonomous QCA of
\begin{eqnarray}
\mbox{\textit{Time}}_{QCA} = \mathcal{O}(\mbox{\textit{Time}}_{QC}\cdot \mbox{\textit{Space}}_{QC}), \nonumber \\
\mbox{\textit{Space}}_{QCA} = \mathcal{O}(\mbox{\textit{Time}}_{QC}\cdot \mbox{\textit{Space}}_{QC}).
\end{eqnarray}
The resources depend polynomially on the given QC, so the simulation is deemed efficient for universal $\BQP$ simulation.

\section{Discrete-time Spin Chains}  \label{sect:spinChains}

In this section,\footnote{Previously unpublished work, the ideas in this section were presented during a talk given at Bristol in 2007.} we highlight another design for a novel QCA-based paradigm universal for $\BQP$.
This is based on observations of so-called `quantum wires' or `spin chains', \cf{} 
\cite{lit:CDEL04, lit:BB0401, lit:R0501, lit:KS0501, lit:FT0601, lit:FXBJ06}.

The main technical contribution of this section is to construct a design of discrete-time spin-chain processor which, under classical control, is universal for $\BQP$, but which has the special feature that all signal addressing (classical control) goes not to the whole machine but only to a tiny part of it (the `control window').  Accordingly it also seems that our encoding of logical qubits within physical spins is novel and marginally more efficient (2/3 density) than the more common methods of `barrier qubits' \cite{lit:BB0401}. The context of our design is similar to that of \cite{lit:R0501}, but again there signalling is passed to all qubits of the machine rather than only to a small part, even though translational invariance of dynamics is enforced.
It appears that our design construction is, in some sense, `simplest' amongst the discrete-time spin-chain models, and an analogous continuous-time universal construction is presently lacking.  For example, \cite{lit:YBB06} presents a nice continuous-time `processor core' model, but it nonetheless engages control signals to the bulk of the qubits, rather than to a small `window'. 

It can be argued that continuous-time and discrete-time models for dynamics on a lattice of quantum cells are not directly comparable, since the homomorphism for a discrete-time QCA is generally given by a Margolus decomposition into alternating unitaries (\cf~\S\ref{sect:QCA}), whereas for a continuous-time QCA it is given by a local Hamiltonian.  But a general local Hamiltonian on an arbitrarily lattice, when executed for any fixed length of time $\delta t$, is liable to induce a unitary that is \emph{not} completely local but rather allows a small (albeit negligible) amount of information to propagate arbitrarily far.  
Conversely, if the alternating unitaries of a Margolus decomposition are encoded directly within a Hamiltonian, then that Hamiltonian must oscillate in time and not be constant, so that it can represent each of two different unitaries in turn.
Thus there seems to be no obvious way to transfer results from one context directly to the other, and so this work is perhaps not directly comparable with those studies of spin-chains in context of continuous time.

Here is a quick summary of the properties of the main construction, the autonomous QCA, that we present in this section~:
\begin{itemize}
  \item 
\textbf{Universal;}  a (physically reasonable) paradigm capable of simulating quantum circuits, with polynomial overhead in most reasonable measures.  
 \item
\textbf{Discrete space, finite, one-dimensional;}  a `chain' of $N$ physical qubits `attached' to a single qutrit `window'.
  \item 
\textbf{Discrete time, nearest-neighbour locality;}  a `clock' homomorphism, composed of two local unitaries interleaved, which causes finite speed of data propagation, and is highly symmetric in space and time.  
  \item
\textbf{Limited dynamics;} apart from the clock, all other operations must affect only the qutrit window; \ie{} all control signals are addressed to $O(1)$ of the storage space.
\end{itemize}

\subsection{Addressing control in a discrete-time spin-chain} 

Our approach here differs from the one of \S\ref{sect:QCA}, and from other similar considerations in the literature, in that now we make no assumption about being able to address all of the computer to read out and load in data and program, but we do allow local time-dependent control of a very small part of the computer.  This small part is effectively to be considered as the only `window' that the device has onto the outside world, the rest of the machine being isolated from control and environment.  
Whereas one might expect it to be necessary to possess a large degree of localised control during initialisation and output---the very places where decoherence has the `benevolence' of enabling non-reversible `entropic' effects to take place, such as resetting and measuring---relegating initialisation and output to the first and final time-phases respectively of the overall computation; yet, in the present design, we instead constrain `entropic' effects not to certain temporal phases but to a particular spatial location~: the terminus of a `quantum wire'.  The design remains $\BQP$-universal despite requiring only a constant number of different kinds of operation (including addressing), in the same spirit as the designs given recently in~\cite{lit:Sev0601}.

\subsubsection*{Physical terminology}

It is convenient to borrow some language from the physical architectures proposed for implementing `quantum wires'.  Thus we refer to these structures as \emph{spin chains}, since the individual `low-level qubits' constituting a `quantum wire' are invariably imagined to be (or indeed convincingly implemented as) nuclear spins in context of an Ising model, or similar.  The idea is to have a chain of sites, indexed by $\{0,1,2,\ldots,N-1\}$, where a two-dimensional Hilbert space is associated to each site, to describe a \emph{physical} qubit there.  Such low-level qubits are then termed \emph{spins}, to emphasise two important properties~: firstly the idea that these qubits, unlike the \emph{logical} qubits to which we shall be coming shortly, are not abstracted very far away from the underlying physical architecture, and are most likely implemented as the quantum spin of a spin-$\frac12$ particle; secondly the idea that there need be no difference in energy levels between $\ket0$ and $\ket1$, no explicit method of addressing these qubits arbitrarily, and no preferred basis for (unwanted) decoherence.  The exception to this rule applies at the terminal site, having index $0$.  This `window' site is instead associated with a three-dimensional space (hence a qutrit), since  the use of a larger site for the `window' onto the device will be seen to simplify much of the rest of the design of the computing paradigm.   (Whether a qubit window would in fact suffice here is not presently known.)

\subsection{Clocks with graph-symmetry}  \label{sect:clocks}

The notion of \emph{spin chain} can be generalised to that of a \emph{spin network}, according to an undirected graph.  Although we shan't need graphs more complex than linear arrays, it is appropriate to describe the \emph{clock dynamics} in the more general case.

Let $\cG(\cV,\cE)$ denote an undirected graph each of whose nodes is associated to a distinct physical qubit.  Using $X$ and $Z$ to denote canonical Pauli operators and subscripts to denote qubit indices, we can define the symmetric discrete-time clock dynamics for $\cG$ according to the following formul\ae{} (\cf{}~\cite{lit:R0501} and \S\ref{sect:notations})~:
\begin{eqnarray} \label{def:graphclock}
  H_j            &:=&  e^{i\pi(~ X_j + Z_j - \sqrt2 ~)/\sqrt8}  ~~=~~
                       \frac{X_j+Z_j}{\sqrt2},  \nonumber \\
  V_\cG          &:=&  \prod_{j \in \cV} H_j, \nonumber \\
  \Lambda_j( Z_{k} ) &:=&  e^{i\pi(~1-Z_j-Z_k+Z_jZ_k~)/4} ~~=~~ 
                       \frac{1+Z_j+Z_k-Z_jZ_k}{2},  \\ 
  E_\cG          &:=&  \prod_{(j,k) \in \cE} \Lambda_j (Z_{k}), \nonumber \\
  G_\cG              &:=&  E_\cG \cdot V_\cG.  \nonumber 
\end{eqnarray}

This discrete-time picture is in many ways simpler than the corresponding continuous-time dynamic, fitting more naturally with a discrete-space model and with standard notions of computation.  For example, there is no need to tune the individual interaction strengths in order to obtain a uniform flow of data, \cf{}~\cite{lit:CDEL04}.

These formul\ae{} are reminiscent of the operations used in Graph State computing, where $\cG$ would be a two-dimensional lattice and $G_\cG$ would map the all-zero state $\ket{\mathbf0}$ into a so-called \emph{cluster state} for measurement-based quantum computing (\cf{}~\cite{lit:Raus03,lit:RB01}), which is again a discrete-time universal computing paradigm.
By contrast, that model uses $G_\cG$ only once, and only to establish initial entanglement, not to distribute control signal nor indeed any other data.

We encode \emph{logical} qubits using the clock $G_\cG$ directly.\footnote{I have a program that allows one to draw an arbitrary graph, colour its nodes with Pauli operators, and then evolve the operators at various speeds with the clock $G_\cG$.  This not only makes for a novel screen-saver, but also helps make more intuitive the Gottesman-Knill theorem.} 
Specifically focussing on any spin at the end of a spin chain (a `leaf' of $\cG$), it is easily shown to be necessary to wait for precisely three clock-ticks before all of the data from that spin has been transported away, assuming that $\cG$ has the local topology of a simple spin chain in the immediate vicinity of the terminal spin in question.  Reason as follows~:
\begin{eqnarray} \label{eqn:conjugatePaulibyG}
  G_\cG \cdot X_j \cdot G_\cG^\dag &=~~ Z_j, 
  ~~~~~~~~~~~~~~
  G_\cG \cdot Z_j \cdot G_\cG^\dag ~~=& X_j \prod_{(j,k) \in \cE} Z_k;  \nonumber \\
  G_\cG^3 \cdot X_0 \cdot G_\cG^{\dag3} &=~~ X_1 Z_2, 
  ~~~~~~~~
  G_\cG^3 \cdot Z_0 \cdot G_\cG^{\dag3} ~~=& X_2 Z_3.
\end{eqnarray}
(Note that each of $X_1Z_2$ and $X_2Z_3$ commutes with each of $X_0$ and $Z_0$, and so can be taken to represent a qubit distinct from the one represented by $X_0$ and $Z_0$.)

Since it is necessary for logical qubits to be properly distinct from one another, this naturally suggests taking our logical qubits to be revived in sequence at a given terminus after every three clock-ticks.
Accordingly, we can define the logical qubits by specifying pairs of anticommuting operators to serve as their `Pauli basis'~: 
\begin{eqnarray} \label{eqn:logicalqubits}
  \cX_j &:=& G_\cG^{3j} \cdot X_0 \cdot G_\cG^{-3j}, \nonumber \\
  \cZ_j &:=& G_\cG^{3j} \cdot Z_0 \cdot G_\cG^{-3j}.
\end{eqnarray}
Calligraphic script is being used to denote operators that define logical qubits, while ordinary script is being used to denote operators that define the physical spins.

\medskip
\begin{proposition}
  Let $\cG$ be the graph that is a simple line on $N=3n+2$ vertices, and let $G_\cG$ be a clock homomorphism on that graph as defined at line~(\ref{def:graphclock}).  
  Then $N+1=3n+3$ clock-ticks reverses the data on the physical spins (vertices) of the graph, and $6n+6$ clock-ticks therefore revives the initial state perfectly.
\end{proposition}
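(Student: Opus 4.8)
The plan is to track how the clock homomorphism $G_\cG$ acts on the Pauli operators attached to the $N=3n+2$ vertices of the line, and to argue that after $N+1$ ticks the entire ``Heisenberg-picture data'' has been reflected. The conjugation rules at line~(\ref{eqn:conjugatePaulibyG}) already tell us the local behaviour: $G_\cG$ sends $X_j \mapsto Z_j$ and $Z_j \mapsto X_j \prod_{(j,k)\in\cE} Z_k$, so on a simple line each physical qubit's Pauli content spreads to its neighbours at unit speed. First I would set up coordinates: label vertices $0,1,\dots,N-1$, with $0$ and $N-1$ the two leaves. The computation at line~(\ref{eqn:conjugatePaulibyG}) shows that starting from $X_0,Z_0$ it takes exactly three ticks for the ``leaf data'' to be fully transported to the interior (manifesting as $X_1Z_2$ and $X_2Z_3$), and symmetrically data being fed \emph{towards} the leaf arrives there after a number of ticks equal to its distance. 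The key quantitative claim to establish is that the ``signal'' injected at one end reaches the other end after exactly $N-1$ ticks (the graph diameter), and that two more ticks are needed to flush it out the far leaf, giving $N+1$ total for a complete reversal.

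The cleanest route is to exhibit an explicit Clifford automorphism $\sigma$ of the stabiliser/Pauli group on $N$ qubits---the ``spatial reflection'' $j \mapsto N-1-j$ composed with perhaps a local basis change---and show that $G_\cG^{\,N+1}$ equals $\sigma$ up to signs (Pauli corrections), by checking it agrees on the generating set $\{X_j, Z_j : 0\le j\le N-1\}$. Since $G_\cG$ is Clifford (being built from Hadamards $H_j$ and controlled-$Z$ gates $\Lambda_j(Z_k)$), it suffices to verify the action on these $2N$ generators; each such verification is a finite light-cone computation because the support of $G_\cG^t \cdot X_j \cdot G_\cG^{-t}$ grows by at most one vertex per tick on each side. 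I would prove by induction on $t$ a normal form for $G_\cG^{\,t}\cdot X_0\cdot G_\cG^{-t}$ and $G_\cG^{\,t}\cdot Z_0\cdot G_\cG^{-t}$ as products of $X$'s and $Z$'s supported on a contiguous interval near vertex $\min(t,N-1)$, with the pattern periodic-mod-3 exactly as the $t=1,2,3$ cases display; the choice $N=3n+2$ is precisely what makes the wave arrive at the opposite leaf ``in phase'', so that at $t=N+1$ it has cleanly exited and the whole configuration is the mirror image. Once the reversal statement holds, applying it twice gives $G_\cG^{\,2(N+1)}=G_\cG^{\,6n+6}$ acting as $\sigma^2 = \mathrm{id}$ on all Pauli generators (the Pauli correction factors also squaring away, or cancelling under the reflection), hence as the identity on the whole Hilbert space up to global phase, i.e.\ the initial state is revived perfectly.

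The main obstacle is bookkeeping the boundary behaviour and the accumulated sign/Pauli factors rather than the bulk propagation, which is essentially a transfer-matrix calculation over $\FF_2$. At the leaves the neighbour product $\prod_{(j,k)\in\cE}Z_k$ has only one term instead of two, which breaks the translational symmetry and is exactly why the ``extra'' ticks ($N+1$ rather than $N-1$) are needed and why the count depends on $N \bmod 3$; I would need to check that with $N=3n+2$ the reflected leaf operators $X_0,Z_0 \leftrightarrow X_{N-1},Z_{N-1}$ come back without spurious entanglement to interior qubits and with controllable signs. A secondary subtlety is that the revival is only claimed up to global phase (the problem says ``revives the initial state perfectly'', which for a state vector means projectively): since $G_\cG^{\,6n+6}$ is a Clifford acting trivially on all Paulis, it is a global phase by Schur's lemma, so no separate argument about phase tracking on states is required. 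If one instead wants the genuine operator identity $G_\cG^{\,6n+6}=\mathbbm{1}$ on the nose, a short supplementary computation pinning down that phase (e.g.\ by evaluating on $\ket{\mathbf 0}$) would finish it, but this is not needed for the stated revival.
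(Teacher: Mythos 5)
Your plan is genuinely different from the paper's, although both begin from the same formalism (the conjugation action of the Clifford $G_\cG$ on the Pauli group, reduced over $\FF_2$). The paper never does a light-cone induction. Instead it encodes $G_\cG$ as a $2N\times2N$ symplectic matrix $M_\cG = \sm{0&1\\1&A_\cG}$ over $\FF_2$, writes $M_\cG^k = \sm{S_{k-2}&S_{k-1}\\S_{k-1}&S_k}$ via the three-term recurrence $S_k = S_{k-2} + A_\cG S_{k-1}$, and then observes that $S_k$, as a polynomial, is exactly the characteristic polynomial of the adjacency matrix of the path on $k$ vertices over $\FF_2$. Hence $S_N(A_\cG)=0$ by Cayley--Hamilton, so $M_\cG^{N+1}$ is block-diagonal with blocks $S_{N-1}=S_{N+1}$, and these are then argued to be the reversal permutation (an involution commuting with $A_\cG$, hence a graph automorphism of the line, hence either identity or reversal; the non-identity alternative is then taken from Fig.~\ref{fig:spinchainthing} and the appendix of~\cite{lit:R0501}). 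This replaces the wave-propagation bookkeeping you propose with a one-line algebraic identity plus a finite symmetry check, and it is considerably cleaner.

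There is a genuine gap in your proposal: the step on which everything turns --- that the conjugated Pauli has a ``period-3 normal form'' and that $N=3n+2$ is ``precisely what makes the wave arrive in phase'' --- is asserted, not derived. That assertion \emph{is} the proposition, so without the inductive normal form, or an identity playing the role the Cayley--Hamilton step plays in the paper, the argument does not close. Notice also that the paper's Cayley--Hamilton step gives $S_N=0$ for \emph{every} $N$, so the role of $N\equiv 2\pmod 3$ is entirely in ensuring $S_{N\pm1}$ is the reversal rather than the identity; any self-contained proof has to isolate that distinction explicitly, and your ``in phase'' phrasing does not yet do so. On signs: you correctly flag the issue of Pauli corrections, and the ``Schur's lemma if it acts trivially on the whole signed Pauli group'' observation is exactly the right endgame, but your ``squaring away, or cancelling under the reflection'' is a hope rather than a proof; note that the paper, by working in the \emph{projective} Pauli group / $\FF_2$ picture, silently discards the same phase information, and both routes ultimately defer that verification to~\cite{lit:R0501}.
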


\begin{proof}
The \emph{projective} Pauli group on $N$ spins, obtained by quotienting away global phase, is Abelian, and therefore isomorphic to the (\emph{additive} group of the) vector space $\FF_2^{2N}$.  The operator $G_\cG$ is in the Clifford group (that is, its conjugative action stabilizes the Pauli group), and so its conjugative action on the projective Pauli group must be a linear endomorphism.  Thus it must have a representation via a $2N$-by-$2N$ matrix over $\FF_2$.  
This is called the \emph{stabiliser formalism} in \cite{lit:R0501}.
We can choose to list a basis for the projective Pauli group in the order $\{X_0, X_1, \ldots, X_{N-1}, Z_0, \ldots, Z_{N-1} \}$, and then a matrix for $G_\cG$ is given in line~(\ref{eqn:graphmat}) in $N$-by-$N$ block form, where $A_\cG$ is an adjacency matrix for $\cG$~:
\begin{equation}  \label{eqn:graphmat}
  M_\cG ~~:=~~ \left( \begin{array}{cc} 0 & 1 \\ 1 & A_\cG \end{array} \right).
\end{equation}

Then generate the recurrence $S_{-2} := 1,~ S_{-1} := 0,~ S_{0} := 1, ~ S_{k} := S_{k-2} + A_\cG \cdot S_{k-1}$, and observe inductively that 
\begin{equation}  \label{eqn:graphmat1}
  M_\cG^k ~~=~~ \left( \begin{array}{cc} S_{k-2} & S_{k-1} \\ S_{k-1} & S_k \end{array} \right).
\end{equation}

Using this formulation, it is straightforward to check the properties of various spin networks (graphs) in context of the (discrete-time) dynamics of $G_\cG$ for data flow.
For the present Proposition, it suffices to consider the case where $\cG$ is a line on $N=3n+2$ vertices.  In that case, it only remains to show that $S_{N}=0$ and that both $S_{N\pm1}$ are equal to the `reversal' matrix~: the permutation matrix that reverses the order of the vertices.  Then $M_\cG^{N+1}$ will have the effect of reversing the data on the vertices, as required. 

But $S_k$, as a polynomial in $A_\cG$ over $\FF_2$, actually \emph{is} the characteristic polynomial of the adjacency matrix of the line $\cG(k)$ on $k$ vertices, because each is given by the formula $Det( \lambda I + A_{\cG(k)} )$ over $\FF_2$.  
\begin{eqnarray*}
  Det( \lambda I + A_{\cG(k)} )  
    &=&  Det( \lambda I + A_{\cG(k-1)} ) + \lambda \cdot Det( \lambda I + A_{\cG(k-1)} ).
\end{eqnarray*}
So for $k=N=3n+2$ we see that $S_k=0$ automatically for the linear graph.  To see then that $S_{N \pm 1}$ must be reversal matrices, note that the group of symmetries of the line is of cardinality 2, so $S_{N \pm 1}$ can only be either a reversal or the identity.  That it is in fact a reversal can be seen directly from Fig.~\ref{fig:spinchainthing}, where the case $N=8$ is fully illustrated.  A full analysis of this algebra is given in the appendix of \cite{lit:R0501}.
\end{proof}

\medskip
\begin{proposition}
  A linear spin chain having $N=3n+2$ nodes will encode exactly $w=2n+2$ logical qubits by the rule at line~(\ref{eqn:logicalqubits}).  
\end{proposition}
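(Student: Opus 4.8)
The plan is to pass immediately to the stabiliser formalism used in the two preceding propositions, in which $G_\cG$ acts on the projective Pauli group $\cong\FF_2^{2N}$ through the matrix $M_\cG$ of line~(\ref{eqn:graphmat}), so that $\cX_j$ and $\cZ_j$ become the explicit vectors $M_\cG^{3j}\,e_{X_0}$ and $M_\cG^{3j}\,e_{Z_0}$, where $e_{X_0}$ and $e_{Z_0}$ are the standard basis vectors in positions $1$ and $N+1$ for the ordering $\{X_0,\dots,X_{N-1},Z_0,\dots,Z_{N-1}\}$. Two things then have to be shown: that the $2w$ vectors $\cX_0,\dots,\cX_{2n+1},\cZ_0,\dots,\cZ_{2n+1}$ form a mutually commuting family of anticommuting pairs which is, moreover, linearly independent (so that they genuinely define $w=2n+2$ qubits, not fewer), and that the rule is periodic in $j$ with period exactly $2n+2$ (so that there are no more). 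The periodicity direction is almost free: the preceding proposition gives $M_\cG^{6n+6}=I$, whence $\cX_{j+2n+2}=\cX_j$ and $\cZ_{j+2n+2}=\cZ_j$, and once independence has been established the vectors $\cX_0,\dots,\cX_{2n+1}$ are in particular pairwise distinct, so the period cannot be a proper divisor of $2n+2$.

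For the commutation and independence I would first record an explicit description of the polynomials $S_k=p_k(A_\cG)$, where $p_k\in\FF_2[\lambda]$ satisfies the same recurrence as $S_k$ and is (as observed in the proof of the previous proposition) the characteristic polynomial of the $k$-vertex path. Using $A_\cG\,e_i=e_{i-1}+e_{i+1}$ together with $S_k=S_{k-2}+A_\cG S_{k-1}$, a one-line induction gives $p_m(A_\cG)\,e_1=e_{m+1}$ for $0\le m\le N$ (reading $e_{N+1}:=0$). Feeding this into the block identity $M_\cG^{N+1}=\mathrm{diag}(R,R)$ established in the previous proposition --- where $R$ is the vertex-reversal permutation, and recall $S_N=0$, $S_{N\pm1}=R$ there --- extends it to $p_m(A_\cG)\,e_1=e_{2N+1-m}$ for $N\le m\le 2N$. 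Writing $\tau(f)$ for the top-left matrix entry $\langle e_1,f(A_\cG)\,e_1\rangle$, one thus obtains the clean fact that $\tau(p_m)=0$ for every $m$ with $1\le m\le 2N-1$, whereas $\tau(p_0)=\tau(p_{2N})=1$.

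The commutation relations now follow mechanically. Two Pauli operators commute precisely when the symplectic product of their $\FF_2$-vectors vanishes, and since $G_\cG$ is Clifford, $M_\cG$ preserves this form; hence $\langle\cX_i,\cX_j\rangle=\langle e_{X_0},M_\cG^{3(j-i)}e_{X_0}\rangle$, and reading off the block form of $M_\cG^{3(j-i)}$ from line~(\ref{eqn:graphmat1}) identifies this with $\tau(p_{3(j-i)-1})$; likewise $\langle\cZ_i,\cZ_j\rangle=\tau(p_{3(j-i)-1})$, while $\langle\cX_i,\cZ_j\rangle$ equals $\tau(p_{3(j-i)})$ when $i<j$ and $\tau(p_{3(i-j)-2})$ when $i>j$. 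For $0\le i<j\le 2n+1$ every index occurring here lies in the interval $[1,2N-1]$ --- the largest possible value being $3(2n{+}1)=6n+3=2N-1$ --- so all of these cross terms vanish, while the within-pair terms give $\langle\cX_j,\cZ_j\rangle=\langle e_{X_0},e_{Z_0}\rangle=1$ and $\langle\cX_j,\cX_j\rangle=\langle\cZ_j,\cZ_j\rangle=0$. Hence the Gram matrix of the $2w$ vectors with respect to the symplectic form is block diagonal with $2\times2$ blocks $\sm{0&1\\1&0}$, and is therefore non-singular; a set of vectors with non-singular Gram matrix is automatically linearly independent. So the $\cX_j,\cZ_j$ for $0\le j\le 2n+1$ are independent and define $w=2n+2$ logical qubits, which, together with the periodicity already noted, is exactly the claim.

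The step I expect to require the most care is the bookkeeping around the `turning point' $m=N$: one has to push the identity $p_m(A_\cG)e_1=e_{m+1}$ past $m=N$ by way of $M_\cG^{N+1}=\mathrm{diag}(R,R)$, and then track the index ranges carefully enough to be certain that none of the $\tau(p_m)$ arising from pairs $(i,j)$ with $0\le i<j\le 2n+1$ coincides with one of the two exceptional values $m=0$ or $m=2N$. It is exactly this `just barely fits' phenomenon --- the largest relevant index being $2N-1$ and not $2N$ --- that pins the count of logical qubits down to $w=2n+2$ and no larger.
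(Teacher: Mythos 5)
Your proposal is correct, and it follows the same strategy the paper gestures at: use the period $6n+6$ from the preceding proposition to bound the number of qubits above by $2n+2$, then verify independence ``by more matrix algebra,'' which the paper's own proof explicitly declines to carry out, instead pointing to Fig.~\ref{fig:spinchainthing}. What you add is the matrix algebra itself: the identities $p_m(A_\cG)e_1 = e_{m+1}$ for $0 \le m \le N$ and $p_m(A_\cG)e_1 = e_{2N+1-m}$ for $N \le m \le 2N$ (both of which I checked and are right, via $S_{N+1+k}=S_k R$ and $R$ commuting with $A_\cG$), yielding the clean fact $\tau(p_m)=0$ for $1\le m\le 2N-1$ with $\tau(p_0)=\tau(p_{2N})=1$, after which the symplectic Gram matrix of the $4n+4$ vectors is block-diagonal with nonsingular $2\times2$ blocks. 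This is exactly the ``just barely fits'' bookkeeping the paper leaves implicit --- the largest index $3(2n+1)=2N-1$ falling strictly inside $[1,2N-1]$ --- so your write-up is in fact a complete proof where the paper's is a sketch.
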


\begin{proof}
One logical qubit is encoded every three clock-ticks, under the encoding rule suggested.  It takes $6n+6$ clock-ticks to revive the original state (previous Proposition).  Thus in one cycle of $6n+6$ clock ticks there is scope for $2n+2$ logical qubits to be encoded.  That these logical qubits are independent can be seen by more matrix algebra or directly intuited from Fig.~\ref{fig:spinchainthing}, where the case $n=2$ is fully illustrated.
\end{proof}

This entails a natural encoding density of $\frac{w}{N} \sim \frac23$, (logical to physical ratio) \cf{}~\cite{lit:BB0401}.  Unlike the technique of block-coding discussed in \cite{lit:FXBJ06}, our method keeps the logical qubits from dissociating over a wide area, so that a reasonably standard local error model could be utilised.  That is to say, the spontaneous depolarisation of a spin will damage at most two logical qubits at any given time.  Another advantage of retaining a good degree of locality in the encoding is that it makes tomography more straightforward in the case where the implementation is such that one does not know \emph{a priori} how many spins are in the chain.  (Having said this, our main concern is with structural simplicity, and not the adaption of design for error correction capability.)

\begin{SCfigure}
  \centering
     \includegraphics[width=70mm, height=100mm]{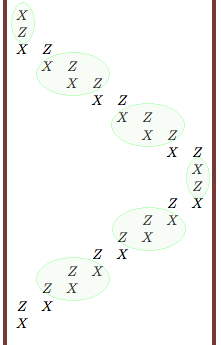}
     \caption{\label{fig:spinchainthing} The encoding relating $3n+2$ spins to $2n+2$ logical qubits on a linear graph, in the case $n=2$. (Time vertical, space horizontal.)  Each column represents a physical spin, 8 in this picture.  Each row represents a single Pauli operator on the $3n+2$ spins, and the row below it represents the same operator conjugated by a clock-tick, $G_\cG$.  The shaded rings pick out pairs of anticommuting Pauli operators that serve to define the $2n+2$ logical qubits, 6 in this picture.  Operators from different pairs commute, and so the logical qubits are distinct.}
\end{SCfigure}%

As can be seen from Fig.~\ref{fig:spinchainthing}, at most two of the logical qubits will be revived on local spins after any given clock-tick, these spins being the ones at either end of the chain.
(Indeed, it is not hard to show that for \emph{any} undirected graph $\cG$, if a logical qubit is identified with one of the vertices of $\cG$, then there can be at most one other vertex where that logical qubit is capable of fully reviving under the repeated action of the clock $G_\cG$ alone.)

\subsection{Window qutrit}

Recall that we intend for the site at one terminus of our spin chain to house a qutrit rather than a qubit.  (Here it is most definitely appropriate to speak of ``three energy levels'', because we expressly intend to address control signals to this physical qutrit directly.)   We identify the two lowest energy levels of the qutrit with a logical qubit, so that the operators $X_0$ and $Z_0$ remain well-defined\footnotemark{} for the sake of the clock $G_\cG$ (line~(\ref{def:graphclock})) and the encoding of the other logical qubits (line~(\ref{eqn:logicalqubits})).  
\footnotetext{\eg{} $X_0 := \ketbra01 + \ketbra10 + \ketbra22$; ~$Z_0 := \ketbra00 - \ketbra11 + \ketbra22$.}
The third energy level is reserved as a `storage space' to enable fine control over the evolution of the system as a whole.

Now, the computing paradigm is defined by assuming that $G_\cG$ evolutions happen regularly, and between any two $G_\cG$ evolutions we are free to apply any physically plausible operation we please to the qutrit at site $0$. 

\medskip
\begin{definition}
  A program for the spin-chain computer is defined to be a list of valid qutrit operations (\eg{} 3-dimensional unitaries, measurements in the computational basis, \&c.), to be applied on the `window qutrit', to be interleaved with clock homomorphisms $G_\cG$. Quantum data stored in the spin-chain is processed by working through the list.
\end{definition}

The qutrit operations we shall use to obtain universality are listed below~:
\begin{itemize}
  \item
\textbf{Reset;}  replace the qutrit with the pure state $\ket0$.
  \item
\textbf{Measure;}  obtain a classical trit, collapsing the state of the system according to the usual Born rule by projecting onto an energy level.
  \item
\textbf{Unitary;}  apply a 3-dimensional unitary gate to the qutrit.
\end{itemize}

For universality with regard to $\L$-reductions, we look for there to be a log-space Turing machine that converts the description of an arbitrary quantum circuit (given in some standard explicit form) to a description of such a list of homomorphisms, so that the effect of the quantum circuit is emulated within the spin chain as it works through applying the operations on the list, interleaved with $G_\cG$ evolutions.

  Operations are permitted to be adaptive in general, so that a unitary on the list might be a function of the result of a measurement previously listed. 
  However, just as a quantum circuit usually begins by resetting all of its qubits to zero and then delays all other measurements to the end, so the spin chain computer could, in general, be expected to emulate such quantum circuits by resetting all of its logical qubits to zero at the beginning, and delaying all of its measurements to the end also.  These are the kinds of emulations that we are interested in, and so we will proceed by showing that \emph{after resetting to zero and before final measurement}, the list of operations considered need only contain non-adaptive unitaries, provided that the circuit being emulated is likewise constructed.

\subsection{Emulating quantum circuits}

To initialise logical qubits of the spin chain to zero, simply apply clock-ticks until the qubit in question is revived at site $0$, then reset it.  Final measurements are rendered in likewise fashion.  

Similarly, it is trivial to emulate a single-qubit unitary on the spin-chain computer, because each logical qubit periodically revives to site $0$ where we can access it directly at the physical layer.

To complete the emulation of a general circuit, we need only show how to implement some non-trivial two-qubit unitary in a general position on the logical space.  Unfortunately this will require (in the worst case) about $18n$ clock-ticks for nearest-neighbour gates, \ie{} three cycles of the data, rather than just one, and even more clock-ticks for non-nearest neighbour gates.  This is because the clock-ticks move the \emph{logical} data in just one direction, whereas a non-trivial two-qubit unitary implicitly requires a bidirectional flow of data.  (There is possibly some scope for making use of the unused $n$-qubits-worth of space in the spin system to circumvent this slowdown, but that is immaterial if we merely wish to show polynomial efficiency of simulation, and would presumably require a more complex encoding, and perhaps the use of partial measurements, \etc)

\subsubsection{Logical nearest-neighbour interactions}

In our first example of emulating a non-trivial two-(logical-)qubit gate, we shall use $G_\cG$ up to about $12n$ time-steps, and also make use of the third energy level at site $0$.  Define $U_0$ on the qutrit to be the unitary operator that exchanges the top two energy levels at site $0$, \viz{} $\ketbra00 + \ketbra12 + \ketbra21$.  Because of the way that $E_\cG$ has been defined at line~(\ref{def:graphclock}), a single application of $U_0$ effectively switches off the `natural' interaction between sites $0$ and $1$ during a clock-tick~$G_\cG$.  Thus, the following identities are immediate~:
\begin{eqnarray}  \label{eqn:logicalcnot}
  E_\cG \cdot ( U_0 \cdot E_\cG \cdot U_0 )  
    &=&  \Lambda_0 (Z_{1}) \\
  =~~ G_\cG \cdot H_0 U_0 H_0 \cdot G_\cG^\dag \cdot U_0  
    &=&  \Lambda_1 (Z_0)~;  \nonumber \\
  H_0 \cdot G_\cG \cdot H_0 U_0 H_0 \cdot G_\cG^\dag \cdot U_0 H_0 
    &=&  \frac{ 1 + X_0 + Z_1 - X_0 Z_1 }2~;  \nonumber \\
  G_\cG \cdot \left(~ H_0 \cdot G_\cG \cdot H_0 U_0 H_0 \cdot G_\cG^\dag \cdot U_0 H_0 ~\right) \cdot G_\cG^\dag 
    &=&  \frac{ 1 + Z_0 + Z_0 X_1 Z_2 - X_1 Z_2 }2 \nonumber \\
  =~~ G_\cG \cdot H_0 \cdot G_\cG \cdot H_0 U_0 H_0 \cdot G_\cG^{6n+5} \cdot U_0 H_0 \cdot G_\cG^{6n+5} 
    &=&  \frac{ 1 + \cZ_0 + \cZ_0\cX_1 - \cX_1 }2. \nonumber 
\end{eqnarray}

This latter gate (on logical qubits $0$ and $1$) is locally equivalent to a logical nearest-neighbour C-Not gate.
A C-Not gate can be used three times, with appropriate intervening single-logical-qubit unitaries, to emulate a \emph{swap} gate (on the same logical qubits), and thence logical qubits can be swapped about as necessary to emulate non-nearest-neighbour two-logical-qubit unitaries.  

And so we are able to render efficiently a universal set of operations on the logical qubits of the system, using only $G_\cG$ as a clock, together with access between clock-ticks to a qutrit at one end of a spin chain.
Of course, blind substitution according to the description given here will likely lead to programs in this paradigm that could be otherwise compiled in a more optimal fashion, \eg{} by taking advantage of opportunities to pack more than one simulated gate into each cycle of the data.

\subsubsection{Efficiency of emulation}

We end the section by considering the efficiency of the emulation described.  To improve the simplicity of the reduction, we take $\Lambda(Z)$ to be the principal two-qubit gate used in quantum circuit design, rather than the more usual choice of $\Lambda(X)$ (C-Not).

\medskip
\begin{proposition}
  Let $C$ be a quantum circuit on a line of $w=2n+2$ qubits, composed of nearest-neighbour $\Lambda(Z)$ gates and single-qubit unitaries arranged into time-slices.  Let $d$ be the total number of time-slices in $C$, that is, the \emph{depth} of $C$.  The emulation of $C$ on the spin-chain computer as described requires $O(w)$ space and $O(w \cdot d)$ time.
\end{proposition}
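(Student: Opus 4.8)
The plan is to bound the two resources separately. The space bound is essentially immediate: by the earlier Proposition relating a chain of $N=3n+2$ physical spins to $w=2n+2$ logical qubits, a circuit on $w$ qubits is housed on $N=3n+2=O(w)$ physical spins together with the single window qutrit, so the storage used is $O(w)$ qudits, and no logical qubits beyond the $w$ of the circuit are invoked (the leftover $n$ qubits' worth of slack is simply never addressed). For the time bound it suffices to count clock-ticks, since each tick is one application of the $O(1)$-local homomorphism $G_\cG$ and only $O(1)$ window operations are interleaved per tick. Initialisation resets each logical qubit to $\ket0$ as it revives at the window, one revival every three ticks, so a single data cycle of $6n+6=O(w)$ ticks resets all $w$ of them; final readout likewise costs $O(w)$ ticks. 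Hence the whole claim reduces to showing that one time-slice of $C$ can be emulated in $O(w)$ clock-ticks, after which $d$ slices cost $O(w\cdot d)$ and the total is $O(w)+O(w\cdot d)+O(w)=O(w\cdot d)$.

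So fix one time-slice: a layer of single-qubit unitaries followed by a layer of nearest-neighbour $\Lambda(Z)$ gates on disjoint adjacent logical pairs (no swap gates are needed since $C$ is already nearest-neighbour). The single-qubit layer is dispatched within one data cycle, hitting each logical qubit with its gate as it revives at site $0$: $O(w)$ ticks. For the $\Lambda(Z)$ layer, a single logical nearest-neighbour gate is realised by the sequence at line~(\ref{eqn:logicalcnot}), transported by clock-ticks so as to act on whichever adjacent pair is wanted; this costs $O(w)$ ticks, but crucially its window operations occur in only $O(1)$ bursts, located exactly at the clock-ticks when one of the two logical qubits involved is revived at site $0$. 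The key point is that the bursts belonging to gates on \emph{different} pairs fall at \emph{different} ticks — a different logical qubit sits at the window — and a window operation acts only on physical site $0$, while the logical qubits engaged by any other in-progress gate have support disjoint from site $0$ at that moment; so the several gate-constructions commute and can be overlaid (`pipelined') within a constant number of data cycles, giving $O(w)$ ticks for the entire layer and hence $O(w)$ per time-slice.

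The real content, and the step most in need of care, is this pipelining claim: without it, running the $\Theta(w)$ two-qubit gates of a layer one after another would cost $\Theta(w^2)$ ticks per slice and yield only an $O(w^2 d)$ bound. Making it rigorous means checking that inserting one gate-construction's window operations into the pure-clock stretches of another's does not perturb the latter — equivalently, that at each clock-tick the \emph{only} logical qubit whose operator support meets site $0$ is the one currently revived there. That follows from the $\FF_2$-matrix/stabiliser description of $G_\cG$ used in the earlier revival Proposition: conjugating $X_0,Z_0$ by powers of $G_\cG$ spreads their support away from site $0$ except at the revival times $3j \bmod (6n+6)$, and by the same computation (line~(\ref{eqn:conjugatePaulibyG})) consecutive logical qubits are carried clear of site $0$ within a few ticks, so every inserted single-site operation genuinely commutes with the portions of the other constructions acting on the bulk of the chain. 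Once this commutation is established, the resource accounting above is routine, and the bounds $O(w)$ space, $O(w\cdot d)$ time drop out.
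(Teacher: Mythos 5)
Your proposal is correct and follows the paper's approach: both reduce the claim to showing that each time-slice of $C$ can be emulated in $O(1)$ data cycles ($O(w)$ clock-ticks) by pipelining all the gates of that slice within the same three cycles, and both observe that the window operations for disjoint logical pairs land at disjoint ticks and act only on site~$0$, so they can be interleaved. The paper simply makes the pipelining explicit by writing out the merged sequence of $G_\cG$, $H_0$, $J$, $K$ terms (and noting the tick counts add to three cycles regardless of how many disjoint $\Lambda(Z)$ gates are inserted), whereas you argue it abstractly via commutation of operator supports with site~$0$; this is a difference in exposition rather than method.
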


\begin{proof}
Set $N=3n+2$ and work with a spin-chain processor of that size.
In accordance with reasoning very similar to that used at line~(\ref{eqn:logicalcnot}), make the abbreviations
\begin{eqnarray*}
  J_{[0..N-1]} &:=&  Z_0 \cdot G_\cG \cdot H_0 \cdot G_\cG \cdot H_0 U_0 H_0 \\
  K_{[0..N-1]} &:=&  U_0 H_0 \cdot G_\cG^{2} \cdot H_0;
\end{eqnarray*}
and then a logical $\Lambda(Z)$ between qubits $j-1$ and $j$ may be rendered as
\begin{eqnarray*}
  G_\cG^{3j} \cdot H_0 \cdot G_\cG^{6n+3} \cdot J \cdot G_\cG^{6n+5} \cdot K \cdot G_\cG^{6n+6-3j},
\end{eqnarray*}
which effectively involves three `cycles' of the data structure.  While these cycles are taking place for the emulation of some $\Lambda(Z)$ in $C$, any number of single qubit unitaries from the same time-slice of $C$ can be inserted at the appropriate point, and so contribute nothing to the overall cost of the emulation, as measured in clock-ticks.  
Moreover, other $\Lambda(Z)$ gates from the same time-slice of $C$ may also be inserted, for no additional cost~: for example, to emulate $\Lambda(Z)$ between qubits $j-1$ and $j$ \emph{and also} $\Lambda(Z)$ between qubits $k-1$ and $k$, when $k-j\ge2$, we use
\begin{eqnarray*}
  G_\cG^{3j} \cdot H_0 \cdot G_\cG^{3(k-j)} \cdot H_0 \cdot G_\cG^{6n+3-3(k-j)} \cdot J \cdot G_\cG^{3(k-j)-2} \cdot \phantom{XXX} \\  \cdot J \cdot G_\cG^{6n+5-3(k-j)} \cdot K \cdot G_\cG^{3(k-j)-2} \cdot K \cdot G_\cG^{6n+6-3k},
\end{eqnarray*}
which still involves only three cycles.
(In fact, a closer inspection reveals that this method is perfectly valid for implementing \emph{overlapping} nearest-neighbour $\Lambda(Z)$ gates, \eg{} $k-j=1$, so it is possible to implement more of these kinds of gates in three cycles than can be implemented in one time-slice within the standard quantum circuit model.) 
Therefore, since $w=2n+2$, a single time-slice can be emulated on the spin-chain computer using $\sim 3w/2$ physical spins and $\sim 3w$ clock-ticks, and all $d$ time-slices are emulated in $\sim 3w \cdot d$ clock-ticks.
\end{proof} 

Since this upper bound is polynomial, we declare the emulation to be efficient.  Moreover, it is essentially optimal, due to the following lower bound.

\medskip
\begin{proposition}
  Any emulation that seeks to encode a nearest-neighbour circuit $C$ of width $w$ and depth $d$ obliviously into a list of operations to be performed on a constant-sized `window' in some architecture must require the list in question to be at least $\Omega(w\cdot d)$ long in the worst case.
\end{proposition}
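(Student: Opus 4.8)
Here is a plan for proving the lower bound.

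The plan is an information-capacity argument, set against the observation that an \emph{oblivious} emulation must process $C$ one time-slice at a time. Concretely, I read `oblivious' as saying that the $t$-th layer $C_t$ of $C$ is compiled, independently of the other layers, into a sub-list (a `gadget') $B_t$ of window operations whose shape is a function of $(w,d,t)$ and of $C_t$ only, and the emulation of $C$ is then the concatenation $B_1B_2\cdots B_d$ run against the architecture's fixed, circuit-independent background dynamics, so that the list has length $\ell=\sum_t|B_t|$. It then suffices to show that, for each $t$, some choice of the layer $C_t$ forces its gadget to have length $\Omega(w)$; taking that choice at every layer at once --- legitimate precisely because the layers are compiled independently --- produces a width-$w$, depth-$d$ circuit for which $\ell=\Omega(wd)$, the worst-case bound claimed.

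The key consequence of obliviousness to extract is that the logical operation that $B_t$ induces is a function of the \emph{content} of $B_t$ alone: the physical locations of the logical qubits at the start and end of $B_t$, the interleaved background steps, and the logical encoding are all fixed by $(w,d,t)$, and the background steps merely shuttle logical data past the window without injecting circuit-specific information. Now take the hard family of layers to be the tensor products $C_t=g_1\otimes\cdots\otimes g_w$ in which each $g_i$ is either the identity or the Hadamard $H$: such a layer is trivially nearest-neighbour, distinct choices give distinct unitaries, and faithfulness forces the content-to-logical-operation map of $B_t$ to realise all $2^{w}$ of them. If the architecture's window operations come from a finite alphabet of size $s$, a gadget of length at most $m$ has at most $s^{m+1}$ contents, whence $s^{m+1}\ge2^{w}$ and $m=\Omega(w)$, so some layer in the family compiles to a gadget of length $\Omega(w)$. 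If instead arbitrary operations on the $O(1)$-dimensional window are allowed, enlarge the family by letting each $g_i$ range over $\mathrm{SU}(2)$, an honest $\Omega(w)$-real-dimensional family; the content-to-logical-operation map is then Lipschitz in the $O(m)$ real parameters specifying $B_t$ (being built from the fixed background together with $m$ unitaries, each depending smoothly on $O(1)$ parameters), and a Lipschitz map cannot raise Hausdorff dimension, so its image contains an $\Omega(w)$-dimensional set only when $m=\Omega(w)$. Either way the reduction of the first paragraph then gives $\ell=\Omega(wd)$.

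The step I expect to be the main obstacle is the definitional one: isolating a notion of `oblivious' under which the claim of the previous paragraph --- that $B_t$'s induced logical operation depends on the content of $B_t$ and on nothing else about $C$ --- is genuinely justified rather than tacitly assumed. One must, in particular, exclude encodings that spread the information of one layer across the gadgets of several layers, or that smuggle circuit data into the background schedule itself; the natural formalisation, as above, is that the map sending $C$ to its list factors as a concatenation of independent per-layer compilers together with a schedule depending only on $(w,d)$. Two minor loose ends then remain: logical qubits must actually revive at the window for a tensor-product-of-single-qubit-gates layer to be implementable there at all (true of any genuine emulation, but worth stating), and in the degenerate regime $w=O(1)$ the argument collapses to the trivial-but-correct bound $\ell\ge d$.
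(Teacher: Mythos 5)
Your argument is correct and essentially the same as the paper's: an information-capacity lower bound of $\Omega(w)$ per time-slice (you via the $\{I,H\}^{\otimes w}$ family, the paper via any densely packed slice over its constant gate alphabet), multiplied across the $d$ slices via the obliviousness assumption. The only divergence is in handling a non-constant window alphabet, which the paper dismisses with a terse compactness remark while you give a Lipschitz/Hausdorff-dimension argument over $\mathrm{SU}(2)^{\otimes w}$; both treatments are informal and serve the same purpose, so this is a stylistic refinement rather than a genuinely different route.
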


\begin{proof}
An oblivious encoding must encode each gate of the circuit into the list.  We assume as before that each gate of $C$ is either a $\Lambda(Z)$ between neighbouring qubits or else a single qubit unitary drawn from a constant alphabet of single qubit unitaries.  Then it takes $\Omega(w)$ bits of information to describe each time-slice of $C$ in the worst case, \ie{} when $C$ is densely packed with gates.  Assuming for a moment that the elements on the list are to be drawn from a constant alphabet, so that at most $O(1)$ data can be fed `through the window' each clock-tick, it will require $\Omega(w)$ of them to represent the time-slice, and hence $d \cdot \Omega(w)$ to represent the entire circuit.

If, however, the elements on the list are \emph{not} drawn from a constant alphabet, but instead the size of the alphabet grows with $w$ or $d$ even though the size of the data structure at the `window' remains constant, then the elements of the alphabet will tend to come arbitrarily close to one another, because the space of bounded operators on a finite dimensional Hilbert space is compact.  This means that different patterns of elements cannot be obliviously simulating different circuits after all, so that the emulation strategy must break down after some finite point.  Therefore this case need not be further analysed.
\end{proof}

Suppose now we drop the nearest-neighbour conditions.  How do the upper and lower bounds change?

\medskip
\begin{proposition}
  Let $C$ be a quantum circuit on a line of $w$ qubits, composed of $\Lambda(Z)$ gates (not necessarily nearest-neighbour) and single-qubit unitaries arranged into time-slices.  Let $d$ be the depth of $C$.  The emulation of $C$ on the spin-chain computer now requires $O(w^3 \cdot d)$ time.
\end{proposition}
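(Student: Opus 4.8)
The plan is to reduce the non-nearest-neighbour case to the nearest-neighbour case already treated, by using SWAP gates to route logical qubits into adjacent positions. First I would recall that a logical SWAP between two neighbouring logical qubits can be built from three logical nearest-neighbour C-Not gates with intervening single-logical-qubit unitaries, and that by the construction underlying the previous Proposition each such logical C-Not --- and hence each adjacent logical SWAP --- costs $O(w)$ clock-ticks (roughly three cycles of the data structure, a cycle being $6n+6 = O(w)$ ticks). This is the only place where the earlier machinery is invoked; everything else is bookkeeping.

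Next I would handle a single $\Lambda(Z)$ gate of $C$ acting on logical qubits in positions $j$ and $k$ with $|j-k| \le w$. The recipe is: perform $|j-k|-1 = O(w)$ adjacent logical SWAPs to bring the two qubits into neighbouring positions; apply one logical nearest-neighbour $\Lambda(Z)$ as in line~(\ref{eqn:logicalcnot}); then undo the SWAPs. Since the SWAPs are applied sequentially through the single $O(1)$-sized window, this costs $O(w)\cdot O(w) + O(w) + O(w)\cdot O(w) = O(w^2)$ clock-ticks per gate, while the physical spin count stays $N=3n+2 = O(w)$ throughout.

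Then I would account for a whole time-slice of $C$. A time-slice consists of a partial matching of $\Lambda(Z)$ gates (at most $\lfloor w/2 \rfloor$ of them) together with single-qubit unitaries on the remaining wires; handling the $\Lambda(Z)$ gates one after another, each at cost $O(w^2)$, gives $O(w^3)$ clock-ticks per time-slice, and the single-qubit unitaries --- each requiring only $O(w)$ ticks to wait for a revival at site $0$ and then be applied --- contribute a subleading $O(w^2)$. Summing over the $d$ time-slices yields the claimed $O(w^3 \cdot d)$ time bound, with $O(w)$ space.

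The argument is essentially routine once the earlier Proposition is in hand; the only slightly delicate point is checking that the SWAP-routing is consistent with the cyclic revival schedule of the logical encoding of line~(\ref{eqn:logicalqubits}) --- \ie{} that after a block of adjacent logical SWAPs one can genuinely address any desired pair at neighbouring logical positions --- but this follows from the same stabiliser-formalism calculations used above. I would also remark that the bound is surely not tight: using the packing observation from the previous Proposition (several $\Lambda(Z)$ gates from one time-slice, even overlapping nearest-neighbour ones, can be folded into the same three cycles) one expects a routing network of depth $O(w)$ to bring the cost per time-slice down to $O(w^2)$, hence $O(w^2 \cdot d)$ overall; but that refinement is not needed for the statement as given.
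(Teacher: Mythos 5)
Your proof is correct and reaches the same $O(w^3 \cdot d)$ bound, but by a genuinely different route from the paper. The paper performs a two-step circuit-level reduction before invoking the nearest-neighbour Proposition as a black box: first it serializes each time-slice so that every $\Lambda(Z)$ occupies its own slice (a depth blowup by $O(w)$), then it unpacks each long-range $\Lambda(Z)$ into a chain of $O(w)$ nearest-neighbour $\Lambda(Z)$ gates interwoven with single-qubit $H$ gates (another $O(w)$ blowup); the new depth $O(w^2 d)$ then feeds directly into the $O(w \cdot d')$ bound of the preceding Proposition. You instead argue operationally inside the spin chain: each adjacent logical SWAP costs $O(w)$ clock-ticks (three cycles), so you route the two operands of a long-range $\Lambda(Z)$ together with $O(w)$ SWAPs, fire the gate, and undo, at $O(w^2)$ per gate and hence $O(w^3)$ per time-slice. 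The arithmetic comes out the same, but the paper's route avoids the "slightly delicate point" you yourself flag --- checking that the SWAP-routing interleaves consistently with the cyclic revival schedule of the logical encoding --- because once the circuit has been rewritten purely in nearest-neighbour terms, the previous Proposition handles all of that interaction wholesale. Your approach, on the other hand, gives a more concrete picture of what the window actually sees, and your closing remark that a depth-$O(w)$ routing network ought to reduce the cost to $O(w^2 d)$ is in line with the paper's own observation (after the matching $\Omega(w \cdot d \cdot \log w)$ lower bound) that this upper bound is probably not tight.
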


\begin{proof}
If we begin by transferring each $\Lambda(Z)$ gate into its own time-slice, this adds a factor of $O(w)$ to the time cost in the worst case, \ie{} when time-slices tend to start with $O(w)$ two-qubit gates in them.  Then each $\Lambda(Z)$ can be unpacked in the usual fashion into a product of nearest-neighbour $\Lambda(Z)$ gates interwoven with appropriate single qubit $H$ gates.  This unpacking increases depth by another factor of $O(w)$, and now the previous Proposition applies.
\end{proof}

\medskip
\begin{proposition}
  Any emulation that seeks to encode a circuit $C$ of width $w$ and depth $d$ obliviously into a list of operations to be performed on a constant-sized `window' in some architecture must require the list in question to be at least $\Omega(w\cdot d \cdot \log(w))$ long in the worst case.
\end{proposition}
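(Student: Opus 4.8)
The plan is to run the same counting argument as in the preceding (nearest-neighbour) Proposition, with the single change that the information content of a worst-case time-slice rises from $\Theta(w)$ to $\Theta(w\log w)$: once the nearest-neighbour restriction is dropped, a two-qubit $\Lambda(Z)$ gate in a densely-packed slice must, in addition to ``being present'', record \emph{which} pair among the $w$ wires it joins, and that costs $\Theta(\log w)$ bits per gate while a slice can still hold $\Theta(w)$ disjoint such gates.

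Concretely I would proceed as follows. First, as in the earlier proof, use obliviousness: an oblivious encoding treats each gate of $C$ separately, so the produced list, parsed appropriately, determines the gate sequence of $C$; hence two circuits with distinct descriptions must be sent to distinct lists. Second, lower-bound the number of admissible circuit descriptions of width $w$ and depth $d$ by taking each of the $d$ time-slices to be a perfect matching of the $w$ wires by $\Lambda(Z)$ gates (pad with one idle wire if $w$ is odd). Since $\log\bigl((w-1)!!\bigr)=\sum_{k\le w/2}\log(2k-1)=\Theta(w\log w)$, choosing the matching independently in each slice yields $\bigl((w-1)!!\bigr)^{d}=2^{\Theta(wd\log w)}$ distinct circuits. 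Third, if the ``window'' operations are drawn from a constant-size alphabet then a list of length $L$ encodes at most $2^{O(L)}$ possibilities, so $2^{O(L)}\ge 2^{\Omega(wd\log w)}$ forces $L=\Omega(wd\log w)$. Fourth, the case of a non-constant alphabet is dismissed verbatim as in the previous Proposition: bounded operators on a fixed finite-dimensional Hilbert space form a compact set, so a growing alphabet must have symbols accumulating, whence distinct symbol-patterns cannot obliviously simulate distinct circuits past some finite point, and that regime is vacuous.

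The only delicate point—and the main obstacle—is the step from ``distinct circuit descriptions'' to ``distinct required lists''. Under the reading of ``oblivious'' used in the previous proof (each gate encoded in the list), this is immediate syntactically. If instead one wants the stronger claim that the encoded \emph{computations} differ, one should interleave a fixed layer of Hadamard gates between consecutive $\Lambda(Z)$-matching layers before counting: without such intervening non-diagonal gates all the $\Lambda(Z)$ layers commute and collapse to a single quadratic form over $\FF_2$, destroying the bound, whereas with them one checks that the family still contains $2^{\Omega(wd\log w)}$ pairwise inequivalent circuits. Granting that check, the counting bound above gives $L=\Omega(w\cdot d\cdot\log w)$, as claimed.
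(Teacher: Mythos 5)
Your argument matches the paper's: both count $\Theta(w\log w)$ bits of pairing information per dense time-slice (the $(w-1)\cdot(w-3)\cdots$ double-factorial count), multiply by $d$ slices, and divide by the constant-size alphabet capacity, with the non-constant alphabet case dismissed by the same compactness remark as in the nearest-neighbour Proposition. Your closing ``delicate point'' about commuting $\Lambda(Z)$ layers is a genuine subtlety that the paper does not address explicitly, but under the syntactic reading of obliviousness adopted in the preceding Proposition---each gate of $C$ must be encoded into the list---it is moot, since distinct circuit descriptions must yield distinct lists regardless of whether they induce the same unitary.
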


\begin{proof}
With the assumptions of before regarding use of constant alphabets and oblivious encodings, the amount of information contained in a time-slice of $C$ must be $\Omega( w \cdot \log(w) )$, because the first qubit could be involved in a gate with any of $w-1$ later qubits, then the next qubit with any of $w-3$ later qubits, and so on.
Since there are $d$ time-slices, the total amount of information that needs to be passed `through the window' is $d \cdot \Omega(w \cdot \log(w))$ in the worst case, and an oblivious emulation---by definition---knows no way of improving upon this.  Thus the list in the emulation must involve $\Omega(w \cdot \log(w) \cdot d)$ operations in the worst case.    
\end{proof} 

This gap between the upper and lower bounds in the latter case suggests that our upper bounds strategy may be \naive, and not asymptotically optimal.

\cleardoublepage

\chapter{Probabilistic and Mixed Computing}
\label{chap:PMC}

Non-determinism in a general sense refers to the idea that there might be more than one `path' that a physical process can/does/might/could take~: no unique path need be determined.
In this chapter, we consider the role of non-determinism in computation, focussing on classical probability distributions, computations involving mixed states, and the non-operational concept of post-selection.  Our main technical contribution (\S\ref{sect:DQC1}) is to show that  in log-space ($\L$) one can produce a quantum circuit that uses only one pure qubit and solves a $\ParL$-complete problem, thereby generalising work of \cite{lit:ASV06}.  But we begin with a more abstract discussion of probability in computing to motivate the definitions used in \S\ref{sect:DQC1}, and also take the opportunity to introduce a new way of thinking about post-selection (\S\ref{sect:postselection}) that will have some relevance in Chapter~\ref{chap:IQP}.

\section{Operational Approach to Probabilistic Computing}  \label{sect:op}

This section just recalls some standard definitions and lemmata relevant to computation with probability distributions, extending some of the discussion of Chapter~\ref{chap:approach}.  
Definition~\ref{def:boundprob}---for Bounded Probability decision languages with arbitrary post-processing---may be seen as an abstract generalisation of standard definitions for classes such as $\BPP$, $\BQP$, \etc

\subsection{Elementary definitions}  \label{sect:eldef}

\subsubsection*{Probability Distributions}

We use discrete probability distributions to model the classical data output of physical processes that are designed for computation.  A discrete probability distribution may be construed as a function, $P$, having countable domain, mapping to the interval $[0,1]$.  It is \emph{stochastic}, which simply means that the sum over the whole domain must converge to 1.  For our purposes, it will be appropriate generally to take the domain to be the set of all finite binary strings, which is denoted $\{0,1\}^*$.
We write $P(\cL)$ as shorthand for $\sum_{x \in \cL \cap dom(P)} P(x)$.

\medskip
The direct product of distributions corresponds to the physical notion of running experiments independently in parallel and considering their combined output.
\begin{definition}
If $P$ and $Q$ are two distributions, then 
\begin{eqnarray*}
  P \otimes Q  &:&  x,y  ~\mapsto~  P(x) \cdot Q(y).
\end{eqnarray*}
Also, write $P^{\otimes k}$ to denote the direct product of $k$ copies of $P$.
\end{definition}

\medskip
The standard way of describing the distance between two probability distributions is to use $l_p$ additive gaps~:
\begin{definition}  \label{def:lpaddgap}
  For $p \in [1,\infty]$, the \emph{$l_p$ additive gap} between distributions $P$ and $Q$ is given by
\begin{eqnarray*}
  || P - Q ||_p  &:=&  \left(~ \sum_x | P(x) - Q(x)|^p ~\right)^{\frac1p},
\end{eqnarray*}
where the sum is taken over the union of the two domains.
In the case $p=\infty$, a limit is taken.
\end{definition}
The case $p=1$ is called the \emph{statistical distance} (or \emph{total variation distance}, up to scaling).  It has a special interpretation that makes it useful for defining the \emph{Bounded Probability} decision classes.

\subsubsection*{Operational nature of the statistical distance}

Here are some basic comments regarding the statistical distance~:
\begin{proposition}  \label{propos:oneadd}
Let $D$ be the union of domains of $P$ and $Q$, and let $k$ be a positive integer.
\begin{eqnarray*}
  \frac12 \cdot || P - Q ||_1  &=&  \max_{\cL \subseteq D} ~ P(\cL) - Q(\cL). \\
  || P^{\otimes k} - Q^{\otimes k} ||_1  &\le&  k \cdot || P - Q ||_1.
\end{eqnarray*}
\end{proposition}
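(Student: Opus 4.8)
The plan is to establish the two displayed identities/inequalities separately, both by elementary manipulation.

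For the first identity, I would start from the right-hand side. For any $\cL \subseteq D$, write $P(\cL) - Q(\cL) = \sum_{x \in \cL} (P(x) - Q(x))$. To maximise this over choices of $\cL$, one should clearly include exactly those $x$ for which $P(x) - Q(x) > 0$ (including or excluding the ties $P(x) = Q(x)$ makes no difference). Call this maximising set $\cL^+ := \{x \in D : P(x) > Q(x)\}$. Then $\max_{\cL} P(\cL) - Q(\cL) = \sum_{x \in \cL^+} (P(x) - Q(x))$. The key observation is that since $\sum_{x \in D} P(x) = \sum_{x \in D} Q(x) = 1$, we have $\sum_{x \in D} (P(x) - Q(x)) = 0$, so the positive part and the negative part of this signed sum have equal magnitude: $\sum_{x : P(x) > Q(x)} (P(x) - Q(x)) = \sum_{x : P(x) < Q(x)} (Q(x) - P(x))$. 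Each of these equals $\frac12 \sum_{x \in D} |P(x) - Q(x)| = \frac12 \|P - Q\|_1$, which gives the claim. (A small care point: this uses that both distributions are genuinely stochastic, summing to exactly $1$; the argument is the standard Scheffé-type identity.)

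For the second inequality, I would proceed by induction on $k$, the base case $k=1$ being trivial. For the inductive step it suffices to prove the sub-additivity statement $\|P_1 \otimes P_2 - Q_1 \otimes Q_2\|_1 \le \|P_1 - Q_1\|_1 + \|P_2 - Q_2\|_1$ and then specialise. The standard trick is to insert a hybrid term:
\begin{eqnarray*}
  P_1 \otimes P_2 - Q_1 \otimes Q_2 &=& (P_1 \otimes P_2 - Q_1 \otimes P_2) + (Q_1 \otimes P_2 - Q_1 \otimes Q_2) \\
  &=& (P_1 - Q_1) \otimes P_2 + Q_1 \otimes (P_2 - Q_2),
\end{eqnarray*}
then apply the triangle inequality for $\|\cdot\|_1$ and compute $\|(P_1 - Q_1) \otimes P_2\|_1 = \sum_{x,y} |P_1(x) - Q_1(x)| P_2(y) = \|P_1 - Q_1\|_1 \cdot \sum_y P_2(y) = \|P_1 - Q_1\|_1$, using that $P_2$ is stochastic; similarly $\|Q_1 \otimes (P_2 - Q_2)\|_1 = \|P_2 - Q_2\|_1$ since $Q_1$ is stochastic. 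Taking $P_1 = P$, $P_2 = P^{\otimes(k-1)}$, $Q_1 = Q$, $Q_2 = Q^{\otimes(k-1)}$ and feeding in the inductive hypothesis $\|P^{\otimes(k-1)} - Q^{\otimes(k-1)}\|_1 \le (k-1)\|P - Q\|_1$ completes the induction.

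Neither step presents a genuine obstacle — this is all routine. If anything, the only thing to be slightly careful about is bookkeeping when the domains of $P$ and $Q$ differ: one extends both to the common domain $D$ by assigning probability $0$ to the missing points, after which all the sums are over $D$ and the stochasticity ($\sum_D P = \sum_D Q = 1$) is exactly what both arguments rely on. I would state that convention once at the outset and then proceed as above.
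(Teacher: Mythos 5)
Your proof is correct and takes essentially the same approach as the paper: the first identity is the standard Scheff\'e-type argument (which the paper simply calls ``elementary from the definition''), and the second follows by induction via the hybrid/telescoping decomposition, which is exactly what the paper's pointwise inequality $|ac-bd|\le|ac-bc|+|bc-bd|$ encodes once summed over the domain. You have simply written out the details that the paper leaves implicit.
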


\begin{proof}
The proof of the first line is elementary from the definition.  
The second line follows from elementary induction together with the basic inequality
\begin{eqnarray*}
  |ac-bd|  ~~=~~  |ac - bc + bc - bd|  &\le&  |ac-bc| + |bc-bd|  ~~\le~~ |a-b| + |c-d|
\end{eqnarray*}
whenever $a,b,c,d \in [0,1]$.
\end{proof}

\medskip
In the theory of computation, we usually wish to post-process samples from a probability distribution, in order to make a decision and complete a computation.  To avoid encoding complexity in the post-processing phase, it is appropriate to use some simple structure, such as some decision language $\cL$ in some `simple' class (\eg{} $\L$), to compress a probability distribution down onto just two points~:
\begin{definition}  \label{def:twopoint}
  Let $P$ be a probability distribution with domain $D \subseteq \{0,1\}^*$, and let $\cL \subseteq \{0,1\}^*$ be some fixed decision language.  Define the \emph{fully post-processed} two-outcome distribution $\hat{P}_\cL : \{\top,\bot\} \rightarrow [0,1]$ as follows.
\begin{eqnarray*}
  \hat{P}_\cL(\top)  &:=&  P( D \cap \cL ),  \\
  \hat{P}_\cL(\bot)  &:=&  P( D \backslash \cL ).
\end{eqnarray*}
\end{definition}

\medskip
\begin{proposition}
Let $Coin$ be an independent random coin.  For any two-outcome distribution $\hat{P}$,
\begin{eqnarray*}
  || \hat{P} - Coin ||_1  &=&  | \hat{P}(\top) - \hat{P}(\bot) |.
\end{eqnarray*}
The value $(\hat{P}(\top) - \hat{P}(\bot))$ is called the \emph{bias} of $\hat{P}$; and so we see the magnitude of the bias of $\hat{P}$ is given by its statistical distance from a random coin.
\end{proposition}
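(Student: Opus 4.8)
The plan is to unfold both sides of the claimed identity from the definitions and observe that they coincide after a trivial computation. First I would record that $Coin$, being an independent fair random coin on the same two-point outcome set $\{\top,\bot\}$ as $\hat P$, satisfies $Coin(\top) = Coin(\bot) = \tfrac12$, and that a two-outcome distribution $\hat P$ is completely determined by the single number $p := \hat P(\top) \in [0,1]$, with $\hat P(\bot) = 1-p$ forced by stochasticity.

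Next I would expand the left-hand side using Definition~\ref{def:lpaddgap} with $p=1$: the sum runs over the common domain $\{\top,\bot\}$, so
\[
  \| \hat P - Coin \|_1 = \left| p - \tfrac12 \right| + \left| (1-p) - \tfrac12 \right| = \left| p - \tfrac12 \right| + \left| \tfrac12 - p \right| = 2\left| p - \tfrac12 \right| = \left| 2p-1 \right|.
\]
Then I would expand the right-hand side, $\left|\hat P(\top) - \hat P(\bot)\right| = \left| p - (1-p) \right| = \left| 2p-1 \right|$, and the two expressions agree, giving the equality. Alternatively, the same conclusion follows in one step from the first line of Proposition~\ref{propos:oneadd} applied to the pair $(\hat P, Coin)$ over $D = \{\top,\bot\}$: the maximising sub-language is whichever of $\{\top\}$, $\{\bot\}$, $\emptyset$, $\{\top,\bot\}$ yields the larger signed gap, the maximum signed gap equals $\tfrac12\left|2p-1\right|$, and doubling recovers the stated formula.

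For the closing remark, identifying the magnitude of the bias with this distance is then purely a matter of naming: having set $\mathrm{bias}(\hat P) := \hat P(\top) - \hat P(\bot) = 2p-1$, the displayed identity reads $\| \hat P - Coin \|_1 = |\mathrm{bias}(\hat P)|$, so nothing further is required. There is essentially no obstacle; the only bookkeeping that needs care is that both distributions are supported on the same two-element set (so no stray points enter the $l_1$ sum) and that $Coin$ is the \emph{fair} coin, without which the right-hand side would be an affine — rather than exactly the bias — function of $p$.
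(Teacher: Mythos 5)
Your proof is correct and follows essentially the same route as the paper's, which simply asks the reader to consider $|\hat{P}(\top)-\tfrac12| + |\hat{P}(\bot)-\tfrac12|$; you just write out the computation more fully, including the shorthand $p := \hat P(\top)$ and noting the need for $Coin$ to be fair.
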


\begin{proof}
Simply consider $|\hat{P}(\top)-\frac12| + |\hat{P}(\bot)-\frac12|$.
\end{proof}

Putting these two ideas together, we immediately see that a non-negligible bias in $\hat{P}$ is \emph{necessary} if we are to use a reasonable number of copies of $\hat{P}$ to magnify that bias to something substantial.  (That is, if $\hat{P}$ is very close to a random coin, then $\hat{P}^{\otimes k}$ will also be close to random.)  
It is a simple corollary of the Hoeffding inequality that a non-negligible bias is also \emph{sufficient} for bias amplification, as shown in the following Lemma~:
\begin{lemma}[Chernoff/Hoeffding]  \label{lem:amplify}
  If $\hat{P}$ is a two-outcome distribution with bias $b$, then one can straightforwardly post-process $\hat{P}^{\otimes k}$---using a majority vote, for odd $k$---to obtain a new distribution whose bias has the same sign as $b$ and magnitude at least $1 - 2e^{-k|b|^2/2}$.
\end{lemma}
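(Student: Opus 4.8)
The plan is to reduce the statement to a direct application of Hoeffding's inequality for sums of bounded i.i.d.\ random variables. First I would set up the random experiment: let $\hat P$ be a two-outcome distribution on $\{\top,\bot\}$ with bias $b = \hat P(\top) - \hat P(\bot)$, and without loss of generality assume $b > 0$ (the case $b<0$ is symmetric, and $b=0$ makes the claimed magnitude bound $1 - 2e^{-k|b|^2/2} = 1-2 = -1$ vacuous). Draw $k$ independent samples from $\hat P$; to each sample assign the value $+1$ if the outcome is $\top$ and $-1$ if it is $\bot$. Call these $\pm1$-valued random variables $Y_1,\ldots,Y_k$, so that $\Ex[Y_i] = \hat P(\top) - \hat P(\bot) = b$, and they are i.i.d.\ and bounded, taking values in the length-2 interval $[-1,1]$. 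The majority-vote post-processing of $\hat P^{\otimes k}$ (for odd $k$, so there are no ties) outputs $\top$ exactly when $\sum_i Y_i > 0$, \ie{} when the empirical mean $\bar Y = \frac1k\sum_i Y_i$ is positive. Let $\hat Q$ denote the resulting two-outcome distribution.

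The key step is then to lower-bound $\hat Q(\top) = \Pr[\bar Y > 0]$. We write $\Pr[\bar Y > 0] = 1 - \Pr[\bar Y \le 0]$, and observe that $\{\bar Y \le 0\} \subseteq \{\bar Y - \Ex[\bar Y] \le -b\}$ since $\Ex[\bar Y] = b > 0$. Hoeffding's inequality for i.i.d.\ variables supported on an interval of length $2$ gives $\Pr[\bar Y - \Ex[\bar Y] \le -t] \le e^{-2kt^2/4} = e^{-kt^2/2}$ for $t>0$; applying this with $t = b = |b|$ yields $\hat Q(\bot) = \Pr[\bar Y \le 0] \le e^{-k|b|^2/2}$. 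Hence $\hat Q(\top) \ge 1 - e^{-k|b|^2/2}$, and the bias of $\hat Q$ is
\begin{eqnarray*}
  \hat Q(\top) - \hat Q(\bot) ~=~ 1 - 2\,\hat Q(\bot) ~\ge~ 1 - 2e^{-k|b|^2/2},
\end{eqnarray*}
which is positive (hence of the same sign as $b$) once $k$ is large enough to make the bound nontrivial, and has magnitude at least $1 - 2e^{-k|b|^2/2}$ as claimed. The case $b<0$ follows by swapping the roles of $\top$ and $\bot$.

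The only real obstacle is bookkeeping about the constant in the exponent: one must use the form of Hoeffding's inequality appropriate to variables ranging over an interval of length $2$ (giving the $2kt^2/(\text{range})^2 = kt^2/2$ exponent), rather than the $[0,1]$-range version, and must be careful that restricting to odd $k$ is what licenses treating majority vote as simply $\operatorname{sgn}(\sum_i Y_i)$ with no tie-breaking. Everything else — independence of the $Y_i$ from the product structure of $\hat P^{\otimes k}$, the reduction of the $b<0$ case by symmetry, and the observation that $\{\bar Y \le 0\}$ is contained in a deviation event — is routine. No appeal to Proposition~\ref{propos:oneadd} is needed for the sufficiency direction, though it is what motivates why a non-negligible $|b|$ is the right hypothesis.
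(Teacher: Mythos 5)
Your proposal is correct and follows the same route the paper intends: the paper's own proof is the one-line remark that the result follows from Hoeffding's inequality applied to Bernoulli trials, and you have simply filled in the standard bookkeeping (encoding outcomes as $\pm1$, invoking the length-2-interval form of Hoeffding with $t=|b|$, and converting the tail bound into the bias bound). Nothing to change.
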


\begin{proof}
The proof follows directly from Hoeffding's standard inequality \cite{lit:Hoeffding} applied to Bernoulli trials.
\end{proof}

\subsubsection*{Families of distributions lead to decision languages}

When we come to consider not a single physical experiment but a whole family of them, we start to think about families of probability distributions also. 

The following definition gives a useful way of creating semantic decision languages directly from families of distributions, using the same `operational' idea.  Decision languages can thus be derived from a family of probability distributions directly, without reference back to the underlying machine or process that takes samples from the distributions. 

\begin{definition}  \label{def:boundprob}
  Let $\cP = \{P_i ~:~ i \in I\}$ be a family of probability distributions, indexed by some totally ordered indexing set $I$.  Let $\cL$ be some fixed decision language.  Let $\bar{c}$ be a constant in $(0,\frac12)$.  
For every index $i \in I$, the value $P_i(\cL) = \hat{P_i}_\cL( \top )$ lies in one of the three partitions $[0,\bar{c}]$, $(\bar{c},1-\bar{c})$, or $[1-\bar{c},1]$; and we can tri-partition the set of all indices accordingly.  If the middle partition turns out to be empty, then we define the semantic decision language $\Bb{\cP}{\cL}$ to be the third partition~:
\begin{eqnarray*}
  \Bb{\cP}{\cL}  &:=&  \{~ i \in I ~:~ P_i( \cL ) \ge 1-\bar{c} ~\},  
\end{eqnarray*}
which is in fact independent of $\bar{c}$ whenever defined.
\end{definition}
(Note that this language is a subset of $I$, therefore, it is appropriate in some circumstances to take $I$ to be $\{0,1\}^*$.  But it is also often convenient to have it be~$\NN$.)

This definition can be used as an alternate way of constructing classes such as $\BPP$ and $\BQP$.  For example, a generic $\BPP$ decision language can be defined in the form $\Bb{\cP}{\cL}$ by fixing some polynomial-time randomized Turing machine $\cM$ and taking $P_i$ to be the distribution of the output string of $\cM$ on input the string $i$; while $\cL$ could simply be the set of all strings that begin with a 1.  The computational power of polynomial-time Turing machines is sufficiently great that one need encode no `complexity' into $\cL$ in order to have this $\Bb{\cP}{\cL}$ be a `powerful' class.  The language $\cL$ serves as a kind of post-processor for the probability distributions, and for those families of distributions that are significantly weaker than $\BPP$ ones, allowing some additional post-processing of a comparatively complex nature can perhaps provide a significant boost to the complexity of the ensuing language.

\subsection{Philosophy of simulation}

This short section sets up some background context for the rest of the dissertation, which is about computing paradigms that are not necessarily universal for $\BQP$.

Let us look again at the constructions of the previous section.
A family $\cP$ of probability distributions is more likely to be of general interest if there is (at least conceptually) some programme of physical experiments whereby ``the $i$th experiment in the programme draws a sample from $P_i$''.  And it will be of even greater interest if the \emph{resources} required to implement the $i$th experiment scale efficiently in the complexity of (the size of) $i$.  
Designs for quantum computers that are severely limited---ones for which there is no apparent oblivious strategy for simulating arbitrary quantum circuits---can be analysed by modelling their output as a family of probability distributions.

This perspective leads to a nice way of thinking about simulation.  Imagine two programmes of physical experiments, one ($\cP$) whereby the $i$th experiment draws samples from $P_i$, and one ($\cQ$) whereby the $i$th experiment draws samples from $Q_i$.  (If you care to, you might also suppose that our lab technicians assure us that neither of these programmes uses very much more equipment or time than the other.)  Under what circumstances can we reasonably say that the two programmes simulate one another?  If $P_i = Q_i$ for all $i$, then the simulation is \emph{exact}.  When simulation is not exact, we need to quantify ``how $\cP$ is unlike $\cQ$''.  
To quantify the difference between $P_i$ and $Q_i$, one could use one of the $l_p$ measures of additive gap (Def.~\ref{def:lpaddgap}), and we have already seen that the statistical distance ($l_1$) is the most operationally relevant.
Then one must choose whether to be concerned with the \emph{worst case} for $i$, conisdering $\max_i || P_i - Q_i ||$;  or some kind of \emph{asymptotic case}, $\limsup_{i \rightarrow \infty} || P_i - Q_i ||$ perhaps;  or else some kind of \emph{average case} measure. 

The following Proposition about asymptotic similarity relates the language definition back to the notion of statistical distance.
\begin{proposition}  \label{propos:thing}
  If $\cL$ is a decision language, and $\cP$ and $\cQ$ are two families of probability distributions for which both $\Bb{\cP}{\cL}$ and $\Bb{\cQ}{\cL}$ are defined, then the following implication is valid.
\begin{eqnarray*}
  \lim_{i \rightarrow \infty} ||P_i - Q_i||_1 = 0 
    &\Rightarrow&  \Bb{\cP}{\cL} \sim \Bb{\cQ}{\cL},
\end{eqnarray*}
where the relation $\sim$ denotes set equality up to finite difference.
\end{proposition}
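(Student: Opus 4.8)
The plan is to reduce the statement to the elementary inequality of Proposition~\ref{propos:oneadd}, once I have unpacked what it means for $\Bb{\cP}{\cL}$ and $\Bb{\cQ}{\cL}$ to be ``defined'' in the sense of Definition~\ref{def:boundprob}. First I would note that $\Bb{\cP}{\cL}$ being defined supplies a constant $\bar c_1 \in (0,\frac12)$ for which the middle partition $(\bar c_1, 1-\bar c_1)$ is empty, \ie{} $P_i(\cL) \in [0,\bar c_1] \cup [1-\bar c_1, 1]$ for every index $i$; likewise $\Bb{\cQ}{\cL}$ supplies some $\bar c_2$. Putting $\bar c := \max(\bar c_1, \bar c_2)$, which is still below $\frac12$, only shrinks the middle interval, so $(\bar c, 1-\bar c)$ is empty for \emph{both} families simultaneously; and since Definition~\ref{def:boundprob} records that $\Bb{\cdot}{\cL}$ does not depend on the threshold once it is defined, I may work with this single common $\bar c$ throughout. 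Thus for each $i$ both $P_i(\cL)$ and $Q_i(\cL)$ avoid the open interval $(\bar c, 1-\bar c)$, whose width $1-2\bar c$ is a strictly positive constant.

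The second step is to apply the first line of Proposition~\ref{propos:oneadd} with the optimising subset taken to be $\cL$ itself: this gives $P_i(\cL) - Q_i(\cL) \le \tfrac12 ||P_i - Q_i||_1$, and exchanging the roles of $P_i$ and $Q_i$ gives the other inequality, so $|P_i(\cL) - Q_i(\cL)| \le \tfrac12 ||P_i - Q_i||_1$. By hypothesis the right-hand side tends to $0$, so there is an index $N$ past which $|P_i(\cL) - Q_i(\cL)| < 1-2\bar c$. For any such $i$ the two numbers $P_i(\cL)$ and $Q_i(\cL)$ are closer together than the gap separating the two admissible partitions $[0,\bar c]$ and $[1-\bar c,1]$, hence must lie in the same one; in particular $P_i(\cL) \ge 1-\bar c$ exactly when $Q_i(\cL) \ge 1-\bar c$, that is, $i \in \Bb{\cP}{\cL}$ exactly when $i \in \Bb{\cQ}{\cL}$. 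So the two languages agree on all sufficiently large indices, which is precisely the relation $\Bb{\cP}{\cL} \sim \Bb{\cQ}{\cL}$.

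I do not expect any real obstacle; the proof is essentially bookkeeping around Proposition~\ref{propos:oneadd}. The one point genuinely worth spelling out is the threshold-unification step: one must check that enlarging $\bar c$ preserves both the definedness and the \emph{value} of each $\Bb{\cdot}{\cL}$, which is exactly the $\bar c$-independence asserted (without proof) in Definition~\ref{def:boundprob}, and a line verifying it makes the argument self-contained. A second, purely cosmetic, matter is that ``$\lim_{i\to\infty}$'' and ``set equality up to finite difference'' tacitly assume the index set $I$ behaves like $\NN$; I would simply phrase the conclusion as ``$\Bb{\cP}{\cL}$ and $\Bb{\cQ}{\cL}$ coincide for all sufficiently large $i$'', which is what $\sim$ is intended to mean.
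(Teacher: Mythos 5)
Your argument is correct and is essentially the paper's own: both reduce to the observation that $|P_i(\cL)-Q_i(\cL)|\le\tfrac12\|P_i-Q_i\|_1\to0$ via Proposition~\ref{propos:oneadd}, and then note that an index in the symmetric difference forces this quantity to exceed a fixed positive constant, which can only happen finitely often. The paper skips your threshold-unification step by working directly with $\bar c_p+\bar c_q<1$ and the symmetric difference, but that is a cosmetic difference, not a different route.
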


\begin{proof}
Let $\bar{c}_p$ and $\bar{c}_q$ be the two constants in $(0,\frac12)$ used in the definitions of $\Bb{\cP}{\cL}$ and $\Bb{\cQ}{\cL}$ respectively.  
Then 
\begin{eqnarray*}
  \lim_{i \rightarrow \infty} ||P_i - Q_i||_1 = 0  
  &\Rightarrow&  \max_S P_i(S) - Q_i(S) \rightarrow 0 \\
  &\Rightarrow&  | P_i(\cL) - Q_i(\cL) | ~=:~ r_i \rightarrow 0.
\end{eqnarray*}
Now if $i \in \Bb{\cQ}{\cL} \backslash \Bb{\cP}{\cL}$ then $P_i(\cL) \le \bar{c}_p$ and $Q_i(\cL) \ge 1-\bar{c}_q$.  This means that $r_i \ge 1 - \bar{c}_q - \bar{c}_p > 0$.  But if $r_i$ is to tend to zero, then it can only take values above the positive constant $1 - \bar{c}_q - \bar{c}_p$ finitely often, and so the symmetric difference of the two languages must be finite.
\end{proof}

\section{Post-selection}  \label{sect:postselection}

This section discusses the idea of \emph{post-selection}, which is a non-operational concept.  It is the idea that when making experiments of a probabilistic nature, one might focus on those instances whose outcomes satisfy a certain condition, and then analyse the outcomes as though those were the only instances.  Of course, the post-selection condition may itself be exceptionally rare, which is why the corresponding decision languages tend to be very large and `non-operational' in nature.  

We consider that post-selection is a useful conceptual tool to have in mind when looking at paradigms for quantum computation, and also potentially for proving results about classical compexity classes (\cf{} \cite{lit:Aa04}).  Notation is introduced here, mirroring \S\ref{sect:op} as much as possible, but no use is made of these ideas until Chapter~\ref{chap:IQP}.

Following what we did in \S\ref{sect:op}, we begin with a definition analogous to the \emph{statistical distance}, but for post-selective concepts.  Then we give a definition that compresses probability distributions down to two-point distributions, this time using post-selective post-processing rather than ordinary post-processing.  We proceed by considering again families of distributions, and we discuss what these ideas might mean for simulation.
Definition~\ref{def:postprob}---for Post-selected decision languages with arbitrary post-processing---may be seen as an abstract generalisation of standard definitions for classes such as $\BPP_{path}$, $\mathbf{PostBQP}$, \etc

\subsubsection*{Non-standard distance measures}

The following non-standard measure\footnotemark{} for gaps between probability distributions is offered as a candidate for the analogue of the statistical distance in a post-selective context, justified by its use in Proposition~\ref{propos:thing2}.
\footnotetext{It is defined similarly to the Renyi Information Divergence, but with important differences.}
\begin{definition}[\emph{Cf.} Definition~\ref{def:lpaddgap}]
  For $p \in [1,\infty]$, the \emph{$l_p$ multiplicative gap} between distributions $P$ and $Q$ is infinite if $P$ and $Q$ have different support; otherwise it is given by
\begin{eqnarray*}
  || P / Q ||_p  &:=&  \left(~ \sum_x | \log P(x) - \log Q(x)|^p ~\right)^{\frac1p},
\end{eqnarray*}
where the sum is taken over the (mutual) support.
In the case $p=\infty$, a limit is taken.
\end{definition}
Like the additive gap measures defined earlier, these multiplicative gap measures are symmetric, in that $||P/Q||_p = ||Q/P||_p$.

\subsubsection*{The post-selective distance}

Here is an elementary remark on the case $p=\infty$, which we henceforth dub the \emph{post-selective distance}~:
\begin{proposition}[\emph{Cf.} Propos.~\ref{propos:oneadd}]  \label{propos:infmult}
  Let $P$ and $Q$ be two distributions with the same support $D$, and let $k$ be a positive integer.
\begin{eqnarray*}
  || P/Q ||_\infty  &=&  \max_{\cL \subseteq D} \left| \log \frac{P(\cL)}{Q(\cL)} \right|.  \\
  || P^{\otimes k}/Q^{\otimes k} ||_\infty  &=&  k \cdot || P/Q ||_\infty.
\end{eqnarray*}
\end{proposition}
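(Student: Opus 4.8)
The plan is to unwind the definition: for distributions $P,Q$ with common support $D$, the quantity $|| P/Q ||_\infty$ is by Definition~\ref{def:lpaddgap} (in the limit $p \to \infty$) nothing but $\max_{x \in D}\left| \log P(x) - \log Q(x) \right| = \max_{x \in D}\left| \log\tfrac{P(x)}{Q(x)} \right|$. Both asserted identities then follow from this together with one elementary fact, the \emph{mediant inequality}: if $b_1,\dots,b_m > 0$ then $\frac{a_1+\cdots+a_m}{b_1+\cdots+b_m}$ lies in the closed interval spanned by the ratios $a_1/b_1,\dots,a_m/b_m$, being a $b_i$-weighted average of them.

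For the first line I would prove the two inequalities separately. For ``$\ge$'', given any $\cL \subseteq D$ I write $\frac{P(\cL)}{Q(\cL)} = \frac{\sum_{x \in \cL} P(x)}{\sum_{x \in \cL} Q(x)}$ and apply the mediant inequality to conclude this ratio lies between $\min_{x \in \cL}\frac{P(x)}{Q(x)}$ and $\max_{x \in \cL}\frac{P(x)}{Q(x)}$; hence $\left| \log\frac{P(\cL)}{Q(\cL)} \right| \le \max_{x \in \cL}\left| \log\frac{P(x)}{Q(x)} \right| \le || P/Q ||_\infty$. For ``$\le$'' I simply take $\cL$ to be the singleton $\{x^\star\}$ for a point $x^\star$ achieving $\max_{x \in D}\left| \log\frac{P(x)}{Q(x)} \right|$, so that the bound is already realised by a one-element subset. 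Together these give $\max_{\cL \subseteq D}\left| \log\frac{P(\cL)}{Q(\cL)} \right| = || P/Q ||_\infty$.

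For the second line I would note that $P^{\otimes k}$ and $Q^{\otimes k}$ have common support $D^k$, and that $\log$ turns the product structure into a sum: $\log\frac{P^{\otimes k}(x_1,\dots,x_k)}{Q^{\otimes k}(x_1,\dots,x_k)} = \sum_{i=1}^{k}\log\frac{P(x_i)}{Q(x_i)}$. Thus $|| P^{\otimes k}/Q^{\otimes k} ||_\infty = \max_{(x_1,\dots,x_k) \in D^k}\left| \sum_{i=1}^{k}\log\frac{P(x_i)}{Q(x_i)} \right|$; the triangle inequality bounds this above by $k \cdot \max_{x \in D}\left| \log\frac{P(x)}{Q(x)} \right| = k\, || P/Q ||_\infty$, and taking every coordinate equal to a maximiser $x^\star$ (so that all $k$ summands share a common sign) attains it, giving the matching lower bound.

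I do not expect a genuine obstacle here, since every step is a one-line manipulation. The only points needing a word of care are the mediant inequality itself (a standard convexity fact) and the bookkeeping when $D$ is infinite: a maximiser need not exist, so I would replace ``$\max$'' by ``$\sup$'' and use $\eps$-near-maximisers in the lower-bound directions, and when $|| P/Q ||_\infty = +\infty$ or when $P$ and $Q$ fail to share support the asserted equalities hold trivially in the extended reals under the conventions already fixed in the Definition.
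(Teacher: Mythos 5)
Your proof is correct and follows essentially the same route as the paper's: both rest on the observation that $\|P/Q\|_\infty = \max_{x}\left|\log\tfrac{P(x)}{Q(x)}\right|$, that the supremum over $\cL$ is attained by a singleton, and that the tensor-power case follows from additivity of $\log$ over independent coordinates with all $k$ coordinates set to a common maximiser. The paper's proof is terser---it simply asserts that singletons are optimal---whereas you explicitly supply the mediant inequality that justifies this step and also flag the supremum/infinite-support caveats, but the underlying argument is identical.
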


\begin{proof}
  In the limit $p \rightarrow \infty$, the definition immediately tells us that $|| P/Q ||_\infty = \max_x | \log \frac{P(x)}{Q(x)} |$.  But $\left| \log \frac{P(\cL)}{Q(\cL)} \right|$ is maximal when $\cL$ contains only the singleton $x$ that maximises this expression.  The second line also follows from the same observation.    
\end{proof}

\subsubsection*{Post-selection}

As before, we wish to compress a probability distribution down onto two points, so as to imply a decision.  But this time, we condition on some specific type of outcome before taking that decision.  So it is necessary to use a \emph{nested pair} of decision languages $\cL_S \subseteq \cL_C$ to compress down onto two points, as follows~:
\begin{definition}[\emph{Cf.} Definition~\ref{def:twopoint}]
  Let $P$ be a probability distribution with support $D \subseteq \{0,1\}^*$, and let $\cL_S \subseteq \cL_C \subseteq \{0,1\}^*$ be some fixed nested pair of decision languages.  If $P(\cL_C ) \not= 0$ then define the \emph{fully post-selected renormalised} two-outcome probability distribution $\tilde{P}_{\cL_S \subseteq \cL_C} : \{ \top, \bot \} \rightarrow [0,1]$ as follows.
\begin{eqnarray*}
  \tilde{P}_{\cL_S \subseteq \cL_C}(\top)  &:=&  \frac{P(\cL_S)}{P(\cL_C)}, \\
  \tilde{P}_{\cL_S \subseteq \cL_C}(\bot)  &:=&  \frac{P(\cL_C \backslash \cL_S)}{P(\cL_C)}.
\end{eqnarray*}
\end{definition}
This interpretation of ratios of probabilities as conditionals is due to Bayes's Theorem.

As before, the \emph{bias} of $\tilde{P}$ can be defined as $\tilde{P}(\top) - \tilde{P}(\bot)$.

Note that it is possible to amplify the bias of $\tilde{P}_{\cL_S \subseteq \cL_C}$ by taking its $k$-fold product and appealing to the majority-vote method of Lemma~\ref{lem:amplify}.  In fact, we can do slightly better in the post-selection case, to obtain a better amplification rate.

\medskip
\begin{lemma}[\emph{Cf.} Lemma~\ref{lem:amplify}]
  If $P$ is some probability distribution, and $\cL_S \subseteq \cL_C$ are languages, and $\tilde{P}_{\cL_S \subseteq \cL_C}$ has bias $b$, then for any positive integer $k$ of our choosing, we could take $Q = P^{\otimes k}$ and take modified languages $\cL'_S \subseteq \cL'_C$ such that the bias of $\tilde{Q}_{\cL'_S \subseteq \cL'_C}$ has the same sign as $b$ and has magnitude at least $1 - e^{-k|b|}$.
\end{lemma}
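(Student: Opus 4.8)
The plan is to replace the majority vote of Lemma~\ref{lem:amplify} by a \emph{unanimity} post-selection. Having drawn the $k$-tuple $(x_1,\dots,x_k)$ from $Q = P^{\otimes k}$, I would condition not merely on every coordinate lying in $\cL_C$, but on all $k$ samples delivering the \emph{same} verdict relative to $\cL_S$. Concretely, take $\cL'_S$ to be the set of (suitably encoded) $k$-tuples all of whose coordinates lie in $\cL_S$, and set $\cL'_C := \cL'_S \cup \{\,(x_1,\dots,x_k) : x_i \in \cL_C \setminus \cL_S \text{ for every } i\,\}$. Then $\cL'_S \subseteq \cL'_C \subseteq \{0,1\}^*$ as required, and for $k=1$ this just recovers the pair $\cL_S \subseteq \cL_C$.

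Write $a := P(\cL_S)$ and $c := P(\cL_C \setminus \cL_S)$, so that $a + c = P(\cL_C) \neq 0$ and $b = (a-c)/(a+c)$. Independence gives $Q(\cL'_S) = a^k$ and $Q(\cL'_C \setminus \cL'_S) = c^k$, hence $Q(\cL'_C) = a^k + c^k \neq 0$ (at least one of $a,c$ is positive), so $\tilde{Q}_{\cL'_S \subseteq \cL'_C}$ is well defined with bias $b' = (a^k - c^k)/(a^k + c^k)$. Because $t \mapsto t^k$ is nondecreasing on $[0,\infty)$, $b'$ has the same sign as $a - c$, which is the sign of $b$ (and when $b = 0$ we get $b' = 0 = 1 - e^{-k|b|}$, so the claim holds trivially).

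For the magnitude it suffices, by the symmetry $a \leftrightarrow c$, to treat $b \geq 0$, where $a \geq c \geq 0$ and $b' \geq 0$. Then
\[
  1 - b' \;=\; \frac{2c^k}{a^k + c^k} \;\le\; \frac{2^k c^k}{(a+c)^k} \;=\; \left(\frac{2c}{a+c}\right)^{\!k} \;=\; (1-b)^k \;\le\; e^{-kb},
\]
the first inequality being the convexity estimate $(a+c)^k \le 2^{k-1}(a^k + c^k)$ (Jensen, or the power-mean inequality, applied to $t \mapsto t^k$) and the last being $1 + x \le e^x$; rearranging yields $b' \ge 1 - e^{-k|b|}$. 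This is genuinely sharper than the $1 - 2e^{-k|b|^2/2}$ of Lemma~\ref{lem:amplify}, the exponent now being linear rather than quadratic in $|b|$. There is no real obstacle here: the only points needing care are that $a + c \neq 0$ (which is exactly the standing hypothesis that $\tilde{P}$ is defined) and that the displayed chain stays valid, indeed tight, in the boundary case $c = 0$, where the post-selected distribution degenerates to a point mass on $\top$.
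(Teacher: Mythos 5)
Your construction of $\cL'_S$ and $\cL'_C$ is exactly the paper's: $\cL'_S = (\cL_S)^{\otimes k}$ and $\cL'_C = (\cL_S)^{\otimes k} \cup (\cL_C \setminus \cL_S)^{\otimes k}$, and you arrive at the identical closed form $b' = (a^k - c^k)/(a^k + c^k) = \bigl((1+b)^k - (1-b)^k\bigr)/\bigl((1+b)^k + (1-b)^k\bigr)$. Where you diverge is in certifying $b' \ge 1 - e^{-k|b|}$. The paper rearranges this to the target inequality $2e^{kb} \le 1 + \bigl(\tfrac{1+b}{1-b}\bigr)^k$, checks it at $b=0$, differentiates both sides, and reduces to the power-series fact $e^b \le \tfrac{1+b}{1-b}$ — a calculus argument that has to be tracked over the whole interval. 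Your chain
\[
  1 - b' \;=\; \frac{2c^k}{a^k+c^k} \;\le\; \left(\frac{2c}{a+c}\right)^{\!k} \;=\; (1-b)^k \;\le\; e^{-kb}
\]
replaces that with a single application of the power-mean inequality $(a+c)^k \le 2^{k-1}(a^k+c^k)$ (convexity of $t\mapsto t^k$) and the elementary bound $1-b \le e^{-b}$. This is shorter, entirely algebraic, and makes visible the sharper intermediate fact $1-b' \le (1-b)^k$, which the paper's bound only recovers after further relaxation. Both proofs are correct; yours is tighter at the intermediate step and avoids the differentiation bookkeeping, at the mild cost of invoking the power-mean inequality, which the paper does not. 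You also explicitly handle the degenerate boundary ($b=0$, $c=0$), which the paper leaves implicit.
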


\begin{proof}
To prove this, take $\cL'_S = (\cL_S)^{\otimes k}$ and $\cL'_C = (\cL_C \backslash \cL_S)^{\otimes k} \cup (\cL_S)^{\otimes k}$.
Plugging these into the definition, the new bias is seen to be
\begin{eqnarray}
  b'  &=&  \frac{\tilde{P}(\top)^k - \tilde{P}(\bot)^k}{\tilde{P}(\top)^k + \tilde{P}(\bot)^k} 
     ~~=~~ \frac{ (1+b)^k - (1-b)^k }{ (1+b)^k + (1-b)^k }.
\end{eqnarray} 
Now for $0 \le b < 1$, we need only show
\begin{eqnarray}  \label{eqn:dansineq}
  2 e^{kb} &\le&  1 + \left(\frac{1+b}{1-b}\right)^k;
\end{eqnarray} 
the case $-1 < b \le 0$ is symmetrically the same.

Line~(\ref{eqn:dansineq}) can be established analytically.  It clearly holds at $b=0$.  Then the first derivative of the left side is $2k e^{bk}$, while the first derivative of the right side is $\frac{2k}{1-b^2} (\frac{1+b}{1-b})^k$, so it suffices to show (for positive $b$ up to 1)
\begin{eqnarray}
  2ke^{kb} &\le&  \frac{2k}{1-b^2} \cdot \left(\frac{1+b}{1-b}\right)^k.
\end{eqnarray}
This is easily seen for $k=0$, and for other (positive) values of $k$ it suffices if
\begin{eqnarray}
  e^b  &\le&  \frac{1+b}{1-b}.
\end{eqnarray}
This last line follows immediately (term by term) from the power series expansions.
\end{proof}

\subsubsection*{Families of distributions, post-selected}

The following definition is for post-selective classes of decision languages.
\begin{definition}[\emph{Cf.} Definition~\ref{def:boundprob}]  \label{def:postprob}
  Let $\cP = \{P_i : i \in I \}$ be a family of probability distributions, indexed by some totally ordered indexing set $I$.  Let $\cL_S \subseteq \cL_C$ be a nested pair of decision languages.  Let $\bar{c}$ be a constant in $(0,\frac12)$.  For every index $i$ for which $P_i( \cL_C ) \not= 0$, the ratio $P_i( \cL_S )/P_i( \cL_C )$ lies in one of the three partitions $[0,\bar{c}]$, $(\bar{c},1-\bar{c})$, or $[1-\bar{c}, 1]$; and we can tri-partition the set of all such indices accordingly.  If this works for \emph{all} $i$, and the middle partition turns out to be empty, then we define the semantic decision class $\Postb{\cP}{\cL_S \subseteq \cL_C}$ to be the third partition~:
\begin{eqnarray*}
  \Postb{\cP}{\cL_S \subseteq \cL_C}  &:=&
    \{~ i \in I ~:~ P_i( \cL_S ) \ge (1-\bar{c}) \cdot P_i( \cL_C ) ~\},
\end{eqnarray*}
which is in fact independent of $\bar{c}$ whenever defined.
\end{definition}

Although not operationally relevant, this definition is just as general as the earlier one for \emph{bounded probability}, again making no reference to the origin or complexity of the distributions in question.

This definition can be used as an alternate way of constructing classes such as $\BPP_{path}$ or $\mathbf{PostBQP}$.  For example, a generic $\BPP_{path}$ decision language can be defined in the form $\Postb{\cP}{\cL_S \subseteq \cL_C}$ by fixing some polynomial-time randomized Turing machine $\cM$ and taking $P_i$ to be the distribution of the output string of $\cM$ on input the string $i$; while $\cL_S$ and $\cL_C$ could simply be the sets of all strings beginning ``11\ldots'' and ``1\ldots'' respectively.

\subsubsection*{Non-operational simulation}

It will not have escaped the reader's notice that we have tried to make our post-selective constructions and discussions of section~\ref{sect:postselection} follow a parallel course to the constructions and discussions of section~\ref{sect:op}.
And so it remains to prove one more analogous Proposition.
 
\begin{proposition}[\emph{Cf.} Propos.~\ref{propos:thing}]  \label{propos:thing2}
  If $\cL_S \subseteq \cL_C$ are decision languages, and $\cP$ and $\cQ$ are two families of probability distribution for which both $\Postb{\cP}{\cL_S \subseteq \cL_C}$ and $\Postb{\cQ}{\cL_S \subseteq \cL_C}$ are defined, then the following implication is valid.
\begin{eqnarray*}
  \lim_{i \rightarrow \infty} || P_i/Q_i ||_\infty = 0
    &\Rightarrow&  \Postb{\cP}{\cL_S \subseteq \cL_C} \sim \Postb{\cQ}{\cL_S \subseteq \cL_C},
\end{eqnarray*}
where the relation $\sim$ denotes set equality up to finite difference.
\end{proposition}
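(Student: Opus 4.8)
The plan is to mirror the proof of Proposition~\ref{propos:thing}, replacing the statistical distance by the post-selective distance $||\cdot/\cdot||_\infty$ and additive estimates by multiplicative ones. First I would fix constants $\bar{c}_p,\bar{c}_q \in (0,\frac12)$ witnessing the definitions of $\Postb{\cP}{\cL_S \subseteq \cL_C}$ and $\Postb{\cQ}{\cL_S \subseteq \cL_C}$ respectively. Since both classes are assumed to be defined, $P_i(\cL_C) \ne 0$ and $Q_i(\cL_C) \ne 0$ for \emph{every} index $i$; and the hypothesis $||P_i/Q_i||_\infty \to 0$ forces $||P_i/Q_i||_\infty < \infty$ for all sufficiently large $i$, so that $P_i$ and $Q_i$ share a common support $D_i$ from some point onwards (the finitely many exceptional indices do not affect a conclusion stated up to finite difference).

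The key step is to convert convergence of the post-selective distance into convergence of the ratio of post-selected biases. By Proposition~\ref{propos:infmult}, for any $\cL \subseteq D_i$ one has $\left| \log \tfrac{P_i(\cL)}{Q_i(\cL)} \right| \le ||P_i/Q_i||_\infty$. Applying this with $\cL = \cL_S \cap D_i$ and then with $\cL = \cL_C \cap D_i$ --- using that $P_i(\cL_S) = P_i(\cL_S \cap D_i)$ and $Q_i(\cL_S) = Q_i(\cL_S \cap D_i)$, and likewise for $\cL_C$, since $P_i,Q_i$ vanish off $D_i$ --- and subtracting the two logarithms gives
\[
  \left| \log \frac{P_i(\cL_S)/P_i(\cL_C)}{Q_i(\cL_S)/Q_i(\cL_C)} \right| ~\le~ 2\,||P_i/Q_i||_\infty ~\longrightarrow~ 0 \qquad (i \to \infty).
\]
Writing $\tilde{P}_i(\top)$ for $\tilde{P_i}_{\cL_S\subseteq\cL_C}(\top) = P_i(\cL_S)/P_i(\cL_C)$, this says $\log\bigl(\tilde{P}_i(\top)/\tilde{Q}_i(\top)\bigr) \to 0$. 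One degenerate case must be dispatched: if $\cL_S \cap D_i = \emptyset$ then $P_i(\cL_S) = Q_i(\cL_S) = 0$, so $\tilde{P}_i(\top) = \tilde{Q}_i(\top) = 0$ and $i$ lies in neither class; hence in the argument below we may assume $\cL_S \cap D_i \ne \emptyset$, keeping all logarithms finite.

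Finally I would argue by contradiction, exactly as in Proposition~\ref{propos:thing}. Suppose $i \in \Postb{\cQ}{\cL_S \subseteq \cL_C} \setminus \Postb{\cP}{\cL_S \subseteq \cL_C}$. Then $\tilde{P}_i(\top) = P_i(\cL_S)/P_i(\cL_C) \le \bar{c}_p$ while $\tilde{Q}_i(\top) = Q_i(\cL_S)/Q_i(\cL_C) \ge 1-\bar{c}_q$, so $\tilde{P}_i(\top)/\tilde{Q}_i(\top) \le \bar{c}_p/(1-\bar{c}_q) < 1$, whence $\log\bigl(\tilde{P}_i(\top)/\tilde{Q}_i(\top)\bigr)$ is bounded above by the strictly negative constant $\log\frac{\bar{c}_p}{1-\bar{c}_q}$. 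Combined with the displayed limit, this can occur for only finitely many $i$. The reverse inclusion is handled identically after exchanging the roles of $\cP$ and $\cQ$, using the symmetry $||P_i/Q_i||_\infty = ||Q_i/P_i||_\infty$. Hence the symmetric difference of the two languages is finite, which is precisely $\Postb{\cP}{\cL_S \subseteq \cL_C} \sim \Postb{\cQ}{\cL_S \subseteq \cL_C}$.

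The computation itself is routine; the one delicate point --- and where I expect to have to be careful --- is the bookkeeping around supports and zero-probability events: intersecting $\cL_S$ and $\cL_C$ with the common support $D_i$ before invoking Proposition~\ref{propos:infmult}, and verifying that the degenerate situations (supports of $P_i$ and $Q_i$ differing, or $P_i(\cL_S)$ vanishing) contribute nothing to the symmetric difference. Everything else is a faithful transcription of the $l_1$ argument into its $l_\infty$-multiplicative analogue.
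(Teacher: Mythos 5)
Your proof is correct and follows essentially the same route as the paper's: invoke Proposition~\ref{propos:infmult} to control the ratios $P_i(\cL_S)/Q_i(\cL_S)$ and $P_i(\cL_C)/Q_i(\cL_C)$, combine them to bound the ratio of conditional probabilities, and derive a contradiction from the constants $\bar{c}_p,\bar{c}_q$ if $i$ lies in the symmetric difference infinitely often. The only difference is cosmetic --- you work with logarithms where the paper uses a multiplicative sequence $r_i \to 1$ --- plus you spell out the support-matching and $\cL_S\cap D_i=\emptyset$ degeneracies that the paper leaves implicit.
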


\begin{proof}
  Let $\bar{c}_p$ and $\bar{c}_q$ be the two constants in $(0,\frac12)$ used in the definitions of the post-selective languages $\Postb{\cP}{}$ and $\Postb{\cQ}{}$ respectively.
  Since $\max_\cL \left| \log \frac{P_i(\cL)}{Q_i(\cL)} \right| \rightarrow 0$ by Proposition~\ref{propos:infmult}, it follows that there must be some real sequence $r_i$, tending to 1 from above, such that the values $\frac{P_i( \cL_S )}{Q_i( \cL_S )}$ and $\frac{P_i( \cL_C )}{Q_i( \cL_C )}$ both lie in $[1/r_i, r_i]$.  Therefore, every time $i \in \Postb{\cQ}{} \backslash \Postb{\cP}{}$, it follows that 
\begin{eqnarray*}
  r_i^2  &\ge&  \frac{Q_i(\cL_S)}{Q_i(\cL_C)} \cdot \frac{P_i(\cL_C)}{P_i(\cL_S)}
        ~~\ge~~  \frac{1-\bar{c}_q}{\bar{c}_p}.
\end{eqnarray*}
This means that $r_i \ge \sqrt{(1-\bar{c}_q)/\bar{c}_p} > 1$, which can occur only finitely often since $r_i \rightarrow 1$.  Likewise for the case $i \in \Postb{\cP}{} \backslash \Postb{\cQ}{}$, and so the symmetric difference of the two languages must be finite.
\end{proof}

\subsubsection*{Synopsis}

The key idea for our understanding of post-selection concepts is that whenever an operational paradigm for quantum computing is proposed, it may be a hard or impossible task to show that it is universal for $\BQP$, but it may be much easier to show that its natural post-selective variant (with suitable limitations on $\cL_S$ and $\cL_C$) is universal for $\PP$ (Aaronson \cite{lit:Aa04} has shown this to be equal with $\mathbf{PostBQP}$).
Since there is no known way to establish the equivalence of $\BPP_{path}$ with $\PP$, any such demonstration of post-selective $\PP$-universality is tantamount to a proof that there will be no oblivious classical simulation strategy for the operational version of the paradigm. 
This means that one can argue for `genuinely quantum computational effects' or `intractibility of simulation' without needing a full-blown $\BQP$-powerful architecture.
We will put this into practice in Chapter~\ref{chap:IQP}.

\section{Computing with Mixed States}  \label{sect:DQC1}

In this section, we consider formalising the notion of ``computing with just one pure qubit''; a paradigm that was introduced in \cite{lit:KL9812}, and further investigated in \cite{lit:ASV06}.
In this paradigm, arbitrary quantum circuits are allowed, but the inupt to the quantum circuit must have very limited purity.  Moreover, the input is fixed, and so cannot be used to index the elements of a decision language.
We shall explore what can be done with uniform families of quantum circuits in this paradigm, introduce a particular model\footnotemark{} and class of decision languages, and argue for why the model is aptly described using the terminology of the present Chapter.
\footnotetext{I posted research notes on this subject on the arXiv in 2006, but didn't pursue publication of them at the time.}

\subsection{Overview}

In \cite{lit:KL9812}, Knill and Laflamme considered an extreme limitation on state purity by asking for computations that have only one pure qubit, the other qubits being fully depolarised.  This paradigm they called ``DQC1''.\footnotemark{}
\footnotetext{The `D' here stands for `deterministic', which is being used to mean what I have elsewhere termed `operational'.}
They asked about what can be computed if one allows arbitrary quantum circuits of polynomial length, taking input as described above, and measuring a single qubit to obtain an output of the computation.

It is convenient to denote mixed states using the \emph{density operator} formalism.  A density operator is essentially\footnotemark{} the quantum generalisation of a probability distribution.  
This operator encodes all the information about the propensity for how a state will behave in relation to all possible measurements.  (With respect to a computational basis, the diagonals of a matrix representation of such an operator correspond to an actual probability distribution.)
\footnotetext{Under the Everettian interpretation, density operators are taken to describe objective states rather than subjective states of knowledge, but just as the nature of \emph{probability} is philosophically ambiguous, the same can be said for density operators.}  
Accordingly, using a discrete TIME model and a finite-dimensional unitary space, one can take state space for mixed quantum computation as the convex hull (in the space of linear functions) of rank-1 Hermitian projectors on the unitary space, rather than the unitary space itself. 

After outlining prior work on this subject, we shall introduce a more formal way of expressing operationally relevant decision languages that naturally belong to this paradigm, and show that the ``one pure qubit'' analogue of $\BQP$ contains the class $\ParL$.  We shall also provide oracle separations, both ways (one of which is new), between this class and $\P$.

\subsection{Prior work}  \label{sect:DQCPrior}

The physical motivation for the DQC1 paradigm comes from architectures based on Nuclear Magnetic Resonance (NMR), where purity of quantum state is hard come by.
For NMR computing, the mixed state that one is forced to work with has its mixedness spread across all qubits, so that they are initialised in a `hot' state of the form
\begin{eqnarray}
  \rho &=& \left(  \frac{1+\eps}{2}\ketbra00 + \frac{1-\eps}{2}\ketbra11  \right)^{\otimes n}.
\end{eqnarray}
But using an analogy from thermodynamics, in \cite{lit:SV99} it is shown how to build an \emph{efficient unitary circuit} that `distils' out purity with high probability, leaving a state that is close to $\ketbra00$ on the first $n - n \cdot H(\eps) -o(1)$ qubits.  (Here $H$ measures the entropy of the state, and so the limit is effectively tight.)
And so (\cf{} \cite{lit:ASV06}), provided such transformations are reasonable within one's computational model, it is no loss of generality to restrict one's attention to the more `digital' perspective whereby the initialisation state (as a density operator in an algebra of $w$ qubits) is taken to be
\begin{eqnarray}  \label{eqn:rhostart}
  \rhoS(k,w)  &=&  \frac{ \left(1+Z\right)^{\otimes k} }{ 2^w },
\end{eqnarray}
where $w$ counts the total number of qubits and $k \le w$ counts the number which are pure, with the standard Pauli operators being used to describe density operators. 
(For subscript notation throughout this section, we take qubits $[1..k]$ to be the ones initially pure, and $[k+1..w]$ to be the ones initially depolarised.)

If one begins with state $\rhoS(1,w)$, applies an arbitrary unitray map $W_{[1..w]}$ across all qubits, and then measures the first qubit in the computational basis, one obtains $\ket0$ with \emph{bias}
\begin{eqnarray}  \label{eqn:DQC1bias}
  Tr[~ \left( W_{[1..w]} \cdot \rhoS(1,w) \cdot W_{[1..w]}^\dag \right) \cdot Z_1 ~]  
  \phantom{XXXX}\\
  =~~  2^{-w} \cdot Tr[~ W_{[1..w]} \cdot Z_1 \cdot W_{[1..w]}^\dag \cdot Z_1 ~].
  \nonumber
\end{eqnarray}

Knill and Laflamme showed that this paradigm---even with just the one pure qubit---can be used to estimate the trace of a unitary operator, as outlined below.  Moreover, there is a sense in which the problem of trace estimation is complete for the paradigm \cite{lit:SJ08}.  Given a circuit for an arbitrary unitary $V$ on $w-1$ qubits, the unitary used to estimate the (real part of the) trace of $V$ is taken to be $W_{[1..w]} = H_1 \cdot \Lambda_1( V_{[2..w]} ) \cdot H_1$, because then the measurement bias is
\begin{eqnarray}  \label{eqn:tracest}
  2^{-w} \cdot Tr[~ W_{[1..w]} \cdot Z_1 \cdot W_{[1..w]}^\dag \cdot Z_1 ~]  &=&
  2^{1-w} \cdot Tr[~ Re[V] ~].
\end{eqnarray}
Such biases can be amplified in the usual fashion, by parallel instantiation and majority vote (\cf{} Lemma~\ref{lem:amplify}).  

Ambainis, Schulman, and Vazirani \cite{lit:ASV06} showed that the non-uniform version of the complexity class $\NC^1$ (classical, polynomial time, logarithmic circuit depth) is computable within a non-uniform model of the ``one pure qubit'' DQC1 paradigm, and also showed that there is no obvious efficient way (\ie{} no \emph{oblivious} technique) to simulate a circuit with $k$ pure qubits using fewer pure qubits, except at the cost of exponentially decaying efficiency.

Shor and Jordan \cite{lit:SJ08} discussed the differences between considering quantum circuits supplied by a polynomial-time classical computer and a more restricted classical computer computing only $\NC^1$.  They also showed that, even in the weaker model, having logarithmically many pure qubits is no better than having just one, provided it is understood that one is free to make polynomially many runs of DQC1-type experiments, with majority-vote post-processing, in order to make any specific decision.

\subsection{Decision languages for mixed states}  \label{sect:egDQC1}

What makes models within the DQC1 paradigm a little different from the usual notion of a computation model?
\begin{itemize}
  \item
One is not permitted to make intermediate measurements (or other non-unitary gates) during the execution of a circuit, since otherwise such operations could be used to introduce new purity into the system, effectively boosting its power back to that of universal $\BQP$ computing (\cf{}~\S\ref{sect:adaption}).
  \item
One cannot define decision languages in terms of the input into a unitary circuit in this model, because the quantum input is always constrained to be the one given at line~(\ref{eqn:rhostart}).  We shall see that this means that classical input must be interfaced via classical control of the circuit elements.
  \item
The computational power of the model is potentially affected by how many bits can be interfaced out of the computation at measurement time~: \eg{} $1$, $k$, or $w$?
\end{itemize}

The three points above must be addressed properly if one is to use the paradigm to define formally a class of decision languages.  But before we attempt such a definition (Def.~\ref{def:DQCk}), let us first give an algorithm for a $\ParL$-complete decision language~: the task of evaluating  one output bit of a polynomial-sized classical circuit composed entirely of C-Not gates.
(The decision language corresponding to this class is not believed to be in $\NC^1$, despite the fact that matrix multiplication over the field $\FF_2$ can be computed in logarithmic circuit depth \cite{lit:Damm90,lit:BDHM92}.)

\medskip
Let $\{C_i\}$ be a family of classical circuits composed entirely of C-Not gates, such that the number of bits input to $C_i$ is equal to $i$.  Let $\cL$ be the language of strings $x$ which, when input to the appropriate $C_i$, cause the first output bit to be 1~:
\begin{eqnarray}  \label{eqn:ParLlang}
  \cL  &=&  \{~ x \in \{0,1\}^* ~:~ i = len(x), \mbox{ first bit of } C_{i}(x) = 1 ~\}.
\end{eqnarray}

Now let $x$ be some particular input string, and let $i = len(x)$.
Let the $i$ bits of the string $x$ be denoted $x_2,x_3,\ldots,x_{i+1}$.
Let $s_2,s_3,\ldots,s_{i+1}$ be the output bits of $C_i(x)$, so that $s_2$ is the bit whose setting decides whether $x \in \cL$.
Suppose we wish to determine whether or not $x \in \cL$, using some DQC1-style computation.  To do this, we must specify a quantum circuit, denoted $W(x)$, designed in some appropriately uniform manner (relative to the uniformity of the family $\{C_i\}$, see below) that will be used to compute the value $s_2$.

Let $w=i+1$ measure the total number of qubits on which our circuit will act, so that it makes sense to apply our circuit $W(x)$ to the state $\rhoS(1,w)$ in accordance with line~(\ref{eqn:DQC1bias}).

Next, let $V_i(x)$ be the circuit on qubits $[1..w]$ that consists of one C-Not gate from qubit $1$ to qubit $j$ each time that bit $x_j$ is set.  This we notate
\begin{eqnarray}  \label{eqn:classiccontrol}
  V_i(x)_{[1..w]}  &:=&  \prod_{j=2}^w ~\Lambda_1(X_j)^{x_j}.
\end{eqnarray}
Informally, we say that this circuit will be used to interface the information contained within the string $x$ to the `DQC1 algorithm' that we are designing.  More formally, the incorporation of $V_i(x)$ as a `subroutine' within $W(x)$ is to be the \emph{only} way in which $W(x)$ depends on $x$, so that a sensible notion of uniformity applies to the family $\{W(x)\}$.

Let $Q_B$ denote the Hadamard gate being applied to \emph{every} qubit (see~\S\ref{sect:QFTdefs} for an explanation of this notation).
Let $C_i$ be recast as a \emph{quantum} circuit to be applied on qubits $[2..w]$.
Finally, let $U(x) := Q_B \cdot C_i \cdot V_i(x) \cdot Q_B$, and define $W(x)$ to be the overall circuit given by 
\begin{eqnarray}  \label{eqn:DQC1W}
  W(x)_{[1..w]}  &:=&  U(x)^\dag_{[1..w]} \cdot X_2 \cdot U(x)_{[1..w]}.
\end{eqnarray}

\medskip
\begin{lemma}  \label{lem:DQC1doesParL}
  When the circuit $W(x)$ defined above is applied to the state $\rhoS(1,w)$, the bias as specified at line~(\ref{eqn:DQC1bias}) is always either 1 or -1, and is -1 exactly when $x \in \cL$ as defined at line~(\ref{eqn:ParLlang}).  
\end{lemma}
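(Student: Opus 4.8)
The plan is to compute the bias at line~(\ref{eqn:DQC1bias}) directly, exploiting the fact that $W(x)$ is a conjugate of the Pauli $X_2$ and that every factor of $U(x)$ is a Clifford operation whose action by conjugation on Pauli operators is transparent. Write $U := U(x)$ and $W := W(x) = U^\dag X_2 U$. First I would note that $W$ is Hermitian and an involution: $W^\dag = U^\dag X_2 U = W$ and $W^2 = U^\dag X_2^2 U = I$. Hence the bias equals $2^{-w}\,Tr[\,W Z_1 W Z_1\,]$, and by cyclicity of the trace this is $2^{-w}\,Tr[\,X_2\,P\,X_2\,P\,]$, where $P := U Z_1 U^\dag$. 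So the whole computation reduces to identifying the Pauli operator $P$.

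Next I would push $Z_1$ through $U = Q_B \cdot C_i \cdot V_i(x) \cdot Q_B$ layer by layer using the standard conjugation rules. The outer $Q_B$ (a Hadamard on every qubit) sends $Z_1 \mapsto X_1$. Then $V_i(x) = \prod_{j=2}^w \Lambda_1(X_j)^{x_j}$ is a product of pairwise-commuting C-Not gates all sharing control qubit $1$, and each conjugates $X_1 \mapsto X_1 X_j$, so $V_i(x)\, X_1\, V_i(x)^\dag = X_1 \prod_{j=2}^w X_j^{x_j}$. Then $C_i$, viewed as a C-Not circuit on qubits $[2..w]$, acts on computational basis states by $\ket{y} \mapsto \ket{My}$ for an invertible $\FF_2$-linear map $M$ with $Mx = C_i(x) = s$; such a circuit commutes with $X_1$ and conjugates the $X$-type Pauli labelled by $x$ to the one labelled by $s$, giving $X_1 \prod_{j=2}^w X_j^{s_j}$. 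Finally the inner $Q_B$ sends each $X_k \mapsto Z_k$, so $P = Z_1 \prod_{j=2}^w Z_j^{s_j}$.

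To conclude: $X_2$ commutes with $Z_1$ and with $Z_j$ for $j \ge 3$, but anticommutes with $Z_2$, so $X_2 P X_2 = (-1)^{s_2} P$; and since $P$ is a product of $Z$-type Paulis, $P^2 = I$. Therefore $Tr[\,X_2 P X_2 P\,] = (-1)^{s_2}\, Tr[\,I\,] = (-1)^{s_2}\, 2^w$, and the bias is exactly $(-1)^{s_2}$. This is $1$ when $s_2 = 0$ and $-1$ when $s_2 = 1$; since $x \in \cL$ precisely when the first output bit $s_2$ of $C_i(x)$ equals $1$ (line~(\ref{eqn:ParLlang})), the bias is $-1$ exactly when $x \in \cL$, as claimed.

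The step that deserves the most care --- and the one I would expect to be the main obstacle in writing this up rigorously --- is verifying that a circuit built purely from C-Not gates, acting on basis states as $\ket{y}\mapsto\ket{My}$, conjugates $\prod_j X_j^{v_j}$ to $\prod_j X_j^{(Mv)_j}$; this is what lets the combinatorial content of $C_i$ (its output bit $s_2$) surface as the final sign, and it is best checked by computing the action on an arbitrary $\ket{y}$. The remaining ingredients --- the single-gate Pauli conjugation identities, the observation that the factors of $V_i(x)$ pairwise commute so no ordering issue arises, and the cyclicity manipulations of the trace --- are routine.
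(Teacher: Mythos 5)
Your proof is correct and arrives at exactly the same key identity as the paper does --- $U(x)\,Z_1\,U(x)^\dag = Z_1\prod_{j\ge2}Z_j^{s_j}$, which is equivalent to the paper's claim about $U(x)\cdot\rhoS(1,w)\cdot U(x)^\dag$ at line~(\ref{eqn:DQC1actU}) --- but your route to it is different and, I think, cleaner. The paper verifies this conjugation identity in the Schr\"odinger picture, expanding $\rhoS(1,w)$ as a mixture of states $\ket0\ket{r}$ and tracking phases through $Q_B\cdot C_i\cdot V_i(x)\cdot Q_B$, then appealing to an orthogonality argument to lift the pure-state computation back to the density operator. You stay entirely in the Heisenberg picture, pushing $Z_1$ through the four layers of $U(x)$ gate by gate, which works uniformly because every gate involved is Clifford; the only non-immediate step, as you note, is the rule $V_M X^{v} V_M^\dag = X^{Mv}$ for a linear permutation unitary $\ket{y}\mapsto\ket{My}$, and that is a one-line check on basis states. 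Your closing step also differs slightly in form: you keep the bias as a trace $2^{-w}\,Tr[X_2\,P\,X_2\,P]$ and apply cyclicity and the anticommutation $X_2\,P\,X_2 = (-1)^{s_2}P$, whereas the paper conjugates the density operator by $W(x)$ and reads off the coefficient of $Z_1$; these are the same calculation up to rearrangement. Overall your version avoids the paper's mixture/orthogonality reasoning and is easier to audit, at the cost of needing the reader to be comfortable with stabilizer-style Pauli tracking.
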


\begin{proof}
We claim that the effect of $U(x)=U(x)_{[1..w]}$ on $\rhoS(1,w)$ is to map it to
\begin{eqnarray} \label{eqn:DQC1actU}
  U(x)_{[1..w]} \cdot \frac{1+Z_1}{2^w} \cdot U(x)^\dag_{[1..w]}  &=&
    \frac{1 + Z_1 \cdot Z_2^{s_2} \cdot Z_3^{s_3} \cdots Z_w^{s_w}}{2^w}.
\end{eqnarray}
Perhaps the easiest way to see why this claim holds is to regard $\rhoS(1,w)$ as being the proper uniform mix of pure states $\ket{0}_1\ket{r}_{[2..w]}$, where $r$ ranges over all $i$-bit strings.  
Write $R$ for $C_i(r)$.
Then the effect of $U(x)$ on $\ket0\ket{r}$ is readily seen to follow from
\begin{eqnarray}
  V_i(x) \cdot Q_B \ket0\ket{r}  
    &=&  2^{-w/2}\sum_y (-1)^{r \cdot y} \Bigl( \ket0 \ket{y} + \ket1 \ket{y \oplus x} \Bigr), \nonumber \\
  C_i \cdot V_i(x) \cdot Q_B \ket0\ket{r}  
    &=&  2^{-w/2}\sum_y (-1)^{r \cdot y} \Bigl( \ket0 \ket{C_i(y)} + \ket1 \ket{C_i(y \oplus x)} \Bigr), \nonumber \\ 
    &=&  2^{-w/2}\sum_y (-1)^{C_i(r) \cdot y} \Bigl( \ket0 \ket{y} + \ket1 \ket{y \oplus C_i(x)} \Bigr),  \\
  U(x) \ket0\ket{r}  
    &\propto& \Bigl( (1+(-1)^{R \cdot C_i(x)})\ket0 + (1-(-1)^{R \cdot C_i(x)})\ket1  \Bigr)\ket{R}. \nonumber 
\end{eqnarray}
The proper mix (over $r$) of these states must conform to the claim of line~(\ref{eqn:DQC1actU}), because application of $X_1$ causes each to map to something orthogonal, as does application of $X_j$ exactly when $s_j$ is set; whereas application of $Z_1$ or $Z_j$ causes no physical change.

Then we see (from line~(\ref{eqn:DQC1W})) that the action of $W(x)$ involves `computing' the $s_j$ bits in the sense of line~(\ref{eqn:DQC1actU}) by applying $U(x)$, then `kicking' the value of $s_2$ into the internal phase of the state by applying $X_2$, then finally `uncomputing' the $s_j$ bits by applying $U(x)^\dag$, so that we are left with 
\begin{eqnarray}
  W(x) \cdot \frac{1+Z_1}{2^w} \cdot W(x)^\dag  &=&  \frac{1 + (-1)^{s_2}Z_1}{2^w}.
\end{eqnarray}
The bias for this state (\cf{} line~(\ref{eqn:DQC1bias})) is $(-1)^{s_2}$, as required.
\end{proof}

The processing involved in the construction of Lemma~\ref{lem:DQC1doesParL} is achieved efficiently and deterministically (the final measurement returning a classical deterministic bit), using just one pure qubit, but the circuit $W(x)$ that provided the processing of data within the quantum memory required to incorporate two copies of $V_i(x)$.  That is, the algorithm required the ability to `read' the input bit-string twice, each time reading its bits in arbitrary order.

\subsection{$\DQC{k}$ and parity-control}  \label{sect:paritycontrol}

Here we offer a definition for a class of decision languages, based on the ideas used within the construction of Lemma~\ref{lem:DQC1doesParL}, but generalised to allow for computations that are not deterministic.

Besides the parameters $k$ and $w$ for determining the initial quantum state, we also need a parameter $i$ to determine the length of the classical string that will be used to control some of the gates within the circuit, which is the same string that the circuit is effectively `deciding' on.  And we need another parameter $b$ that describes the magnitude of the bias that the circuit must produce for \emph{all} valid inputs, since very tiny biases are not to be considered operationally significant (\cf{} \S\ref{sect:op}).
Parameters $k,w,b$ will all be taken to be functions of the argument $i$.

Finally, we need a sensible mechanism for describing \emph{how} the bits of the classical input string $x$ will control the circuit's gates.  It seems appropriate to adopt \emph{parity-control}, which means that if a gate is subject to classical control (\eg{} just as the gates at line~(\ref{eqn:classiccontrol}) depend on classical bits from $x$), it will be controlled by an $\FF_2$-linear function of the input $x$.   
\begin{definition}
  A gate $U$ is said to be under \emph{parity-control} from the \emph{input string} $x$ according to the \emph{control specification string} $c$ if the gate is applied (in its turn) when the circuit is executed if and only if the derived bit $c \cdot x$ should be set.  This parity-controlled gate is denoted $U^{c \cdot x}$.
\end{definition}
That is, a gate from a quantum circuit may have included within its description an arbitrary but explicit control specification string $c$, to describe how the string $x$ should affect whether or not the gate is to be applied.
The length of the control specification string $c$ should obviously match the length of the input string $x$, which is $i$. 
This device is a generalisation of the classical notion of a \emph{sequential branching program} studied in \cite{lit:Bar87} for example, although less directly related to the \emph{graphical $\Omega$-branching programs} discussed in \cite{lit:Damm90}. 

Note that when the number of pure qubits is not limited, there is a trivial reduction from `ordinary' quantum circuits (with classical input directly made quantum in the computational basis) to parity-controlled circuits with `null' quantum input $\ket{\mathbf0}$~: \viz, the first thing the parity-controlled circuit would do to simulate the ordinary circuit is to implement $X$ on qubit $i$, controlled by the parity of the single input bit $x_i$ (assuming of course that $X$ is amongst the allowable quantum gates).  Having done this, the rest of the simulating circuit would just proceed with the simulated circuit `uncontrolled' by classical input bits.  And so parity-controlled circuits can usefully be standardised in paradigms other than DQC1, particularly appropriate whenever one has no need for the concepts of \emph{circuit composition} and \emph{quantum communication}, or no notion of \emph{preprocessing} classical data before forming quantum data from it (\ie{} quantum input/output).

Here then is a definition for a DQC1-style complexity class, informed by the discussion above and by Definition~\ref{def:boundprob} of \S\ref{sect:op}.
\begin{definition}  \label{def:DQCk}
  Consider a uniform family of quantum circuits $\{W(i)\}_{i=1}^{\infty}$, some of whose gates may be under \emph{parity-control}.  
  Let $k=k(i) \le w=w(i)$ be a pair of polynomially bounded complexity functions, with $w(i)$ counting the width of $W(i)$.  Let $0<b=b(i)=\Omega(1/poly(i)) \le 1$ be another function.
  Then partition up the set of all $x \in \{0,1\}^*$ each according to which of the three sets
\begin{eqnarray*}
  [-1,-b], ~~(-b,b), ~~[b,1]
\end{eqnarray*}
contains the bias
\begin{eqnarray*}
  Tr\bigl[~ W(i)_{[1..w]} \cdot \rhoS(k,w) \cdot W(i)_{[1..w]}^\dag \cdot Z_1 ~\bigr],
\end{eqnarray*}
where the argument $i=len(x)$ is used throughout.
If the middle partition turns out to be empty (no string $x$ causes a negligible bias), then we define the semantic decision language $\cL_{k,W,w,b}$ to be the third partition~: 
\begin{eqnarray*}
  \cL_{k,W,w,b}  &:=&  \{~ x \in \{0,1\}^* ~:~ i = len(x),  ~Tr[ W(i) \cdot \rhoS(k,w) \cdot W(i)^\dag \cdot Z_1 ] \ge b ~\}.
\end{eqnarray*}
The class $\DQC{k}$ contains all such $\cL_{k,W,w,b}$ for that value of $k$.  (The union of \emph{all} these classes is clearly $\BQP$.)
\end{definition}

\medskip
This definition is based on Definition~\ref{def:boundprob}, but an important difference is that since the one-pure-qubit model has no apparent way to amplify bias \emph{within} the quantum part of computation, we instead allow for polynomially small bias rather than constant bias.

Here is the main result of this section~:

\begin{corollary}  \label{cor:DQC1main}
  $\ParL \subseteq \DQC{1}$.
\end{corollary}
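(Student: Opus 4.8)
The plan is to lean on Lemma~\ref{lem:DQC1doesParL} for the quantum content and then supply the classical complexity-theoretic bookkeeping that turns ``one $\ParL$-complete language lies in $\DQC{1}$'' into ``$\ParL\subseteq\DQC{1}$''. First I would fix a $\ParL$-complete language $\cL$ of exactly the shape of line~(\ref{eqn:ParLlang}): the value of a designated output bit of a log-space-uniform ``programmable'' C-Not circuit on input $x$, where ``programmable'' means that some C-Not gates of $C_i$ are present or absent according to parity-controls read off $x$. That this problem is $\ParL$-complete under $\L$-reductions is the classical fact that it is interchangeable with iterated matrix multiplication over $\FF_2$ (\cf{}~\cite{lit:Damm90,lit:BDHM92}); I would also note that the proof of Lemma~\ref{lem:DQC1doesParL} applies verbatim to this programmable family, because for each fixed $x$ the circuit $C_i$ is still an $\FF_2$-linear permutation of the computational basis, which is all the proof uses.

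Next I would check that the circuit produced by Lemma~\ref{lem:DQC1doesParL} meets every clause of Definition~\ref{def:DQCk}. The dependence on $x$ is confined to which parity-controlled gates fire---those of $V_i(x)=\prod_j\Lambda_1(X_j)^{x_j}$, with control specification $e_j$, and (for the programmable form) those inside $C_i$---so this is genuinely a length-indexed family $\{W(i)\}$ of parity-controlled circuits; it is log-space uniform relative to the uniformity of $\{C_i\}$, since recasting $C_i$ as a quantum circuit, conjugating by $Q_B$, inserting $V_i(x)$, and forming $U(x)^\dag X_2 U(x)$ are all log-space operations; the width is $w(i)=i+1=\poly(i)$ and there is $k=1$ pure qubit. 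Lemma~\ref{lem:DQC1doesParL} says the bias of line~(\ref{eqn:DQC1bias}) is always $\pm 1$, and $-1$ exactly when $x\in\cL$, so I would append a single $X_1$ at the end of each circuit---which flips the sign of that trace---to obtain a family $W'$ whose bias is $+1$ exactly when $x\in\cL$. Taking $b(i)\equiv 1$ (a constant, hence $\Omega(1/\poly(i))$) makes the middle partition $(-b,b)=(-1,1)$ empty, so $\cL_{1,W',w,b}$ is defined and equals $\cL$; therefore $\cL\in\DQC{1}$. (Since $\ParL$ is closed under complementation one could equally skip the $X_1$ and argue with $\bar\cL$.)

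Finally I would pass to $\ParL\subseteq\DQC{1}$ in one of two equivalent ways. Directly: the construction above used nothing about $\cL$ beyond its shape, and \emph{every} $\cL'\in\ParL$ has that shape---from the configuration graph of a $\ParL$ machine for $\cL'$, whose one-step transition matrix has entries that are literals of the input, the count of accepting paths mod $2$ is a designated entry of a product of input-selected elementary $\FF_2$-matrices, i.e.\ a designated output bit of a programmable C-Not circuit applied to $x$ (after harmless padding)---so feeding this family into the previous two paragraphs gives $\cL'\in\DQC{1}$. Via closure: one shows $\DQC{1}$ is closed under $\L$-many-one reductions, a reduction to $\cL$ being absorbed partly into the (log-space) circuit-generating machine and partly into the control-specification strings, since a C-Not gate of $C_i$ that should fire according to some log-space predicate of $x$ can be realised as that gate under parity-control from the corresponding literal.

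The genuinely quantum part---that one pure qubit together with phase kickback reads off the output of an $\FF_2$-linear reversible computation---is already done inside Lemma~\ref{lem:DQC1doesParL}, so I expect the main obstacle to be purely the classical accounting of the last paragraph: pinning down precisely which formulation of ``C-Not circuit value'' is $\ParL$-complete, verifying that the $\DQC{1}$ construction is stable under the $\L$-reductions witnessing that completeness (equivalently, that every $\ParL$ language is directly of the form handled by the Lemma after harmless padding), and confirming the uniformity and constant-bias bookkeeping needed by Definition~\ref{def:DQCk}.
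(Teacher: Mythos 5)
Your argument is correct but takes a genuinely different route from the paper's. The paper's proof says only that Lemma~\ref{lem:DQC1doesParL} places a $\ParL$-complete language in $\DQC{1}$, and then invokes the Shor--Jordan result $\DQC{log}=\DQC{1}$ together with the assertion $\L\subseteq\DQC{log}$, leaving the closure-under-$\L$-reduction bookkeeping implicit. You bypass $\DQC{log}$ and Shor--Jordan entirely: you note that Lemma~\ref{lem:DQC1doesParL} holds verbatim for \emph{parity-controlled} (``programmable'') C-Not circuit families, since its proof only uses that, for each fixed $x$, the middle layer is an $\FF_2$-linear permutation of the computational basis; you then argue directly that every $\ParL$ language has this shape, because the one-step $\FF_2$-transition matrix of the configuration graph of a log-space machine has entries that are literals of $x$, so the iterated product is exactly a C-Not circuit whose gates carry single-bit parity-controls. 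This is the more self-contained and explicit of the two arguments, and it makes visible the reduction-absorption bookkeeping that the paper compresses into one sentence by appealing to an external equivalence. One caveat on your ``via closure'' alternative: the output bits of a general $\L$-reduction need not be $\FF_2$-affine in $x$, so ``a C-Not gate that should fire according to some log-space predicate of $x$'' cannot in general be realised by a single parity-controlled gate; it is precisely the configuration-graph route---where each gate is controlled by one input literal---that makes the parity-control mechanism sufficient. Since your main line is the direct one, the overall proof stands; you may simply tighten or drop the closure variant.
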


\begin{proof}
Definition \ref{def:DQCk} clearly allows scope for our algorithm of Lemma~\ref{lem:DQC1doesParL} to ensure that a $\ParL$-complete language is contained within $\DQC{1}$.

Moreover, the result of Shor and Jordan \cite{lit:SJ08} about the utility of logarithmically many qubits not exceeding that of a single qubit likewise holds under this definition, with essentially no modification to their proof, so that $\DQC{log} = \DQC{1}$.

But it is trivial that $\L \subseteq \DQC{log}$, completing the argument.
\end{proof}

One may think of the structure $\{\DQC{k}\}_k$ as forming a hierarchy that reaches from the simplest model of the paradigm (one pure qubit) up to full $\BQP$ universality ($\DQC{poly}$).  In \cite{lit:ASV06}, it is shown that an `oblivious' simulation of a program in this hierarchy by a program much lower in the hierarchy is impossible; but now we see that a formal unconditional proof of this hierarchy's not collapsing would constitute an unconditional separation between $\DQC{1}$ and $\BQP$, and thence also imply an unconditional separation between (say) $\ParL$ and $\PP$ by Corollary~\ref{cor:DQC1main} (\cf{}~\cite{lit:Aa04}, and also~\S\ref{sect:openconj}).

\subsection{Oracle separations for $\DQC{1}$}

One can use the notion of an \emph{oracle}~(\S\ref{sect:circuit-interfaces}) to make a formal \emph{relativised} separation between complexity classes.  In this context, an oracle would take the form of a (non-uniform) family of permutations on the set $\FF_2^w$, supplied as so-called ``black-box unitaries'' or classically as ``black-box functions''.  

\medskip
\begin{proposition}  \label{prop:bb}
  There is a ``black-box'' oracle $\cO$ for which $\P^\cO \not\subseteq \DQC{1}^\cO$.
\end{proposition}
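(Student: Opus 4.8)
The plan is to build a black-box oracle $\cO$ together with a language $\cL_\cO$ that is obviously in $\P^\cO$ --- decidable with a single classical query --- but that no one-clean-qubit circuit family can compute, exploiting the fact that the $\DQC{1}$ acceptance bias $2^{-w}\,Tr[W Z_1 W^\dagger Z_1]$ is a \emph{normalised trace}, hence almost unaffected by localised edits to the oracle. Take $\cO$ to consist of permutations $\cO_n$ of $\FF_2^n$, each of which is either the identity $\mathrm{id}$ or the transposition $\tau_n$ swapping $0^n$ with $0^{n-1}1$ (lengths untouched by the construction default to $\mathrm{id}$), and put $\cL_\cO := \{\,x : \cO_{|x|}(0^{|x|}) \neq 0^{|x|}\,\}$. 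For \emph{any} such $\cO$, a deterministic machine decides $\cL_\cO$ by making the single query $\cO_{|x|}(0^{|x|})$ and comparing, so $\cL_\cO \in \P^\cO$; the work is in choosing the particular $\cO$ so that $\cL_\cO \notin \DQC{1}^\cO$.

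The crux is an insensitivity estimate. Let $W$ be any relativised unitary circuit on $w$ wires that invokes the length-$n$ oracle gate exactly $T$ times (other oracle gates being of other lengths and held fixed throughout). Toggling $\cO_n$ from $\mathrm{id}$ to $\tau_n$ replaces $W$ by $W + D$, where a hybrid argument writes $D = \sum_{j=1}^{T} L_j\,\Delta\,R_j$ with $L_j,R_j$ unitary and $\Delta = (\tau_n - \mathrm{id}) \otimes I$ the perturbation in the query unitary. Since the matrix $\tau_n - \mathrm{id}$ has a single nonzero singular value, equal to $2$, we get $\|\Delta\|_1 = 2^{\,w-n+1}$ (trace norm) and $\|\Delta\| = 2$ (operator norm), whence $\|D\|_1 \le T\,2^{\,w-n+1}$ and $\|D\| \le 2T$. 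Expanding the quadratic form $f(W) := Tr[W Z_1 W^\dagger Z_1]$ into its three cross/diagonal terms and applying $|Tr[AB]| \le \|A\|_1\|B\|$ (the $Z_1(\cdot)Z_1$-conjugated factors being unitary, or bounded by $\|D\|$) yields
\begin{equation*}
  \bigl|\,f(W+D) - f(W)\,\bigr| \;\le\; 2\,\|D\|_1 + \|D\|_1\,\|D\| \;\le\; 2^{\,w-n+3}\,T^2 ,
\end{equation*}
so after the $2^{-w}$ normalisation the $\DQC{1}$ bias moves by at most $2^{\,-n+3}T^2$. For a fixed device $T = \poly(n)$, so this is $2^{-n+O(\log n)}$, which beats every inverse polynomial for large $n$.

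With this in hand I diagonalise. Enumerate all candidate one-clean-qubit devices (log-space-uniform parity-controlled oracle circuit families together with a polynomial width bound), dovetailing so each is revisited at infinitely many fresh input lengths. When handling a device $M$ at a length $n$ larger than anything used so far, compare $M$'s bias $\beta_0$ on input $0^n$ under $\cO_n = \mathrm{id}$ with its bias $\beta_1$ under $\cO_n = \tau_n$; by the estimate $|\beta_0 - \beta_1| \le \delta_n := 2^{\,-n+3}T^2$. If $\beta_0 > 0$, set $\cO_n := \mathrm{id}$ (so $0^n \notin \cL_\cO$): if $M$'s language is defined at all it must then accept $0^n$, so it differs from $\cL_\cO$, and $M$ is retired; symmetrically, if $\beta_0 < -\delta_n$, set $\cO_n := \tau_n$ and retire. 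Otherwise both biases lie in $(-2\delta_n,2\delta_n)$; set $\cO_n := \mathrm{id}$ and defer $M$ to a later stage. A device deferred forever exhibits biases of magnitude $\le 2\delta_n = 2^{-n+O(\log n)}$ along a subsequence $n\to\infty$, so its worst-case bias fails to be $\Omega(1/\poly)$ and it admits no valid bias bound, i.e.\ it yields no $\DQC{1}^\cO$ language either. Every candidate is thus defeated, giving $\cL_\cO \notin \DQC{1}^\cO$ while $\cL_\cO \in \P^\cO$, as required.

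The one genuinely delicate point is the insensitivity estimate --- in particular, recognising that the relevant norm is the \emph{trace} norm rather than the operator norm (a transposition is an $\Omega(1)$ perturbation in operator norm, so a naive hybrid bound is worthless), and that the $2^{-w}$ normalisation intrinsic to the $\DQC{1}$ bias is precisely what turns the inevitably $2^{w-n}$-sized trace-norm perturbation into a negligible $2^{-n+O(\log n)}$. The surrounding diagonalisation --- the dovetailing, and the ``straddling'' case where $\beta_0$ and $\beta_1$ bracket $0$ --- is routine bookkeeping once the estimate is available.
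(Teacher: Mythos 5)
Your proof is correct and captures exactly the idea the paper has in mind: the paper's own proof consists of a single citation to Knill and Laflamme \cite{lit:KL9812} for the observation that the normalised-trace form of the $\DQC{1}$ bias is insensitive to localised oracle edits, together with the remark that their argument carries over with only minor changes. Your trace-norm hybrid estimate (with the crucial switch from operator norm to trace norm so the $2^{-w}$ normalisation bites) and the ensuing diagonalisation are precisely the details the paper leaves implicit by deferring to that reference.
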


\begin{proof}
An example is given in \cite{lit:KL9812}, showing implicitly why certain `classically easy' facts about an oracle cannot be learned using only DQC1 methodology.  The same proof works for this Proposition, with only very minor changes.
\end{proof}

Simon's algorithm~\cite{lit:Si97} provides an oracle for establishing a separation of the form $\BQP^\cO \not\subseteq \BPP^\cO$.  With a small change, the same kind of oracle establishes the converse to Proposition~\ref{prop:bb}, as follows.
\medskip
\begin{proposition}
  There is a ``black-box'' oracle $\cO$ for which $\DQC{1}^\cO \not\subseteq \P^\cO$.
\end{proposition}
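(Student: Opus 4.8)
The plan is to follow the standard Simon-type oracle separation, but to replace the period-\emph{finding} task---which the one-clean-qubit model appears unable to perform, since the relevant traces are diluted exponentially---by a \emph{counting} question about the underlying collision structure, which the model \emph{can} resolve through its trace-estimation primitive (line~(\ref{eqn:tracest})). Fix, for each $n$, the ``period direction'' $t_n := 0^{n-1}1 \in \FF_2^n$. The oracle $\cO$ will supply, for each $n$, the involution $\ket{x}\ket{y}\mapsto\ket{x}\ket{y\oplus f_n(x)}$ on $2n$ bits (a permutation, as required here), where $f_n:\FF_2^n\to\FF_2^n$ respects the pairing $\{x, x\oplus t_n\}$: a fraction $\rho_n$ of the $2^{n-1}$ pairs are \emph{collision pairs} (with $f_n$ equal on the pair) and the rest are \emph{separated pairs} (with $f_n$ distinct), and $\rho_n$ is promised to lie in $[0,\tfrac14]\cup[\tfrac34,1]$. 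Put $\cL^\cO := \{\,x\in\{0,1\}^* : \rho_{|x|} \ge \tfrac34\,\}$ (membership depending only on the input length); I will show $\cL^\cO \in \DQC{1}^\cO \setminus \P^\cO$.

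\textbf{Membership in $\DQC{1}^\cO$.} Let $U_{f_n}$ be the oracle unitary and $X^t_A$ the tensor product of Pauli-$X$ operators, selected by $t$, on the qubits of a domain register $A$ ($n$ qubits); write $C^{(f_n)}_{t_n} := X^{t_n}_A\, U_{f_n}\, X^{t_n}_A\, U_{f_n}$, acting on $A$ and a range register $B$ ($n$ qubits). A one-line computation gives $C^{(f_n)}_{t_n}\ket{x}_A\ket{y}_B = \ket{x}_A\ket{y\oplus f_n(x)\oplus f_n(x\oplus t_n)}_B$, whence $\mathrm{Tr}\bigl[C^{(f_n)}_{t_n}\bigr] = 2^n\cdot\#\{x : f_n(x)=f_n(x\oplus t_n)\} = 2^{2n}\rho_n$. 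Because $\rho_n$ is never negative this cannot be fed straight into Definition~\ref{def:DQCk}, so I centre it with a reference: let $r_n$ be a fixed, gate-implementable function on $\FF_2^n$ whose collision-pair fraction is exactly $\tfrac12$ (for instance, let $r_n$ ignore the last bit on half the inputs and copy it on the other half), and form $C^{(r_n)}_{t_n}$ analogously, so $\mathrm{Tr}\bigl[C^{(r_n)}_{t_n}\bigr] = 2^{2n-1}$. Using the one available pure qubit $q_0$ together with a maximally mixed control qubit $q_1$, build the blockwise unitary $V := \ketbra00_{q_1}\otimes C^{(f_n)}_{t_n}\ \oplus\ \ketbra11_{q_1}\otimes\bigl(-C^{(r_n)}_{t_n}\bigr)$ (each block unitary, hence $V$ unitary and real) and run the Knill--Laflamme circuit $W_n := H_{q_0}\cdot\Lambda_{q_0}(V)\cdot H_{q_0}$, which ignores the content of its input. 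With $w = 2n+2$ qubits of which exactly $q_0$ is pure, line~(\ref{eqn:tracest}) gives measurement bias $2^{1-w}\,\mathrm{Tr}[V] = \tfrac12\bigl(\rho_n-\tfrac12\bigr)$, which is $\ge\tfrac18$ when $\rho_n\ge\tfrac34$ and $\le-\tfrac18$ when $\rho_n\le\tfrac14$; since these are the only possibilities, the middle partition of Definition~\ref{def:DQCk} is empty and $b=\tfrac18$ works. The family $\{W_n\}$ uses $O(n)$ elementary gates and two (controlled) oracle calls and is log-space uniform, so $\cL^\cO\in\DQC{1}^\cO$.

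\textbf{Non-membership in $\P^\cO$.} This is the usual adversary/diagonalisation argument for Simon-flavoured oracles. A deterministic machine running in time $\mathrm{poly}(n)$ queries $f_n$ at most $\mathrm{poly}(n)$ times, hence learns the collision status of at most $\mathrm{poly}(n)$ of the $2^{n-1}$ pairs; since $\mathrm{poly}(n)=o(2^{n-1})$, an adversary may assign collision/separation to the untouched pairs so as to force $\rho_n\ge\tfrac34$ exactly when the machine rejects length-$n$ inputs and $\rho_n\le\tfrac14$ exactly when it accepts, thereby defeating it. Carrying this out at one sufficiently large input length for each deterministic polynomial-time machine (a routine staged enumeration) yields a single oracle $\cO$ against which none of them decides $\cL^\cO$, so $\cL^\cO\notin\P^\cO$ and therefore $\DQC{1}^\cO\not\subseteq\P^\cO$.

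The substantive step---and the only place needing real care---is producing a \emph{non-negligibly} biased one-clean-qubit circuit. Trying to recover a Simon period $s$ directly fails: a single pure qubit cannot inject the coherence needed for the coset-superposition step, and any averaging over candidate periods or over inputs dilutes the signal exponentially, below the $\Omega(1/\mathrm{poly})$ threshold of Definition~\ref{def:DQCk}. Swapping the search problem for the collision-density estimate is what rescues it---but it is also why the separation is only from $\P$ and not from $\BPP$ (a randomised machine simply samples pairs to estimate $\rho_n$), which is consistent both with the statement of the Proposition and with the remark that only ``a small change'' to Simon's oracle is needed; verifying that the reference trick genuinely yields a bias symmetric about $0$, and that $\{W_n\}$ meets the uniformity conventions attached to Definition~\ref{def:DQCk}, is then routine.
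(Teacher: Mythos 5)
Your construction is correct and follows a genuinely different route from the paper's. The paper also starts from a Simon-flavoured permutation oracle but uses a more elaborate structure: two functions $f_n$ (constant on cosets of a hidden subspace $S_n \le \FF_2^n$) and $g_n$ (constant on cosets of $T_n$), with the decision being whether $S_n \bot T_n$; the estimated quantity $2^{-w}\,Tr\bigl[(H^{\otimes n}U\,H^{\otimes n}U')^2\bigr]$ collapses to exactly $1$ or $0$, and the classical lower bound is a careful counting argument giving $\Omega(2^{n/4})$ queries. Your version is lighter: one function, a publicly known period $t_n$, a collision-density promise on $\rho_n$, the simpler trace $Tr\bigl[X^{t_n}U_{f_n}X^{t_n}U_{f_n}\bigr] = 2^{2n}\rho_n$, and a routine adversary/diagonalisation for the lower bound. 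You also handle explicitly a point the paper leaves implicit: since the paper's trace estimation returns bias $1$ or $0$, but Definition~\ref{def:DQCk} demands the bias be bounded away from zero \emph{on both sides} (the middle partition $(-b,b)$ must be empty), some centering such as your depolarised-control-plus-reference-block $-C^{(r_n)}$ is genuinely required in either proof. The trade-offs are worth noting. Your $\Lambda_{q_0}(V)$ with $V$ block-controlled on $q_1$ turns the controlled $U_{f_n}$ calls into \emph{doubly}-controlled oracle calls; whether that is permitted depends on exactly what controlled access the black box grants, and without a clean ancilla (of which $\DQC{1}$ has none to spare) there is no oracle-free way to reduce $\Lambda^2(U_{f_n})$ to $\Lambda(U_{f_n})$, so you should either posit that the oracle supplies controlled access at whatever arity is requested, or present the centering as two separate $\DQC{1}$ experiments with classical post-processing (as permitted by the discussion around \cite{lit:SJ08}). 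Against that, the paper's construction stays singly-controlled but, as written, does not satisfy the symmetric-bias clause of Definition~\ref{def:DQCk}. Finally, as you observe yourself, making $t_n$ public renders your promise problem trivially in $\BPP^\cO$, so you only separate from $\P$---but that is precisely what the Proposition asserts, and the paper's lower bound is likewise stated only against deterministic machines, so both proofs land on the same statement.
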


\begin{proof}
The oracle in Simon's algorithm \cite{lit:Si97} is based on randomly selected `hidden shift' functions, $f_n : \FF_2^n \rightarrow \FF_2^m$, with $f(\x)=f(\z) \Leftrightarrow \x+\z \in \{\0, \s\}$, for some random non-zero vector $\s$.
Instead, generalise this to have $f_n$ be a random function that is constant on cosets of some large subspace $S_n \le \FF_2^n$.  Further generalise by taking a similar function $g_n$ as a random function that is constant on cosets of some other large subspace $T_n$.  We shall also need that these functions are distinct on distinct cosets.

The oracle is considered to provide a family of such functions in the usual fashion, parameterised by $n$.  We can let $m$ be any polynomial function of $n$, so we'll pick $m=m(n)=2n$ for a concrete example.

Using these random functions, define the following permutation-unitaries on $w=w(n)=n+2m(n)$ qubits (where $\ket{\a,\b,\c} = \ket{\a}_{[1..n]}\ket{\b}_{[n+1..n+m]}\ket{\c}_{[n+m+1..n+2m]}$)~:
\begin{eqnarray}
  U_{[1..w]}  &:&  \ket{\a,\b,\c} ~~\mapsto~~ \ket{\a,\b+f_n(\a),\c}, \nonumber \\
  U'_{[1..w]} &:&  \ket{\a,\b,\c} ~~\mapsto~~ \ket{\a,\b,\c+g_n(\a)}, \nonumber \\
  V_{[1..w]}  &:=& \Bigl( H^{\otimes n}_{[1..n]} \cdot U_{[1..w]} \cdot H^{\otimes n}_{[1..n]} \cdot U'_{[1..w]} \Bigr)^2.  
\end{eqnarray}
Now let's evaluate the trace of $V$~:
\begin{eqnarray}
  2^{-w} \cdot Tr[V]  
  &=&  2^{-w} \sum_{\a\b\c} \bra{\a,\b,\c} H^{\otimes n} U H^{\otimes n} U' H^{\otimes n} U H^{\otimes n} U' \ket{\a,\b,\c} \\
  &=&  2^{-w-2n} \sum_{\a\b\c\x\y\z} \begin{array}{l}
                   \bra{\b + f_n(\x),\c + g_n(\y)} \\
                   \phantom{XX} 
                   (-1)^{(\a+\y) \cdot (\x+\z)} \ket{\b+f_n(\z),\c+g_n(\a)}. 
                                     \end{array} \nonumber
\end{eqnarray}
The only terms here that won't vanish are those whereby $(\x+\z) \in S_n$ and $(\a+\y) \in T_n$, using the fact that functions $f_n$ and $g_n$ are distinct on different cosets of $S_n$ and $T_n$ respectively, but otherwise constant.  So we make a change of variables, $\s = \x+\z$ and $\t = \a+\y$.
Then 
\begin{eqnarray}
  2^{-w} \cdot Tr[V]  
  &=&  2^{-3n-2m} \sum_{\b\c\x\y} \sum_{\s \in S_n, ~\t \in T_n} (-1)^{\s \cdot \t}  
  \nonumber \\
  &=&  2^{-n} \sum_{\s \in S_n, ~\t \in T_n} (-1)^{\s \cdot \t}.
\end{eqnarray}

If we are careful to ensure that the dimension of $S_n$ matches the codimension of $T_n$, so that $|S_n| \cdot |T_n| = 2^n$, then this expression further simplifies to 
\begin{eqnarray}
  2^{-w} \cdot Tr[V]  
  &=&  \left\{ \begin{array}{ccl} 1 & \mbox{if} & S_n ~\bot~ T_n \\
                                  0 & \mbox{if} & S_n \not{\!\bot}~ T_n 
                                  \end{array} \right..
\end{eqnarray}
One can use the trace-estimation algorithm (\S\ref{sect:DQCPrior}) to distinguish these two cases.  Since the trace-estimation algorithm requires implementing $\Lambda(V)$ once, and since each implementation of $V$ makes use of four oracle calls, it follows that four oracle calls are sufficient for distinguishing between the two cases of ``orthogonal cosets'' \emph{versus} ``non-orthogonal cosets''.

This quantum black box algorithm therefore solves a certain promise-problem, but can the same problem be solved classically efficiently?
No, because in the worst case, exponentially many samples of $f_n$ and $g_n$ are needed.  If the dimension and codimension of each of $S_n$ and $T_n$ is $\frac{n}{2}$, then the domains of each of $f_n$ and $g_n$ partition into $2^{n/2}$ different cosets, on which different values are taken.  There need be no other structure in $f_n$ and $g_n$, and so there is no efficient way even to find an element of $S_n$ or $T_n$.  We formalise this idea next by showing that if a classical algorithm were to sample each of $f_n$ and $g_n$ at any $2^{n/4}$ points each, then it would be possible that no two samples of $f_n$ were found to be the same and neither were two samples of $g_n$ the same, and moreover there would exist a consistent choice of $S_n$ and $T_n$ with $S_n \bot T_n$ as well as a different consistent choice with $S_n{\not\!\!\bot}T_n$.  Therefore the algorithm would fail; which establishes a classical (deterministic worst case) lower bound of $2^{n/4}$ queries required.

Suppose $2^{n/4}$ queries are made of $f_n$.  That amounts to $2^{n/4-1}( 2^{n/4} - 1 )$ \emph{pairs} of (unequal) points sampled, and the two samples of any pair being different is the same thing as the (non-zero) sum of those two points lying outside $S_n$.  
Now the number of non-zero points in $\FF_2^n$ is plainly $2^n-1$, and the number of non-zero points in any candidate subspace $S_n$ of dimension $\frac{n}{2}$ is $2^{n/2}-1$.  Therefore any point being declared to lie outside of $S_n$ denies a proportion $(2^{n/2}+1)^{-1}$ of the possibilities for $S_n$.  (Think of a bipartite graph between non-zero points of $\FF_2^n$ and subspaces of dimension $n/2$.)  Therefore our samples---if they do all turn out to be distinct---must certainly preclude fewer than \emph{half} of all candidate $S_n$ subspaces, since $2^{n/4-1}( 2^{n/4} - 1 ) \cdot (2^{n/2}+1)^{-1} \le \frac12$.  The same reasoning applies to $T_n$.

To each candidate $S_n$ there is precisely one $T_n$ (namely its dual) for which $S_n \bot T_n$ (and plenty of other $T_n$ for which $S_n{\not\!\!\bot}T_n$).  Since more than half of all possible $S_n$ and $T_n$ remain as candidates, it must be possible to find a pair such that $S_n \bot T_n$, as well as a pair for which $S_n{\not\!\!\bot}T_n$.  Since both possibilities are available, no deterministic algorithm having made $2^{n/4}$ queries can possibly solve the problem in the worst case. 
\end{proof}

\subsection{Probabilistic quantum polytime, $\PDQC{k}$}

For completeness, we can also define syntactic classes $\PDQC{k}$ in an analogous fashion, by dropping the requirement that the bias be non-negligible.

\begin{definition}
  In the terminology of Def.~\ref{def:DQCk},
\begin{eqnarray*}
  \cL'_{k,W,w}  &:=&  \{~ x \in \{0,1\}^* ~:~ i = len(x),  ~Tr[ W(i) \cdot \rhoS(k,w) \cdot W(i)^\dag \cdot Z_1 ] \ge 0 ~\}.
\end{eqnarray*}
The \emph{syntactic} class $\PDQC{k}$ contains all such $\cL'_{k,W,w}$ for that value of $k$.
\end{definition}

Relaxing the probability bounds in this manner results in far greater computational power.

\begin{proposition}
  The classes $\PDQC{k}$ are all equal to $\PP$, for all polynomially bounded $k \ge 1$.
\end{proposition}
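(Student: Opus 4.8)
The plan is to prove both inclusions, obtaining $\PP = \PDQC{k}$ for every polynomially bounded $k\ge 1$ via the chain $\PP\subseteq\PDQC{\poly}\subseteq\PDQC{1}\subseteq\PDQC{k}\subseteq\PP$. Throughout I take the underlying gate set to be Toffoli and Hadamard together with $X,Z$; this is universal for our purposes (\S\ref{sect:universal-gatesets}) and keeps all amplitudes dyadic rational, and a parity-controlled gate is resolved into ``fires'' or ``does not fire'' by the (log-space) uniformity machine once the input $x$ is fixed.

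For $\PDQC{k}\subseteq\PP$: on input $x$ with $i=\mathrm{len}(x)$, write the bias as $\mathrm{Tr}[W(i)\,\rhoS(k,w)\,W(i)^\dagger Z_1] = 2^{k-w}\sum_{v}\sum_b (-1)^{b_1}\,|\bra{b}W(i)\ket{0^k,v}|^2$, and expand each amplitude of $W(i)$ as a signed sum over its Hadamard ``coin'' choices; since the family is uniform this makes the whole expression $g(x)/2^{N(i)}$ for some $g\in\mathrm{GapP}$ and polynomial $N$ — exactly the reasoning behind $\BQP\subseteq\PP$ (\cf{}~\cite{lit:ADH97}), applied to a trace rather than a single amplitude. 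Then $\cL'_{k,W,w}=\{x : g(x)\ge 0\}$, which is a $\PP$ language because $\PP=\{\cL : \exists g\in\mathrm{GapP},\ x\in\cL\iff g(x)\ge 0\}$.

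For $\PP\subseteq\PDQC{\poly}$: given a polynomial-time probabilistic machine $M$ for $\cL$, build a width-$w$ circuit $W(i)$ acting on $\rhoS(w,w)=\ketbra{0^w}{0^w}$ that first Hadamards a block of $\poly(i)$ qubits (manufacturing $M$'s coin flips) and then runs $M$'s computation reversibly, with the bits of $x$ supplied by parity-control and the verdict on qubit $1$; everything after the Hadamards being classical-reversible, the coin qubits keep the branches orthogonal, so $\Pr[q_1{=}1]=\Pr[M\text{ accepts }x]$ and the bias is $1-2\Pr[M\text{ accepts }x]$. Hence the bias is $\ge 0$ iff $x\notin\cL$, i.e.\ $\cL'_{w,W,w}=\overline{\cL}$; as $\cL$ ranges over $\PP$ (closed under complement) this gives the inclusion. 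This is just the familiar fact that the unbounded-error quantum analogue of $\PP$ is $\PP$ itself (\cf{}~\cite{lit:Aa04} for the post-selected variant).

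The substantive step is the bottom-collapse $\PDQC{k}\subseteq\PDQC{1}$, which exploits the fact that negligible bias is now allowed. Given $W_0$ on $w$ qubits with $k$ pure ($k=\poly$) and bias $\beta=\mathrm{Tr}[W_0\,\rhoS(k,w)\,W_0^\dagger Z_1]$, run instead, on $\rhoS(1,w)$ (only $q_1$ pure, adjoining a dummy mixed qubit if $w=k$), the circuit
\[
W' \ :=\ \Lambda_{\,q_2\cdots q_k=0}(W_0)\ \cdot\ \mathrm{CNOT}_{q_{k+1}\to q_1}\ \cdot\ \Lambda_{\,q_2\cdots q_k=0}(\mathrm{CNOT}_{q_{k+1}\to q_1}).
\]
On the $2^{-(k-1)}$ fraction of the uniform mixture with $q_2\cdots q_k=0$ this simply applies $W_0$ to states of the form $\ket{0^k}\ket{v''}$, i.e.\ runs $W_0$ on exactly $\rhoS(k,w)$; on the remaining sectors it only copies a maximally mixed qubit into $q_1$, contributing zero net bias. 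So the bias of $W'$ is $2^{1-k}\beta$, which has the same sign as $\beta$; being exponentially small it is still legal in $\PDQC{1}$, so $\cL'_{k,W_0,w}=\cL'_{1,W',w}\in\PDQC{1}$. The multiply-controlled $W_0$ is compiled using the mixed qubits $q_{k+1},\dots,q_w$ as borrowed (``dirty'') scratch that is returned to its state, and the map $W_0\mapsto W'$ is log-space, so uniformity is preserved. The reverse $\PDQC{1}\subseteq\PDQC{k}$ is trivial (leave the extra $k-1$ pure qubits idle). Assembling the chain gives $\PP = \PDQC{k}$ for all polynomially bounded $k\ge 1$. The point I expect to need the most care is precisely this last step: checking that the ``bad'' sectors contribute \emph{exactly} zero to the trace (so that no additive correction is needed — we could not afford one, since the decision is literally ``bias $\ge 0$''), and that the controlled $W_0$ really admits an ancilla-free compilation so that $W'$ still uses only one pure qubit.
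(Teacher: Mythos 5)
Your chain $\PP \subseteq \PDQC{\poly} \subseteq \PDQC{1} \subseteq \PDQC{k} \subseteq \PP$ differs from the paper, which proves $\PP\subseteq\PDQC{1}$ in a single step by Knill--Laflamme trace estimation; you interpose an extra bottom-collapse lemma $\PDQC{\poly}\subseteq\PDQC{1}$. Steps 1, 2 and the trivial $\PDQC{1}\subseteq\PDQC{k}$ are fine, but the bottom-collapse contains a genuine gap.

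The operator $\Lambda_{q_2\cdots q_k=0}(W_0)$ is not a well-defined unitary. A controlled gate $\Lambda_c(U)$ presupposes that the control register $c$ is disjoint from the qubits on which $U$ acts --- the definition $\ketbra00_c\otimes I + \ketbra11_c\otimes U$ (or its negatively-controlled variant) requires that tensor split. Here $W_0$ is a $w$-qubit circuit that acts on \emph{all} of $q_1,\ldots,q_w$; in its intended role on $\rhoS(k,w)$ it does arbitrary things to $q_2,\ldots,q_k$, since those are precisely the pure data qubits it is meant to process. Equivalently: your $W'$ would have to map the subspace $\{q_2\cdots q_k=0\}$ onto $W_0(\{q_2\cdots q_k=0\})$ while mapping the complementary subspace to itself (there it acts only by a CNOT), but $W_0(\{q_2\cdots q_k=0\})$ is in general \emph{not} contained in $\{q_2\cdots q_k=0\}$, so the two images fail to be orthogonal and $W'$ is not an isometry. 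The ``ancilla-free compilation'' worry you flag at the end is therefore downstream of a more basic obstruction: there is no gate to compile. Conceptually, a unitary circuit cannot conditionally turn $k-1$ maximally-mixed qubits into $k-1$ fresh pure ones; conditioning on $q_2\cdots q_k=0$ is a measurement-like operation, which the one-pure-qubit model expressly forbids mid-circuit (\cf{}~\S\ref{sect:egDQC1}). The paper bypasses all of this by building a diagonal phase operator $V$ on $[2..w]$ whose trace is a positive multiple of the acceptance gap of a $\PP$ machine, wrapping it as $W = H_1\Lambda_1(V)H_1$ (a legitimate controlled gate, since $V$ does not touch $q_1$), and reading the sign of the bias $2^{1-w}\,\mathrm{Re}\,\mathrm{Tr}(V)$ directly --- no intermediate class, no ill-typed controlled $W_0$.
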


\begin{proof}
$\PDQC{1} \subseteq \PDQC{k} \subseteq \PP$ follows directly from standard results (\cf{}~\cite{lit:ADH97}), so it suffices to show that $\PP \subseteq \PDQC{1}$.
To see this, we simply apply the trace estimation algorithm of Knill and Laflamme \cite{lit:KL9812} to the unitary that defines an arbitrary efficiently computable Boolean function.

Let $f : \FF_2^n \rightarrow \FF_2$ be a function computable in classical polynomial time, let $V_{[2..w]} = \sum_x (-1)^{f(x)}\ketbra{x}{x}$, and let $W_{[1..w]} = H_1 \cdot \Lambda_1(V_{[2..w]}) \cdot H_1$.
Then apply $W$ to the state $\rhoS(1,w)$ where $w$ is the width of the circuit that implements $W$. 
When the first qubit is measured in the computational basis, it will be $\ket1$ with probability $2^{-n} \cdot \#\{~ x ~:~ f(x) = 1 ~\}$, as required for $\PP$.
\end{proof}

\cleardoublepage

\chapter{The Fourier Hierarchy}
\label{chap:FH}

Classical computation permutes a discrete set of states (\cf{}~\S\ref{sect:classical-comp}), whereas quantum computation (despite the name) allows for a more continuous notion of state evolution.  Therefore perhaps one can make quantum computation `seem' like more of a natural extension of its classical counterpart by limiting to gates of a discrete group.  
This Chapter is concerned with the study of groups of transformations that fix the computational basis, \eg{} the group generated by gates from the set $\{X, \Lambda(X), \Lambda^2(X)\}$.
There are several different ways in which one can think of combining reversible circuits built from basis-preserving gates of this kind.  
For example, one might take the output of one such circuit, rotate each qubit in some prescribed fashion, and input this to the next circuit for further processing.  (We call this \emph{quantum adaption}, because the data being passed from one circuit to the next---determining the next phase of computation---is entirely quantum.)  This idea leads to the Fourier hierarchy of quantum complexity classes, introduced by Shi in \cite{lit:Shi0312}.  It provides us with a measure of quantum computation complexity that has to do with the branching and recombination of computational paths from the perspective of a canonical computational basis, and therefore allows (loosely speaking) for a kind of comparison with classical complexity that appeals to a classical-centric way of thinking.
By interleaving `classical' circuits with quantum basis-changes, resource requirements for a quantum computer (running with, say, polynomial spatial and temporal resources) can be quantified with more granularity~: by asking about both the complexity of the `classical' (non-branching) parts and also by counting the number of basis-changes employed.  

A more limited way of interfacing such circuits together would be to measure the output of one circuit in some pre-specified basis, and then use the resulting \emph{classical} data as classical control on the gates of the next circuit, whose quantum input should be `trivial' in some appropriate sense.  (We call this \emph{classical adaption}, because the data being passed from one circuit to the next is entirely classical.)  This idea leads to the definition of a \emph{Fourier Sampling Oracle}, as discussed in \cite{lit:BV97}.
Such a computing paradigm acquires its power from the fact that the quantum states input to a circuit---as well as the basis in which output measurements are taken---can be different from the computational basis.

Kitaev showed \cite{lit:Kit9511} that the `core part' of Shor's algorithm \cite{lit:Shor95} need not be expressed in terms of some Fourier transform directly related to the group being studied; rather, he developed the technique of \emph{eigenvalue estimation} to solve the \emph{Abelian Stabiliser Problem}, which generalises many of the problems that can be solved using Fourier techniques.  This means that the family of problems that seem to depend on Fourier techniques for their efficient solution (such as integer factorisation, computation of discrete logarithms, the abelian hidden subgroup problem, solving Pell's equation, and so on \cite{lit:Joz98,lit:Halesthesis,lit:Hal0205}), can be rendered efficiently without recourse to `complicated' \emph{Quantum Fourier Transforms}.  
We integrate Kitaev's algorithm with the approach taken here, and modify the control of the algorithm slightly in order to simplify the classical post-processing.
While this, on its own, does not seem to lead to a practical speed-up for solving problems, it does go some way to `demystifying' such algorithms, hopefully making them more accessible to further investigation and development.  In other words, by requiring all of the `work' of computation to be performed within `classical' circuits---encoding essentially no complexity within unitaries that are \emph{not} simply permutations of the computational basis---it is hoped that it could be easier to understand which parts of an algorithm might be easier to optimise, parallelise, or otherwise simplify, especially when adapting an algorithm to target a marginally different problem.
The `naturalness' of restricting to classical gates and Hadamard gates for analysing aspects of complexity has been noted by many authors (see especially \cite{lit:BvDR08} for recent work on algebraic circuits).  In particular, in~\cite{lit:DHHMNO} it is shown that simpler proofs exist for $\BQP \subseteq \PP$ when this approach is taken.
The ideas of this Chapter motivate a similar analysis in Chapter~\ref{chap:IQP} of a different discrete group.

We begin with some basic definitions and observations, discussing the role of \emph{adaption} in defining the Fourier hierarchy classes $\FH_k$, $\FH'_k$, and $\BPP^{\cFS[k]}$, considering the various ways in which quantum circuits implementing classical logic might be interfaced.  We show how these classes are related, and where they are likely to differ.  
Then we go on to consider Kitaev's algorithm for eigenvalue estimation, which belongs naturally in $\FH_2$, and consider the \emph{control schedule} for that algorithm in some detail.  We use this to show the new result that at least one cryptanalytically significant problem also belongs in $\BPP^{\cFS[1]}$ (Theorem~\ref{thm:Dlog}), which is tantamount to saying that it can be rendered without the use of any ancilla workspace.  

In \S\ref{sect:other}, we discuss extensions to these ideas, showing that other related problems might not be solvable without ancill\ae{}.  We briefly consider the trade-off between use of ancill\ae{} and circuit depth, and end by showing that continuous-group problems such as the solution of Pell's equation can also be rendered using Kitaev's scheme in $\FH_2$.
It is hoped that this understanding and analysis of the Fourier hierarchy will help with the future classification and development of quantum algorithms and subroutines.

\section{Definitions}

Throughout, global phases are ignored.  This means that wherever it is well-defined to do so, we shall conflate a matrix group with its projective equivalent (quotienting by $\CC^*$).

\subsection{Basic definitions}  \label{sect:QFTdefs}

\subsubsection*{Definitions of `classical' gates}

The perspective taken in this chapter is to regard quantum circuitry as a natural extension of classical (reversible) circuitry.  For this reason, it is convenient to fix a computational basis as usual, and then label certain quantum gates as `classical' because they fix that particular basis.  This expression ``classical'' is not to be understood as saying anything about an incapacity for such gates to create or modify superposition or entanglement, rather it is a basis-dependent property that describes how such gates \emph{collectively} stabilise the computational basis. 

Our first definition covers all permutations of the computational basis of an $n$-qubit machine, generated by `generalised Toffoli' gates.
\begin{definition}  \label{def:permgroup}
The \emph{Permutation Group} associated to a system of $n$ qubits is generated by the set of generalised Toffoli gates~:
\begin{eqnarray}  \label{eqn:def:permgroup}
  \mbox{Permutation Group}  &:=&  
  \span{ \Lambda^j(X) ~:~ j \in [0..n-1] } 
  \nonumber \\
  &\cong&  Sym(~ 2^n ~).
\end{eqnarray}
\end{definition}
This group is represented by the permutation matrices, constructed over $\CC$ in general.
The cardinality of the group is $2^n!$.
(If we were instead to limit to $\Lambda^2(X)=$ Toffoli gates, $\Lambda^1(X)=$ C-Not gates, and $\Lambda^0(X)=X$ gates, then only the \emph{alternating subgroup} would be generated, having cardinality $2^n!/2$~: so appending a separate ancilla qubit $\ket{0}$ would be a way to restore the entire permutation group without resorting to `large' gates.)

A more general definition, which still avoids the introduction of complex phases for the superposition phenomenon, is represented by the group of all signed permutation matrices, and is the semidirect product of real orthogonal diagonal matrices with permutation matrices.
\begin{definition}  \label{def:classicalGroup}
The \emph{Classical Group} associated to a system of $n$ qubits is generated by the Permutation Group together with generalised controlled-$Z$ gates~:
\begin{eqnarray}  \label{eqn:def:classicalGroup}
  \mbox{Classical Group}  &:=&  
  \span{ \Lambda^j(X), ~\Lambda^j(Z) ~:~ j \in [0..n-1] }
  \nonumber \\
  &\cong&  (~ \ZZ/2\ZZ ~)^{2^n} \rtimes Sym(~ 2^n ~).
\end{eqnarray}
\end{definition}
Again, $n$ counts all qubits in a circuit, the full circuit width.
The size of the group is $2^{2^n} \cdot 2^n!$, if we count global phase.
As with the permutation group, an alternative construction for simulating this group makes use of a small ancilla while limiting individual gates to three qubits.
(Because it is abelian, we write the group $\ZZ/2\ZZ$ additively, rather than multiplicatively as $Cyc(2)$ or $Sym(2)$.)

Any element $U$ of the classical group can be factored \emph{uniquely} into a permutation $f \in Sym( 2^n )$ \emph{followed by} a `diagonal' operator $\sigma \in ( \ZZ/2\ZZ )^{2^n}$, because of the structure as a semidirect product, and so we can sensibly write $U=(\sigma, f)$ to abbreviate line~(\ref{eqn:groupfactor}) below.  
\begin{eqnarray}  \label{eqn:groupfactor}
  U  &:&  \ket{x} ~\mapsto~ (-1)^{\sigma(f(x))} \ket{f(x)}
\end{eqnarray}
Note that the map $U = (\sigma, f) \mapsto f$ is a group homomorphism, and so if a circuit is given for $U$, then the subset of gates of the circuit that implement the $f$ part form a well-defined subset~: indeed they are just those gates from the permutation group.
But the map $U=(\sigma, f) \mapsto \sigma$ is \emph{not} a group homomorphism (the classical group is \emph{not} a direct product), and so the `complexity' apparent in the $\sigma$ part can be owing to the gates that implement $f$ as much as to any other part of the circuit.

The broad motivation for these definitions comes not from physical considerations pertinent to the task of fabricating a quantum information processor, but from the desire to analyse a fairly natural-looking measure of circuit complexity that is not apparent within the standard model, \viz{} the number of global Hadamard transformations ($Q_B$, defined below) needed, when quantum circuitry is seen as \emph{directly} extending classical circuitry.

\subsubsection*{Definitions of basis-change}

We consider the Binary Quantum Fourier Transform, denoted $Q_B$, also called the (global) Hadamard transform.  Because we sometimes wish to think of it as a \emph{passive} transform, acting not as a gate but rather by conjugating subsequent gates or measurements, we consider that it is to act on \emph{every} qubit in a computing system.
\begin{definition}  \label{def:QB}
The Binary QFT is given by
\begin{eqnarray}
  Q_B  &:=&  H^{\otimes n},
\end{eqnarray}
where $n$ counts \emph{all} the qubits in a circuit.
\end{definition}
As a gate, it acts on a unitary space of dimension $2^n$, and is defined by its action on the computational basis as follows, interpreting labels $\s$ and $\t$ as vectors in $\FF_2^n$~:
\begin{equation}  \label{eqn:QB}
  Q_B ~:~ \ket{\t} ~\mapsto~ 2^{-n/2} \sum_\s (-1)^{\s \cdot \t} \ket{\s};
\end{equation}
it has order 2, and hence is an involution.

As a passive action conjugating gates or measurements, it preserves locality of the operator algebra, effectively just exchanging Pauli $X$ operators with Pauli $Z$ operators.  (Other Fourier transforms, such as the \emph{Integer Fourier Transform} associated to the ring $\ZZ/2^n\ZZ$, do not share this property of preserving locality, and ought presumably be regarded as essentially more complex for that reason.)

\subsubsection*{Simulating single-qubit Hadamards}

\begin{proposition}  \label{propos:simH}
An Hadamard gate can be emulated from a gate-set containing all small gates from the permutation group  (Def.~\ref{def:permgroup}), together with the conjugates of those gates by $Q_B$ (computational basis input is assumed); and hence such a gate-set is universal for $\BQP$.
\end{proposition}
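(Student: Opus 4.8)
The plan is to prove this in two movements. First I would observe that \emph{exact} emulation is impossible, so the word ``emulate'' here must mean approximation to arbitrary precision; then I would show the generated group is dense in the (special) orthogonal group, so that a one-qubit Hadamard is approximable, and finally invoke Solovay--Kitaev (\cf{}~\S\ref{sect:universal-gatesets}) to make the approximation efficient and to conclude $\BQP$-universality.

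The preliminary observation is that every gate in the set has \emph{rational} matrix entries. The $\Lambda^j(X)$ are permutation matrices, and each conjugate $Q_B\Lambda^j(X)Q_B$ is a Householder reflection $\mathbbm 1 - 2\ketbra v v$ for a suitable vector $v = Q_B(\ket{1^j 0}-\ket{1^j 1})/\sqrt 2$ (tensored with an identity on the untouched qubits); although $v$ itself can have entries $\pm 2^{-j/2}$, the rank-one operator $2\ketbra v v$, and hence the whole gate, turns out to have entries in $2^{1-j}\ZZ$. Consequently any finite circuit over the gate set is a rational matrix, whereas $H$ (on one qubit, tensored with identity) has the irrational entries $\pm 1/\sqrt 2$; so no circuit reproduces $H$ on the nose, and we are forced into the approximate reading. (This also explains why one cannot simply ``isolate'' $Q_B$: the only single-qubit conjugate available is $Q_B\Lambda^0(X)Q_B = Z$.)

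The heart of the matter is to manufacture a single irrational rotation. The gate set contains $\Lambda^2(X)$, which is the reflection $\mathbbm 1 - 2\ketbra w w$ with $\ket w = (\ket{110}-\ket{111})/\sqrt 2$ on any chosen triple of qubits, together with its conjugate $W := Q_B\Lambda^2(X)Q_B = \mathbbm 1 - 2\ketbra u u$ where $\ket u = \ket{{+}{+}1}$. A direct computation gives $\braket w u = -\tfrac1{2\sqrt 2}$, so $\Lambda^2(X)\cdot W$ is the identity on the orthogonal complement of $\mathrm{span}(\ket w,\ket u)$ and a planar rotation on that plane through an angle $\phi$ with $\cos\phi = 2\braket w u^2 - 1 = -\tfrac34$. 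Since $-\tfrac34\notin\{0,\pm\tfrac12,\pm1\}$, Niven's theorem forces $\phi/\pi$ to be irrational, so the closure of $\span{\Lambda^2(X)\cdot W}$ is the full one-parameter rotation group of that $2$-plane. Conjugating this one-parameter subgroup by the permutation gates (all of which lie in our set) and repeating the construction with $\Lambda^2(X)$/$W$ pairs based at other triples of qubits, one obtains such rotation subgroups in a large family of $2$-planes. The key lemma I would then prove is that the bivectors of these planes span all of $\mathfrak{so}(2^n)$ (for $n$ at least $3$ together with an ancilla, matching \S\ref{sect:universal-gatesets}); granting this, the closure $\overline\Gamma$ of the group $\Gamma$ generated by the gate set is a closed subgroup of $O(2^n)$ whose Lie algebra contains $\mathfrak{so}(2^n)$, hence $\overline\Gamma\supseteq SO(2^n)$. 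Since a one-qubit Hadamard (tensored with identity) lies in $SO(2^n)$ for $n\ge 2$, it is thus approximable to arbitrary precision by circuits over the gate set --- i.e. emulated in the operative sense.

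Finally, by the Solovay--Kitaev theorem~\cite{lit:Kit97,lit:DM0505} (\cf{}~\S\ref{sect:universal-gatesets}) this approximation is efficient, using only $\log(1/\eps)^{O(1)}$ gates and ancill\ae{} to reach precision $\eps$; and since $\Lambda^2(X)$ is already available exactly, the gate set efficiently emulates $\{\Lambda^2(X),H\}$, which is universal for $\BQP$ by the results recalled in \S\ref{sect:universal-gatesets}. Hence so is the gate set. The step I expect to be the real obstacle is the key lemma of the third paragraph --- checking that the rotation planes coming from the $\Lambda^2(X)$/$W$ pairs and their permutation-conjugates genuinely span $\mathfrak{so}(2^n)$ (equivalently, that $\Gamma$ is dense in $SO(2^n)$); everything else is either a routine $2$-by-$2$ (or $8$-by-$8$) calculation or an appeal to results stated earlier. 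This lemma needs either a short representation-theoretic argument (decomposing $\Lambda^2$ of the permutation module of $\mathrm{Sym}(2^n)$ and verifying that the relevant bivectors meet every isotypic component) or an explicit demonstration that approximate rotations in every coordinate $2$-plane can be assembled.
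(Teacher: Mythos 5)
You conclude from the rationality of the gate entries that exact emulation of $H$ is impossible, and so are ``forced'' into approximation plus Solovay--Kitaev. That reasoning rules out producing the literal matrix $H\otimes I$, but it overlooks the route the paper actually takes: exact production of a \emph{related} operator that differs from $H_a$ only by bookkeeping. The paper's gadget is the identity $\Lambda_a(Z_b)\cdot Q_B\cdot\Lambda_a(Z_b)\cdot Q_B\cdot\Lambda_a(Z_b)=H_aH_b\cdot\mbox{Swap}_{ab}$; it then replaces the $\Lambda(Z)$ gates by Toffoli gates acting on a $\ket1\ket-$ ancilla pair (which the gadget preserves), and manufactures $\ket-$ itself exactly from computational-basis input, using $Q_B\Lambda^2_{abc}(X)Q_B$ applied to $\ket{001}$ followed by a permutation, extraction of $\ket+$, and $Z=Q_B X Q_B$. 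The operator $H_aH_b\cdot\mbox{Swap}_{ab}$ has entries in $\{0,\pm\tfrac12\}$, so your rationality obstruction does not apply to it; and since the swap is just a wire relabelling, the gadget gives an \emph{exact} simulation of any Hadamard-containing circuit at constant ancilla cost per $H$. No Solovay--Kitaev is needed, and your ``forced into the approximate reading'' step is where the argument diverges from what is actually achievable.

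Separately, the density argument you sketch has a genuine gap precisely at the spot you flag: the lemma that the rotation bivectors from the $\Lambda^2(X)/W$ pairs, together with their permutation-conjugates, span (or Lie-generate) all of $\mathfrak{so}(2^n)$ is asserted but not proven, and it is not a routine calculation. One must rule out the closed subgroup generated lying inside a proper closed subgroup of $SO(2^n)$ (reducible, a tensor product, or preserving some invariant form), and neither the representation-theoretic decomposition nor the assembly of coordinate-plane rotations you allude to is carried out. A small computational slip as well: $Q_B\Lambda^2_{abc}(X)Q_B = 1-2\ketbra{--1}{--1}$, not $1-2\ketbra{++1}{++1}$; this leaves $|\braket{w}{u}|^2=\tfrac18$ and $\cos\phi=-\tfrac34$ unchanged, but the sign would matter once the Lie-algebra generators are written out explicitly. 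As it stands the proposed proof does not close, whereas the paper's exact construction both closes and is strictly more elementary.
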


\begin{proof}
We can employ a simple technique from the idea of \emph{spin chains} (\cf{}~\S\ref{sect:spinChains}) to render a local $H$ operation on each of two qubits $a$ and $b$, while simultaneously swapping over their data, simply by using `classical' gates and two applications of~$Q_B$~:
\begin{eqnarray}  \label{eqn:hadamardsim}
  \Lambda_a(Z_b) \cdot Q_B \cdot \Lambda_a(Z_b) \cdot Q_B \cdot \Lambda_a(Z_b)
  &=& H_a \cdot H_b \cdot \mbox{Swap}_{ab}.
\end{eqnarray}

With the incorporation of two ancill\ae, $\ket{1}\ket{-}$, it is easy to render the same operation using only \emph{permutation gates} and two applications of $Q_B$.  The following construction emulates the previous one, preserving the ancill\ae{}~:
\begin{eqnarray}  \label{eqn:hadamardsim2}
  \Lambda^2_{ab}(X_\nu) \cdot Q_B \cdot \Lambda^2_{ab}(X_\mu) \cdot Q_B \cdot \Lambda^2_{ab}(X_\nu)  ~\ket{1}_\mu\ket{-}_\nu
  &\Rightarrow&   H_a \cdot H_b \cdot \mbox{Swap}_{ab}.
\end{eqnarray}
Of course, the Swap gate itself is also an element of the permutation group.

We can even drop the requirement for there to be provided an Hadamard-basis ancilla, because one can be constructed directly.  The gate $Q_B \cdot \Lambda^2_{ab}(X_c) \cdot Q_B$ may also be written $1-2\ket{--1}_{abc}\bra{--1}_{abc}$, and so applying it to $\ket{001}_{abc}$ one obtains a non-trivial superposition state $(\ket{00}-\ket{--})\ket1$.  Form two copies of such a state, and together these must be related by some permutation of the computational basis to a state that contains separately a $\ket+$ state amongst its qubits~:
\begin{eqnarray}
  \frac14 \left( \begin{array}{r} 1\\1\\1\\-1 \end{array} \right)^{\otimes 2} 
    &\stackrel{perm}{\mapsto}&
  \frac1{\sqrt8} \left( \begin{array}{r} 1\\1\\1\\1\\1\\-1\\-1\\-1 \end{array} \right)
                 \otimes \ket+.
\end{eqnarray}
Apply such a permutation, ignore the remaining qubits besides the $\ket+$, and apply $Z = Q_B \cdot X \cdot Q_B$ to it in order to obtain the state $\ket-$, for subsequent use as an ancilla. 

Because it is well-known that Hadamard plus Toffoli suffice for universality, so it follows that permutation gates together with their conjugates by $Q_B$ are sufficient for implementing a universal gate set for $\BQP$.  
\end{proof}

\subsection{Definition of Fourier hierarchy}  \label{sect:defFH}

Following \cite{lit:Shi0312}, the Fourier hierarchy is defined in terms of the number of time-slices within which Hadamard gates are used within a computation that is otherwise `classical'.  

\begin{definition}
A language $\cL$ belongs to $\FH_k$ if it is decided with bounded probability by a uniform family of circuits that have Hadamard gates within at most $k$ time-slices, and computational basis-preserving gates otherwise, and computational-basis ancill\ae{}.
\end{definition}

This definition should be understood as meaning that the way in which one decides whether some string $x$ of length $i$ is in $\cL$ is by applying a circuit $C_i$ from a uniform family to the computational-basis state $\ket{x}\ket{0}$, where the size of the ancilla register $\ket0$ would depend only on $i$ and be bound by some polynomial.  Moreover, the \emph{decision} would rest on the value of a single qubit (allowing for bounded probability), measured in the computational basis.  It is generally understood that the allowed `classical' gates are those from the permutation group (Def.~\ref{def:permgroup}) that affect at most a constant number of qubits, \eg{} three.

We also consider a slightly different version of the Fourier hierarchy, obtained by disallowing individual Hadamard gates, instead allowing only the $Q_B$ operation that spans the entire computer.  We let $\FH'$ denote the so-called \emph{strict} Fourier hierarchy.

\begin{definition}
A language $\cL$ belongs to $\FH'_k$ if it is decided with bounded probability by a uniform family of circuits that use only gates up to three qubits wide from the classical group (Def.~\ref{def:classicalGroup}), together with at most $k$ uses of the $Q_B$ operation, allowing also for computational-basis ancill\ae{}.
\end{definition}

By Proposition~\ref{propos:simH}, this latter hierarchy is not so very different, since individual $H$ operations can be simulated by appropriate use of $Q_B$ and ancill\ae{}.
\begin{corollary}
  ~~$\FH_k \subseteq \FH'_{2k}$.
\end{corollary}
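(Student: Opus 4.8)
The plan is to derive the inclusion directly from Proposition~\ref{propos:simH}, the one additional ingredient being the observation that, because $Q_B$ is an involution, \emph{all} of the Hadamard gates occurring in a single time-slice of an $\FH_k$-circuit can be simulated using a \emph{single} pair of $Q_B$ operations. First I would put an arbitrary $\FH_k$-circuit into a normal form. Splitting a time-slice that mixes Hadamard gates with permutation gates into two consecutive time-slices never increases the number of time-slices containing a Hadamard gate, and the definition of $\FH_k$ places no bound on depth; so we may assume there are $m\le k$ ``Hadamard slices'', the $\ell$-th of which applies $\bigotimes_{j\in S_\ell}H_j$ (identity off the wire-set $S_\ell$), while every gate in every other slice is a permutation-group gate on $O(1)$ wires. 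If $|S_\ell|$ is odd, adjoin one fresh computational-basis ancilla $\ket0$ to $S_\ell$; the construction below will then move the data on one wire onto the ancilla wire and leave an unentangled, unused $\ket+$ behind, which is harmless and costs no $Q_B$'s. Fix a pairing $P_\ell$ of the now-even set $S_\ell$ and put $L_\ell:=\prod_{(a,b)\in P_\ell}\Lambda_a(Z_b)$, a product of two-qubit controlled-$Z$ gates on disjoint wires.

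Next I would replace the $\ell$-th Hadamard slice by the block $L_\ell\cdot Q_B\cdot L_\ell\cdot Q_B\cdot L_\ell$. Since $Q_B^2$ is the identity, $Q_B\,L_\ell\,Q_B=\prod_{(a,b)\in P_\ell}\bigl(Q_B\,\Lambda_a(Z_b)\,Q_B\bigr)$, and each conjugate $Q_B\,\Lambda_a(Z_b)\,Q_B$ acts trivially off $\{a,b\}$; since the factors for different pairs have disjoint supports, regrouping by pair gives
\begin{eqnarray*}
  L_\ell\cdot Q_B\cdot L_\ell\cdot Q_B\cdot L_\ell
  &=& \prod_{(a,b)\in P_\ell}\bigl(\Lambda_a(Z_b)\cdot Q_B\cdot \Lambda_a(Z_b)\cdot Q_B\cdot \Lambda_a(Z_b)\bigr) \\
  &=& \prod_{(a,b)\in P_\ell} H_a\,H_b\,\mbox{Swap}_{ab}
      ~~=~~ \Bigl(\bigotimes_{j\in S_\ell}H_j\Bigr)\cdot\Bigl(\prod_{(a,b)\in P_\ell}\mbox{Swap}_{ab}\Bigr),
\end{eqnarray*}
the middle equality being exactly line~(\ref{eqn:hadamardsim}) applied pairwise. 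Thus the block realises the desired Hadamard slice up to a product of disjoint Swap gates, which is itself a permutation-group operation and can either be applied a second time to undo it or simply absorbed into the wire-labelling used for the rest of the circuit --- in either case at no extra cost in $Q_B$'s.

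Finally I would tally resources. Each of the $m\le k$ Hadamard slices is now simulated with classical-group gates of width $\le 2$ plus exactly two uses of $Q_B$; all remaining gates already lie in the classical group and act on $\le 3$ wires; only computational-basis ancill\ae{} were introduced; the rewriting is purely local in the circuit description, hence log-space computable, so the family stays uniform; and the decision still reads off one (possibly relabelled) output wire with the same bias. Hence the language lies in $\FH'_{2m}\subseteq\FH'_{2k}$, giving $\FH_k\subseteq\FH'_{2k}$. The only slightly fiddly points are the handling of odd-sized Hadamard slices and the ``shared-$Q_B$'' regrouping above; neither is a real obstacle once one exploits that $Q_B$ is its own inverse.
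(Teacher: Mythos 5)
Your proof is correct and is essentially the detailed version of what the paper's one-line proof sketches: both rest on the substitution of line~(\ref{eqn:hadamardsim}). You have done a service in spelling out the point that the one-liner suppresses, namely that \emph{all} the Hadamard gates occurring in a single time-slice must share a \emph{single} pair of $Q_B$'s, which follows from $Q_B^2=1$ and the disjoint supports of the pairwise gadgets (otherwise one would only obtain $\FH'_{2k\cdot\mathrm{poly}}$, not $\FH'_{2k}$). Your handling of odd-sized Hadamard slices via one $\ket{0}$ ancilla, and of the residual Swap gates by relabelling wires, are both sound and genuinely needed for a complete argument.
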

\begin{proof}
  Given a circuit that uses Hadamard gates in $k$ time-slices, we can readily construct an $\FH'_{2k}$ equivalent circuit by using the substitution indicated at line~(\ref{eqn:hadamardsim}).
\end{proof}

Note that $\FH'_1 = \FH'_0 = \FH_0 = \P$  (\cf~\cite{lit:Shi0312}).  The reason for the first equality is that there is no utility in having \emph{exactly one} use of $Q_B$, because given the constraints on input and output, nothing can be computed in that case.  Formally, the magnitude of $\bra{\phi} C_1 \cdot Q_B \cdot C_2 \ket{\psi}$ is completely independent of $C_1$ and $C_2$ if they are both unitaries in the classical group and $\ket\phi$ and $\ket\psi$ are both in the computational basis.

\subsection{Definition of Fourier sampling oracle}

Another way of understanding this kind of extension of classical computing to quantum computing uses the idea of \emph{oracular access} to $\FH'_k$.  It turns out that in this way of thinking, $\FH'$ is not quite the simplest non-classical computing model that we can consider.  Accordingly, we define the \emph{Fourier Sampling Oracle}.

Let $U=(\sigma, f)$ be a unitary map on $w$ qubits, given by a circuit using gates from the classical group, as at line~(\ref{eqn:groupfactor}),
where $f : \FF_2^w \rightarrow \FF_2^w$ is a permutation, and $\sigma : \FF_2^w \rightarrow \FF_2$ is a boolean function.
Let $P_U$ be the probability distribution on domain $\FF_2^w$ that ascribes weights as follows~:
\begin{eqnarray}  \label{eqn:FSProb}
  P_U( \y )  &:=&  \bra\y~ Q_B \cdot U \cdot Q_B ~\ket\0^2   \nonumber \\
             &=&   2^{-2w} \sum_{\d \in \FF_2^w} (-1)^{\y \cdot \d} 
                   \sum_{\a \in \FF_2^w} (-1)^{ \sigma(\a) + \sigma(\a+\d) }.
\end{eqnarray}
\begin{definition}  \label{def:FSO}
In the notation given above, the \emph{Fourier Sampling Oracle}, denoted $\cFS$, is a device which, on input a classical description of a circuit for some such $U$, returns a single sample (from $\FF_2^w$) from the corresponding distribution $P_U$. 
\end{definition}
Note that the distribution $P_U$ depends on the function $\sigma$ but is independent of the function $f$.  One might as well therefore consider taking $f$ to be the identity, by composing $U$ with $(0,f^{-1})$ to get $(\sigma,1)$.  A circuit for $(0,f^{-1})$ can readily be identified from a circuit for $U$ (just pick out the permutation gates and reverse their direction), but the composition to make a circuit for $(\sigma,1)$ will still contain permutation gates, which cannot generally be rearranged to `cancel' one another without increasing the number of `diagonal' gates exponentially.

Parallel calls to $\cFS$ can be emulated by a single call to $\cFS$, because of the identity
\begin{eqnarray}  \label{eqn:locality}
  Q_B \cdot \left( U \otimes V \right) \cdot Q_B  &=&
    \left( Q_B \cdot U \cdot Q_B \right) \otimes \left( Q_B \cdot V \cdot Q_B \right).
\end{eqnarray}

$\BPP^\cFS$ is an interesting object of study if only because it can be defined quite independently of quantum mechanics, and therefore without reference to the actual creation or simulation of quantum states, processes, or circuits.
Note that the success probability of an algorithm in $\BPP^\cFS$ can be boosted toward unity \emph{without} increasing the number of calls to the $\cFS$ oracle, simply by parallel instantiation during the $\cFS$ part, followed by majority voting afterwards.   (This would not be true if $Q_B$ were replaced by some more general Fourier transform not satisfying the locality condition of line~(\ref{eqn:locality}), which is another reason for preferring to use the simpler $Q_B$ in these definitions.) 

It is clear from the definitions that $\FH'_2$ works out to be the class of problems that can be solved by $\BPP$ with a single call to $\cFS$, with deterministic pre-processing and subsequent post-processing. 
More generally, we write $\BPP^{\cFS[k]}$ to denote classical computation with up to $k$ adaptive calls to $\cFS$, with \emph{randomised} pre- and post-processing permitted, and this too can be regarded as forming a hierarchy of complexity classes. 
However, it is by no means apparent that even a single call to an oracle for $\FH'_3$, say, might be simulable by polynomially many calls to $\cFS$.  Therefore, the `hierarchy' of what can be computed in $\BPP$ with increasingly many calls to $\cFS$ is quite plausibly strictly contained within $\BQP$.  That is to say, it would be surprising if $\BQP \subseteq \BPP^{\cFS[poly]}$.  Yet to prove this separation rigourously would of course involve separating $\BQP$ from $\BPP$.

\medskip
\begin{proposition}
  ~~$\FH'_2 \subseteq \BPP^{\cFS[1]} \subseteq \FH_2$
\end{proposition}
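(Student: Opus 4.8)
The plan is to establish the two inclusions separately, treating them as converse translations between the ``one big Fourier transform'' picture ($\FH'_2$) and the ``oracle call'' picture ($\BPP^{\cFS[1]}$).

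\textbf{Step 1: $\FH'_2 \subseteq \BPP^{\cFS[1]}$.} Let $\cL \in \FH'_2$, decided by a uniform family of circuits each of which, on input $\ket{x}\ket{\0}$ (size $w = w(i)$, $i = \mathrm{len}(x)$), has the form $C_{\mathrm{out}} \cdot Q_B \cdot C_{\mathrm{mid}} \cdot Q_B \cdot C_{\mathrm{in}}$ with each $C$ a unitary in the classical group built from $O(1)$-wide gates, and the decision read from one output qubit. Using $Q_B^2 = 1$ (it is an involution, line~(\ref{eqn:QB})) and the fact that $Q_B$ applied to $\ket{x}\ket{\0}$ is (up to signs) a uniform superposition, I would first rewrite the computation so that the state entering the first $Q_B$ is exactly $\ket{\0}$: absorb $C_{\mathrm{in}}$ and the classical preparation of $\ket{x}$ into a new classical-group circuit $U$, noting that $C_{\mathrm{in}}\ket{x}$ is just $\ket{\0}$ conjugated by a permutation, which is itself in the classical group. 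The classical pre-processor (running in $\BPP$, indeed deterministically) computes a description of $U=(\sigma,f)$ from $x$; it then makes a single call to $\cFS$ on $U$, which by Definition~\ref{def:FSO} returns a sample $\y$ from $P_U(\y) = \bra{\y} Q_B\cdot U\cdot Q_B\ket{\0}^2$. This sample is exactly a computational-basis measurement of the state $Q_B \cdot C_{\mathrm{mid}} \cdot Q_B \cdot C_{\mathrm{in}}\ket{x}\ket{\0}$ (the ``inner'' part of the $\FH'_2$ circuit). The classical post-processor then applies the deterministic permutation $C_{\mathrm{out}}$ to $\y$ and reads off the decision bit. Since the $\FH'_2$ circuit had bounded probability, so does this procedure, and by the remark after Definition~\ref{def:FSO} the success probability can be boosted by parallel instantiation of the $\cFS$ part followed by majority vote, all within $\BPP^{\cFS[1]}$.

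\textbf{Step 2: $\BPP^{\cFS[1]} \subseteq \FH_2$.} Let an algorithm in $\BPP^{\cFS[1]}$ decide $\cL$: a randomised pre-processor produces a classical-group circuit $U$, then a single $\cFS$ call yields a sample, then a randomised post-processor produces the answer. I would simulate this as a single $\FH_2$ circuit. The randomness of the pre- and post-processors is handled in the standard way — extra ancilla qubits put into a uniform superposition by Hadamards and traced out, or equivalently folded into the two allowed Hadamard time-slices — since $\BPP$ itself sits inside $\FH_0 = \P$ only up to this coin-flipping, and the Chernoff-boosting of bounded probability is unaffected. The key point is that one $\cFS$ call on $U=(\sigma,f)$ is, by Definition~\ref{def:FSO} and line~(\ref{eqn:FSProb}), precisely: prepare $\ket{\0}$, apply $Q_B$, apply the classical-group circuit $U$, apply $Q_B$, measure in the computational basis. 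That is a circuit with Hadamard gates in exactly two time-slices (the two $Q_B$'s, each being $H^{\otimes w}$) and computational-basis-preserving gates otherwise, i.e. an $\FH_2$ computation; the classical pre-processing (deterministically computing the gate list of $U$) is log-space/poly-time uniform machinery, hence compatible with the uniformity convention, and the classical post-processing of the measured string is likewise absorbable into the final classical-group layer (permutations and parities of the output bits, which are themselves classical-group gates acting before the single-qubit readout).

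\textbf{Main obstacle.} The routine content — rewriting circuits, involutivity of $Q_B$, locality (line~(\ref{eqn:locality})), Chernoff boosting — is straightforward; the delicate part is bookkeeping the \emph{uniformity and the classical control}. Specifically, in Step 1 I must check that the map $x \mapsto (\text{description of }U)$ really is computable within the resource bound implicit in ``uniform family'' (log-space), and that folding the input $\ket{x}$ and $C_{\mathrm{in}}$ into a single classical-group circuit does not inflate the number of $O(1)$-wide gates super-polynomially (it does not: it is just pre-composition with an explicit permutation and, if needed, a constant-depth layer of $X$'s controlled by the bits of $x$). In Step 2, the subtlety is that $\BPP^{\cFS[1]}$ permits \emph{randomised} post-processing whereas $\FH_2$'s decision must rest on a single measured qubit with bounded bias; reconciling these requires noting that the post-processor's coins can be committed up front into the ancilla superposition so that the whole thing remains one circuit with the two Hadamard layers, and that its (randomised) acceptance probability, being bounded away from $\tfrac12$, survives as a bounded-probability decision of the composite circuit. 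Neither issue is deep, but getting the accounting exactly right — so that no hidden complexity leaks through the classical/quantum interface — is where the care is needed.
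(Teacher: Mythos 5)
Your proof is correct and essentially spells out what the paper compresses into ``Immediate from the definitions'': an $\FH'_2$ computation $C_3\,Q_B\,C_2\,Q_B\,C_1\ket{x}\ket{\0}$ is exactly one $\cFS$ call with deterministic classical pre/post-processing (fold $C_1\ket{x}\ket{\0}=\pm\ket{y}$ through the first $Q_B$ as a $Z^{\otimes y}$ layer absorbed into $U$, and apply the permutation part of $C_3$ classically to the returned sample), while conversely one $\cFS$ call on $U$ is literally the sandwich $Q_B\cdot U\cdot Q_B$ on $\ket{\0}$, giving two Hadamard time-slices once the $\BPP$ coins are committed to ancilla qubits prepared in the first Hadamard layer and the pre/post-processing is folded in as reversible classical circuitry. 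Your identification of the delicate point --- the bookkeeping of uniformity and where the classical randomness lives --- is exactly the right thing to check, and the routine reversible-circuit argument you gesture at closes it.
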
 
\begin{proof}
Immediate from the definitions.
\end{proof}

In \S\ref{sect:ShorRevisit} we consider problems in $\FH_2$.  Our main result of the section is to see why some of these problems are also in $\BPP^{\cFS[1]}$.

\subsection{Adaption}  \label{sect:adaption}

Informally speaking, one might say about algorithmics within the strict hierarchy $\FH'$ that it is a way of `gluing together classical subroutines quantumly', by interleaving `classical' circuits with the $Q_B$ operator.   We call this \emph{quantum adaption}, because quantum data is being passed from one `classical' circuit to the next to drive the computation.

Similarly, one might say of $\BPP^{\cFS[k]}$ that it uses \emph{classical adaption}, because there is no transfer of quantum data between oracle calls.  (The oracle calls themselves do involve `classical' circuits in some sense, and also do involve quantum computing, but there is no actual flow of quantum data between distinct classical circuits.)

A third kind of adaption that we mention, for completeness, is one whereby both quantum and classical data are explicitly passed from one `classical' circuit to the next, in a serial manner.  This we call \emph{mixed adaption}, and in this case the $Q_B$ operation is no longer needed.  Proposition~\ref{propos:Hfromfeedforward} below illustrates another way of conceptualising $\FH$, in terms of mixed adaption.

\begin{proposition}  \label{propos:Hfromfeedforward}
Limiting circuits to use only classically-controlled gates (\S\ref{sect:paritycontrol}) from the permutation group, inputs from the Hadamard basis, intermediate single-qubit measurements in the Hadamard basis, and feed-forward of classical measurement data to subsequent classical-control, one can efficiently emulate Hadamard transforms (and hence ultimately $\BQP$-universality).
\end{proposition}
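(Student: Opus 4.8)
The plan is to reduce the claim to the standard universality of the gate set $\{\Lambda^2(X),H\}$ recalled in \S\ref{sect:universal-gatesets}, by showing that one can \emph{inject} a fresh Hadamard gate into any ongoing computation using only classically-controlled permutation gates, Hadamard-basis ancill\ae{}, a single Hadamard-basis measurement, and feed-forward. The gadget I would use is the one-bit teleportation of $H$: given a wire $d$ carrying an arbitrary (possibly entangled) state $\ket{\psi}_d=\alpha\ket{0}+\beta\ket{1}$, introduce a fresh ancilla $\ket{+}_a$, apply a controlled-$Z$ between $d$ and $a$, and measure $d$ in the Hadamard basis with outcome $m\in\{0,1\}$; the residual state on wire $a$ is $X^m H\ket{\psi}$, which a feed-forward-controlled $X_a$ corrects to $H\ket{\psi}$. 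The spent qubit $d$ is then in a Hadamard-basis state and can be discarded or recycled, and by linearity the gadget implements $H$ on the logical qubit even when it was entangled with the rest of the register. (This is essentially the $\FH'$-side identity of Proposition~\ref{propos:simH} repackaged for feed-forward, so one could equally build directly on that Proposition.)

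The one ingredient that is not immediate is the controlled-$Z$, which is \emph{not} a permutation of the computational basis. Here I would invoke phase kickback: applying a Toffoli $\Lambda^2_{ab}(X_c)$ to $\ket{\phi}_{ab}\otimes\ket{-}_c$ returns the ancilla $\ket{-}_c$ untouched and enacts $\Lambda_a(Z_b)$ on the pair $ab$, because the only computational-basis component of the control register that picks up the sign $-1$ is $\ket{11}_{ab}$. Since $\ket{-}$ is a legitimate Hadamard-basis input and the ancilla is restored, a single reusable $\ket{-}$ qubit together with permutation-group gates realises every controlled-$Z$ the gadget needs. With $CZ$ in hand the correctness of the gadget is the short projective calculation $CZ_{da}(\ket{\psi}_d\ket{+}_a)=\alpha\ket{0}_d\ket{+}_a+\beta\ket{1}_d\ket{-}_a$, after which projecting $d$ onto $\ket{\pm}$ leaves $\frac{1}{\sqrt2}X^mH\ket{\psi}$ on $a$. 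I expect this phase-kickback step --- together with the bookkeeping that the logical qubit migrates from wire $d$ to wire $a$, which the classical control must track --- to be the only thing needing care; everything else is routine.

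Finally, simulating a circuit containing $h$ Hadamard gates costs only $O(h)$ extra ancill\ae{}, Toffoli gates, Hadamard-basis measurements and feed-forward-controlled $X$ gates, so the emulation is efficient, and since the permutation-group generators already include $\Lambda^2(X)$, the injection gadget upgrades the paradigm to $\BQP$-universality via \S\ref{sect:universal-gatesets}. The residual mismatch at the interfaces --- inputs arriving in the Hadamard basis and measurements being taken there --- is absorbed by pre- or post-composing the same gadget, or alternatively by the trivial parity-control reduction noted in \S\ref{sect:paritycontrol}.
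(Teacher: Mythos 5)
Your proof is correct and, at a high level, follows the same strategy as the paper: a measurement-based gadget that injects $H$ by consuming a fresh Hadamard-basis ancilla, applying Toffoli gates, measuring the old logical wire in the Hadamard basis, and correcting with a feed-forward-controlled $X$, so that the logical qubit migrates to the fresh wire. Where you differ is in the concrete circuit. The paper's gadget (line~(\ref{eqn:hadamardsim3})) uses two Toffolis $\Lambda^2_{ac}(X_b)$ and $\Lambda^2_{bc}(X_a)$, the conditional $X_b$, and a trailing C-Not $\Lambda_b(X_a)$, verified by direct amplitude computation on three qubits; yours is the textbook one-bit teleportation of $H$: one Toffoli onto a reusable $\ket{-}$ kickback target (realising a controlled-$Z$ on its two controls), a Hadamard-basis measurement, and a single conditional $X$. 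Naming the phase-kickback fact explicitly --- ``Toffoli with a $\ket{-}$ target equals $\Lambda(Z)$ on the controls'' --- is what lets you cut the gate count and makes the teleportation identity transparent, whereas the paper's proof does not isolate this structure. Both constructions meet the stated constraints, give $O(h)$ overhead for $h$ Hadamard gates, and defer in the same way to the $\{\Lambda^2(X),H\}$ universality recalled in \S\ref{sect:universal-gatesets}. One small quibble: your parenthetical remark that this is ``essentially the $\FH'$-side identity of Proposition~\ref{propos:simH} repackaged for feed-forward'' overstates the connection, since that identity (line~(\ref{eqn:hadamardsim})) is a purely unitary equality built from three $\Lambda(Z)$ gates and two $Q_B$ conjugations, not a destructive measurement gadget; this aside, however, has no bearing on the correctness of your argument.
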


\begin{figure}[h]
  \begin{center}
     \includegraphics[width=90mm]{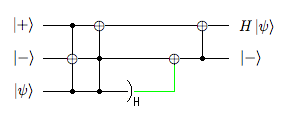}
     \vspace{-5mm}
     \caption{\label{fig:hadamard1} Using permutations and mixed adaption to simulate an Hadamard gate.  The green wire denotes a classical control signal to an $X$ gate, following a measurement in the Hadamard basis.  A formula for this figure is given in line~(\ref{eqn:hadamardsim3}).}
  \end{center}
\end{figure}

\begin{proof}
The formula is straightforward, and can be seen directly in Fig.~\ref{fig:hadamard1}.
Write $X_b^{\ketbra--_c}$ to denote applying gate $X$ to qubit $b$ conditional on qubit $c$ having been measured to be in the image of the projector $\ketbra--$.  Then the formula for emulating an Hadamard gate is given symbolically as
\begin{eqnarray}  \label{eqn:hadamardsim3}
  \Lambda_b(X_a) \cdot X_b^{\ketbra--_c} \cdot \Lambda^2_{bc}(X_a) \cdot 
    \Lambda^2_{ac}(X_b) ~\ket+_a \ket-_b \ket\psi_c
    &\Rightarrow&
    H_a \ket\psi_a \ket-_b.
\end{eqnarray}
To see why this works, take $\ket\psi = \alpha\ket0 + \beta\ket1$.  Then the starting state on the three qubits may be written as a vector of amplitudes as $(\alpha,\beta,-\alpha,-\beta,\alpha,\beta,-\alpha,-\beta)$.  After the first Toffoli gate, this becomes $(\alpha,\beta,-\alpha,-\beta,\alpha,-\beta,-\alpha,\beta)$.  After the second Toffoli gate it becomes $(\alpha,\beta,-\alpha,\beta,\alpha,-\beta,-\alpha,-\beta)$.  Measure the third qubit in the Hadamard basis and it becomes either $(\alpha+\beta,-\alpha+\beta,\alpha-\beta,-\alpha-\beta)\ket+$ or $(\alpha-\beta,-\alpha-\beta,\alpha+\beta,-\alpha+\beta)\ket-$.  Apply $X_b$ controlled on the measurement result, and this gives $(\alpha+\beta,-\alpha+\beta,\alpha-\beta,-\alpha-\beta)$ in either case, up to global phase.  Apply the final C-Not gate to obtain $(\alpha+\beta,-\alpha-\beta,\alpha-\beta,-\alpha+\beta)$, which is equivalent with $H_a\ket\psi_a \ket-_b$.  Since it works for all pure $\ket\psi$, by linearity it must work for all quantum data.
\end{proof}

If one were ever to discover a paradigm for quantum computing within which gates from the permutation group were fast to implement, but where other gates were not feasible, and where measurements and feed-forward of classical data were slow, then perhaps the Fourier hierarchy $\FH$ would be an ideal way of measuring algorithmic complexity within such a paradigm.

Note also that one could easily adapt the proof of Proposition~\ref{propos:Hfromfeedforward} to use \emph{post-selection} in place of measurement and feed-forward, to prove that classical computing with a single call to $\cFS$ and \emph{post-selective} post-processing is universal for $\mathbf{PostBQP}$, which is $\PP$ (\cf{} \S\ref{sect:postselection}).  
\begin{proposition}  \label{propos:post-FS}
  $\BPP^{\cFS[1]}$ with post-selection gives rise to $\mathbf{PostBQP}$.
\end{proposition}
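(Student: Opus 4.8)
The plan is to establish both inclusions, the easy one by direct simulation of the oracle and the substantive one by adapting the Hadamard-from-permutations gadget of Proposition~\ref{propos:Hfromfeedforward} into a post-selective (hence non-adaptive) form. Throughout, ``with post-selection'' is read as in the $\Postb{\cdot}{\cdot}$ framework of \S\ref{sect:postselection}, so that a polynomial number of post-selection events may be combined into a single one by conjunction.

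First I would do post-selective $\BPP^{\cFS[1]} \subseteq \mathbf{PostBQP}$. A single call to $\cFS$ on a description of a classical-group unitary $U=(\sigma,f)$ is simulable by a polynomial-size quantum circuit: prepare $\ket{\0}$, apply $Q_B$, apply the given circuit for $U$, apply $Q_B$ again, and measure every qubit in the computational basis; by line~(\ref{eqn:FSProb}) the outcome is distributed exactly as $P_U$. The deterministic (or randomised) pre- and post-processing of the $\BPP^{\cFS[1]}$ machine is poly-time classical, hence foldable reversibly into the circuit, and whatever post-selection it imposes is permitted within $\mathbf{PostBQP}$. So this direction is routine.

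The reverse inclusion is the main content. Start from a $\mathbf{PostBQP}$ computation; by the universality of $\{\Lambda^2(X),H\}$ (\S\ref{sect:universal-gatesets}, Proposition~\ref{propos:simH}) take its circuit $C_n$ to be permutation-group gates interleaved with single-qubit Hadamards, with the final measurement, the post-selection register, and the output qubit all read in the Hadamard basis (absorbing the extra basis changes into the Hadamard count). Now replace each Hadamard by the post-selective analogue of the gadget in Proposition~\ref{propos:Hfromfeedforward}: use only permutation gates on the data qubit together with a fresh $\ket{+}$ ancilla and a $\ket{-}$ ancilla, and replace the feed-forward-controlled $X$ of line~(\ref{eqn:hadamardsim3}) by a post-selection of the scratch qubit onto $\ket{+}$ (the explicit amplitude computation in that Proposition shows the surviving branch implements $H$ exactly). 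All the $\ket{+}$ ancillas arrive for free from the first $Q_B$ of the $\cFS$ call, which sends the mandatory input $\ket{\0}$ to $\ket{+}^{\otimes M}$ on all $M$ wires; a $\ket{-}$ ancilla is just $Z\ket{+}$, and any computational-basis ancill\ae{} still demanded by the permutation parts of $C_n$ are manufactured from two $\ket{+}$ qubits with one classical-group gate and one further post-selection (e.g. $\Lambda_v(Z_u)\ket{+}_u\ket{+}_v$, post-selected on $v$ at $\ket{+}$, leaves $u$ in $\ket{0}$).

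Finally I would assemble everything into one classical-group unitary $U$ by concatenating, in order, the ancilla-manufacturing gates, the Hadamard-gadget gates, and the original permutation gates of $C_n$. The crucial structural point is that every qubit destined to be post-selected is left idle by $U$ from the moment its gadget finishes, so each mid-circuit post-selection onto $\ket{+}$ can be pushed through the remaining classical-group gates and through the final $Q_B$ of the $\cFS$ call, which conjugates $\ketbra{+}{+}$ on that wire into $\ketbra{0}{0}$; concretely, post-selecting a wire onto $\ket{+}$ mid-circuit is identical to post-selecting the corresponding bit of the $\cFS$-sample to $0$. The $\BPP^{\cFS[1]}$ machine thus computes the description of $U$, makes one oracle call, and in the classical post-processing post-selects the sample on the conjunction of (i) the manufacturing conditions, (ii) the gadget conditions, and (iii) the original $\mathbf{PostBQP}$ post-selection (all now linear read-offs of the sample bits), outputting the sample bit at the output qubit. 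Because every gadget and ancilla preparation is exact, the conditional output distribution coincides with that of the original $\mathbf{PostBQP}$ computation given its post-selection event, so the $\tfrac13$-versus-$\tfrac23$ gap is inherited verbatim; combining with Aaronson's identification this also recovers $\PP$. I expect the one delicate step to be exactly this last bookkeeping, namely verifying that the sequential gadget post-selections commute out to the end past the remaining classical-group gates and past $Q_B$ — but this is immediate once one notes that each scratch qubit is idle from the point its post-selection would occur.
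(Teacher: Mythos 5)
Your proof is correct and takes essentially the same route as the paper's: replace each internal Hadamard by the gadget of Proposition~\ref{propos:Hfromfeedforward}, defer the measurement, omit the feed-forward-controlled gate, and instead post-select the scratch wire onto $\ket{+}$ (equivalently, post-select the corresponding bit of the $\cFS$-sample to $0$), which yields one classical-group circuit sandwiched between two $Q_B$ layers. You spell out a few points the paper leaves implicit — the easy inclusion $\BPP^{\cFS[1]}$-with-post-selection $\subseteq \mathbf{PostBQP}$, the manufacture of $\ket{-}$ and $\ket{0}$ ancill\ae{} from $\ket{+}$ wires, and the commutation of mid-circuit post-selections past the final $Q_B$ — but the key lemma (the Hadamard gadget) and the overall reduction are identical.
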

\begin{proof}
Without loss of generality, consider starting with a $\BQP$ circuit composed of Hadamard gates and Toffoli gates and $X$ gates, beginning with $\ket{0}$ input and then a $Q_B$ operation, and ending with another $Q_B$ operation before measurement (and post-selection) in the computational basis.  

Every Hadamard gate in this circuit (besides the ones comprising the initial and final $Q_B$ operations) should then be replaced by the gadget of line~(\ref{eqn:hadamardsim3}); but the measurement involved within that gadget should be delayed until the end of the computation, and correspondingly the classically controlled gate in the gadget should be omitted.  This leaves us with a circuit having Hadamard-basis input, Hadamard-basis measurements at the end, and otherwise all gates from the permutation group, and such a circuit is of the correct form for an $\cFS$ oracle, as per line~(\ref{eqn:FSProb}).  At the end of the computation, when the $\cFS$ oracle returns a string, the bits that \emph{ought} to have been used for feed-forward classical control (which are otherwise no longer used) should be post-selected to have been qubits in state $\ket+$ (so that it was correct to have dropped the controlled gates), which always happens with non-zero amplitude by the proof of Proposition~\ref{propos:Hfromfeedforward}.
\end{proof}

\section{Kitaev's Algorithm Revisited} \label{sect:ShorRevisit}

Kitaev's algorithm for the \emph{Abelian Stabilizer Problem} \cite{lit:Kit9511} belongs naturally within the category of $\FH_2$ computing, and certain applications (most notably in cryptography) lead to the solution of problems in $\BPP^{\cFS[1]}$.  

\begin{theorem}[Kitaev]
  The decision variants of Integer Factorisation and the Discrete Logarithm problem are in $\FH_2$.
\end{theorem}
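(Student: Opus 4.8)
The plan is to realise Kitaev's algorithm for the Abelian Stabiliser Problem \cite{lit:Kit9511} as a uniform family of circuits matching the $\FH_2$ template just defined: two time-slices containing Hadamard gates, with every other gate (and every ancilla) computational-basis-preserving. To begin I would recall the classical reductions. Both the decision version of Integer Factorisation (e.g.\ ``is the $j$th bit of the least prime factor of $N$ equal to $1$?'') and that of the Discrete Logarithm (``given $g,y,p$, is $\log_g y < k$?'') reduce, in deterministic polynomial time, to order-finding modulo $N$ and to the Abelian Stabiliser Problem respectively; indeed both are among the standard special cases of ASP treated in \cite{lit:Kit9511}. Since deterministic polynomial-time classical computation is carried out by reversible circuits over permutation-group gates ($X$, C-Not, Toffoli), this pre-processing can be absorbed into the basis-preserving part of the circuit, ahead of the first Hadamard, and similarly for the post-processing below; so it suffices to place ASP (equivalently, order-finding) in $\FH_2$.

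\emph{Two Hadamard layers.} The crux is that Kitaev's eigenvalue-estimation routine can be drawn so that all Hadamards sit in exactly two time-slices: (a) a layer of Hadamards applied simultaneously to a register of $m=O(\poly)$ probe qubits --- i.e.\ $Q_B$ restricted to that register; (b) a single basis-preserving block comprising the controlled modular-exponentiation permutation $\ket{x}\ket{1}\ket{0}\mapsto\ket{x}\ket{a^x \bmod N}\ket{0}$ (built from Toffoli-type gates, hence in the permutation group) together with the diagonal phase gates $S^{\pm 1}$ applied to selected probe qubits, which are basis-preserving and are exactly what allow one to recover the ``sine'' as well as the ``cosine'' component of the phase accumulated on each probe qubit; and (c) a second simultaneous Hadamard layer on the probe qubits. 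The point is that the controlled powers $\Lambda(U^{2^j})$ for all $j$ compose into one permutation of the computational basis, so they can all be slotted between layers (a) and (c); this is precisely Kitaev's device for avoiding the controlled-rotation-laden $\ZZ/2^m\ZZ$ Fourier transform that a textbook presentation of Shor's algorithm would interpose here. I would then, by deferred measurement, keep the probe register as quantum data and feed it into a further basis-preserving block performing Kitaev's classical clean-up --- continued-fraction recovery of the order $r$ from the noisy bit estimates, an lcm over parallel repetitions, then the $\gcd(a^{r/2}\pm 1, N)$ computations and extraction of whichever bit the decision problem asks about --- before measuring a single output qubit. Running $\poly$ many probes in parallel merely widens the two Hadamard time-slices rather than adding a third, by the tensor-product locality of $Q_B$ recorded at line~(\ref{eqn:locality}); and since all the algorithm's randomness resides in the superposition built by (a)--(c), the one-qubit measurement inherits Kitaev's constant success bound ($\FH_2$ being in any case a bounded-probability class). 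Uniformity is routine: the instance is the input, the modular-exponentiation and post-processing sub-circuits are highly regular, and a logarithmic-space machine can print the circuit description as required by the paper's standing convention.

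\emph{Main obstacle.} The step I expect to carry the real weight is confirming that a bare single Hadamard layer plus classical post-processing genuinely substitutes for the arithmetic Fourier structure of Shor's algorithm --- in particular, resolving the $\phi\leftrightarrow-\phi$ ambiguity of a lone $\cos^2$-biased measurement using only the diagonal $S^{\pm 1}$ corrections (so that nothing outside the two Hadamard layers fails to be basis-preserving), and verifying that Kitaev's bit-by-bit estimates, collected across parallel probes, still pin down $r$ with constant probability after the deterministic clean-up. That analysis is standard, but it is the one genuinely non-bookkeeping ingredient; everything else reduces to checking which gate lands in which time-slice.
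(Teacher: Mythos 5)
Your proposal follows the broad strategy one would expect: reduce the decision problems classically to order-finding (resp.\ the Abelian Stabiliser Problem), then perform Kitaev's eigenvalue estimation with Hadamards confined to two time-slices, with all the modular arithmetic and continued-fraction clean-up absorbed into basis-preserving blocks. The paper attributes the theorem to Kitaev and does not spell out a proof next to the statement, but the eigenvalue-estimation machinery developed in Lemmas~\ref{lem:eigest1} and~\ref{lem:eigest2} amounts to a self-contained $\FH_2$ rendering, and your argument diverges from it at precisely the step you flagged as carrying the real weight.

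The divergence is the $S^{\pm1}$ gates. The definition of $\FH_k$ glosses ``computational basis-preserving gates'' as meaning gates from the permutation group (Def.~\ref{def:permgroup}), and $S=\mathrm{diag}(1,i)$ lies in neither the Permutation Group nor even the larger Classical Group (Def.~\ref{def:classicalGroup}, whose diagonal entries are restricted to $\pm1$). A circuit whose central layer contains $S^{\pm1}$ on the probe qubits is therefore, by the conventions in force, \emph{not} an $\FH_2$ circuit. The $\cos\!\leftrightarrow\!\sin$ ambiguity you point to is genuine, but the paper's resolution is entirely different: rather than rotate any probe qubit by a phase, it chooses the control schedule $c_j = 2^\alpha 3^\beta$ so that cosine biases alone, read at successive $3^\beta$ magnifications, pin down $\phi$ with no sine measurement needed (Lemma~\ref{lem:eigest2}, Fig.~\ref{fig:alg1}, Fig.~\ref{fig:circle}). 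That trick is the substantive contribution of \S\ref{sect:eigest}, and it is exactly what lets the whole central block stay inside the permutation group---indeed it is what makes possible the stronger $\BPP^{\cFS[1]}$ result (Theorem~\ref{thm:Dlog}), where no ancill\ae{} at all are available. Replacing your $S^{\pm1}$ device with the $2^\alpha 3^\beta$ control schedule would close the gap; as written, your proof works only under a more permissive reading of ``basis-preserving'' than the paper commits to.
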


In this section, we recall the core of Kitaev's algorithm---\emph{Eigenvalue Estimation}---describing it in terms of the $\cFS$ oracle, and offer a slightly different `control schedule' for simplifying the follow-on post-processing.  Using this, our main result of the section is to establish that (the decision variants of) the cryptographic problem ``Discrete Log over Finite Fields of Characteristic 2'' is in $\BPP^{\cFS[1]}$. 
We also discuss other cryptographic problems in $\FH_2$ whose arithmetic is sufficiently complex that it is not apparent whether or not they are also in $\BPP^{\cFS}$.
Unlike $\FH_2$, the oracle $\cFS$ admits no space for ancill\ae{}, so all the `computational work' it performs is done `in place'~: a very limited form of computing (\cf{}~\S\ref{sect:DQC1}).
We also consider the depth of circuits required within the $\FH_2$ and $\BPP^{\cFS[1]}$ frameworks.

\subsection{Eigenvalue estimation}  \label{sect:eigest}

Let $f$ be a permutation on the set of strings of length $n$, and suppose that for any `control integer' $c$ we can construct a circuit of width $n$ and size $O( poly(n) \cdot \log c )$ for implementing $f^c$, using gates from the permutation group.  The goal of Eigenvalue Estimation is to find the length of one of the larger cycles of $f$.  This is called Eigenvalue Estimation because the cycle-structure of $f$ is naturally encoded within the spectrum of the unitary map corresponding to the circuit that implements $f$~: that is, to each cycle of length $q$ there corresponds a subspace of dimension $q$ spanned by the computational basis elements associated to the elements of that cycle, and in a different basis this space can be expressed as the product of one-dimensional eigenspaces having eigenvalues $\exp( 2\pi i \kappa / q )$ for each $\kappa \in [0..q-1]$. 
In other words, for $x$ some point on a $q$-cycle of $f$, the following two bases span the same space~:
\begin{eqnarray}  \label{eqn:eigenvec}
  && \left\{~ \ket{x}, \ket{f(x)}, \ket{f^2(x)}, \ldots, \ket{f^{q-2}(x)}, \ket{f^{q-1}(x)} ~\right\}, \nonumber \\
  && \left\{~ \ket{\lambda_x(\kappa)} ~:=~ \frac{1}{\sqrt{q}}\sum_{j=0}^{q-1} \exp{\left( \frac{-2\pi i j \kappa}{q} \right)}\ket{f^j(x)} ~\right\}_{\kappa=0}^{q-1}.
\end{eqnarray}
Eigenvalue Estimation is about finding both a $q$ and a $\kappa$ for some suitable permutation $f$ in context of some eigenvector $\ket{\lambda_x(\kappa)}$, or possibly for two different commuting permutations in context of the same mutual eigenvector.

\subsubsection*{Using the $\cFS$ oracle}

To find one such $q$ using a single call to $\cFS$, we need to construct a unitary $U$ to submit to the oracle.  Kitaev's idea is to implement $\Lambda(f^c)$, onto some essentially arbitrary target, many times for many different values of $c$, using different control qubits but the same target qubits.  (One could say that this technique has the effect of \emph{measuring} the target state in the eigenvalue basis.)  So we implement this idea by fixing some schedule of `control integers'---a list $\{ c_1, c_2, \ldots, c_m \}$---and consider a circuit on $w=m+n$ qubits of the form
\begin{eqnarray}  \label{eqn:U0}
  U(\0)_{[1..w]}  &:=&  \Lambda_1( f^{c_1}_{[m+1.m+n]} ) \cdot \Lambda_2( f^{c_2}_{[m+1.m+n]} ) \cdots \Lambda_m( f^{c_m}_{[m+1.m+n]} ). 
\end{eqnarray} 

Now this $U(\0)$ applied to a state of the form $\ket+^{\otimes m}_{[1..m]} \ket{\lambda_x(\kappa)}_{[m+1..m+n]}$ would leave the $\ket{\lambda_x(\kappa)}$ register unchanged, and would transform the $j$th qubit (for $j \in [1..m]$) independently by rotating it around the equator of the Bloch sphere through an angle of $2\pi c_j \kappa/q$.  (This is sometimes called `phase kickback'.)  This transfers some information about the eigenstate $\ket{\lambda_x(\kappa)}$ into the $j$th qubit in a way that can be measured.  Moreover, by using different values of $c_j$ for different qubits, we obtain different data about the eigenstate.  We call the first $m$ qubits \emph{control qubits} or \emph{the control register} and we call the last $n$ qubits the \emph{target register}.

So $U(\0)$ is almost the unitary we want for submitting to the $\cFS$ oracle, except that the oracle would apply $U(\0)$ to the state $\ket+^{\otimes w}$, which is not so useful~: the target register state $\ket+^{\otimes n}$ is a linear combination of eigenvectors of the form $\ket{\lambda_x(0)}$, so that for all of these the rotation angle is 0 (independent of $c_j$ and $q$) and hence the phase kicked back is 1.  We therefore adjust $U(\0)$ by composing it with a random pattern of $Z$ gates to make the unitary $U(\r)$~:
\begin{eqnarray}  \label{eqn:EE}
  U(\r)_{[1..w]}  &:=&  U(\0)_{[1..w]} \cdot \prod_{t=1}^{n} Z_{m+t}^{r_t},
\end{eqnarray}
where $\r$ is a random $n$-long string of bits.  
Now if $U(\r)$ is submitted to the oracle, the effect would be that of applying $U(\0)$ to a state whose control register is correctly set ($\ket+^{\otimes m}$) and whose target register is effectively \emph{fully depolarised}.  Thus the final measurement results will be no different than had we uniformly randomly selected a point $x$ and uniformly randomly selected a number $\kappa$ and applied $U(\0)$ with the target register in the eigenstate $\ket{\lambda_x(\kappa)}$. 

(This illustrates quite nicely the point made back in \S\ref{sect:QFTdefs} regarding the factorisations of a `classical group' unitary into a `permutation' part and a `diagonal' part.  For when the operator $U(\r)$ is factorised as at line~(\ref{eqn:EE}), the `diagonal' part comes first and is rather trivial, but if it were to be factored the other way round---as at line~(\ref{eqn:groupfactor}) with the `diagonal' part \emph{coming after} the `permutation' part---then the resulting `diagonal' part would be far more complicated.)

Then the $w$-bit string returned by the oracle will be such that amongst the first $m$ bits, the bias of the $j$th bit will be $\cos( 2\pi c_j \kappa/q )$, where $q$ is the length of a randomly chosen cycle of $f$ and $\kappa$ is a random integer in $[0..q-1]$.
We have therefore proved the following lemma~:
\medskip
\begin{lemma}[\emph{Cf.} \cite{lit:Kit9511}]  \label{lem:eigest1}
  For any uniform family $\{f\}$ of permutations with properties as above---provided the descriptions for constructing the circuits for $f^c$ are themselves uniform---there is a $\BPP^{\cFS[1]}$ subroutine which takes as input a description for such circuits and a description of a control schedule $(\{ c_j \}_{j=1}^m)$ and outputs a string the first $m$ bits of which have biases $\cos(2\pi c_j \kappa/q)$ for $j \in [1..m]$, where $q$ is the length of the orbit of a randomly chosen point in the domain of $f$, and $\kappa$ is a random integer, neither of which depends on $j$.
\end{lemma}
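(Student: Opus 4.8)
The plan is to assemble the lemma from pieces already in place. First I would recall the structure of the Fourier Sampling oracle: by Definition~\ref{def:FSO}, on input a classical circuit description for a classical-group unitary $U=(\sigma,f)$ on $w$ qubits, $\cFS$ returns a sample from the distribution $P_U(\y)=\bra{\y}Q_B\cdot U\cdot Q_B\ket{\0}^2$. So the whole of the claimed subroutine is just: take the input circuit descriptions for $f^{c}$ (known to be constructible with polynomial overhead), classically assemble from them the circuit for $U(\r)_{[1..w]}$ of line~(\ref{eqn:EE}), and make a single call to $\cFS$ on that circuit. The pre-processing (picking $\r$ uniformly at random, stringing together the $\Lambda_j(f^{c_j})$ gates and the $Z_{m+t}^{r_t}$ pattern) is clearly $\BPP$-computable given uniform descriptions, so the subroutine lies in $\BPP^{\cFS[1]}$ as required. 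I would emphasise that composing the controlled $f^{c_j}$ blocks keeps all gates in the permutation group and the $Z$ pattern keeps us in the classical group, so $U(\r)$ is a legitimate input to $\cFS$.

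Next I would establish the statistical behaviour of the returned string. The key computation is to expand $Q_B\cdot U(\r)\cdot Q_B\ket{\0}$ and read off the marginal bias of the $j$th output bit for $j\in[1..m]$. The cleanest route is the ``depolarised target'' observation already sketched in the text around line~(\ref{eqn:EE}): conjugating the target register by $Q_B$ and averaging over the random $\r$ pattern of $Z$ gates has exactly the effect of replacing the target input $\ket{+}^{\otimes n}$ by the maximally mixed state, i.e. the measurement statistics coincide with those obtained by choosing $x$ uniformly at random in the domain of $f$, choosing $\kappa$ uniformly in $[0..q-1]$ where $q$ is the orbit length of $x$, and running $U(\0)$ on control register $\ket{+}^{\otimes m}$ with target in the eigenstate $\ket{\lambda_x(\kappa)}$ of line~(\ref{eqn:eigenvec}). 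I would make this rigorous by using the Pauli/density-operator identity that the $\r$-average of $\prod_t Z_{m+t}^{r_t}\,\rho\,\prod_t Z_{m+t}^{r_t}$ is the projection of $\rho$ onto its diagonal in the $Q_B$-rotated basis, combined with the fact that $\frac{1}{\sqrt{2^n}}\sum_y\ket{y}$ written in the eigenbasis $\{\ket{\lambda_x(\kappa)}\}$ has equal weight on each $x$ and each $\kappa$.

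Then, for the target register in the pure eigenstate $\ket{\lambda_x(\kappa)}$, the controlled gate $\Lambda_j(f^{c_j})$ acts on the $j$th control qubit by phase kickback: since $f^{c_j}\ket{\lambda_x(\kappa)}=\exp(-2\pi i c_j\kappa/q)\ket{\lambda_x(\kappa)}$, the qubit $\ket{+}$ is rotated around the equator of the Bloch sphere by angle $2\pi c_j\kappa/q$, and the final $Q_B$-basis (i.e. Hadamard-basis) measurement of that qubit has bias $\cos(2\pi c_j\kappa/q)$. The control qubits are acted on independently by distinct $\Lambda_j(f^{c_j})$ with a common eigenstate, so the marginal of bit $j$ is exactly $\cos(2\pi c_j\kappa/q)$ with $q,\kappa$ shared across all $j$; averaging over the induced distribution on $(x,\kappa)$ then gives the statement, with $q$ the orbit length of a uniformly random point and $\kappa$ uniform in $[0..q-1]$ and neither depending on $j$. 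I would close by remarking that the last $n$ bits of the output are irrelevant noise and can simply be discarded.

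The main obstacle, such as it is, will be pinning down the ``fully depolarised target'' step precisely rather than heuristically: one must check that averaging over $\r$ does not also scramble the control register or introduce cross-correlations between the control bits, and that the bookkeeping of which $Q_B$ factors belong to the oracle's definition versus the circuit $U(\0)$ is consistent with line~(\ref{eqn:FSProb}). Everything else is an assembly of Definition~\ref{def:FSO}, the eigenbasis of line~(\ref{eqn:eigenvec}), and the elementary phase-kickback calculation; no genuinely new idea is needed, which is consistent with the statement being labelled a lemma and attributed (\emph{Cf.}~\cite{lit:Kit9511}).
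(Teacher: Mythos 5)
Your proposal is correct and follows essentially the same route as the paper's own proof, which simply says ``Overview as above, calling the $\cFS$ oracle with the $U(r)$ of line~(\ref{eqn:EE}), with additional explanatory details to be found in \cite{lit:Kit9511} and \cite{book:NandC}''---the depolarised-target step and the per-qubit phase-kickback calculation you spell out are precisely the details the paper delegates to the preceding discussion and references.
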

\begin{proof}
Overview as above, calling the $\cFS$ oracle with the $U(r)$ of line~(\ref{eqn:EE}), with additional explanatory details to be found in \cite{lit:Kit9511} and \cite{book:NandC}.
\end{proof}

Note that if the random string $\r$---used at line~(\ref{eqn:EE}) to select a random pattern of $Z$ gates---were set to be uniform over all strings other than the all-zero string, then this would diminish the probability of finding a case for which $\kappa=0$, but it would not eliminate that possibility altogether unless $f$ consisted of a single cycle of length~$2^n$.

\subsubsection*{Choosing a control schedule}

Next we consider how to choose the integer values $c_j$ that are used as indices on $f$ in the subroutine of Lemma~\ref{lem:eigest1}.  These values must be specified up front, not selected adaptively (for an algorithm in $\BPP^\cFS$).  In what follows, let $\phi = \kappa/q \in [0,1)$ denote the rational number that the algorithm is intended to find.  Since $\cos(2\pi c_j \phi) = \cos( 2\pi c_j (1-\phi) )$ for all integer $c_j$, we may as well take $\phi$ to be in $[0,\frac12]$, by symmetry.
The $c_j$ values then control the probabilities of the bits returned by the $\cFS$ oracle, with $p_0 = \cos^2( \pi c_j \phi )$ and $p_1 = \sin^2( \pi c_j \phi )$ being the probabilities of the $j$th returned bit being 0 or 1 respectively.  These bits are to be post-processed classically in order to learn the value $\phi$. 

Kitaev suggested taking the $c_j$ values of the form $2^\alpha$, repeating each several times in order to get an estimate of $2^\alpha \phi$ with a few bits of precision very accurately.  But as we have already seen, the bits being returned give information about $\sin^2( \pi c_j \phi )$, not about $c_j \phi$ directly.  
By way of example, suppose that $\phi$ were rather close to $\frac14$, say $\phi = \frac14 + \epsilon$; so close that measurement of $\sin^2( \pi 2^0 \phi ) = \sin^2( \pi (\frac14 + \epsilon) )$ to a few bits of precision would be unable to determine the sign of $\epsilon$ with any significant accuracy.  But then subsequent measurements of $\sin^2( \pi 2^\alpha \phi )$ for any integer $\alpha \ge 1$ will contain absolutely no information about the sign of $\epsilon$, because $\sin^2( \pi 2^\alpha (\frac14 + \epsilon) ) = \sin^2( \pi 2^\alpha (\frac14 - \epsilon) )$.  Thus a bit of information about $\phi$ would be inaccessible in this case, and without further modification, the algorithm would not work in the worst case with this control schedule.

Our proposed solution for a choice of control schedule---designed so as to interface with the subroutine of Lemma~\ref{lem:eigest1} without further modification---is to use integers of the form $2^\alpha3^\beta$.
\begin{lemma}  \label{lem:eigest2}
  For any $\eps>0$ there is a set of integers $\{c_j\}_{j=1}^m$ and an efficient classical algorithm that succeeds with probability $1-\eps$, such that for any rational $\phi \in [0,\frac12]$ with denominator $\le 2^n$, the algorithm outputs $\phi$ when input a sequence of $m$ independent bits with respective biases $\cos(2\pi c_j \phi)$ (as per Lemma~\ref{lem:eigest1}).  The algorithm takes time roughly linear in $m$, and $m = O( n^2 \log \frac{n}\eps )$, and the bit-length of each $c_j$ is $O(n)$.
\end{lemma}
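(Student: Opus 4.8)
The plan is to exhibit a concrete control schedule, then a greedy classical decoder, and to prove correctness conditional on a single ``all estimates accurate'' event whose failure probability is controlled by a Chernoff bound. For the schedule, take $\{c_j\}$ to consist of every integer of the form $2^\alpha 3^\beta$ with $0\le\alpha\le A$ and $0\le\beta\le B$, where $A,B=\Theta(n)$ are chosen large enough subject to $2^A 3^B = 2^{O(n)}$ (so each $c_j$ has $O(n)$ bits), and where each such value is listed $r=\Theta(\log(n/\eps))$ times; this gives $m=O(n^2)\cdot r=O(n^2\log(n/\eps))$ bits in all, as claimed. First I would note that a bit with bias $\cos(2\pi c\phi)$ reveals $\sin^2(\pi c\phi)$, so after averaging its $r$ copies and applying Hoeffding's inequality (\cf{} Lemma~\ref{lem:amplify}) we obtain, with probability $1-O(\eps/n^2)$, an estimate of $|\sin(\pi c\phi)|$ accurate to a fixed constant $\tau$; a union bound over the $O(n^2)$ distinct multipliers then gives that, with probability $\ge 1-\eps$, we simultaneously know $\|2^\alpha 3^\beta\phi\|$ to within $\tau$ for every admissible $(\alpha,\beta)$, where $\|\cdot\|$ is distance to the nearest integer --- equivalently, we know $2^\alpha 3^\beta\phi \bmod 1$ up to the reflection $x\mapsto -x$ and up to additive error $\tau$. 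The remaining argument is then purely deterministic, conditioned on this good event.

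Next I would describe the decoder. Each probe $(\alpha,\beta)$ constrains $\phi$ to lie in a union of short arcs (the preimage under $x\mapsto \|2^\alpha 3^\beta x\|$ of a $\tau$-neighbourhood of the measured value), and the true $\phi$ lies in the intersection of all of them. The decoder does not brute-force this intersection; instead it maintains a short list $\mathcal{C}$ of candidate reals, refined scale by scale from coarse to fine: the $2^\alpha$-probes supply the ``zoom'' (roughly halving each arc while at most doubling the candidate count), and the $3^\beta$-probes at the same scale prune candidates whose predicted $\|2^\alpha 3^\beta\phi\|$ disagrees with the recorded one by more than $2\tau$. Two facts must then be established: (i) $\phi$ never gets pruned (immediate, its predictions being within $\tau$); and (ii) the list stays of size $O(1)$ at every scale, so the decoder runs in time $O(m)$ (essentially one linear pass over the samples to form the averages, then $O(1)$ work per averaged estimate). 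Fact (ii) rests on a quantitative number-theoretic input: if $\phi\ne\phi'$ are two rationals of denominator $\le 2^n$ in $[0,\frac12]$, then there is an admissible $(\alpha,\beta)$ for which $2^\alpha 3^\beta\phi$ is bounded away on the circle from \emph{both} $2^\alpha 3^\beta\phi'$ and $-2^\alpha 3^\beta\phi'$, i.e.\ $\|2^\alpha 3^\beta(\phi-\phi')\|$ and $\|2^\alpha 3^\beta(\phi+\phi')\|$ are simultaneously $\ge 4\tau$. One proves this by separating the $2$-adic and the odd parts of $\phi\pm\phi'$: write each as $2^{-s}\cdot(p/q_0)$ with $p,q_0$ odd, choose $\alpha$ to clear the $2^{-s}$, and then run a pigeonhole over $\beta=0,\dots,B$ on the orbit of $p/q_0 \bmod 1$ under multiplication by $3$, using that this orbit has more than $B$ elements once $B\ge c_1\log q_0 = O(n)$ (here one uses that $3$ has large multiplicative order modulo any power of $2$ and modulo $q_0$, which is exactly the multiplicative independence of $2$ and $3$). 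Once $\phi$ has been localised to within $2^{-2n-O(1)}$, a single continued-fraction computation rounds the surviving candidate to the unique rational of denominator $\le 2^n$ nearby, namely $\phi$, which is output.

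The main obstacle is fact (ii): making the ``multiplicative independence'' statement fully quantitative with an explicit $\tau$ and an explicit $O(n)$ bound on the exponents, and --- harder --- dovetailing it with the iterative decoding so that $|\mathcal{C}|=O(1)$ is guaranteed at \emph{every} scale, not merely for the final answer. This requires a ``carry-containment'' argument showing that an error committed while extracting one block of bits can corrupt only $O(1)$ neighbouring blocks before the candidate list resynchronises; without such control a single unlucky rounding could in principle let the list grow. I would expect the circle-separation lemma itself to be a short exercise once phrased $2$-adically, but the bookkeeping that turns ``some admissible probe kills every stray candidate pair'' into ``the greedy decoder is correct and linear-time'' is the part most likely to need care, and is where a hidden constant or an extra polylogarithmic factor in $m$ could creep in if one is not careful.
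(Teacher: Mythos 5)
Your control schedule and budget match the paper's exactly (integers $2^\alpha 3^\beta$ with $\alpha,\beta = O(n)$, each repeated $O(\log(n/\eps))$ times, Chernoff plus union bound over $O(n^2)$ estimates), but your decoder is a genuinely different design, and it has a gap that the paper's design does not: you have correctly identified it, and it is a real one. The paper's decoder never maintains a candidate list. It writes $\phi = 0.0\phi_2\phi_3\cdots$ and, at stage $\alpha$, determines the single parity bit $\phi_{\alpha+1}\oplus\phi_{\alpha+2}$ from $\mu_{2^\alpha 3^\beta}\approx\sin^2(\pi\,2^\alpha 3^\beta\phi)$, increasing $\beta$ only while the estimate is inconclusive, i.e.\ while $2^\alpha 3^\beta\phi\bmod 1$ sits too near $\tfrac14$ or $\tfrac34$. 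The only separation fact needed is elementary: if $\phi$ has denominator $q\le 2^n$ and $2^\alpha\phi\bmod 1$ is not exactly $\tfrac14$ (the latter being a harmless degenerate case), then its distance from $\tfrac14$ is at least $\Omega(1/q)$; each tripling multiplies that distance by $3$ while keeping it in the resolvable window until it first exceeds $\tfrac18$, so some $\beta\le d=O(n)$ is guaranteed to land in an unambiguous arc. There is never a competition between two candidate values of $\phi$, hence no need for multiplicative independence of $2$ and $3$, and no list to control.

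Your list-decoder poses the stronger task of separating every pair of candidate rationals by a shared probe, which is why you need the circle-separation lemma (itself not quite nailed down: the $2$-adic valuations of $\phi-\phi'$ and $\phi+\phi'$ can differ, so a single $\alpha$ ``clearing the $2^{-s}$'' for both simultaneously is not automatic, and the two $\beta$-pigeonholes have to be made to agree). More fundamentally, even granting that lemma, it only guarantees each stray pair dies under \emph{some} probe, which may live at a scale $\alpha'$ the decoder has not yet reached; that is exactly the ``carry-containment'' issue you flag, and without it the list can grow polynomially at intermediate scales, breaking both the $O(1)$-list claim and the linear running time. So the proposal is plausible in outline but incomplete at the point you yourself identified; the paper's sequential single-track magnification argument sidesteps the whole problem and is the simpler route.
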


\begin{proof}
First we give the control schedule $\{c_j\}_{j=1}^m$ precisely, and the algorithm, and then we argue for its correctness.

Let $\alpha$ take all integer values in the range $[0..t]$ and let $\beta$ take all integer values in the range $[0..d]$ where $t=2n$ and $d=O(n)$.  For each $\alpha,\beta$ pair, let there be $k$ different times when $c_j=2^\alpha3^\beta$, where $k=O( \log( \frac{n}{\eps} ) )$.  Then $m = k \cdot (t+1) \cdot (d+1)$ is the total size of the control register.

Denote the binary expansion of the unknown $\phi$ out to $t$ bits of precision as $\phi = 0.0\phi_2\phi_3\phi_4\ldots$.  Fix a parameter $\eta \approx 0.32$.
Process the random bits by letting $\mu_{2^\alpha3^\beta}$ be the average of those $k$ bits for which $c_j = 2^\alpha3^\beta$; that is to say
\begin{eqnarray}
  \mu_{c}  &\sim&  \frac1k \cdot Bin(~ k, \sin^2( \pi c \phi ) ~).
\end{eqnarray}

The pseudocode for the ensuing estimation procedure is given as follows, in Fig.~\ref{fig:alg1}.  

\begin{figure}[ht]
  \caption{\label{fig:alg1} Eigenvalue estimation~: classical post-processing.}
  \vbox{\texttt{
    \vskip 1mm
    Input~~: $\mu_{2^\alpha 3^\beta} ~\sim~ \frac1k B(~ k, \sin^2\pi 2^\alpha 3^\beta \phi ~)$, for $\alpha$ up to $t$, for $\beta$ up to $d$. 
    \\
    Output~: Estimate for hidden $\phi = 0.0\phi_2\phi_3\ldots$ up to $t$ bits of precision.
    \\
    Params~: $k$, $d$, and $\eta$ all affect worst-case failure probabilities.
    \\
    \begin{enumerate}
      \item  \quad  $\sigma \leftarrow 0$;
      \item  \quad  for $\alpha$ in $[0..t]$ do
      \item  \quad  \quad  $\phi_{\alpha+1} \leftarrow \sigma$,  ~$\sigma \leftarrow 1-\sigma$;
      \item  \quad  \quad  for $\beta$ in $[0..d]$ do
      \item  \quad  \quad \quad  $\sigma \leftarrow 1-\sigma$;
      \item  \quad  \quad \quad  if $\mu_{2^\alpha 3^\beta} < \eta$ then continue $\alpha$;
      \item  \quad  \quad \quad  if $\mu_{2^\alpha 3^\beta} > 1-\eta$ then $\sigma \leftarrow 1-\sigma$, ~continue $\alpha$;
      \item  \quad  \quad  continue $\beta$;
      \item  \quad  continue $\alpha$;
    \end{enumerate}
  }}
\end{figure}

Having obtained such an estimate for $\phi$ (assuming no errors), one can use the usual efficient technique of continued fractions \cite{lit:Kit9511} to recover the rational $\phi$ exactly.

The technique used in Fig.~\ref{fig:alg1} can be understood inductively.  Take the inductive hypothesis to be that, on entering the outer loop for the $\alpha+1$st time, the first $\alpha+1$ bits of $\phi$ are correctly known (\ie{} bits $\phi_1,\phi_2,\ldots,\phi_{\alpha+1}$).  In fact, line 3 of Fig.~\ref{fig:alg1} explicitly shows the bit $\phi_{\alpha+1}$ being stored~: at each point in the algorithm, the variable $\sigma$ holds the bit value deemed to be most likely for the next (unstored) output bit.
Then the inner loop examines successively more data, in a bid to determine the parity $\phi_{\alpha+1} \oplus \phi_{\alpha+2}$, from which the next output bit is learnt.

The role of the factor $3^\beta$ in $c_j$ is to deal with the case whereby the estimator for the next bit ($\mu_{2^\alpha} \approx \sin^2( \pi 2^\alpha \phi )$) happens to be inconclusive for further progress, \ie{} when it is too close to $\frac12$ to distinguish reliably between the two alternative hypotheses for the parity $\phi_{\alpha+1} \oplus \phi_{\alpha+2}$.

Estimators of the form $\mu_{2^\alpha 3^\beta}$ are ideal for that case.  This is because as soon as it is assured that $2^\alpha\phi \in (\frac18,\frac38)$, the estimator $\mu_{2^\alpha 3^1}$ is effective for `zooming in' to help determine on which side of $\frac14$ the value $2^\alpha\phi$ is more likely to lie (and likewise for the symmetrically opposite case, see Fig.~\ref{fig:circle} for a visual aid).  Larger values of $\beta$ then give improved `magnification'.

\begin{figure}[h]
\vspace{5mm}
  \begin{center}
     \includegraphics[width=80mm]{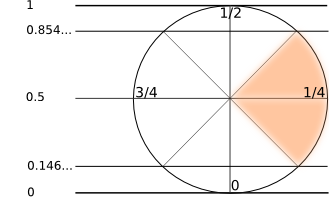}
     \caption{\label{fig:circle} Schematic for $\sin^2( \pi c \phi )$.  If a value $c \phi$ lies within the range $[\frac18, \frac38]$ (shaded area), then the best way to tell which side of $\frac14$ it lies on is to triple it (unshaded area), magnifying the resolution.}
  \end{center}
\end{figure}

The parameter $\eta$ is set to be well within the interior of the region $[0.146,0.5]$, so in terms of Fig.~\ref{fig:circle}, each round of the inner loop can be thought of as testing whether $c\phi$ lies well away from the `equatorial' neighbourhoods of $\frac14$ and $\frac34$, where the parity $\phi_{\alpha+1} \oplus \phi_{\alpha+2}$ would still be ambiguous.  While this ambiguity persists, the inner loop increases the value of $c$ by a factor of 3, driving $c\phi$ away from these neighbourhoods, until eventually the required parity bit is correctly learnt with high probability.
  
To understand the success probability of the algorithm and the role of the parameters $k, d, \eta$, it is appropriate to consider the probability of the algorithm making a mis-assignment of the $\alpha+2$nd bit of $\phi$, given that all prior assignments were correct.  (Note that if $2^{\alpha+2}\phi$ happens to be an integer precisely, then there are two equally valid possibilities for the $\alpha+2$st bit of $\phi$; and this causes no practical problem.)

There are four places where the algorithm shown can go wrong~: namely lines~6,~7,~8,~9 in the code of Fig.~\ref{fig:alg1}.  It makes an invalid assumption at line~6 if $\mu_c < \eta$ when in fact $\sin^2(\pi c \phi) > \frac12$.  The marginal probability of this happening on any particular $c$ is never worse than $\exp( -2k \cdot ( \frac12 - \eta)^2 )$, using the Chernoff bound.
Similarly, the algorithm makes an invalid assumption at line~7 if it sees $\mu_c > 1-\eta$ when really $\sin^2(\pi c \phi) < \frac12$.  Again, the marginal probability is as above, by symmetry.  
Thus the overall probability of either of these two kinds of error occurring at any point in the algorithm, regardless of $\phi$, is certainly bounded above by
\begin{eqnarray}  \label{eqn:errorbound}
  (t+1)\cdot(d+1)\cdot\exp\left( -2k \cdot ( \frac12 - \eta )^2 \right).
\end{eqnarray}
This is exponentially small in $k$, for fixed $\eta$.

The algorithm makes a potentially invalid assumption at line~8 if it is not justified in looking at the next $\beta$ value, because $\sin^2( \pi c \phi )$ lies outside the range $[\sin^2( \frac\pi8 )..\sin^2( \frac{3\pi}8 )]$ despite the fact that $\mu_c$ lies within $[\eta..1-\eta]$.  
As before, this probability is bounded above at any given point by a Chernoff bound of $\exp( -2k( \eta - \sin^2(\frac\pi8) )^2 )$.
So the overall probability of this kind of error occuring at any point in the algorithm is bounded above by 
\begin{eqnarray}  \label{eqn:errorbound2}
  (t+1)\cdot(d+1)\cdot\exp\left( -2k \cdot ( \eta - \sin^2( \frac\pi8 ) )^2 \right),
\end{eqnarray}
which is also exponentially small in $k$.

The fourth way in which the algorithm can make an error---line~9---is by `using up' all the available $\beta$ values for a given $\alpha$, without coming to a firm conclusion about the parity of the next bit.  This may be precluded for the rational $\phi$ that we consider by choosing $d$ so that $3^d/q \ge \frac14$.  This is surely achieved if $3^d > 2^n$, for example.
(Heuristically however, to avoid errors with exponentially good probability in the \emph{average} case, $d$ really only needs to be at least as long as the longest run of zeroes or ones in the first $t$ bits of $\phi$, which suggests asymptotically taking $d =\Omega( \log(t) )$.) 
\end{proof}

\subsection{Discrete logarithm over $\FF_{2^n}$}

Suppose we wish to find the discrete logarithm between two elements of a finite field of characteristic 2.  Let $g, h \in \FF_{2^n}^*$ be the two elements, so that we seek a solution to $g^s = h$.  (The group $\FF_{2^n}^*$ is isomorphic to one single cycle of known cardinality, \ie{} there are $2^n-1$ units in the finite field.)  In general, this problem is believed to be classically hard.

Kitaev's algorithm for the discrete logarithm involves learning $\kappa/q$ for two different permutations with respect to the same eigenvector.  The two permutations in this context are multiplication by $g$ and multiplication by $h$, respectively.  The map analogous to the $U(\0)$ of line~(\ref{eqn:U0}) is here to be defined so that the first half of the control bits (say bits $[1..\frac{m}2]$) control applications of powers of the \emph{first} permutation (\ie{} multiplication by $g^{c_j}$), while the second half of the control bits ($[\frac{m}2+1..m]$) control applications of powers of the \emph{second} permutation (\ie{} multiplication by $h^{c_j}$).  Then, adjusting Lemma~\ref{lem:eigest1} accordingly, we can arrange for an output string the first $\frac{m}2$ bits of which have biases of the form $\cos( 2\pi c_j \kappa/q )$ and the second $\frac{m}2$ bits of which have biases of the form $\cos( 2\pi c_j \kappa s/q )$, for the same $\kappa$ and $q$ (and for $c_j$ of our choosing).

\medskip
\begin{theorem}  \label{thm:Dlog}
The decision version of the discrete log problem over finite fields of characteristic 2 is in $\BPP^{\cFS[1]}$.
\end{theorem}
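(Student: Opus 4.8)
The plan is to realise Kitaev's discrete-log algorithm entirely inside the $\BPP^{\cFS[1]}$ framework, using the eigenvalue-estimation subroutine of Lemma~\ref{lem:eigest1} with the $2^\alpha 3^\beta$ control schedule of Lemma~\ref{lem:eigest2}, and exploiting the parallelism identity~(\ref{eqn:locality}) to pack enough independent repetitions into the \emph{single} permitted oracle call.

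First I would fix a representation of $\FF_{2^n}$ (say by an irreducible degree-$n$ polynomial), so that field arithmetic, and in particular repeated squaring to form $g^c$ from $g$, is classical poly-time. Multiplication by a fixed field element is an $\FF_2$-linear bijection of $\FF_{2^n}$, hence a permutation of the $2^n$ computational-basis strings expressible by $O(n^2)$ gates from the permutation group; and $(\text{mult by }g)^c = \text{mult by }g^c$, with $g^c$ precomputable. So the family of permutations $f_g:\x\mapsto g\x$, $f_h:\x\mapsto h\x$ satisfies the hypotheses of Lemma~\ref{lem:eigest1}. Since $\FF_{2^n}^*$ is cyclic and $h=g^s$, every orbit of $f_g$ has size $q:=\mathrm{ord}(g)$, $f_h$ preserves each such orbit, and on the common eigenvector $\ket{\lambda_x(\kappa)}$ of line~(\ref{eqn:eigenvec}) the maps $f_g$ and $f_h$ act by $e^{2\pi i\kappa/q}$ and $e^{2\pi i\kappa s/q}$ respectively --- exactly the situation described just before the theorem. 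I would build the combined unitary accordingly (control register split into halves of size $m/2$, the $j$th bit of the first half controlling $f_{g^{c_j}}$ on the target register and the $j$th bit of the second half controlling $f_{h^{c_j}}$), dressed with a uniformly random pattern of $Z$ gates on the target as at line~(\ref{eqn:EE}) so that the target register is effectively depolarised; the $\{c_j\}$ are the schedule of Lemma~\ref{lem:eigest2}.

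The one genuinely new wrinkle is that one sample from $\cFS$ yields one random $\kappa$, whereas the classical back-end needs $\gcd(\kappa,q)=1$ in order to invert $\kappa$ modulo $q$, and the density of such $\kappa$ is only $\Omega(1/\log\log 2^n)$. The fix is to take $L=O(\log n\cdot\log(1/\eps))$ \emph{independent} copies of the combined unitary, each with its own fresh random $Z$-pattern $\r_i$, and submit their tensor product to $\cFS$ in a single call: by~(\ref{eqn:locality}) the returned string factors into $L$ independent sub-strings, one per copy, giving i.i.d.\ uniform $\kappa_i\in[0,q)$. For each copy I would run the post-processing of Lemma~\ref{lem:eigest2} on its first half to obtain a rational $\phi_1^{(i)}$ and on its second half to obtain $\phi_2^{(i)}$; writing $\phi_1^{(i)}=a_i/b_i$ in lowest terms (continued fractions), I form the constantly-many candidate pairs obtained by taking $q\leftarrow b_i$, $\kappa\in\{a_i,\,b_i-a_i\}$ and $\mu\in\{\phi_2^{(i)}b_i,\,b_i-\phi_2^{(i)}b_i\}$ (the branching accounts for the fact that $\cos$-biases only determine each phase up to $\phi\mapsto1-\phi$). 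Whenever $\gcd(\kappa,q)=1$ I set $s\leftarrow\kappa^{-1}\mu\bmod q$ and \emph{verify} $g^s=h$ by a classical field computation, outputting the first $s$ that verifies (and, for the decision version, comparing this $s\in[0,q)$ against the query). Correctness is then immediate: for a copy with $\gcd(\kappa_i,q)=1$ and both post-processings successful one has $b_i=q$, one of the $\kappa$-branches is the true $\kappa_i$ (or $q-\kappa_i$, equally good), one of the $\mu$-branches is $\kappa_i s\bmod q$, hence the correct $s$; and any verified $s$ is correct by construction. The overall failure probability is at most $\Pr[\text{no good copy}]+\Pr[\text{some needed run of Lemma~\ref{lem:eigest2} fails}]$, both forced below $\eps/2$ by choosing $L$ and the per-copy error parameter suitably, with the degenerate orbit $\x=\0$ (giving $q=1$) contributing only $O(2^{-n})$ per copy. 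Pre-processing (precomputing $g^{c_j},h^{c_j}$, assembling the circuit description), the single $\cFS$ call, and all post-processing are classical poly-time, so the procedure lies in $\BPP^{\cFS[1]}$; and computing $s$ exactly immediately answers any standard decision variant of the discrete-log problem.

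I expect the main point to get right --- rather than a deep obstacle --- to be this packaging step: justifying that the tensor trick does not violate the ``one oracle call'' constraint, that the $\kappa_i$ are genuinely independent (distinct random $Z$-patterns $\to$ independent effective depolarisations), and that the verification step legitimately absolves us from having to know in advance which copy is good or which sign of each phase is the right one. The field-arithmetic bookkeeping (that each $f_{g^c}$ is a small, uniformly describable permutation-group circuit) is routine, and the quantitative side (the $\Omega(q/\log\log q)$ count of units and the resulting choice of $L$, plus the resource estimate $w=\poly(n)$) is standard.
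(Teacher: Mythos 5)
Your proof follows essentially the same route as the paper's: represent $\FF_{2^n}$ with $n$-bit strings so that multiplication by a fixed element is an $\FF_2$-linear bijection, hence an in-place circuit of permutation gates needing no ancill\ae{}; precompute the required powers $g^{c_j}$, $h^{c_j}$ classically; apply the eigenvalue-estimation machinery of Lemmas~\ref{lem:eigest1} and~\ref{lem:eigest2} with a split control register; and recover $s$ from the pair $\kappa/q$, $\kappa s/q$. Where you go further than the paper's terse final sentence (``for some random (non-zero) $\kappa$, it is easy~\ldots~to recover $s$'') is in spelling out two loose ends that the paper waves past: that a random $\kappa$ need not be a unit modulo $q$, and that the $\cos$-biases only determine each phase up to the involution $\phi\mapsto1-\phi$. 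Your remedy --- pack $L=O(\log n\cdot\log(1/\eps))$ independent copies into the single oracle call via the tensor-product identity of line~(\ref{eqn:locality}), branch over the $O(1)$ sign choices, and classically verify $g^s=h$ before accepting --- is correct and is precisely the kind of in-oracle amplification that the discussion surrounding Definition~\ref{def:FSO} already licenses. So your argument is the paper's argument made explicit rather than a genuinely different one, and the added care on the coprimality and sign issues is a genuine improvement in rigour.
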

\begin{proof}
  To apply the ideas of this section to the discrete log problem requires that we have a concrete way of representing $\FF_{2^n}$ using $n$-bit strings, and also a way of implementing the appropriate permutations using $n$-bit wide permutation circuits, without ancill\ae{}.  The permutations in question are various powers of multiplication by $g$ or $h$ within the representation of $\FF_{2^n}$.  Now these various powers can all be precomputed (there are $m$ of them, and $m$ is bounded by a polynomial in $n$, \cf{} Lemma~\ref{lem:eigest2}), and each is simply multiplication by a constant.  Multiplication by a constant---in any standard representation of the field---is an $\FF_2$-linear transformation.  This means that over $\FF_2$ it can be represented as an $n$-by-$n$ matrix multiplication.  By performing Gaussian Elimination on such a matrix, one can factor it into a pair of triangular matrices, and thence construct a circuit of Not gates and C-Not gates, of quadratic complexity \cite{lit:Damm90} for implementing it \emph{in place}.  All this pre-processing can be rendered classically in polynomial time (and hence within the present framework), and thus a suitable input to the $\cFS$ oracle can be prepared, in the same way as was done for Lemma~\ref{lem:eigest1} (see line~(\ref{eqn:U0})).
  
  Finally, by learning $\kappa/q$ and $\kappa s/q$, for some random (non-zero) $\kappa$, it is easy (with classical post-processing) to recover $s$, which is the sought-after discrete log.
\end{proof}

\subsection{Extensions and future work}  \label{sect:other}

This section describes why it is not clear whether all related problems can be solved in $\BPP^{\cFS[1]}$, why it is not clear how well general problems in $\FH_2$ parallelise, what the mathematical relationship is between the algorithm of eigenvalue estimation in $\FH_2$ and the so-called \emph{Partition Problem}, and how the algorithm generalises to the continuous context.

\subsubsection*{Generalising to other Abelian Hidden Subgroup problems}

Suppose as before that $f$ is a permutation on the set of strings of length $n$, but now suppose that we need some clean ancilla space if for any integer $c$ we are to construct a circuit of size $O( poly(n) \cdot \log c )$ for implementing $f^c$, using gates from the permutation group.  In this (more realistic) case, Lemma~\ref{lem:eigest1} will not apply as it stands, because the only ancilla available in context of a $\cFS$ subroutine would be one prepared in the Hadamard basis.
Nonetheless, it is still possible to solve eigenvalue estimation problems within $\BPP^\cFS$ for such families of function, provided that a logarithmically big ancilla suffices, and provided that the language being decided is itself in $\NP$.  This is because---as we saw in Chapter~\ref{chap:PMC}---one can compute using depolarised qubits, provided one is prepared to amplify success proabilities via an \emph{outer loop} with a `verifier' (\ie{} run the algorithm many times and look at all answers before deciding whether to accept).
Just as at line~(\ref{eqn:EE}), we can prepare a depolarised ancilla by applying a random pattern of $Z$ gates to the Hadamard-basis ancilla.  There will then be a non-negligible probability of the overall algorithm behaving \emph{as though} a clean ancilla space had been provided.

So what happens when we try to solve \emph{Integer Factorisation} (equivalently, compute Euler's totient function), or \emph{Discrete Logarithm} over other finite fields?  Although such problems are clearly solved within $\FH_2$ and although they certainly belong to $\NP$ (and so have polynomial time verifiers for use in post-processing), it is still not apparent that they can be solved in $\BPP^{\cFS}$, because it is not clear that one can find efficient circuits for implementing the required permutations \emph{in place} even with log-sized ancilla space.  We note that it is clear that circuits of permutation gates must \emph{exist} for this kind of in-place permutation, but are they polynomial in size, and can they be compiled in polynomial time?  

For example, for Euler's totient function we would need to be able to design an efficient $n$-bit circuit of permutation gates for mapping integers $x \mapsto x \cdot m \pmod{N}$, where $N$ and $m$ are compile-time constants, $2^{n-1} < N < 2^n$, $0 < x < N$, $0<m<N$.  Now the arithmetic in question may require only a little ancilla space to compute the \emph{individual bits} of the output, but it is unclear where to write the bits of output as they are each computed so as not to corrupt the input before we are done using it!  (This problem did not exist in our previous example for discrete logs over $\FF_{2^n}^*$, because the underlying arithmetic there was seen to become remarkably simple after classical polynomial-time pre-processing.)

It remains as future work to determine which aspects of \emph{elementary arithmetic} can be performed in-place using permutation circuits with log-sized clean ancilla space.

\subsubsection*{Parallelisation for $\FH_2$ problems}

H\o{}yer and Spalek \cite{lit:Hoy02} have interesting results about reducing the \emph{depth} of certain quantum circuits to a constant, using so-called `fan-out' gates of arbitrary width, or equivalently `parity' gates of arbitrary width, both of which belong to the permutation group.  Arithmetic performed with such optimisations will certainly require substantial ancilla space, and therefore will not lead to algorithms of the form required for $\BPP^\cFS$.  Can it nonetheless lead to significant parallelisation of problems in the $\FH_2$ framework using these `wide gates'?

We did not find a way to make a good parallelisation, because the constructions of \cite{lit:Hoy02} additionally require gates of the form $\Lambda_a( e^{i \varphi Z_b} )$ for various angles $\varphi$, and to emulate such gates using permutation circuits requires not only having permutations with appropriate cycle structure, but also being able to construct the eigenvectors of these permutations.  It is therefore something of an open problem to find optimal circuits for arithmetic.  

In \cite{lit:DKRS04} a uniform method is given for computing \emph{integer addition} in place, in logarithmic depth, using permutation gates, with a linear-sized ancilla in the computational basis.  The so-called ``Carry-lookahead in-place adders'' given there can implement quantum addition of the form $\ket{a}\ket{b}\ket0 \mapsto \ket{a}\ket{a+b}\ket0$, or addition of a constant, of the form $\ket{b}\ket0 \mapsto \ket{a+b}\ket0$, in the ring of integers without modular reduction (the sum being represented in one more bit than the summands).  It is straightforward to adapt these circuits to render addition modulo a classical integer $N$ known at compile-time, without affecting depth or ancilla requirement by more than a constant factor.  Analogous results for \emph{integer multiplication} are not presently known.

\subsubsection*{Relation to Subset-Sum}

Lemma~\ref{lem:eigest1} expressed the probability of measuring some $m$-bit string in terms of a stochastically chosen angle $\phi$ of the form $\kappa/q$~: we found that 
\begin{eqnarray}
  \Pr( \y )  &=&  \frac{1}{q \cdot 2^m} ~\sum_{\kappa = 0}^{q-1} ~\prod_{j=1}^m
                 \Bigl( 1 + (-1)^{y_j} \cos( 2\pi c_j \kappa/q ) \Bigr),
\end{eqnarray}
where $y$ is the $m$-bit string returned from the control register, assuming that all cycles have length $q$ exactly.
But we can also compute this probability directly from the Born rule and Bayes's theorem, obtaining
\begin{eqnarray}
  \Pr( \y ) &=&  \sum_\z \sum_\r \Pr(\r) \cdot \Pr( \y,\z|\r ) \nonumber \\
            &=&  \sum_\z \sum_\r 2^{-n} ~\bra{\y,\z} Q_B \cdot U(\r) \cdot Q_B \ket{\0,\0}^2 
                 \nonumber \\
            &=&  \sum_\z \sum_\r 2^{-n}   \left[ 2^{-m-n} \sum_{\s,\s''} 
                 (-1)^{\s \cdot \y + f^{\s \cdot \c}(\s'') \cdot \z + \s'' \cdot \r} \right]^2 
                 \nonumber \\ 
            &=&  2^{-2m-n} \sum_{\s,\s',\s''} (-1)^{(\s+\s') \cdot \y} 
                 ~\{ f^{\s \cdot \c}(\s'') = f^{\s' \cdot \c}(\s'') \} \nonumber \\
            &=& \frac{1}{2^{2m}} \sum_{ \s,\s' \in \{0,1\}^m }  
                 (-1)^{ \y \cdot (\s+\s') } ~\{ \s \cdot \c \equiv_q \s' \cdot \c \},
\end{eqnarray}
where $\s \cdot \c$ is being used as a shorthand for $\sum_{j=1}^m s_j \cdot c_j$, and so on.
(The final bracket of the right side of the equation is a sort of Kronecker delta function~: highly discontinuous.  It takes the value 1 when the difference between $\s \cdot \c$ and $\s' \cdot \c$ is a multiple of $q$, and takes the value zero otherwise.)

Therefore, whenever $\c$ is an integer tuple, for all positive integers $q$, we have the following lemma~:
\begin{lemma}  \label{lem:pseudoeigenvectorthing}
For all $\y \in \{0,1\}^m$, 
\begin{eqnarray*}
  \frac{2^m}{q} ~\sum_{\kappa = 0}^{q-1} ~\prod_{j=1}^m
                 \Bigl( 1 + (-1)^{y_j} \cos( 2\pi c_j \kappa/q ) \Bigr)
  &=&  \!\!\!\!\!\!
  \sum_{ \s,\s' \in \{0,1\}^m }
                 (-1)^{ \y \cdot (\s+\s') } ~\{ \s \cdot \c \equiv_q \s' \cdot \c \}.
\end{eqnarray*}
\end{lemma}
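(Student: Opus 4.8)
The plan is to prove the identity directly, by an elementary manipulation that never refers back to quantum states. Indeed, in the computation immediately preceding the statement, both sides of the claimed equation were exhibited as equal to $2^{2m}\,\Pr(\y)$ (the left-hand side follows from the first displayed formula for $\Pr(\y)$ after multiplying by $2^{2m}$, the right-hand side from the final line of the chain of equalities), so one legitimate route is simply to invoke that pair of identifications. However, a self-contained derivation is cleaner and makes plain that the statement is a purely algebraic fact about the integer tuple $\c$ and the modulus $q$, so that is the approach I would present.

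First I would rewrite each factor on the left in exponential form. Writing $\omega := e^{2\pi i\kappa/q}$ and using $\cos(2\pi c_j\kappa/q) = \tfrac12(\omega^{c_j}+\omega^{-c_j})$, a four-term expansion over $s_j,s_j'\in\{0,1\}$ gives
\[
  1 + (-1)^{y_j}\cos(2\pi c_j\kappa/q) \;=\; \tfrac12 \sum_{s_j,s_j'\in\{0,1\}} (-1)^{y_j(s_j+s_j')}\,\omega^{(s_j-s_j')c_j},
\]
since the $(0,0)$ and $(1,1)$ terms each contribute $1$ while the $(0,1)$ and $(1,0)$ terms together contribute $(-1)^{y_j}(\omega^{-c_j}+\omega^{c_j}) = 2(-1)^{y_j}\cos(2\pi c_j\kappa/q)$. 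Taking the product over $j=1,\dots,m$ then turns a product of sums into a single sum over pairs of bit-strings $\s,\s'\in\{0,1\}^m$:
\[
  \prod_{j=1}^m\Bigl(1+(-1)^{y_j}\cos(2\pi c_j\kappa/q)\Bigr) \;=\; \frac1{2^m}\sum_{\s,\s'} (-1)^{\y\cdot(\s+\s')}\,e^{2\pi i(\s\cdot\c - \s'\cdot\c)\kappa/q},
\]
with the usual shorthand $\s\cdot\c:=\sum_j s_jc_j$.

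Next I would multiply by $2^m/q$, sum over $\kappa\in[0..q-1]$, and interchange the (finite) sums, which leaves $\sum_{\s,\s'}(-1)^{\y\cdot(\s+\s')}\cdot\frac1q\sum_{\kappa=0}^{q-1}e^{2\pi i(\s\cdot\c-\s'\cdot\c)\kappa/q}$. Finally I would evaluate the inner geometric sum by the standard discrete-orthogonality identity: $\frac1q\sum_{\kappa=0}^{q-1}e^{2\pi i a\kappa/q}$ equals $1$ when $a\equiv 0\pmod q$ and $0$ otherwise. With $a = \s\cdot\c-\s'\cdot\c$ this collapses exactly to the indicator $\{\s\cdot\c\equiv_q\s'\cdot\c\}$, yielding the right-hand side.

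I do not anticipate a genuine obstacle here; the proof is bookkeeping. The only points to watch are the powers of $2$ (the $\tfrac12$ per factor produces $2^{-m}$, which cancels against the prefactor $2^m$), the fact that the interchange of summations is unproblematic because every sum involved is finite, and the observation that integrality of $\c$ is used precisely once — in the last step, to guarantee that $e^{2\pi i a\kappa/q}$ is a genuine $q$-th root of unity so that the orthogonality relation applies. If anything, the mild ``care'' in the write-up is to note that the lemma as stated makes no promise that the cycles of $f$ all have common length $q$: it is an identity valid for every positive integer $q$ and every $\y$, independent of the dynamical origin of the quantities, which is exactly why it can be quoted abstractly in the discussion of the Partition/Subset-Sum connection that follows.
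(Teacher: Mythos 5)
Your proof is correct; I checked the four-term expansion of each factor (the $(0,0)$ and $(1,1)$ pairs contribute $1$ each, the $(0,1)$ and $(1,0)$ pairs contribute $(-1)^{y_j}(\omega^{c_j}+\omega^{-c_j})$, and the overall $\tfrac12$ normalises it back), the distribution of the product over the $2^{2m}$ pairs $(\s,\s')$, and the final appeal to $\tfrac1q\sum_{\kappa=0}^{q-1}e^{2\pi i a\kappa/q}=\{q\mid a\}$, which indeed needs $a=\s\cdot\c-\s'\cdot\c\in\ZZ$. This is, however, a genuinely different argument from the one in the text. The paper deduces the identity \emph{indirectly}: it notes that the left-hand side arises (up to $2^{-2m}$) as the output distribution computed via Lemma~\ref{lem:eigest1} under the hypothesis that all cycles of the underlying permutation $f$ have length exactly $q$, while the right-hand side arises from the alternative Born-rule-plus-Bayes calculation of the same probability, and then argues that since neither side of the resulting identity actually depends on $n$ (the permutation's ambient dimension), and the two computations of $\Pr(\y)$ agree to arbitrary precision for suitably chosen $f$, the identity must hold exactly for every $q$. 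Your argument dispenses with all of the quantum scaffolding and derives the statement as a bare instance of discrete character orthogonality on $\ZZ/q\ZZ$; it is shorter, fully self-contained, and makes the hypotheses transparent (integrality of $\c$, nothing else). What it gives up is the motivation: the paper's roundabout route is what explains why this particular algebraic identity appeared in the first place and why it is the natural bridge to the subset-sum/partition corollary. Both are valid, and in fact the paper's choice of an indirect proof over the direct one you give is slightly surprising; the one mildly delicate point the paper glosses over — that ``holds approximately for each $n$, both sides $n$-free, therefore exact'' is a legitimate limiting argument — is exactly the point your proof removes.
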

\begin{proof}
  We have already seen that this formula holds approximately whenever there \emph{exists} a permutation on $2^n$ elements all of whose cycles have length $q$, and that the quality of the approximation increases without limit as the permutation considered tends more to be composed of length $q$ cycles (\cf{}~Lemma~\ref{lem:eigest1}).  Yet since neither side of the present equation depends on $n$, the formula must be exact.
\end{proof} 

This lemma then yields a corollary regarding (classical) randomized approximation schemes for counting solutions to modular partition problems (a variant of subset-sum), which may be of independent interest.
\begin{corollary}
  Let $q$ be any positive integer modulus, and let $\c = \{c_j\}_{j=1}^m$ be a set of integer weights.  Let $\s$ denote a uniformly selected $m$-bit string to select a subset of weights, and let $\bar{\s}$ denote its complement.  Let $\kappa$ denote a uniformly selected element of $\ZZ/q\ZZ$.  Then
\begin{eqnarray*}
  \Ex_\kappa \left[~ \prod_{j=1}^m \cos( 2 \pi c_j \kappa/q ) ~\right]
  &=&  \Ex_\s \left[~ \Bigl\{ \s \cdot \c  \equiv_q  \bar{\s} \cdot \c \Bigr\} ~\right].
\end{eqnarray*}
\end{corollary}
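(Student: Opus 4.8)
The plan is to derive the corollary from Lemma~\ref{lem:pseudoeigenvectorthing} by applying a single linear functional to both sides, namely the map that multiplies by $(-1)^{\sum_j y_j}$ and sums over all $\y \in \{0,1\}^m$. The point of this functional is that it isolates the ``top'' monomial~: expanding $\prod_{j=1}^m(1 + (-1)^{y_j} t_j) = \sum_{S \subseteq [m]} \prod_{j \in S}(-1)^{y_j} t_j$ and summing, the coefficient of $\prod_{j \in S} t_j$ is $\sum_\y (-1)^{\sum_j y_j} \prod_{j \in S}(-1)^{y_j} = \prod_{j \in S}\bigl(\textstyle\sum_{y_j}1\bigr)\cdot\prod_{j \notin S}\bigl(\textstyle\sum_{y_j}(-1)^{y_j}\bigr)$, which vanishes unless $S = [m]$. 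Hence
\begin{eqnarray*}
  \sum_{\y \in \{0,1\}^m} (-1)^{\sum_j y_j} \prod_{j=1}^m \bigl(1 + (-1)^{y_j} t_j\bigr) &=& 2^m \prod_{j=1}^m t_j
\end{eqnarray*}
for any reals $t_1, \ldots, t_m$; I would apply this with $t_j = \cos(2\pi c_j \kappa/q)$.

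Carrying this out on the left-hand side of Lemma~\ref{lem:pseudoeigenvectorthing} (weighting the lemma by $(-1)^{\sum_j y_j}$ and summing over $\y$) produces $\tfrac{2^{2m}}{q} \sum_{\kappa=0}^{q-1} \prod_{j=1}^m \cos(2\pi c_j \kappa/q)$. On the right-hand side I would push the $\y$-sum inside and use orthogonality of the characters of $\FF_2^m$~: writing $(-1)^{\sum_j y_j} = (-1)^{\y \cdot \mathbf{1}}$, the inner sum $\sum_\y (-1)^{\y \cdot (\mathbf{1} + \s + \s')}$ equals $2^m$ exactly when $\s' = \bar\s$ (the bitwise complement of $\s$, which is $\mathbf{1} + \s$ over $\FF_2$) and is zero otherwise. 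So the right-hand side collapses to $2^m \sum_{\s \in \{0,1\}^m} \{\s \cdot \c \equiv_q \bar\s \cdot \c\}$. Equating the two results and dividing by $2^{2m}$ yields
\begin{eqnarray*}
  \frac{1}{q} \sum_{\kappa=0}^{q-1} \prod_{j=1}^m \cos(2\pi c_j \kappa/q) &=& \frac{1}{2^m} \sum_{\s \in \{0,1\}^m} \bigl\{ \s \cdot \c \equiv_q \bar\s \cdot \c \bigr\},
\end{eqnarray*}
which, reading the left side as the expectation over a uniform $\kappa \in \ZZ/q\ZZ$ and the right side as the expectation over a uniform $\s \in \{0,1\}^m$, is precisely the claimed identity.

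I do not anticipate any genuine obstacle~: Lemma~\ref{lem:pseudoeigenvectorthing} already does all of the analytic work, and what remains is bookkeeping with two finite Fourier sums over $\{0,1\}^m$ (one to extract the product of cosines, one character-orthogonality argument to force $\s' = \bar\s$). The only thing warranting a moment's care is keeping the conventions straight~: I should confirm that the complement $\bar\s$, read as a $0/1$ integer tuple, is the same object as $\mathbf{1} + \s$ taken over $\FF_2$, so that $\bar\s \cdot \c$ in the corollary matches the $\s' \cdot \c$ appearing in the lemma, and likewise that the normalisation $\tfrac{2^m}{q}$ in the lemma is tracked correctly through the division by $2^{2m}$.
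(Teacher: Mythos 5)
Your proof is correct and is essentially the first method the paper cites (``taking the Fourier transform of the result of the lemma above''), just spelled out in full: weighting by $(-1)^{\y\cdot\mathbf{1}}$ and summing over $\y$ extracts the top Fourier coefficient of both sides, yielding $2^m\prod_j\cos(2\pi c_j\kappa/q)$ on the left and, by character orthogonality, the constraint $\s'=\bar\s$ on the right. The bookkeeping of the $2^m/q$ prefactor and the $\FF_2$ complement $\bar\s = \mathbf{1}+\s$ is handled correctly.
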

\begin{proof}
  This is shown simply by taking the Fourier transform of the result of the lemma above.
  It can also be seen directly by expanding each factor $\cos(2 \pi c_j \kappa/q)$ out to $\frac{\omega + \omega^*}2$ where $\omega = e^{2 \pi i c_j \kappa/q}$, and cancelling terms. 
\end{proof}

The right side of the equation counts the number of solutions to the modular partition problem, and involves $2^m$ terms, which could be prohibitive to exhaust over.  The left side however counts just $q$ terms, each a product of $m$ factors, which may be much smaller.  If we take $q$ to be the sum of all the weights, then we solve the ordinary (non-modular) partition problem in time $O( q \cdot m )$.  (This is still exponential if any weight is exponentially large, and therefore not a generally efficient solution to this $\NP$-complete problem.)
It is left for future work to integrate this idea properly into quantum algorithms for subset-sum problems.

\subsubsection*{Continuous problems and Pell's equation}

We close the Chapter with a proof that eigenvalue estimation within $\FH_2$ generalises to the context of continuous groups (\cf{}~\cite{lit:Halesthesis}), illustrated by reference to the number-theoretic problem of solving \emph{Pell's Equation} \cite{lit:Hal0205}.

\begin{theorem}  \label{thm:Pell}
  The decision version of the problem of solving Pell's equation lies within $\FH_2$.
\end{theorem}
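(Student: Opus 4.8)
The plan is to reduce Pell's equation to a continuous version of the eigenvalue-estimation subroutine already developed for Kitaev's algorithm (Lemma~\ref{lem:eigest1} and Lemma~\ref{lem:eigest2}), thereby placing it in $\FH_2$. Recall that solving Pell's equation $x^2 - D y^2 = 1$ for a squarefree $D$ amounts to finding the fundamental solution, equivalently the \emph{regulator} $R = \log \epsilon_D$ where $\epsilon_D$ is the fundamental unit of the real quadratic field $\QQ(\sqrt D)$. The regulator is in general an irrational real number, so the discrete cycle-length $q$ of the permutation model must be replaced by a continuous period. Following Hallgren's approach (\cf{}~\cite{lit:Hal0205,lit:Halesthesis}), the first step is to set up the \emph{principal ideal} (infrastructure) periodic function $g : \RR \to \{\text{reduced ideals}\}$, which is `periodic' with period $R$ in the weak sense that $g(s)$ and $g(s+R)$ are equal and intermediate values are close. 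Because we only ever manipulate finitely many bits of the argument $s$, we discretise $\RR$ at resolution $2^{-O(n)}$ and truncate to a register of $\poly(n)$ qubits, so that $g$ becomes (after rounding) a function computable by a \emph{classical} circuit of permutation gates on $\poly(n)$ wires --- this is exactly the form required to feed into the $\FH_2$ machinery of~\S\ref{sect:eigest}.

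Next I would mimic the construction of line~(\ref{eqn:U0}): build a unitary $U(\0)$ that, controlled on $m = \poly(n)$ control qubits prepared in $\ket{+}$, applies the `shift-by-$c_j$' map $s \mapsto s + c_j$ on the infrastructure register for a schedule of real (dyadic-rational) shifts $\{c_j\}$, adjoin the random $Z$-pattern of line~(\ref{eqn:EE}) to effectively depolarise the target register, and submit the whole thing --- two time-slices of Hadamards, everything else a basis-preserving permutation circuit --- as an $\FH_2$ computation. Phase kickback then leaves the $j$th control qubit with bias $\approx \cos(2\pi c_j \kappa / R)$ for a random `eigenvalue index' $\kappa$, by the same eigenvector decomposition as in line~(\ref{eqn:eigenvec}), now with the continuous period $R$ in place of $q$. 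The classical post-processing of Lemma~\ref{lem:eigest2} (the $2^\alpha 3^\beta$ control schedule plus Fig.~\ref{fig:alg1}, followed by continued fractions) recovers $\kappa/R$ to enough bits of precision; since $\kappa$ is an integer we then extract $R$ itself, round it, and read off the fundamental solution, which decides the (NP) decision version of Pell.

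The main obstacle --- and the place where genuine care beyond the discrete case is needed --- is controlling the two sources of \emph{continuous-regime error}: (i) the period $R$ is irrational, so the `eigenstates' $\ket{\lambda_x(\kappa)}$ are only approximate eigenvectors of the discretised shift, and the kicked-back phase is only approximately $\cos(2\pi c_j\kappa/R)$; and (ii) the infrastructure function $g$ is not exactly periodic, so the $q$-cycle structure is smeared. I would handle both by taking the discretisation resolution and the register width polynomially larger than the target precision for $R$, so that the accumulated deviation over the whole control schedule of size $m$ stays below the tolerance $\eta$ that the classical post-processor of Lemma~\ref{lem:eigest2} is robust against; this is essentially Hallgren's analysis, and I would cite it rather than redo the number-theoretic estimates. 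One also needs that arithmetic on reduced ideals (composition and reduction) is computable by $\poly(n)$-size \emph{permutation} circuits with at most $\poly(n)$ computational-basis ancill\ae{} --- true because it is ordinary integer arithmetic on $O(n)$-bit quantities --- which keeps the construction inside $\FH_2$ (though, as noted in~\S\ref{sect:other}, not obviously inside $\BPP^{\cFS[1]}$, since the arithmetic is not in-place). Assembling these pieces gives the claimed membership $\text{Pell} \in \FH_2$.
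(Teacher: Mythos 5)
Your overall route matches the paper's: discretise the infrastructure/principal-ideal map at precision $\frac1N$, view the resulting $h_N$ as a permutation on a $\poly(n)$-qubit register, apply it controlled on a $2^\alpha3^\beta$ control schedule (line~(\ref{eqn:U0})) with the random $Z$-pattern depolarisation of line~(\ref{eqn:EE}), imagine collapsing onto a pseudo-eigenvector, and hand the resulting biased control bits to the classical post-processor of Lemma~\ref{lem:eigest2}. The paper makes the intermediate probability-distribution claim precise via Lemma~\ref{lem:otherpseudoeigenvectorthing} (the continuous analogue of Lemma~\ref{lem:pseudoeigenvectorthing}), which your informal ``phase kickback leaves bias $\cos(2\pi c_j\kappa/R)$'' glosses over, but this is a presentational rather than a substantive difference; the error-control concerns you flag are also the ones the paper handles.

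There is, however, a genuine gap at the very last step. You write that after recovering $\phi = \kappa/R$ to high precision, ``since $\kappa$ is an integer we then extract $R$ itself.'' This does not work: $\kappa$ is an \emph{unknown} random integer, and $R$ is irrational, so from a single high-precision real approximation to $\kappa/R$ there is no way to disentangle $\kappa$ from $R$ (the pairs $(\kappa,R)$ and $(2\kappa,2R)$ give the same $\phi$). In the discrete case of Lemma~\ref{lem:eigest2} the final continued-fractions step works precisely because $\phi=\kappa/q$ is rational with a bounded denominator; here $\phi$ is irrational, so continued fractions on $\phi$ alone is not meaningful. The paper's fix is to run the whole procedure \emph{twice}, obtaining high-precision approximations to $\phi_1=\kappa_1/NR$ and $\phi_2=\kappa_2/NR$ for two independent random $\kappa_1,\kappa_2$; the \emph{ratio} $\phi_1/\phi_2 = \kappa_1/\kappa_2$ \emph{is} rational with denominator bounded by $NR$, so continued fractions recovers $\kappa_1,\kappa_2$ (up to a common factor, and with $\Omega(1)$ probability they are coprime). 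Only then can $NR$, and hence $R$, be read off. Without this two-run-and-ratio step your argument does not actually produce the regulator.
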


The proof uses the following lemma~:
\begin{lemma}  \label{lem:otherpseudoeigenvectorthing}
  Let $\c = \{c_j\}_{j=1}^m$ be an $m$-tuple of integers.
For all real $\phi$, for all $\y \in \{0,1\}^m$,
\begin{eqnarray*}
  \prod_{j=1}^m \Bigl( 1 + (-1)^{y_j} \cos( c_j \phi ) \Bigr)
  &=&
  \frac1{2^m} \sum_{\s,\s' \in \{0,1\}^m} (-1)^{\y \cdot (\s+\s')} 
                 \cos( (\s-\s') \cdot \c ~\phi ).
\end{eqnarray*}
\end{lemma}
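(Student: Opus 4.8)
The plan is to prove this purely algebraically, as a polynomial identity in the variables $z_j := e^{i c_j \phi}$ (and their conjugates), with no reference back to eigenvalue estimation. First I would expand each cosine on both sides via $\cos\theta = \tfrac12(e^{i\theta}+e^{-i\theta})$. On the left, write $1 + (-1)^{y_j}\cos(c_j\phi) = \tfrac12\bigl(2 + (-1)^{y_j} z_j + (-1)^{y_j} z_j^{-1}\bigr)$, so that the product becomes $2^{-m}$ times a product of $m$ trinomials. Expanding that product over all choices of which of the three monomials ($2$, $(-1)^{y_j}z_j$, $(-1)^{y_j}z_j^{-1}$) is picked at each index $j$, one naturally indexes the terms by two disjoint subsets of $\{1,\dots,m\}$: the set $A$ of indices where $z_j$ is chosen and the set $B$ where $z_j^{-1}$ is chosen (the remaining indices contributing the factor $2$). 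So the left side equals
\begin{eqnarray*}
  \frac1{2^m} \sum_{\substack{A,B \subseteq \{1,\dots,m\} \\ A \cap B = \emptyset}}
    2^{\,m - |A| - |B|} \,(-1)^{\,\y\cdot(\mathbf{1}_A + \mathbf{1}_B)}\,
    \exp\!\Bigl( i\phi\,\bigl(\textstyle\sum_{j\in A} c_j - \sum_{j\in B} c_j\bigr)\Bigr).
\end{eqnarray*}

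For the right side, I would similarly parametrise the pair $(\s,\s')\in\{0,1\}^m\times\{0,1\}^m$ by the three sets $P = \{j : s_j = s'_j = 1\}$, $A = \{j : s_j = 1, s'_j = 0\}$, $B = \{j : s_j = 0, s'_j = 1\}$ (and the complement where both are $0$); these are pairwise disjoint and $A,B$ determine the ``difference'' $\s - \s' = \mathbf{1}_A - \mathbf{1}_B$ while $(\s + \s') \equiv \mathbf{1}_A + \mathbf{1}_B \pmod 2$, so $(-1)^{\y\cdot(\s+\s')} = (-1)^{\y\cdot(\mathbf{1}_A+\mathbf{1}_B)}$. The key combinatorial observation is that for fixed disjoint $A,B$, the number of choices of $P$ (any subset of the complement of $A\cup B$) is exactly $2^{\,m-|A|-|B|}$, and $P$ contributes nothing to either the sign or to $\cos((\s-\s')\cdot\c\,\phi) = \cos((\sum_{j\in A}c_j - \sum_{j\in B}c_j)\phi)$. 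Hence the right side is
\begin{eqnarray*}
  \frac1{2^m} \sum_{\substack{A,B \subseteq \{1,\dots,m\} \\ A \cap B = \emptyset}}
    2^{\,m-|A|-|B|}\,(-1)^{\,\y\cdot(\mathbf{1}_A+\mathbf{1}_B)}\,
    \cos\!\Bigl(\phi\,\bigl(\textstyle\sum_{j\in A}c_j - \sum_{j\in B}c_j\bigr)\Bigr).
\end{eqnarray*}
Comparing with the left side, the two agree term-by-term after one more step: the left side carries $\exp(i\phi(\cdots))$ whereas the right carries $\cos(\phi(\cdots))$, but the sum over $(A,B)$ is symmetric under swapping $A \leftrightarrow B$ (which negates the argument and fixes everything else), so the imaginary parts cancel in pairs and each $\exp$ may be replaced by its real part $\cos$. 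That completes the identity.

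I do not expect a genuine obstacle here; the lemma is a finite trigonometric/combinatorial identity and the whole content is the bookkeeping of pairing $(\s,\s')$-pairs with disjoint-$(A,B)$ data and counting the ``free'' index set $P$. The one point to state carefully is the $A\leftrightarrow B$ symmetry argument that legitimises passing from $\exp$ to $\cos$ on the left — equivalently, one could observe at the outset that the left side is manifestly real (it is a product of real numbers) and that the right side is real, then take real parts of the exponential identity; either phrasing is a one-liner. It is worth remarking, as the excerpt's own proof of Lemma~\ref{lem:pseudoeigenvectorthing} does, that this identity is the continuous-group analogue of that lemma (with $\cos(2\pi c_j \kappa/q)$ replaced by $\cos(c_j\phi)$ and the modular constraint $\equiv_q$ replaced by exact equality of the real quantities $(\s-\s')\cdot\c\,\phi$), and that it is precisely what is needed to push eigenvalue estimation through for Pell's equation in Theorem~\ref{thm:Pell}.
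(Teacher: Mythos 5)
Your proof is correct, but it takes a genuinely different route from the paper's. The paper first establishes, by induction on $m$ with the product-to-sum identity $\cos A\cos B=\tfrac12(\cos(A+B)+\cos(A-B))$, that $\Ex_{\d}[\cos((\d-\bar\d)\cdot\c\,\phi)]=\prod_j\cos(c_j\phi)$; it then changes variables $\d=\s\oplus\s'$, splits $\s=\a+\b$ with $\a\le\d$, $\b\bot\d$ so that $\s'=\bar\a+\b$, and sums against $(-1)^{\y\cdot\d}$ before factoring the left side as a product. You instead bypass the induction entirely: you open up both sides as explicit finite sums over pairs of disjoint index sets $(A,B)$ — on the left via Euler's formula and the trinomial expansion, on the right via the $(1,1)/(1,0)/(0,1)/(0,0)$ decomposition of a pair $(\s,\s')$ with the ``both-ones'' set $P$ supplying the free factor $2^{m-|A|-|B|}$ — and match term by term, with one last real-part (or $A\leftrightarrow B$ symmetry) step to pass from $\exp$ to $\cos$. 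What each buys: your expansion makes the combinatorial structure of the identity completely transparent and requires nothing beyond Euler's formula and counting, so the correspondence between terms is manifest; the paper's proof stays over $\RR$ throughout, reuses a classical trigonometric identity, and exploits the $\d=\s\oplus\s'$ reparametrisation that recurs throughout the chapter's binary-vector bookkeeping. Both are complete and essentially equal in length once written out.
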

\begin{proof}
  Let $\d$ denote an $m$-bit string and let $\bar{\d}$ denote its complement.
By induction on $m$, with liberal use of the basic trigonometric identity
\begin{eqnarray}
  \frac12\bigl(~ \cos( A + B ) + \cos( A - B ) ~\bigr)  &=&  \cos(A) \cos(B),
\end{eqnarray}
it follows that 
\begin{eqnarray}  \label{eqn:trigid}
  \Ex_\d[~ \cos( (\d-\bar\d)\cdot\c~\phi ) ~]  &=&  \prod_{j=1}^m \cos( c_j \phi ).
\end{eqnarray}
Then we can write $\d := \s \oplus \s'$, and we can break $\s$ up into two parts---one part supported by $\d$ and one part supported by $\bar\d$---writing $\s = \a+\b$ where $\a \le \d$ and $\b \bot \d$.  Then $\s' = \s \oplus \d = \bar\a + \b$.

From line~(\ref{eqn:trigid}),
\begin{eqnarray}
  \prod_{j : d_j=1} \cos( c_j \phi )  
    &=&  \Ex_\a [~ \cos( (\a-\bar\a)\cdot\c~\phi ) ~]  \\
    &=&  \Ex_\a \Ex_\b [~ \cos( (\a+\b-\bar\a-\b)\cdot\c~\phi ) ~]  \nonumber \\
  \sum_\d (-1)^{\y \cdot \d} \prod_{j : d_j=1} \cos( c_j \phi )  
    &=&  \sum_\d (-1)^{\y \cdot \d} ~\Ex_\s [~ \cos( (\s-(\s \oplus \d))\cdot\c~\phi ) ~] 
    \nonumber \\ 
     =~~ \sum_\d \prod_{j : d_j=1} (-1)^{y_j} \cos( c_j \phi )  
    &=&  \frac1{2^m} \sum_{\s,\s'} (-1)^{\y \cdot (\s+\s')} \cos( (\s-\s')\cdot\c~\phi ),
    \nonumber
\end{eqnarray}
whence the statement of the lemma follows by factoring the left side.
\end{proof}

This identity is useful whenever it is natural to think about continuous examples of eigenvalue estimation (\cf~\cite{lit:Halesthesis}).  A good example would be Hallgren's method for solving Pell's equation efficiently \cite{lit:Hal0205}, which first estimates the real valued regulator, $R$, of a real quadratic number field, by working with a computable real pseudo-periodic function whose period is $R$.  Hallgren's method was developed as an extension of Shor's algorithm, so here we sketch a method for recasting it as an extension of Kitaev's eigenvalue estimation, so that it can be rendered within an $\FH_2$ framework (\ie{} using permutation gates and limiting Hadamard transforms to just two time-slices within the circuit) for our Theorem~\ref{thm:Pell}.

We take $h$ to be a map that `walks out' a prescribed distance along the metricated principal cycle of reduced principal ideals of the quadratic number field specified by the problem equation, and returns a representation of the reduced ideal thereby reached, together with an `overshoot' distance for how far `past' that ideal the walk-distance goes.  This can be achieved using a circuit that computes a series of giant steps and small steps to compute the new ideal, and then rounds off the remainder value.
(See \cite{lit:Joz0302} for a full discussion of the relevant number theory and algorithmics---but the notation here is a little different.)  

Identify $\RR$ with a covering of the principal cycle, so that pictorially speaking, the principal ideals $\{\mathfrak{i}_0, \mathfrak{i}_1, \ldots, \mathfrak{i}_{C-1}\}$ are laid out on a cycle of length $R$, each ideal having a unique representation.
The space on the cycle between successive ideals can be discretised to precision $\frac1N$, for some large integer $N$, so that any real number corresponds to a real point on the cycle, and is discretely approximated by quoting the (unique representation of the) first ideal `below' it on the cycle ($\mathfrak{i}(x/N)$) together with the approximate number of steps of size $\frac1N$ from that ideal up to the point in question ($k(x/N)$).  
We write $h_N$ for the function that approximates $h$ to precision $\frac1N$ in this sense~: if $h(x/N) = (\mathfrak{i}(x/N),k(x/N))$ then $h_N(x) = (\mathfrak{i}(x/N),\lfloor N \cdot k(x/N) \rfloor/N)$, where $x$ is restricted to integers.
Write $q = \lceil N \cdot R \rceil$ for the so-called pseudo-period of the cycle.
Here $h_N$ is a pseudo-periodic function of period $q$, because $h_N( q ) = h_N( 0 ) = (\mathfrak{i}_0, 0)$.  Because of the rounding down that takes place when computing $h_N$, the codomain of $h_N$ will likely contain more that $q$ distinct points.  But as explained in \cite{lit:Joz0302}, the `extra' points quickly become insignificant as $N$ becomes large enough.

We use an approximation of `pseudo-eigenvectors', together with the metaphor of \emph{state collapse}, to see how our standard method for eigenvalue estimation in $\FH_2$ still works in this continuous context, as follows.

\begin{proof}[Proof of Theorem \ref{thm:Pell}]
Consider this transformation---implementable unitarily using a polynomially sized permutation circuit \cite{lit:Hal0205}---where $\s$ is a string of $m$ bits, and $\c=\{c_j\}_{j=1}^m$ are appropriately chosen integers~:
\begin{eqnarray}  \label{eqn:discNoise}
    \ket{\s} ~\otimes~ \ket{ h_N( 0 ) }
    &\mapsto&  
    \ket{\s} ~\otimes~ \ket{ h_N( \s \cdot \c ) }.
\end{eqnarray}
Consider also the following set of `pseudo-eigenvectors' (\cf{}~line~(\ref{eqn:eigenvec}))~:
\begin{eqnarray}  \label{eqn:pseudeigenvec}
  \left\{~ \ket{\lambda(\kappa)} ~:=~ \frac{1}{\sqrt{q}}\sum_{j=0}^{q-1} \exp{\left( \frac{-2\pi i j \kappa}{NR} \right)}\ket{h_N(j)} ~\right\}_{\kappa=0}^{q-1}.
\end{eqnarray}
These are not quite orthogonal, but make a good approximation to an orthonormal basis for the space spanned by the vast majority of the computational basis of the second register.

As with the usual version of Kitaev's algorithm, one performs the above transformation in superposition and then imagines `collapsing the state' of the second register onto one of the vectors above, selected randomly. 
This gives---to good approximation---the superposition
\begin{eqnarray}
  \sum_\s \exp{ \left( \frac{2\pi i (\s \cdot \c + \epsilon_\s) \kappa}{NR} \right) } \ket\s,
\end{eqnarray}
the superposition being over those $\s$ for which there exists a $j$ such that $h_N( \s \cdot \c) = h_N( j )$, which will be at least a proportion $1-\frac1N$ of them.  Here $\epsilon_\s$ denotes some unknown `noise' term so that $\s \cdot \c + \epsilon_\s \equiv j \pmod{NR}$.  Therefore $\epsilon \in [0,1)$.

The first register is measured in the Hadamard basis as usual, and so the probability distribution for the returned string---conditioned on the stochastic choice of $\kappa$---will be 
\begin{eqnarray}
  \Pr(\y)  &\approx&  \frac1{2^{2m}} \sum_{\s,\s'} (-1)^{\y \cdot (\s+\s')}
    \cos\left( \frac{2\pi (\s \cdot \c - \s' \cdot \c + \epsilon)\kappa}{NR} \right),
\end{eqnarray}
where $\epsilon \in [0,2)$ (differing from term to term) arises from two combined noise terms.

There is an $O(1)$ probability that the $\kappa$ selected stochastically will be sufficiently small that the term $\epsilon \kappa$ in the log of the phase makes very little difference to any of these probabilities, and so Lemma~\ref{lem:otherpseudoeigenvectorthing} tells us that 
\begin{eqnarray}
  \Pr(\y)  &\approx& \frac1{2^m} \prod_{j=1}^m \left( 1 + (-1)^{y_j} \cos\left( \frac{2\pi c_j \kappa}{NR} \right) \right),
\end{eqnarray}
which means that with $O(1)$ probability the standard estimation algorithm of \S\ref{sect:eigest} will work as intended, recovering the real value $\phi = \kappa / NR$ for some random $\kappa$.  By running the algorithm twice, values can be recovered for two different $\kappa$s.  With good probability these will be coprime, and so a continued fractions analysis of the ratio of the two recovered $\phi$ values will likely yield the actual integer values of the $\kappa$s, whence the actual value of $NR$---and hence of $R$ itself---can be recovered.
\end{proof}

\cleardoublepage

\chapter{The Clifford-Diagonal Hierarchy}  \label{chap:IQP}

The theme for this chapter is ostensibly closely related to the previous one.  
Instead of taking polynomially-bounded `classical' circuits and Hadamard transforms as the building blocks for algorithms as we did for the Fourier hierarchy, here we take Clifford circuits and `diagonal' circuits.  

Diagonal circuits have the property that every gate commutes with every other gate, and so no `temporal complexity' can be encoded within a part of a circuit built exclusively from these.  We introduce ``$\IQP$ computing'' (see \S\ref{def:IQP}) as a particularly simple paradigm for understanding what kinds of probability distribution can be sampled using very little temporal complexity.
As before, we use a discrete TIME model and a finite-dimensional unitary space, but in this chapter complex phases will be used within the `diagonal' circuits.\footnotemark{}
\footnotetext{Much of the content has been extracted from a 2008 publication of mine with Michael Bremner \cite{me:IQC}.}

Many of the concepts introduced previously will be seen to translate into this framework (\cf{} circuit families~\S\ref{sect:cdqc}, adaption~\S\ref{sect:adaption}, algorithm hierarchies~\S\ref{sect:defFH}, oracles~\S\ref{sect:circuit-interfaces}...); but rather than showing how to cast well-known algorithms in the new paradigm, we instead use it to find a \emph{new} application for quantum algorithmics.  Our new application (see~\S\ref{sect:protocol}) consists in one side of a particular novel two-player interactive protocol, which we conjecture cannot be completed via purely classical means, but which completes using the idea of $\IQP$ computing.
At present, the only useful application of this protocol seems to be for demonstrations of computing power exceeding that of classical computation.

\section{Overview}

The Clifford-Diagonal (CD) hierarchy is, by analogy with the Fourier hierarchy, an arrangement of complexity classes, culminating in $\BQP$.  It has not been formally introduced in the literature, though it is certainly implicit in \cite{lit:Browne06}. 
Informally speaking, an algorithm is said to lie in the $k$th level of the CD hierarchy if it can be rendered using circuits that interleave $k$ layers of Clifford gates and `diagonal' gates (unitary maps whose description in the Hadamard basis is diagonal).  As with the Fourier hierarchy of Chapter~\ref{chap:FH} (in particular \S\ref{sect:adaption}), one can naturally define the CD hierarchy using \emph{mixed adaption}, or define a slightly stricter version using \emph{quantum adaption}, or define an `oracular' version using \emph{classical adaption}.  Our focus in this Chapter is only on the latter of these three options, described more fully in \S\ref{sect:Clifford}. 
It is somewhat surprising that the oracular definition might have any computational power exceeding that of a classical computer, because there is essentially no temporal `structure' encoded within such an oracle.  Although we have not found a decision language that can be decided more quickly with the help of the oracle, our main technical contribution is to identify a novel two-party `pseudo-cryptographic' protocol that seems to be efficient only when one of the parties can implement the oracle.  This is described in \S\ref{sect:protocol} and analysed in \S\ref{sect:heuristics} where the cryptographic analogy is emphasised.

To grasp the motivation behind the division of circuitry into `Clifford' parts and `diagonal' parts, it is necessary first to understand the computational capabilities of these parts in isolation.  
The Clifford group may be defined in terms of the Pauli group, which is itself defined by its action on \emph{qubits}.  Therefore our descriptions of quantum circuitry will be limited to qubit processing.  Many of the concepts required for discussing quantum circuitry on qubits have already been given in~\S\ref{chap:approach}.
Since the C-Not gate, the single-qubit Pauli gates, and the single-qubit Hadamard gates are all in the Clifford group, and since the single-qubit $\pi/8$ rotation gate forms a `diagonal' group, it is clear that the Clifford-Diagonal decomposition methodology can be seen as arising from the standard $\BQP$-universal gateset and hence allows for universal quantum computation in the limit of allowing arbitrarily many (\ie{} polynomially many) interwoven layers of circuitry.

We begin with some definitions (\S\ref{sect:IQPdefs}), then in \S\ref{sect:physics} we discuss architectures within which implementation of the lower levels of the CD hierarchy might be comparatively easy, before proceeding with a study of the mathematics and algorithms (though \emph{not} classes of decision languages) of the lowest levels of the hierarchy in \S\ref{sect:analysis}.

\section{Definitions}  \label{sect:IQPdefs}

We define Clifford circuits, X-programs, and the $\IQP$ oracle.  The $\IQP$ oracle can be thought of as standing in the same relationship to the CD hierarchy as the Fourier Sampling oracle stands to the Fourier hierarchy (\cf{}~Chapter~\ref{chap:FH}).  In this dissertation, the CD hierarchy itself is not considered directly, and so detailed definitions for it are omitted.

\subsection{Basic definitions}  \label{sect:Clifford}

\subsubsection*{Definition of the Clifford group}

\begin{definition}
  Within the algebra of unitary maps on some number $n$ of qubits, $C$ is in the Clifford group if for every $P$ in the Pauli group, $C \cdot P \cdot C^\dag$ is also in the Pauli group.
\end{definition}
Thus the Clifford group is the (discrete) group of unitary maps acting on qubits that stabilises the Pauli group.
\begin{eqnarray}
  \mbox{Pauli group}     &=&  \span{ X_j, ~Z_j ~:~ j \in [1..n] }  
\end{eqnarray}
has cardinality $4^{n+1}$, where $n$ counts the number of qubits under consideration, if a complex global phase change by $i$ is included for mathematical convenience.
If we remove the global phases of $i$, but leave the global $-1$ phases, we obtain the \emph{signed Pauli group}, of cardinality $2 \cdot 4^n$.
\begin{eqnarray}
  \mbox{Clifford Group}  &=&  \span{ \Lambda(Z), ~\sqrt{Z}, ~H }  
\end{eqnarray}
has cardinality ~$2^{n^2+2n+3} \cdot 3 \cdot 15 \cdot 63 \cdots (4^n-1)$,~ (\cf~\cite{lit:NRS0009}), where $n$ counts the number of qubits under consideration.  Global phase changes in multiples of \emph{eighth} roots of unity are automatically included by the definition above, constituting the centre of the Clifford group, so one should remove a factor of 8 from the cardinality if not wishing to count these.  Then this cardinality can be understood as arising from automorphisms of the signed Pauli group~: $2 \cdot (4^n-1)$ choices for the image of $X_1$, then $2 \cdot 2 \cdot 4^{n-1}$ choices for the image of $Z_1$, and so on down to $2 \cdot (4^1-1)$ choices for $X_n$, then $2 \cdot 2 \cdot 4^0$ choices for $Z_n$---the total cardinality being that of the Clifford group quotiented by its centre.

The Gottesman-Knill theorem provides that so-called \emph{stabilizer states}, which are the orbit of $\ket{\0}$ under the Clifford group, are efficiently representable, in such a way that the dynamics of the Clifford group \emph{acting on its own} are perfectly tractable classically~\cite{lit:AG0406}.  A particularly efficient classical simulation method is given in~\cite{lit:AB06}.
This kind of computation is not universal even for classical computation.  Clifford circuits acting on stabilizer states, with single-qubit measurements in the computational basis, generate the same kinds of computations in polynomial time as are available using just classical parity-circuits.  As regards decision languages, the class of languages decidable by a log-space Turing machine equipped with an oracle to analyse such circuits is none other than $\ParL$ (\cf~\S\ref{sect:egDQC1}).  This class, lying somewhere between $\L$ and $\P$, forms a natural `base' for many computing reductions.
(For recent work clarifying the `role' of $\ParL$ in quantum algorithmics, see \cite{lit:vdN08}.)

\subsubsection*{Definition of X-programs}

\begin{definition}
 An ``X-program'' on $n$ qubits (\cf{}~\cite{me:IQC}) is a list of pairs $(\theta_\p, \p) \in [0,2\pi] \times \FF_2^n$, so that $\theta_\p$ is an angle and $\p$ is a string of $n$ bits.  The order of the list is unimportant, but its length must scale polynomially in $n$, when considering uniform families.
\end{definition}

Each pair $(\theta_\p, \p) \in P$ is called an \emph{element} of the X-program.
To this X-program $P$ we associate an Hamiltonian, denoted
\begin{eqnarray}  \label{eqn:dist1b}
  \H_P  &:=&  \sum_\p ~\theta_\p \prod_{j:p_j=1}\! X_j,
\end{eqnarray}
having as many terms as there are elements in the list defining $P$.
That is to say, each string $\p$ indicates a subset of the $n$ qubits for Pauli $X$ to act upon, with `action'~$\theta_\p$. 

The `diagonal' unitary map produced by such an X-program $P$ is then taken to be 
\begin{eqnarray}  \label{eqn:ham}
  \exp(~ i \H_P ~)  &=&  
    \prod_\p ~\exp\left(~ i\theta_\p \prod_{j : p_j=1}\! X_j ~\right),
\end{eqnarray}  
which is indeed diagonal in the Hadamard basis.

When an Abelian group is being used for the gates within a circuit---as is the case here---it will be essentially devoid of temporal structure, since the order of the gates is immaterial.
It is convenient then to think in terms of the Hamiltonian, because every term of the Hamiltonian commutes with every other.  Each term, individually described by a pair $(\theta_\p,\p)$, acts on the qubits indicated by the string $\p$, with action\footnotemark{}~$\theta_\p$.
\footnotetext{Action is the temporal integral of work, in classical physical terms.}
So the \emph{circuit} nature of an implementation is not especially relevant to the unitary map associated to an X-program. 

Mathematically, one can think of $P$ as a function from $\FF_2^n$ to $\RR$, sending $\p$ to $\theta_\p$ if $(\theta_\p,\p)$ is an element, and sending $\p$ to $0$ if there is no corresponding element.  But since it is usually our intention that the number of elements will be far smaller than $2^n$, and since we will often take all the (non-zero) $\theta_\p$ values to be the same, it is also convenient to think of $P$ simply as a subset of $\FF_2^n$.  When we do not wish to refer to the $\theta_\p$ values, it is sometimes convenient to think of the subset $P$ alternatively as a binary matrix of width $n$, whose rows correspond to the elements of the X-program.

\subsection{Definition of $\IQP$ oracle}  \label{def:IQP}

An X-program is an explicitly `quantum' object, but it is also convenient to have a non-quantum description.  Analogously to definition~\ref{def:FSO} for the Fourier Sampling oracle, we define the $\IQP$ oracle.
Introduced in \cite{me:IQC}, the abbreviation denotes ``Instantaneous Quantum Polynomially-bounded''.  Here `instantaneous' refers to the explicit absence of temporal structure within the X-program description, though any given implementation may well require a non-trivial amount of time to run.
\medskip
\begin{definition}  \label{def:newIQP}
In the notation of line~(\ref{eqn:dist1b}), the \emph{$\IQP$ oracle} is a device which, on input a classical description $P$ of an X-program, returns a single sample string $\x$ from the probability distribution given by
\begin{eqnarray}  \label{eqn:dist1}
  \Pr(\X=\x)  &:=&  \left| \bra\x \exp\left(~ i\H_P~\right) \ket{\0} \right|^2 \!\!.  
\end{eqnarray}
\end{definition}
The \emph{output string} $\x$ is simply a measurement result, regarded as a (probabilistic) sample from the vector space $\FF_2^n$, where again $n$ counts the number of qubits mentioned in $\H_P$.

Our interest lies primarily \emph{not} in the decision languages that polytime-bounded machines can decide with access to such an oracle (\ie{} $\BPP^\IQP$), but in the wider notions of computing that go beyond mere decision languages, to encompass other computational concepts such as interactive games.  
As an historical aside, we note that Simon \cite{lit:Si97} wrote about algorithms that use nothing more than an oracle and an Hadamard transform, and which therefore could be described as `temporally unstructured'.  However, his notion of `oracle' was one tailored for a universal quantum architecture, being essentially an arbitrarily complex general unitary transformation, and since there is no natural notion of one of these within our `temporally unstructured' paradigm, the oracle of Simon's algorithm cannot be simulated by an $\IQP$ oracle.

\subsection{Adaption description}  \label{sect:CD_adaption}

We have said that an X-program contains no `temporal structure' because it makes no difference the order in which the Hamiltonian terms of line~(\ref{eqn:dist1b}) are applied.  It is perhaps interesting to note that if we reintroduce temporal structure by allowing the elements of an X-program to be subject to classical parity-control as discussed in \S\ref{sect:paritycontrol}, allowing also for intermediate computational-basis measurements as occurs within the measurement-based quantum computational paradigm of \cite{lit:Raus03,lit:RB01} \etc, then the full power of $\BPP$ classical computing can be recovered.
This is because there is a simple gadget\footnotemark{} on one qubit for emulating a classical \emph{And} gate~: see Fig.~\ref{fig:and1}.
\footnotetext{The design of this gadget is based on a similar concept developed by Daniel Browne, discussed in various recent conferences.} 
(Whereas a $\BPP$ circuit composed of gates of type \emph{Not} and \emph{C-Not} can be simulated entirely within the parity logic that feeds classical data forward from one X-program to the next, to simulate a $\BPP$ gate of type~\emph{And}, for example, it is necessary to apply a three-element single-qubit program where the three elements are controlled respectively by the two inputs to the \emph{And} gate and their parity, thereby potentially increasing the overall \emph{depth} of the simulation each time an \emph{And} gate is simulated.)

\begin{figure}[h]
  \begin{center}
     \includegraphics[width=90mm]{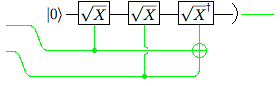}
     \caption{\label{fig:and1} Using `classical adaption', with parity-control of `diagonal' gates, to simulate a classical \emph{And} gate.  The green wires denote classical control signals to $\sqrt{X}$ gates and the classical measurement outcome.}
  \end{center}
\end{figure}

\subsubsection*{The power of post-selection} 

Recall that Aaronson \cite{lit:Aa04} showed that if one employs \emph{post-selection} (\cf~\S\ref{sect:postselection}) of the measurement results of a $\BQP$ circuit, the computational power is boosted enormously to encompass all of $\PP$.  Post-selection amounts to asking for some of the measurement outcomes to take specific values, even if those values are exponentially unlikely, before using the remaining measurement values to make a decision.  We note here that the same results hold true for circuits (or X-programs) merely implementing $\IQP$, that is, using one call to an $\IQP$ oracle.

\medskip
\begin{definition}  \label{propos:PostIQP}
  The class $\BPP^{IQP[1]}$ with post-selection is denoted $\mathbf{PostIQP}$.  It consists of all languages of the form $\Postb{\cP}{\cL_S \subseteq \cL_C}$ in the notation of Definition~\ref{def:postprob}, where $\cL_S$ and $\cL_C$ are both in $\BPP$ and where $\cP$ is a (uniform) family of probability distributions given by a uniform family of X-programs via Definition~\ref{def:newIQP}.
\end{definition}

\medskip
\begin{proposition}
  $\mathbf{PostIQP} = \mathbf{PP}$.
\end{proposition}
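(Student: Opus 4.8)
The plan is to prove the two inclusions separately, with Aaronson's theorem $\mathbf{PostBQP} = \mathbf{PP}$ \cite{lit:Aa04} doing the work at both ends; the direction $\mathbf{PostIQP} \subseteq \mathbf{PP}$ is routine and the direction $\mathbf{PP} \subseteq \mathbf{PostIQP}$ is the substantive one.

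For $\mathbf{PostIQP} \subseteq \mathbf{PP}$, I would first observe that the distribution $\Pr(\X=\x) = |\bra{\x}\exp(i\H_P)\ket{\0}|^2$ attached to any X-program $P$ is the computational-basis output distribution of a uniform polynomial-size $\BQP$ circuit: conjugating by $Q_B$ turns each term $\exp(i\theta_\p \prod_{j:p_j=1}X_j)$ into $\exp(i\theta_\p\prod_{j:p_j=1}Z_j)$, which is rendered by computing the relevant parity into a fresh ancilla with C-Not gates, applying a single-qubit phase rotation, and uncomputing. Since $\cL_S \subseteq \cL_C$ lie in $\BPP$, after amplifying their own error down to $2^{-\mathrm{poly}}$ one can fold their indicators into the circuit, at which point the definition of $\Postb{\cP}{\cL_S \subseteq \cL_C}$ (Definition~\ref{def:postprob}) is exactly a $\mathbf{PostBQP}$ computation; hence $\mathbf{PostIQP} \subseteq \mathbf{PostBQP} = \mathbf{PP}$.

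For $\mathbf{PP} \subseteq \mathbf{PostIQP}$ it suffices, again by \cite{lit:Aa04}, to show $\mathbf{PostBQP} \subseteq \mathbf{PostIQP}$; the argument parallels Proposition~\ref{propos:post-FS}. Start from a $\mathbf{PostBQP}$ machine realised by a uniform family of $\BQP$ circuits on input $\ket{\0}$, with a post-selection register required to read $\0$, an output qubit read in the computational basis, and the usual bounded-bias promise on the conditional acceptance probability. By \S\ref{sect:universal-gatesets} I may take these circuits to use only the Hadamard gate $H$, the diagonal gates $\Lambda^2(Z)$ and $T = \mathrm{diag}(1,e^{i\pi/4})$, and $\ket{0}$-ancill\ae{} (recall $\Lambda^2(X) = H\Lambda^2(Z)H$ and $\Lambda(X) = H\Lambda(Z)H$, so this set is $\BQP$-universal). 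Now replace every $H$ by a gate-teleportation gadget: adjoin a fresh qubit in state $\ket{+}$, apply $\Lambda(Z)$ between the wire carrying the current state $\ket{\psi}$ and the fresh qubit, and post-select the old wire to the outcome $\ket{+}$; a one-line check shows this leaves $H\ket{\psi}$ on the fresh qubit (the teleportation by-product that would otherwise need feed-forward correction is precisely the outcome being post-selected away). Implement each $\ket{0}$-ancilla as $H\ket{+}$ by the same gadget, and rewrite each final computational-basis readout as an $H$-gadget followed by a Hadamard-basis measurement. After these substitutions everything between the $\ket{+}^{\otimes N}$ inputs and the terminal Hadamard-basis measurements is a single layer of pairwise-commuting computational-basis-diagonal gates (the original $\Lambda^2(Z)$'s and $T$'s, together with the teleportation $\Lambda(Z)$'s). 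Since $\ket{+}^{\otimes N} = Q_B\ket{\0}$, this diagonal layer conjugated by $Q_B$ on both sides is exactly $\exp(i\H_P)$ for the X-program $P$ whose terms are the $Q_B$-conjugates of those diagonal gates — each supported on at most three qubits, so $\H_P$ has polynomially many bounded-weight terms — so the output statistics are precisely those of the $\IQP$ oracle on $P$ (Definition~\ref{def:newIQP}). Finally set $\cL_C$ to be the polynomial-time-checkable event ``all gadget post-selection bits and the original post-selection register read their prescribed values'' and $\cL_S = \cL_C \cap \{\text{output bit}=1\}$; since every gadget post-selection succeeds with nonzero amplitude and reproduces the ideal $H$, the ratio $\Pr(\cL_S)/\Pr(\cL_C)$ equals the original $\mathbf{PostBQP}$ conditional acceptance probability, so the bounded-bias promise transfers verbatim and $\Postb{\cP}{\cL_S \subseteq \cL_C} \in \mathbf{PostIQP}$. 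Combining the inclusions gives $\mathbf{PostIQP} = \mathbf{PP}$.

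The step I expect to be the main obstacle is the bookkeeping in this second direction: confirming that the cascade of gadget substitutions really does collapse the circuit to the form ``$\ket{+}$'s, one commuting diagonal layer, Hadamard-basis measurements'' — in particular that the feed-forward Pauli corrections normally demanded by one-way-style gadgets are exactly those removed by post-selecting the favourable outcomes, so the conditional distribution is the ideal one rather than merely correct up to by-products — and that the reduction is uniform and the probability accounting (nonzero post-selection probabilities, inherited promise) is airtight. The essential ideas are already present in Proposition~\ref{propos:post-FS}; the remaining work is making these details precise.
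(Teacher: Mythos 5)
Your proposal is correct and takes essentially the same route as the paper: both directions reduce to Aaronson's $\mathbf{PostBQP}=\mathbf{PP}$, and the substantive inclusion $\mathbf{PostBQP}\subseteq\mathbf{PostIQP}$ is proved by rewriting every internal Hadamard as a post-selected $\Lambda(Z)$-plus-$\ket{+}$ gadget, leaving a single commuting diagonal layer between the $Q_B\ket{\0}$ input and a Hadamard-basis readout — your gadget is exactly Fig.~\ref{fig:hadagadget} with the swap treated as a relabelling. The only differences are cosmetic: you normalise to the gate set $\{H,\Lambda^2(Z),T\}$ where the paper uses $\{H,Z,\Lambda(Z),\Lambda^2(Z)\}$, and you route the $\ket0$-ancill\ae{} and final readouts through extra gadgets where the paper instead assumes WLOG that the first and last gate on every wire is $H$.
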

\begin{proof}
It suffices to show that $\mathbf{PostBQP} \subseteq \mathbf{PostIQP}$, the rest already being established.  To see this, consider any general $\BQP$ circuit that is composed of $H$ gates together with $Z$, $\Lambda (Z)$, and $\Lambda^2 (Z)$ gates (this set being $\BQP$-universal), some of whose qubits are output at the end and some of whose qubits are post-selected.  Assume without loss of generality that the input to the circuit is $\ket{0}^{\otimes n}$ and that the output and post-selection is in the computational basis.  Also assume without loss of generality that the first and last gate on every qubit is $H$.  The remaining $H$ gates which are neither first nor last on a qubit line can be replaced with the post-selection gadget of Fig.~\ref{fig:hadagadget}.  

\begin{figure}[h]
  \begin{center}
     \includegraphics[width=90mm]{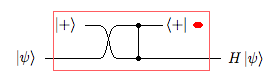}
     \caption{\label{fig:hadagadget} An Hadamard gadget, for replacing an Hadamard gate in a context where post-selection is admitted.  The lower qubit is a primal qubit, the upper one is an ancilla.  The red dot after the \emph{bra} symbol denotes post-selection of that outcome.}
  \end{center}
\end{figure}

This gadget, replacing an $H$ gate on a (primal) qubit acts as follows~: it introduces an ancilla qubit in state $\ket0$, applies an $H$ gate to it, swaps it with the primal qubit, applies a Controlled-$Z$ gate between the two, applies another $H$ gate to the ancilla, and then post-selects for that ancilla to be in state $\ket0$.  If this gadget is used everywhere to remove the `internal' $H$ gates, then we are left with a (post-selected) circuit having no inherent temporal structure.  This is because all the `internal' gates are now diagonal, and therefore mutually commutative.  (The \emph{swap} operations are to be regarded passively as relabelings, rather than as actively as gates.)  Regarding the remaining $H$ gates at the beginning and end of each qubit as passive changes of basis, it is then functionally equivalent to a circuit that can be rendered as an X-program (post-selected) in which all terms in the Hamiltonian $\H_P$ affect at most three qubits and have values $\theta$ that are some multiple of $\pi/8$.

Re-expressing this idea in the usual `calculus' of unitaries, let $d$ denote the qubit on which takes place the $H$ operation that we wish to remove, let $E$ denote the remaining qubits, and let $a$ denote the ancilla qubit that we introduce for emulating the Hadamard gate.  Then the transformation is given by
\begin{eqnarray*}
  \lefteqn{ Q_B \cdot V_{d,E} \cdot H_d \cdot U_{d,E} \cdot Q_B \cdot \ket0_d \ket\0_E } \\
  &\mapsto&  \bra{0}_a \cdot Q_B \cdot V_{d,E} \cdot \Lambda_a(Z_d) \cdot U_{a,E} \cdot Q_B \cdot \ket0_a \ket0_d \ket\0_E,
\end{eqnarray*}
where $U$ and $V$ denote the remaining parts of the circuit and $Q_B$ denotes the Hadamard gate on every qubit.
Post-selection then requires qubit $a$ to end up in state $\ket0_a$.

To see why this emulates an Hadamard transform, we need only compare $H_d\ket\psi_d$ with $\Lambda_a(Z_d) \ket\psi_a \ket+_d$ for arbitrary $\ket\psi = \alpha\ket0 + \beta\ket1$.
So $H_d\ket\psi_d \propto (\alpha + \beta, \alpha - \beta)$, written as a vector of amplitudes, while $\Lambda_a(Z_d) \ket\psi_a \ket+_d \propto (\alpha, \alpha, \beta, -\beta)$.  We then apply $H_a$ for transferring the ancilla back into the basis in which it is post-selected, obtaining the vector $(\alpha+\beta, \alpha-\beta, \alpha-\beta, \alpha+\beta)$, so that post-selection leaves $(\alpha+\beta, \alpha-\beta)$, as required.
\end{proof}

This line of reasoning indicates that it is unlikely that a classical computer would be able efficiently to tell us everything we might care to learn about the distribution of the ``$\IQP$ random variable'', $\X$, of line~(\ref{eqn:dist1}).
A similar line of reasoning is employed in~\cite{lit:FGHZ03}, where it is shown that \emph{exactly} simulating constant depth quantum circuits classically is hard, but that any family of constant-depth quantum circuits that \emph{decides} a language with \emph{zero failure probability} is efficiently classically simulable.  
We remark that the analogous Proposition (\ref{propos:post-FS}) was shown to hold for the $\cFS$ oracle of Chapter~\ref{chap:FH}, for essentially the same reason~: that post-selection supervenes classical adaption.  The proof is effectively formed by substituting Fig.~\ref{fig:hadamard1} for Fig.~\ref{fig:hadagadget}.

\section{Physical Considerations}  \label{sect:physics}

The lack of temporal structure within the Hamiltonian of an X-program, owing to the fact that its description is in terms of an Abelian group, leads one to wonder whether it might not be possible to solve the engineering challenge of implementing an $\IQP$ oracle using techniques that would be unsuitable for general purpose quantum computing.  If this were true, then the $\IQP$ framework could be useful for describing particularly `easy' quantum algorithms.  But for this idea to have any chance of making sense, it is preferable that an architecture exist whereby the \emph{physical} Hamiltonian required does not involve terms of more than two qubits.  General X-programs do not have this property.  For this reason, we also define Graph-programs.
Browne and Briegel \cite{lit:Browne06} wrote about \emph{CD-decomposition}, which is the first rigorous treatment that we know of that explicitly links graph state temporal depth with commutativity of Hamiltonian terms used to simulate a graph state computation.  It is from their terminology that we have chosen the Chapter title.

\subsection{Graph-programs}

Graph state computing architectures are popular candidates for scalable fully universal quantum processors \cite{lit:RB01, lit:Raus03}.  
Here, of course, we are concerned not with universal architectures \emph{per se}, but with the appropriate restriction to `unit time' computation~: the lowest levels of the CD hierarchy.
Another way to construct an $\IQP$ oracle uses so-called ``Graph-programs'', named since such a program is most easily described as the construction of a graph state followed by a series of measurements of the qubits in the graph state in various bases \cite{lit:Browne06}. 
A graph state has qubits that are initially devoid of information, but which are entangled together according to the pattern of some pre-specified graph, using the $G_\cG$ operator of~\S\ref{sect:clocks}.    
Unlike universal graph state computation, our Graph-programs do not admit any adaptive feed-forward, which is to say that all measurement angles must be known and fixed at compile-time, so that all measurements can be made simultaneously once the graph state has been built.  In this sense, the `depth' of a Graph-program is 1.
There is a sense in which one may regard such a program as scarcely involving \emph{dynamics} at all.

\begin{definition}
  A ``Graph-program'' is specified by giving an undirected graph $\cG(\cV,\cE)$ (usually bipartite), with labelled and distinguished vertices.  The vertex set is denoted $\cV$, of cardinality $n$, and for each $v \in \cV$ there is to be given an element of $SU(2)$; $R_v \in SU(2)$.  The edge set is denoted $\cE$.
We associate to it the probability distribution
\begin{eqnarray}  \label{eqn:dist2}
  \Pr(\X=\x)  &:=&
   \left| \bra\x ~\prod_{v \in \cV} R_v 
                 \cdot \!\!\!\!\prod_{(u,v) \in \cE}\!\Lambda_u(Z_v) 
                ~ \ket{+^n} \right|^2.  
\end{eqnarray}
To execute a Graph-program is to sample from this distribution.
\end{definition}

To implement the program, a qubit is associated with each vertex and is initialised to the state $\ket+$ in the Hadamard basis.  
Then a Controlled-$Z$ Pauli gate is applied between each pair of qubits whose vertices are a pair in $\cE$.  Since these Controlled-$Z$ gates commute, they may be applied simultaneously, at least in theory.  This process is equivalent to application of the $G_\cG$ constructor, discussed in \S\ref{sect:clocks}.
Finally, each vertex qubit $v$ is measured in the direction prescribed by its label $R_v$, returning a single classical bit.  Clearly the order of measurement doesn't matter, because the measurement direction is \emph{prescribed} rather than \emph{adaptive}.
Hence a sample from $\FF_2^n$ (a bit-string) is thus generated as the total measurement result.

\subsection{Emulation of X-programs}

We will show how Graph-programs can simulate the output of X-programs if a little trivial classical post-processing of the measurement results is allowed.
(Graph-programs would seem to be a little more general than X-programs~: it does not seem possible to emulate an arbitrary Graph-program with an X-program, because generally Graph-programs use gates that are neither in the Clifford group nor diagonal in the Hadamard basis.)

\medskip
\begin{proposition}
  Any X-program can be efficiently simulated by a Graph-program.  That is, a device for sampling from general distributions of the form at line~(\ref{eqn:dist2}) can emulate an $\IQP$ oracle, if classical post-processing is permitted, such that the size of the description of the Graph-program is polynomially bounded by the size of the description of the X-program.
\end{proposition}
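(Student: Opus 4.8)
The plan is to take the bipartite incidence structure of the X-program as the graph. Given an X-program $P$ with elements $(\theta_\p,\p)$, first discard any element with $\p=\0$ (it contributes only a global phase), and then build a Graph-program on the vertex set $\{1,\dots,n\}\cup\{a_\p : \p\in P\}$: the first $n$ vertices are the original ``data'' qubits of $P$, each $a_\p$ is a fresh ``ancilla'' vertex, and there is an edge between $a_\p$ and data vertex $j$ exactly when $p_j=1$. Since $\sum_\p|\p|\le n|P|$, this graph — together with one single-qubit measurement angle per ancilla — has a description of size polynomial in that of $P$, which is what the Proposition requires.

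I would begin by recording the basis-change identity $\exp(i\H_P)=H^{\otimes n}\exp(i\widetilde{\H}_P)H^{\otimes n}$, where $\widetilde{\H}_P:=\sum_\p\theta_\p Z_\p$ with $Z_\p:=\prod_{j:p_j=1}Z_j$ is obtained by interchanging $X\leftrightarrow Z$; thus executing an X-program is the same as preparing $\ket{+^n}$, applying the computational-basis-diagonal unitary $\exp(i\widetilde{\H}_P)$, and measuring every qubit in the Hadamard basis. Accordingly the data vertices will be assigned rotation $R_j=H$, so that the Graph-program reads them out in the Hadamard basis. After the $\Lambda(Z)$ edges are applied, the graph state is $2^{-n/2}\sum_{\x}\ket{\x}_{\mathrm{data}}\otimes\bigotimes_\p\ket{(-1)^{\p\cdot\x}}_{a_\p}$, where $\ket{+1}:=\ket+$ and $\ket{-1}:=\ket-$; that is, ancilla $a_\p$ holds the parity $\p\cdot\x$ \emph{encoded in the Hadamard basis}. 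To it I would assign the rotation $R_{a_\p}\in SU(2)$ that effects a measurement in the orthonormal basis $\bigl\{\tfrac1{\sqrt2}\bigl(\ket+ + (-1)^{m}e^{2i\theta_\p}\ket-\bigr)\bigr\}_{m\in\{0,1\}}$.

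The heart of the argument is a one-line computation of the relevant inner products, showing that, conditioned on the outcome $m_\p$ of that measurement, the back-action on the data register is exactly $Z_\p^{\,m_\p}\exp(i\theta_\p Z_\p)$ times a scalar independent of $\x$. The main obstacle to state cleanly is precisely why this produces a genuine unitary and not a trace-decreasing ``filter'': because $a_\p$ carries the parity in the Hadamard basis (a consequence of the Graph-program restriction to $\ket+$ inputs and $\Lambda(Z)$ gates), the two conditional amplitudes $\braket{\mu^\p_{m_\p}}{+}$ and $\braket{\mu^\p_{m_\p}}{-}$ have equal modulus, so there is no amplitude damping and the gadget realises $\exp(i\theta_\p Z_\p)$ up to only the Pauli byproduct $Z_\p^{\,m_\p}$. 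All $|P|$ ancilla measurements leave only commuting diagonal operators on the data, so I would then note that performing them all sends the data register to $Z_{\s}\exp(i\widetilde{\H}_P)\ket{+^n}$ (up to a global scalar), where $\s:=\sum_\p m_\p\p\in\FF_2^n$, and that the induced distribution over the $2^{|P|}$ ancilla outcomes is uniform.

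Finally I would assemble the pieces: since $H^{\otimes n}Z_{\s}=X_{\s}H^{\otimes n}$, a leftover $Z_{\s}$ in front of a Hadamard-basis read-out merely flips the output bits indexed by $\s$. Hence, writing $\x'$ for the data outcomes and $m_\p$ for the ancilla outcomes of the Graph-program, the string $\x:=\x'\oplus\sum_\p m_\p\p$ is distributed exactly as a sample from the $\IQP$ oracle on $P$; this is the verification step, carried out by comparing with the expansion $\bra{\x}\exp(i\H_P)\ket\0 = 2^{-n}\sum_{\w}e^{if_P(\w)}(-1)^{\x\cdot\w}$, where $f_P(\w):=\sum_\p\theta_\p(-1)^{\p\cdot\w}$. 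The only classical post-processing is this XOR, which is manifestly efficient, and elements with $|\p|=1$ need no special treatment, being handled by the same gadget (or absorbed directly into $R_j$). What remains is then only routine bookkeeping of the normalisations and Pauli byproducts.
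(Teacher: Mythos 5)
Your construction is identical to the paper's: the same bipartite incidence graph with $R_j = H$ on primal qubits and a $(YZ)$-plane rotation by $\theta_\p$ on each ancilla $a_\p$ (your basis $\bigl\{\tfrac1{\sqrt2}\bigl(\ket+ + (-1)^{m}e^{2i\theta_\p}\ket-\bigr)\bigr\}_{m\in\{0,1\}}$ equals $\{e^{-i\theta_\p X}\ket{m}\}$ up to global phase, matching the paper's $R_\p = \exp(i\theta_\p X)$), and the same classical post-processing (XOR each ancilla bit into its neighbouring primal bits, i.e.\ $\x := \x'\oplus\sum_\p m_\p\p$). Where you genuinely diverge is in the verification route. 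You write out the graph state explicitly, interpret each ancilla measurement as gate-teleporting $e^{i\theta_\p Z_\p}$ onto the data register up to a byproduct $Z_\p^{\,m_\p}$ and a scalar of constant modulus (so no filtering), accumulate the byproducts into $Z_\s$, and commute that through the terminal Hadamards. The paper instead folds the classical XOR back into the quantum circuit as $\Lambda_\p(X_j)$ gates and performs a single conjugation identity, ending by using the fact that $\ket+_\p$ is the $+1$-eigenstate of $X_\p$. Both are correct and amount to the same circuit algebra; your MBQC gate-teleportation presentation makes the byproduct bookkeeping and the ``why is this unitary rather than a filter'' point fully explicit, while the paper's is a terser purely-algebraic identity that never mentions measurement back-action. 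The proposal is sound.
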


\begin{proof}
Suppose we're given an X-program, $P$, thought of as a function from $\FF_2^n$ to $\RR$, and also as a list of elements $\{~ (\theta_\p, \p) ~:~ \p \in P \subseteq \FF_2^n ~\}$.  
Let $\cV$ be the disjoint union of $[1..n]$ and $P \subseteq \FF_2^n$, so that the graph state used to simulate the program will have one \emph{primal} qubit/vertex for each qubit being simulated (that is, $n$ of them), plus one \emph{ancilla} qubit/vertex for each program element $\p \in P$.  Write $\#P$ for the number of elements in $P$.  The cardinality of $\cV$ is then $n+\#P$.

We build a bipartite graph by connecting some of the primal vertices to some of the ancilla vertices.  For $j \in [1..n]$ and $\p \in P$, let $(j,\p) \in \cE$ exactly when the $j$th component of $\p$ is a 1.  Now let $R_j$ be the Hadamard element ($H$) for all primal qubits, so that all primal qubits are measured in the Hadamard basis.  And let $R_\p = \exp( i\theta_\p X )$, so that every ancilla qubit is measured in the $(YZ)$-plane at an angle specified by the corresponding program element.  See Fig.~\ref{fig:graphstate} for an example with $n=4$ primal qubits and $\#P=7$ ancill\ae{}.

\begin{figure}[h]
  \begin{center}
     \includegraphics[width=75mm]{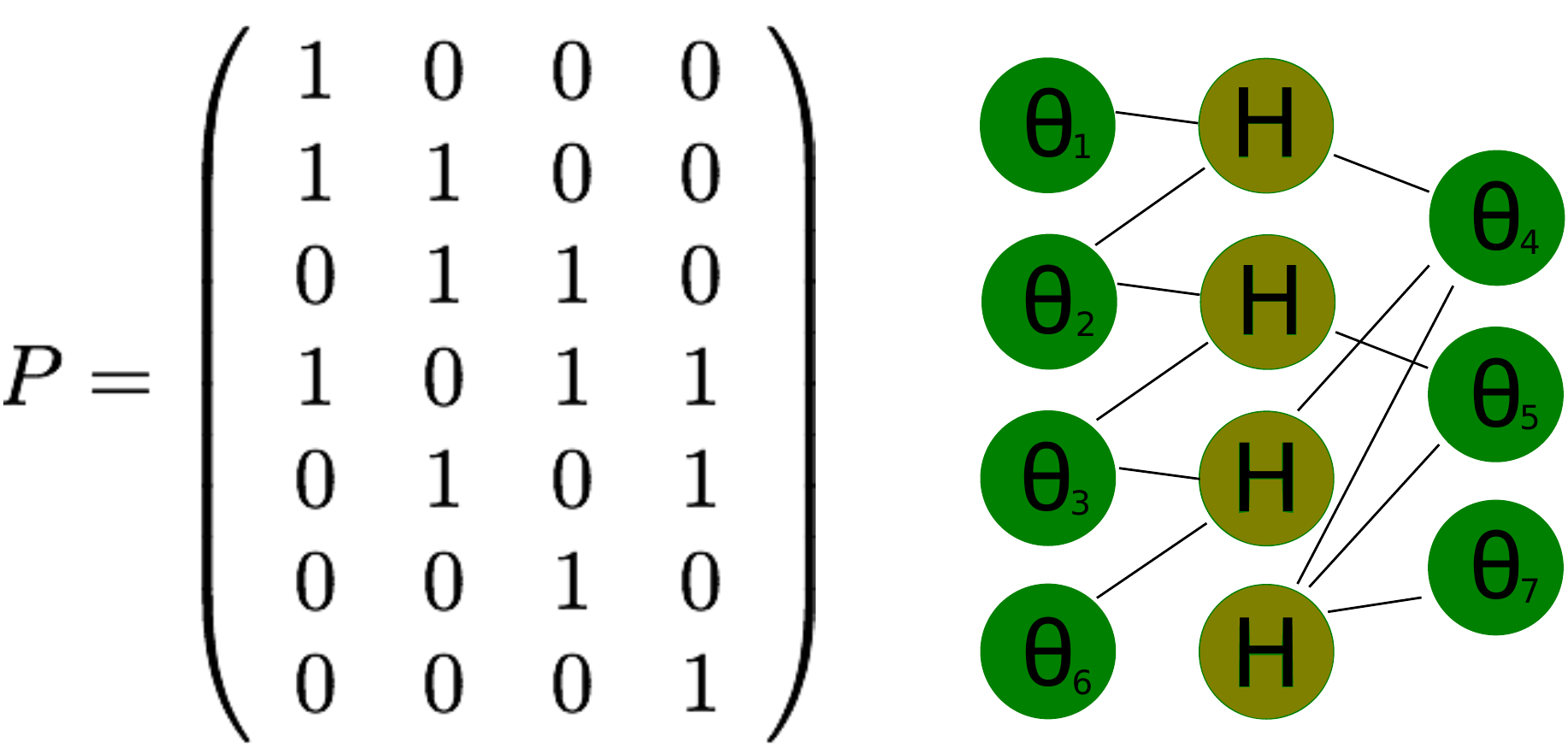}
     \caption{\label{fig:graphstate} On the left, $P$ is an example of a description of an X-program, given as a matrix (action values $\theta$ not shown in the matrix).  The graph state on the right---where the four lighter nodes represent primal qubits and the seven darker nodes represent ancilla qubits---can be used to simulate this X-program, as described in the text.}
  \end{center}
\end{figure}

If the resulting Graph-program is executed, it will return a sample vector $\x \in \FF_2^{n+\#P}$ for which the $n$ bits from the primal qubits are correlated with the $\#P$ bits from the ancill\ae{} in a fashion which captures the desired output, (though these two sets separately---\ie{} marginally---will look like flat random data).  To recover a sample from the desired distribution, we simply apply a classical C-Not gate from each ancilla bit to each neighbouring primal bit, according to $\cE$, and then discard all the ancilla bits.

One can use simple circuit identities to check that this produces the correct distribution of line~(\ref{eqn:dist1}) precisely.
For example, if we merge the post-processing C-Not gates into the quantum calculus description of the state, then we see
\begin{eqnarray*}
  \left( \prod_{(j,\p) \in \cE} \Lambda_\p(X_j) \right) \cdot
  \left( \prod_{\p \in P} e^{i \theta_\p X_\p} \right) \cdot
  \left( \prod_{j \in [1..n]} H_j \right) \cdot
  \left( \prod_{(j,\p) \in \cE} \Lambda_\p(Z_j) \right) ~\ket+^{\otimes (n+\#P)}_\cV,
\end{eqnarray*}
which is equivalent with 
\begin{eqnarray*}
  \left( \prod_{(j,\p) \in \cE} \Lambda_\p(X_j) \right) \cdot
  \left( \prod_{\p \in P} e^{i \theta_\p X_\p} \right) \cdot
  \left( \prod_{(j,\p) \in \cE} \Lambda_\p(X_j) \right) 
  ~\ket0^{\otimes n}_{[1..n]} ~\ket+^{\otimes \#P}_P.
\end{eqnarray*}
But $e^{i \theta_\p X_\p}$ conjugated by $\Lambda_\p(X_j)$ is simply $e^{i \theta_\p X_\p X_j}$, so the state above may be rewritten using the notation of line~(\ref{eqn:dist1b}) as
\begin{eqnarray*}
  \left( \prod_{\p \in P} e^{i X_\p \H_\p} \right)
  ~\ket0^{\otimes n}_{[1..n]} ~\ket+^{\otimes \#P}_P.
\end{eqnarray*}
But $\ket+_\p$ is an eigenvector of $X_\p$, so if we ignore the ancilla states (now separate anyway), we are left with
\begin{eqnarray*}
  \left( \prod_{\p \in P} e^{i \H_\p} \right) ~\ket0^{\otimes n}_{[1..n]},
\end{eqnarray*}
as required for the X-program.
\end{proof}

We note in passing that the kinds of graph called for in this particular reduction are not the usual cluster state graphs that correspond to a regular planar lattice arrangement as normally used in measurement-based quantum computation. 
The bipartite graphs described in the reduction here will usually be far from planar, for the X-programs that we'll be considering, having a relatively high genus.

\subsection{Constructing graph states}

By decomposing the graph constructor $G_\cG$ into individual gates (\cf{} \S\ref{sect:clocks}), it is clear that graph states may be constructed from polynomially many single-qubit rotations and two-qubit interactions.  
Moreover, a graph state may be constructed without \emph{inherent} temporal complexity, because there is no essential reason requiring one edge of the graph (one aspect of entanglement) to be prepared before any other.

One might argue that a \emph{physical} implementation of a graph state construction process could require time on the order of the valency of the graph in question, because it might be impractical to have an individual qubit engage in more than one entangling gate at a time.
However, even if this latter argument turns out to be relevant for architectures of interest (\cf{} \cite{lit:Hoy02}), it is still the case that the circuit-depth of graph state construction is merely \emph{logarithmic} in the valency of the graph.  This is because for the cost of some extra ancilla qubits, one can employ a binary tree of C-Not gates to `fan out' each qubit vertex of the graph onto $n$ `identical' physical qubits that \emph{together} represent the qubit associated to a graph vertex.  Then the logical state $\ket+$ is rendered physically as $\ket{00\ldots0} + \ket{11\ldots1}$ on these qubits.  The entangling operation $E_\cG$ can then be rendered using a circuit of depth one, since each Controlled-$Z$ gate impinging on the vertex can now use a distinct physical qubit.  Of course, the `fan-out' procedure must then be reversed after the vertices have been entangled and before they are measured, again using a binary tree of C-Not gates, at a cost of logarithmic circuit depth.  This trick was pointed out by Moore and Nilsson in \cite{lit:MN98}, who introduced the class $\QNC$, thereby effectively showing that $\IQP$ circuits are renderable in quantum logarithmic parallel time~:
\begin{proposition}
  ~~$\BPP^\IQP \subseteq \BPP^{\QNC^1}$.
\end{proposition}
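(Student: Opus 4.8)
The plan is to establish $\BPP^\IQP \subseteq \BPP^{\QNC^1}$ by showing that a single call to the $\IQP$ oracle can be simulated within quantum logarithmic parallel time (i.e.\ by a uniform family of $\QNC^1$ circuits, with the usual bounded-probability classical wrapping), so that replacing each oracle call by such a circuit converts a $\BPP^\IQP$ computation into a $\BPP^{\QNC^1}$ one. The core object to simulate is the distribution of line~(\ref{eqn:dist1}), namely the measurement statistics of $\exp(i\H_P)\ket{\0}$ in the computational basis, which---passing to the Hadamard basis---is the same as measuring $\prod_\p \exp(i\theta_\p X_\p)\ket{+^n}$. Using the previously-established reduction of X-programs to Graph-programs (Proposition immediately above, line~(\ref{eqn:dist2})), it suffices to render a Graph-program in $\QNC^1$: build a graph state on $\cV$ with $|\cV|=n+\#P$ qubits, apply the single-qubit rotations $R_v$, measure everything in the computational basis, and then do the trivial classical C-Not post-processing, which is clearly $\NC^1$ (indeed $\NC^0$ locally, $\NC^1$ for the routing).

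The only nontrivial depth issue is the construction of the graph state via $E_\cG = \prod_{(u,v)\in\cE}\Lambda_u(Z_v)$: a vertex of high valency would naively need depth equal to its degree, since a physical qubit cannot participate in two entangling gates at once. First I would invoke the fan-out trick of Moore and Nilsson \cite{lit:MN98}: for the price of polynomially many extra ancillas, replace each vertex qubit by a bundle of up to $|\cV|$ physical qubits, prepared in the cat-like state $\ket{0\cdots0}+\ket{1\cdots1}$ by a balanced binary tree of C-Not gates of depth $O(\log|\cV|)$ (equivalently, realising the logical $\ket+$ across the bundle). Since every $\Lambda_u(Z_v)$ impinging on a vertex can then use its own dedicated physical copy, all of $E_\cG$ runs in depth $1$; uncomputing the fan-out before measurement costs another $O(\log|\cV|)$ depth. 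The single-qubit layers $\prod_v R_v$ cost depth $O(1)$ (acting on the appropriate representative physical qubit of each bundle, with the remaining bundle qubits disentangled by the uncompute step; one may equivalently absorb $R_v$ into the measurement basis). Hence the whole quantum part is a circuit of depth $O(\log(n+\#P)) = O(\log n)$, i.e.\ $\QNC^1$, on $\poly(n)$ qubits.

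Next I would check uniformity and the bounded-probability bookkeeping. The classical description of $P$, together with the polynomial-time classical Turing machine underlying the $\BPP^\IQP$ machine, produces---in classical polynomial time, hence certainly in log-space-uniform fashion after the standard padding---an explicit description of the graph $\cG$, the labels $R_v$, the fan-out trees, and the post-processing C-Not pattern, so the resulting $\QNC^1$ circuit family is uniform in the required sense. Each oracle query in the original computation is answered by running this circuit once and applying the deterministic classical post-processing; since the original algorithm only needs bounded correctness of its overall decision, and $\QNC^1$ (wrapped by $\BPP$) inherits the usual amplification by parallel repetition and majority vote (here one simply runs several independent copies side by side, still in log depth, and post-processes), the error analysis goes through unchanged. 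Assembling the $\BPP$ outer control, the classical pre- and post-processing, and the $\QNC^1$ circuits for the (adaptively chosen) oracle calls yields a $\BPP^{\QNC^1}$ algorithm deciding the same language, establishing the inclusion.

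The main obstacle---and really the only place any care is needed---is the depth of graph-state construction: getting the entangling layer $E_\cG$ down to constant depth in the presence of high-degree vertices, without blowing up the qubit count beyond a polynomial or the depth beyond $O(\log n)$. This is exactly what the Moore--Nilsson fan-out/uncompute argument delivers, so modulo carefully accounting for the $O(\log)$ overhead of the fan-out trees (and noting that the reduction from X-programs to Graph-programs only multiplies the vertex count by a polynomial factor, keeping everything within $\poly(n)$), the rest of the proof is routine.
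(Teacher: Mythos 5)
Your proposal is correct and follows essentially the same route the paper takes: reduce to a Graph-program, then invoke the Moore--Nilsson fan-out/uncompute trick to render the entangling layer $E_\cG$ in $O(\log|\cV|)$ depth, with the single-qubit measurements and classical C-Not post-processing absorbed into the $\BPP$ wrapper. The paper presents this Proposition as a direct consequence of exactly that construction; your write-up merely adds the routine uniformity and error-amplification bookkeeping that the paper leaves implicit.
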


\section{Mathematical Analysis}  \label{sect:analysis}

This section provides mathematical background for analysing X-programs, and hence $\IQP$ computing, for the later algorithmic constructions.

\subsection{Computational paths}

Using the idea of counting computational paths, we can simplify the expression for an $\IQP$ output probability distribution as follows.

\medskip
\begin{lemma}  \label{lem:thing}
The probability distribution given at line~(\ref{eqn:dist1}) (repeated below) is equivalent to the one at line~(\ref{eqn:distpaths}) given below.  
(Here $P$ denotes an X-program having $k$ elements, and we use the same symbol $P$ to denote the $k$-by-$n$ binary matrix whose rows are the $\p$ vectors of the X-program under consideration.)
\begin{eqnarray}
  \Pr(\X=\x)  &:=&  \left| \bra\x \exp\left(~ i\H_P~\right) \ket{\0} \right|^2 \!\!
  \nonumber \\  
                 \label{eqn:distpaths}
               &=& \left|~  \sum_{ \a \in \FF_2^{k} ~:~ \a \cdot P = \x} 
                      ~~\prod_{\p ~:~ a_\p = 0}   \cos \theta_\p 
                        \prod_{\p ~:~ a_\p = 1} i \sin \theta_\p ~\right|^2.
\end{eqnarray}
\end{lemma}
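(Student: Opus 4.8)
The plan is to expand $\exp(i\H_P)$ directly using the fact that the terms $\prod_{j:p_j=1}X_j$ all commute, so the exponential of the sum factorises into a product of exponentials as already noted at line~(\ref{eqn:ham}). For each element $\p$ of the X-program, I would use the standard identity $\exp(i\theta_\p Q) = \cos\theta_\p \cdot I + i\sin\theta_\p \cdot Q$ valid whenever $Q$ is an involution (here $Q = \prod_{j:p_j=1}X_j$ squares to the identity since the $X_j$ commute and each squares to $I$). Expanding the product over all $k$ elements then yields a sum of $2^k$ terms indexed by a choice vector $\a \in \FF_2^k$, where $a_\p = 0$ selects the $\cos\theta_\p \cdot I$ branch and $a_\p = 1$ selects the $i\sin\theta_\p \cdot \prod_{j:p_j=1}X_j$ branch. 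The operator attached to the term indexed by $\a$ is $\bigl(\prod_{\p:a_\p=0}\cos\theta_\p\bigr)\bigl(\prod_{\p:a_\p=1}i\sin\theta_\p\bigr)$ times the product of $X$-strings $\prod_{\p:a_\p=1}\prod_{j:p_j=1}X_j$.

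Next I would identify the net $X$-string. Multiplying the $X$-strings for all $\p$ with $a_\p=1$, and using that $X_j^2=I$, the qubit $j$ gets an $X_j$ precisely when an odd number of the selected $\p$ have $p_j=1$; that is, the net operator is $\prod_j X_j^{(\a\cdot P)_j}$ where $\a\cdot P \in \FF_2^n$ is the matrix-vector product over $\FF_2$ with $P$ the $k$-by-$n$ matrix of rows $\p$. Applying this net operator to $\ket\0$ gives $\ket{\a\cdot P}$. Hence $\exp(i\H_P)\ket\0 = \sum_{\a\in\FF_2^k} \bigl(\prod_{\p:a_\p=0}\cos\theta_\p\bigr)\bigl(\prod_{\p:a_\p=1}i\sin\theta_\p\bigr)\ket{\a\cdot P}$, and taking the inner product with $\bra\x$ collects exactly those $\a$ with $\a\cdot P = \x$. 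Squaring the modulus gives the right-hand side of line~(\ref{eqn:distpaths}).

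I do not expect any serious obstacle here — the argument is a routine bookkeeping exercise once commutativity is invoked. The only point requiring a little care is the claim that distinct $\a$ contribute to the stated $\x$ only through the linear condition $\a\cdot P = \x$ and that no further cancellation identity is being silently used: the amplitudes attached to different $\a$ with the same $\a\cdot P$ genuinely do add coherently inside the modulus, which is why a sum appears inside the $|\cdot|^2$ rather than a sum of squares. I would also remark in passing that this is precisely the ``computational paths can interfere'' phenomenon of \S\ref{sect:quantum-comp}: each $\a$ is a computational path, its weight is $\prod|\!\cos\theta_\p|$ or $\prod|\!\sin\theta_\p|$ as appropriate, and the factors of $i$ supply the phases that make interference possible. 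No nontrivial estimate or inequality is needed, so the proof should be short.
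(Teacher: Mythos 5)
Your proof is correct and follows essentially the same route as the paper's: expand each factor of $\exp(i\theta_\p \prod_{j:p_j=1}X_j)$ using the involution identity, multiply out over $\a\in\FF_2^k$, identify the net $X$-string as $\prod_j X_j^{(\a\cdot P)_j}$ by mod-2 cancellation, and collect the terms with $\a\cdot P=\x$ after pairing with $\bra\x$. The paper's own proof is a single displayed equation performing exactly these steps; your write-up simply spells out the intermediate reasoning a little more explicitly.
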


\begin{proof}
Using the fact that the Hamiltonian terms in an X-program all commute, we can think of the quantum amplitudes arising in an X-program implementation as a sum over paths, 
\begin{eqnarray}
  \lefteqn{ \bra{\x}~ \prod_\p 
            \left( \cos \theta_\p ~+~ i \sin \theta_p \prod_{j:p_j=1} X_j \right) 
            ~\ket{\0} }  \nonumber \\
  &=&
  \bra{\x} ~\sum_{ \a \in \FF_2^k} 
    ~\prod_{\p ~:~ a_\p = 0}   \cos \theta_\p 
     \prod_{\p ~:~ a_\p = 1} i \sin \theta_\p 
    ~\prod_{j=1}^n X_j^{(\a \cdot P)_j} ~~\ket{\0},
\end{eqnarray}
and hence derive a new form for the probability distribution accordingly.
\end{proof}

\subsection{Binary matroids and linear binary codes}

Before proceeding further, it behoves us to establish the link that these formul\ae{} have with the (closely related) theories of binary matroids and linear binary codes. 

\subsubsection*{Codes}

\begin{definition}
A linear binary code, $\Code$, of length $k$ is a (linear) subspace of the vector space $\FF_2^k$, represented explicitly.
The elements of $\Code$ are called \emph{codewords}, and the Hamming weight $wt(c) \in [0..k]$ of some $c \in \Code$ is defined to be the number of 1s it has.  The rank of $\Code$ is its rank as a vector space.
\end{definition}

Linear binary codes are frequently presented using \emph{generator matrices}, where the columns of the generator matrix form a basis for the code.  If $P$ is a generator matrix for a rank $n$ code $\Code$, then $P$ has $n$ columns and the codewords are $\{ P \cdot \d^T ~:~ \d \in \FF_2^n \}$.  Fig.~\ref{fig:graphstate} includes an example of a rank $n=4$ code of length $k=7$.

Another nice way to conceptualise the $\IQP$ oracle is as a device that forms a uniform coherent superposition over codewords of a code, before measuring that state using a locally skewed basis.

\subsubsection*{Matroids}

There are many different, isomorphic, definitions for matroids, (see \cite{book:Ox92}).  
We shall adopt the following definition.

\medskip
\begin{definition}
A $k$-point binary matroid is an equivalence class of matrices defined over $\FF_2$, where each matrix in the equivalence class has exactly $k$ rows, and two matrices are equivalent (written $M_1 \sim M_2$) when for some ($k$-by-$k$) permutation matrix $Q$, the column-echelon reduced form of $M_1$ is the same as the column-echelon reduced form of $Q \cdot M_2$.  Here we take column-echelon reduction to delete empty columns, so that the result is full-rank.  Hence the rank of a matroid is the rank of any of its representatives.  
\end{definition}

Less formally, this means that a binary matroid is like a matrix over $\FF_2$ that doesn't notice if you rearrange its rows, if you add one of its columns into another (modulo 2), or if you duplicate one of its columns.  This means that a matroid is like the generator matrix for a linear binary code, but it doesn't mind if it contains redundancy in its spanning set (\ie{} has more columns than its rank) and it doesn't care about the actual order of the zeroes and ones in the individual codewords.
To be clear, when thinking of a matrix $P$, we are simultaneously thinking of its \emph{columns} as the elements of a spanning set for a \emph{code}, and its \emph{rows} as the points of a corresponding \emph{matroid}, the elements of an X-program.
Because one cannot express a matroid independently of a representation, we consistently conflate notation for the matrix $P$ with the matroid $P$ that it represents.
 
There is a definition in the literature for \emph{weighted matroids}, which in this context would correspond to allowing different $\theta$ values for different terms in the Hamiltonian of an X-program.  While mathematically (and physically) natural, such considerations would not help with the clarity of our presentation, and in most of what follows we are concerned only with X-programs for which all the (non-zero) $\theta_\p$ values are the same.

\subsubsection*{Weight enumerator polynomials}
 
Perhaps the main structural feature of a binary matroid is its \emph{weight enumerator polynomial}.  

\medskip
\begin{definition}  \label{def:WEP}
If the $k$ rows of binary matrix $P$ establish the points of a $k$-point matroid, then the weight enumerator of the matroid is defined to be the weight enumerator of the $k$-long code $\Code$ spanned by the columns of $P$, which in turn is defined to be the bivariate polynomial
\begin{eqnarray}
  WEP_\Code(x,y)  &=&  \sum_{\c \in \Code} x^{wt(\c)} y^{k-wt(\c)}.
\end{eqnarray}
\end{definition}

This is well-defined, because the effect of choosing a different matrix $P$ that represents the same binary matroid simply leads to an isomorphic code that has the same weight-enumerator polynomial as the original code $\Code$.  The exact evaluation of an arbitrary weight-enumerator is hard for the polynomial hierarchy~: see \cite{lit:Vya03} for more on the computational complexity of approximating weight-enumerators.  This suggests that a search for any $\P^\IQP$ computing method for evaluating arbitrary weight-enumerators might lead one day to a way to put $\IQP$---and hence also $\BQP$---outside of the polynomial hierarchy.

\subsubsection*{Bias in probability distributions}

\begin{definition}  \label{def:bias}
If $\X$ is a random variable taking values in $\FF_2^n$, and $\s$ is any element of $\FF_2^n$, then the \emph{bias}\footnotemark{} of $\X$ in direction $\s$ is simply the probability that $\X \cdot \s^T$ is zero, \ie{} the probability of a sample being orthogonal to $\s$.
\end{definition}
\footnotetext{Note that this definition of \emph{bias}---used throughout this Chapter---is a little different from the definition used in~\S\ref{sect:eldef}; simply to help shorten some of the formulas.}
 
Let us now consider an X-program on $n$ qubits that has constant action value $\theta$, whose Hamiltonian terms are specified by the rows of matrix $P$, as discussed earlier.
Then we can use Lemma~\ref{lem:thing} with the definition above to obtain the following expression of bias, for any binary vector $\s \in \FF_2^n$~:
\begin{eqnarray}  \label{eqn:walshpaths}
  \Pr(\X \cdot \s^T=0) 
    &=&  \sum_{\x ~:~ \x \cdot \s^T = 0} ~
         \left|~ \sum_{ \a ~:~ \a \cdot P = \x} ~
             (\cos \theta)^{k-wt(\a)}  (i \sin \theta)^{wt(\a)} 
        ~\right|^2.
\end{eqnarray}

Since it would be nice to interpret this expression as the evaluation of a weight enumerator polynomial, we are led to define $P_\s$ to be the submatrix of $P$ obtained by deleting all rows $\p$ for which $\p \cdot \s^T = 0$, leaving only those rows for which $\p \cdot \s^T = 1$.  We call the number of rows remaining $n_\s$.  
(Note, $n_\s$ is here being used for the length of the code $\Code_\s$ in deference to the usual practice of reserving the letter $n$ for code lengths.  This $n_\s$ is counting a number of rows, and should not be confused with the $n$ used earlier for counting a number of columns.)
This in turn leads to the code $\Code_\s$ being the span of the columns of $P_\s$, and likewise a submatroid (also called a \emph{matroid minor}) is correspondingly defined.

\medskip
\begin{theorem}  \label{thm:bias}
  For constant-action X-programs, the bias expression $\Pr(\X \cdot \s^T=0)$ for the random variable $\X$ of line~(\ref{eqn:dist1}) depends only on the action value $\theta$ and (the weight enumerator polynomial of) the $n_\s$-point matroid $P_\s$, as defined above. 
Moreover,  if $\Code_\s$ is a binary code representing the matroid $P_\s$, then the following formula expresses the bias~:
\begin{eqnarray}  \label{eqn:walshcode}
    \Pr(\X \cdot \s^T=0) 
    &=& \Ex_{\c \sim \Code_\s} \left[~ \cos^2\Bigl(~ \theta( n_\s ~-~ 2 \cdot wt(\c) ) ~\Bigr) ~\right].
\end{eqnarray}
\end{theorem}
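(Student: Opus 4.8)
The plan is to start from the path-sum formula of Lemma~\ref{lem:thing} and compute the bias $\Pr(\X \cdot \s^T = 0)$ by expanding the squared modulus into a double sum over computational paths $\a,\a' \in \FF_2^k$. First I would write
$\Pr(\X \cdot \s^T = 0) = \sum_{\x : \x\cdot\s^T = 0} |\sum_{\a : \a\cdot P = \x} c_\a|^2$
where $c_\a = (\cos\theta)^{k - wt(\a)}(i\sin\theta)^{wt(\a)}$, exactly as in line~(\ref{eqn:walshpaths}). Expanding the modulus-squared and swapping the order of summation, the constraint $\x\cdot\s^T=0$ together with $\a\cdot P = \x$, $\a'\cdot P = \x'$ collapses: the sum over $\x$ orthogonal to $\s$ forces $(\a+\a')\cdot P\cdot\s^T = 0$, i.e. $(\a+\a')\cdot(P\s^T) = 0$ in $\FF_2$. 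The vector $P\s^T \in \FF_2^k$ has a $1$ in position $\p$ precisely when $\p\cdot\s^T = 1$, which is the defining condition for the rows retained in $P_\s$. So the orthogonality constraint says exactly that $\a$ and $\a'$ agree in an even number of the coordinates indexed by the rows of $P_\s$ — equivalently, the restriction of $\a+\a'$ to those $n_\s$ coordinates lies in the parity-check-free condition... more precisely it must be orthogonal to the all-ones vector on those coordinates, but I should double check whether it is a single parity constraint or the full condition $\x = \x'$; I believe using $\s$ arbitrary and summing freely over $\x \perp \s$ it is the single parity constraint.

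Next I would reorganise the double sum $\sum_{\a,\a'} c_\a \overline{c_{\a'}} \cdot [\text{parity constraint on } \a+\a' \text{ over } P_\s\text{-rows}]$ by substituting $\b = \a + \a'$ and noting that $\sum_{\a} c_\a \overline{c_{\a+\b}}$ factorises over the $k$ coordinates. On coordinate $\p$, the local factor is $\cos^2\theta + \sin^2\theta = 1$ when $b_\p = 0$ and is $(\cos\theta)(i\sin\theta)^* \cdot (\text{swap}) + \ldots = -i\sin\theta\cos\theta + i\sin\theta\cos\theta \cdot(\ldots)$; carefully one gets that each coordinate with $b_\p = 1$ contributes a factor whose real part after the full computation assembles into a cosine of a multiple of $2\theta$. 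The cleanest route is probably to recognise that $\sum_\a c_\a \overline{c_{\a+\b}} = \prod_\p (\text{factor})$ and that the product over all $k$ coordinates, combined with the parity weighting restricted to the $n_\s$ rows of $P_\s$, is what produces $\cos(\theta(n_\s - 2\,wt(\c)))$ for codewords $\c$ of the code $\Code_\s$ spanned by the \emph{columns} of $P_\s$. At this point one invokes that the answer depends on $P$ only through the submatrix $P_\s$, and through $P_\s$ only via its column-span $\Code_\s$ up to the matroid equivalence (row permutations, column operations, column duplication all preserve the multiset of codeword weights), establishing the first sentence of the theorem.

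The main obstacle I anticipate is bookkeeping the phases correctly: the factors $i\sin\theta$ carry imaginary units, and one must verify that after the double-sum expansion and the substitution $\b = \a+\a'$, all the $i$'s combine so that the surviving expression is real and equals the stated cosine-squared average — in particular, getting the argument $\theta(n_\s - 2\,wt(\c))$ rather than, say, $2\theta\,wt(\c)$ or something off by the constant $n_\s$. I would handle this by first doing the case $k = 1$ and $k = 2$ by hand to fix the normalisation, then arguing the general factorisation by induction on the number of rows of $P_\s$. A secondary check is the reduction to $\Code_\s$: I must confirm that rows $\p$ with $\p\cdot\s^T = 0$ genuinely contribute a trivial factor of $1$ to the whole expression (they do, since $b_\p$ for such rows is unconstrained by the parity condition and summing the local factor over $b_\p \in \{0,1\}$... no, actually $\b$ is summed over only once — rather these coordinates' local factors telescope to $1$ because $\cos^2\theta + \sin^2\theta = 1$), so that only $P_\s$ survives. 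Finally I would rewrite the resulting sum $\sum_{\c \in \Code_\s}$ of $\cos(\theta(n_\s - 2\,wt(\c)))$, divided by $|\Code_\s|$, as the expectation $\Ex_{\c \sim \Code_\s}$, matching line~(\ref{eqn:walshcode}); the $\cos^2$ versus $\cos$ discrepancy should resolve because the codeword $\c$ and its complement (or the pairing $\a \leftrightarrow \a'$) each appear, turning $\cos$ into $\cos^2$ via a half-angle identity, which is the last thing I would verify carefully.
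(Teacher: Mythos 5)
Your route — expand $|\cdot|^2$ in Lemma~\ref{lem:thing}'s path-sum into a double sum over $\a, \a'$ and factorise over coordinates via $\b := \a + \a'$ — is genuinely different from the paper's. The paper changes to the Hadamard basis at the level of the Hamiltonian, then collapses the sum over $\x\perp\s$ using $\frac12(1 + (-1)^{\x\cdot\s^T})$ into just the two contributions $\d = \0$ and $\d = \s$, and lands directly on the primal code $\Code_\s$ via the substitution $\c = P_\s\cdot\a^T$. Your double-path-sum route would instead land on the \emph{dual} code $\Code_\s^\perp$ (those $\b\in\FF_2^k$ with $\b\cdot P = \0$ supported on the $P_\s$-rows) and then requires a MacWilliams identity to convert to a statement about $\Code_\s$; that is extra machinery, but it can be made to work.

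The genuine gap lies in your identification of the constraints on the pair $(\a,\a')$. In $\Pr(\X = \x) = |\sum_{\a : \a P = \x}c_\a|^2 = \sum_{\a,\a'}c_\a\overline{c_{\a'}}$, both $\a$ and $\a'$ must map to the \emph{same} $\x$; there is no separate $\x'$. After summing over $\x\perp\s$ the constraint on the pair is therefore \emph{not} your proposed single parity equation $(\a+\a')\cdot P\cdot\s^T = 0$ — that relation holds \emph{trivially} whenever $\a P = \a' P$ and so carries no information. The correct constraint is the conjunction of $(\a+\a')\cdot P = \0$ (a full system of $n$ equations, placing $\b$ in the left-kernel of $P$) with $\a\cdot P\cdot\s^T = 0$ (a single parity). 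You write that you "believe \ldots it is the single parity constraint"; that belief is wrong, and it matters, because the full condition $(\a+\a')P=\0$ is precisely what eventually confines $\b$ to $\Code_\s^\perp$. The $\cos^2$ also does not come from a complement/half-angle pairing as you guess. Encode the parity constraint as $\frac12(1 + (-1)^{\a\cdot P\s^T})$ and distribute it through $\sum_\a c_\a\overline{c_{\a+\b}}$: the "$1$" half gives the additive $\frac12$ (only $\b=\0$ survives there), and the "$(-1)^{\a\cdot P\s^T}$" half factorises into per-coordinate factors $\cos 2\theta$ and $-i\sin 2\theta$ on the $P_\s$-rows, while on the $(P\setminus P_\s)$-rows the factor is $1$ if $b_j=0$ but $0$ if $b_j=1$ (it is the zero, not a $\cos^2+\sin^2$ identity, that forces $\b$ onto the $P_\s$ support). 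The true fact you are groping towards — that $\mathbf{1} = P_\s\cdot\s^T$ always lies in $\Code_\s$ — is indeed needed, but its role is to make the imaginary part of the resulting dual-code sum average to zero, not to manufacture a $\cos^2$ from a $\cos$.
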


\begin{proof}
To derive line~(\ref{eqn:walshcode}) from line~(\ref{eqn:dist1}) in the case that the value $\theta$ is constant, proceed as follows.
Throughout, the variable $\p$ ranges over the rows of the binary matrix $P$, which are the program elements of an X-program. 

\begin{eqnarray} 
  \Pr(\X=\x)
  &=& \left| \bra\x ~\exp\left(~\sum_\p i\theta_\p \prod_{j:p_j=1} X_j~\right)~ \ket{\0^n} \right|^2  \nonumber \\
  &=& \left| 2^{-n}\sum_\a (-1)^{\x \cdot \a^T}\bra\a ~~\exp\left(~\sum_\p i\theta_\p \prod_{j:p_j=1} Z_j~\right) 
                   \sum_\b \ket{\b} \right|^2 \nonumber \\
  &=& \left| ~\Ex_{\a}~ \left[ (-1)^{\x \cdot \a^T} ~\exp\left(~i\theta \sum_{\p} (-1)^{\p \cdot \a^T} ~\right) 
                              \right] ~\right|^2 \nonumber \\
  &=& \Ex_{\a,\d} \left[ (-1)^{\x \cdot \d^T} \exp\left(~ i\theta \sum_{\p} (-1)^{\p \cdot \a^T} 
                                                           \Bigl(1 - (-1)^{\p \cdot \d^T}\Bigr) ~\right) \right].  
\end{eqnarray}
On the second line we made a change of basis, so as to replace the Pauli $X$ operators with Pauli $Z$ ones.

\begin{eqnarray}  \label{eqn:wooo}
  \Pr(\X \cdot \s^T = 0)
    &=& 2^{n} ~\Ex_{\x} \left[~ \{\x\cdot\s^T=0\} \cdot \Pr( \X = \x ) ~\right] \nonumber\\
    &=& 2^{n} ~\Ex_{\a,\d,\x} 
        \left[ \frac{(1+(-1)^{\x\cdot\s^T})}2 (-1)^{\x \cdot \d^T}
        e^{ i\theta \sum_\p (-1)^{\p \cdot \a^T} \Bigl(1 - (-1)^{\p \cdot \d^T}\Bigr) } \right] \nonumber \\
    &=& 2^{n} ~\Ex_{\a,\d} 
        \left[ \frac{\Bigl( \{\d=\0\} + \{\d=\s\} \Bigr)}2 
             ~e^{ i\theta \sum_\p (-1)^{\p \cdot \a^T} \Bigl(1 - (-1)^{\p \cdot \d^T}\Bigr) } \right] \nonumber \\
    &=& \frac12\left( 1 ~+~ \Ex_\a \left[ e^{ i\theta \sum_\p (-1)^{\p \cdot \a^T} \Bigl(1 - (-1)^{\p \cdot \s^T}\Bigr) } \right]
        \right).
\end{eqnarray}
These transformations are conceptually simple but notationally untidy.

\begin{eqnarray}
  2 \cdot \Pr(\X \cdot \s^T = 0) - 1 
      &=& \sum_j e^{ij\theta} ~\Ex_{\a,\phi} 
        \left[ e^{i\phi\left( -j ~+~ \sum_\p (-1)^{\p \cdot \a^T} \Bigl(1 - (-1)^{\p \cdot \s^T}\Bigr) ~\right)} \right] \nonumber \\
      &=& \sum_j e^{ij\theta} ~\Pr_{\a} 
        \left(~ j ~=~ 2\!\!\!\!\!\!\sum_{\p~:~\p\cdot\s^T=1} \!\!(-1)^{\p \cdot \a^T} ~\right)  \nonumber \\
      &=& \sum_j e^{ij\theta} ~\Pr 
        \left(~ j = 2 (~ n_\s - 2 \cdot wt( \c ) ~) ~|~ \c \sim \Code_\s ~\right) \nonumber \\
      &=& \sum_w \cos( 2\theta(n_\s - 2 w) ) \cdot \Pr\left(~ w = wt( \c ) ~|~ \c \sim \Code_\s ~\right).
\end{eqnarray}
Here we have used the standard Fourier decomposition of a periodic function, and used the fact that the function is known to be real.  The variable substitution at the third line was $\c = P_\s \cdot \a^T$, understood in the correct basis.  At the fourth line it was $w = (2n_\s-j)/4$. 
Then
\begin{eqnarray}
  \Pr(\X \cdot \s^T = 0)
    &=& \sum_{w=0}^{n_\s} \cos^2(~ \theta(n_\s - 2 w) ~) \cdot \Pr\left(~ w = wt( \c ) ~~|~~ \c \sim \Code_\s ~\right) \nonumber \\ 
    &=& \Ex_{\c \sim \Code_\s} \left[~ \cos^2\Bigl(~ \theta( n_\s ~-~ 2 \cdot wt(\c) ) ~\Bigr) ~\right]. 
\end{eqnarray}
\end{proof}

To recap, this means that if we run an X-program using the action value $\theta$ for all program elements, then the probability of the returned sample being orthogonal to an $\s$ of our choosing depends only on $\theta$ and on the (weight enumerator polynomial of the) linear code obtained by writing the program elements $\p$ as rows of a matrix and ignoring those that are orthogonal to~$\s$.

We emphasise at this point the value of Theorem~\ref{thm:bias}~: it means that for any direction $\s \in \FF_2^n$, the bias of the output probability distribution from an X-program $(P,\theta)$ in the direction $\s$ depends \emph{only} on $\theta$ and the rows of $P$ that are \emph{not} orthogonal to $\s$, and not at all on the rows of $P$ that \emph{are} orthogonal to $\s$. 
Moreover, the bias in direction $\s$ depends \emph{only} on the \emph{matroid} $P_\s$, and not on the particular \emph{matrix} $P_\s$ that represents it.  That is, directional bias (definition~\ref{def:bias}) is a matroid invariant.

Note that whenever $A$ is an $n$-by-$n$ invertible matrix over $\FF_2$, then 
\begin{eqnarray}
  \p \cdot \s^T  &=&  \p \cdot A \cdot A^{-1} \cdot \s^T  ~~=~~  
                       (\p \cdot A) \cdot (\s \cdot A^{-T})^T,
\end{eqnarray} 
so any invertible column operation on matrix $P$ accompanies an invertible change of basis for the set of directions of which $\s$ is a member.  Note also that appending or removing an all zero column to $P$ has the effect of including or excluding a qubit on which no unitary transformations are performed.
Thus if $P_\s$ is a submatroid of $P$ by point-deletion, as described earlier, then if the invertible column transformation $A$ is applied to the matrix $P$ that represents the matroid $P$, then the same \emph{matroid} that was formerly called $P_\s$ is still a submatroid, but now it is represented by the matrix $P_{\s \cdot A^{-T}}$.
Likewise, appending or removing a column of zeroes to $P$ necessitates an extra zero be appended or removed from any $\s$ that serves as a direction for indicating a submatroid.
This is purely an issue of representation, and we consider that intuition about these objects is aided by taking an `abstractist' approach to the geometry, thinking of the matroid as the fundamental object.

\subsection{Matroids, unitaries, Hamiltonians, probability distributions}  \label{sect:three}

We have seen that a $k$-point matroid $P$ with a constant action value $\theta$ defines a Hamiltonian $\H_P$ on $k$ qubits (up to qubit ordering), which in turn defines a unitary $\exp( i\H_P )$, and thence a random variable, $\X$ on $\FF_2^k$, having an interesting probability distribution.  Yet it is possible that two different matroids could give rise to the same probability distribution, because two different Hamiltonians can give rise to the same unitary map.

Consider the case whereby $\theta = \pi/8$.  Notate the Pauli $X$ gate alternatively as $1-2x$, so that $x = (1-X)/2$ is represented by an integral matrix in the diagonal basis.  Then any term $\frac{\pi}8 X_aX_b \cdots X_c$ of $\H_P$ can be expanded into many terms by multiplying out the expression $\frac{\pi}8 (1-2x_a)(1-2x_b) \cdots (1-2x_c)$.  We need only keep the monomial terms of degree 3 or less in the $x$ variables, since the higher order terms will have coefficients a multiple of $2\pi$, and will therefore not be `seen' in the resulting unitary map.  Note that this expansion and truncation will cause the number of terms to `explode' only polynomially, not exponentially.  Now, rewriting each monomial back in terms of $\pi/8$ and $X$ variables, we end up with a Hamiltonian that has 3-qubit interactions at worst.  The resulting matroid is possibly larger than the initial one, but it possesses a `sparse' representative matrix whose every \emph{row} has Hamming weight at most 3.

This sort of trick can be useful in understanding the complexity of $\IQP$ algorithms, and in tailoring designs to particular architectures.  It also puts an equivalence class structure on the set of all (unweighted) binary matroids, which may be of independent interest.

\subsection{Entropy, and trivial cases}  \label{sect:entropy}

Because it will be useful later, we will define the R\'enyi entropy (collision entropy) of a random variable, before exemplifying Theorem~\ref{thm:bias} and proceeding with the main construction of this Chapter.

\medskip
\begin{definition}
The collision entropy, $S_2$, of a discrete random variable, $\X$, measures the randomness of the sampling process by measuring the likelihood of two (independent) samples being the same.  It is defined by 
\begin{eqnarray} \label{eqn:Renyi}
  2^{-S_2}  &=&  \sum_\x \Pr(\X=\x)^2
           ~~=~~  \Ex_\s \left[~ \Bigl(~ 2\Pr( \X \cdot \s^T = 0 ) - 1 ~\Bigr)^2 ~\right].
\end{eqnarray}
\end{definition}

And so there are a few `easy cases' for our $\X$ random variable of Lemma~\ref{lem:thing} that should be highlighted and dismissed up front~:

\medskip
\begin{lemma} 
For a constant-action X-program, if $\theta$ is\ldots

\begin{itemize}
  \item
\ldots a multiple of $\pi$, then the returned sample will always be $\0$. 
The collision entropy will be zero.
  \item
\ldots an odd multiple of $\pi/2$, then the returned sample will always be $\sum_{\p \in P} \p$.  
The collision entropy is zero.  
  \item
\ldots an odd multiple of $\pi/4$, then the collision entropy need not be zero, but the probability distribution will be classically simulable to full precision.
\end{itemize}
\end{lemma}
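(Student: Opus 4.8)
The plan is to treat the three cases of the lemma separately, the common thread being an analysis of the single-term unitary $\exp(i\theta\,X_\p)$ for the relevant values of $\theta$, where I write $X_\p:=\prod_{j:p_j=1}X_j$ for the Pauli operator attached to a program element $\p$. Since every term of $\H_P$ commutes with every other, it suffices to understand each factor of $\exp(i\H_P)=\prod_\p\exp(i\theta\,X_\p)$ in isolation, and since $X_\p$ is Hermitian with $X_\p^2=I$ we have $\exp(i\theta X_\p)=\cos\theta\cdot I+i\sin\theta\cdot X_\p$ throughout.

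For the first bullet, $\theta=m\pi$ gives $\exp(i\theta X_\p)=(-1)^m I$, a global phase; hence $\exp(i\H_P)\ket\0$ equals $\ket\0$ up to phase, the sample is deterministically $\0$, and $2^{-S_2}=\sum_\x\Pr(\X=\x)^2=1$, so $S_2=0$. For the second bullet, $\theta=(2m+1)\pi/2$ gives $\exp(i\theta X_\p)=\pm i\,X_\p$, so up to an overall phase $\exp(i\H_P)=\prod_\p X_\p$; applying this to $\ket\0$ flips precisely those qubits $j$ for which $\sum_{\p\in P}p_j$ is odd, so the sample is always the fixed vector $\sum_{\p\in P}\p$ and again $S_2=0$. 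The mild care required here is bookkeeping the $\pm$ and $i$ prefactors so that they genuinely drop out of the measurement probabilities.

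The third bullet carries the actual content. For $\theta=(2m+1)\pi/4$ one has $\cos\theta,\sin\theta\in\{\pm1/\sqrt2\}$, so up to global phase $\exp(i\theta X_\p)=\tfrac1{\sqrt2}(I\pm iX_\p)$, the ``square root of a Pauli'' gate, which is Clifford: conjugation by $\tfrac1{\sqrt2}(I\pm iX_\p)$ fixes any Pauli commuting with $X_\p$ and sends any anticommuting Pauli $Q$ to $\pm iX_\p Q$, a Pauli. A product of Clifford gates is Clifford, so $\exp(i\H_P)$ is a Clifford unitary and $\exp(i\H_P)\ket\0$ is a stabilizer state. By the Gottesman--Knill theorem (\cf~\cite{lit:AG0406,lit:AB06}, and the discussion of stabilizer-state simulation above) the computational-basis measurement statistics of a stabilizer state are classically computable to full precision: the support is an affine subspace of $\FF_2^n$, say of dimension $r$, on which the distribution is uniform. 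Hence the distribution is classically simulable exactly; but $2^{-S_2}=2^{r}\cdot2^{-2r}=2^{-r}$, i.e. $S_2=r$, which need not vanish --- for instance a one-element program $\p=(1,0,\ldots,0)$ with $\theta=\pi/4$ yields $\tfrac1{\sqrt2}(\ket0+i\ket1)$ on the first qubit, giving $S_2=1$.

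I do not expect a genuine obstacle; the only point needing a little attention is invoking the form of Gottesman--Knill that delivers \emph{exact} outcome probabilities (the flat affine-subspace support) rather than merely efficient sampling, so that the phrase ``classically simulable to full precision'' is literally justified. All of this is standard stabilizer-formalism bookkeeping.
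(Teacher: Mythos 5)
Your proof is correct and follows essentially the same reasoning as the paper's. The only difference is cosmetic: for the first two bullets the paper reads the conclusion off the path-sum expression (line~(\ref{eqn:distpaths})), observing that $\sin\theta=0$ (resp.\ $\cos\theta=0$) annihilates every term except the one giving $\x=\0$ (resp.\ $\x=\sum_{\p\in P}\p$), while you argue directly at the operator level via $\exp(i\theta X_\p)=\cos\theta\,I+i\sin\theta\,X_\p$ --- the same calculation, since that decomposition is exactly what underlies line~(\ref{eqn:distpaths}) --- and for the third bullet both proofs invoke the Gottesman--Knill theorem, with your version merely spelling out the Cliffordness check and the flat affine-subspace support a little more explicitly.
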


\begin{proof}
In the first case, considering line~(\ref{eqn:distpaths}), there is then a $\sin(\pi)=0$ factor in every term of the probability, except where $\x=\0$.

In the second case, considering again line~(\ref{eqn:distpaths}), there is then a $\cos(\pi/2)=0$ factor in every term, except where all the $\p$ vectors are summed together to give $\x$.  The same can also be deduced from Theorem~\ref{thm:bias}, which implies that $\x$ will be surely orthogonal to $\s$ exactly when $n_\s$ is even, \ie{} exactly when an even number of rows of $P$ are \emph{not} orthogonal to $\s$, \ie{} exactly when $\sum_{\p \in P} \p$ \emph{is} orthogonal to $\s$.

For the third case, if $\theta$ is an odd multiple of $\pi/4$, then all the gates in the program would be Clifford gates.  By the Gottesman-Knill theorem there is then a classically efficient method for sampling from the distribution, by tracking the evolution of the system using stabilisers, \etc
\end{proof}

For other sufficiently different values of the action parameter, classical intractibility becomes a plausible conjecture (\cf{} \cite{lit:SWVC08}). 
In particular, the remainder of this Chapter will specialise to the case $\theta = \pi/8$, since we are able to make all our points about the utility of $\IQP$ computing even with this restriction.

\medskip
\begin{conjecture}  \label{conj:entropy}
The expected collision entropy of the probability distribution of a randomly selected X-program of width $n$, with constant action $\pi/8$, scales as $n - O(1)$.
\end{conjecture}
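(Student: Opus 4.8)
The plan is to bound $\Ex_P[2^{-S_2}]$ directly by a second-moment calculation and then invoke convexity. Fix the model of randomness: an X-program on $n$ qubits with $k=k(n)$ elements, each of the form $(\pi/8,\p)$ with the $\p$'s drawn independently and uniformly from $\FF_2^n$ (the ``subset'' model, in which each of the $2^n$ possible rows is kept independently with probability $\tfrac12$, so that $k\approx 2^{n-1}$, is handled in exactly the same way and makes the quantitative hypothesis below automatic). Write $\hat\Pr(\s):=2\Pr(\X\cdot\s^T=0)-1$ for the bias of the output variable $\X$ in direction $\s$. Parseval's identity, which is precisely line~(\ref{eqn:Renyi}), gives $2^{-S_2}=2^{-n}\sum_{\s\in\FF_2^n}\hat\Pr(\s)^2$, the term $\s=\0$ contributing exactly $2^{-n}$. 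Since $x\mapsto-\log_2 x$ is convex, it suffices to establish $\Ex_P[2^{-S_2}]\le C\cdot 2^{-n}$ for an absolute constant $C$: Jensen then gives $\Ex_P[S_2]\ge n-\log_2 C$, while $\Ex_P[S_2]\le n$ is immediate because $\X$ is supported on the (at most $2^n$-element) row space of $P$. So the goal reduces to proving $\sum_{\s\ne\0}\Ex_P[\hat\Pr(\s)^2]=O(1)$.

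For fixed $\s\ne\0$, specialising the computation in the proof of Theorem~\ref{thm:bias} (see line~(\ref{eqn:wooo})) to action $\pi/8$ gives $\hat\Pr(\s)=\Ex_{\a\in\FF_2^n}[\exp(i\pi Y_\a/4)]$, where $Y_\a:=\sum_{j\,:\,\p_j\cdot\s^T=1}(-1)^{\p_j\cdot\a^T}$ is a signed count over the rows of $P$ not orthogonal to $\s$. Being real, $\hat\Pr(\s)^2=\Ex_{\a,\b}[\exp(i\pi(Y_\a-Y_\b)/4)]$, and since the rows are independent, $\Ex_P[\exp(i\pi(Y_\a-Y_\b)/4)]=\mu(\s,\a,\b)^k$ with $\mu(\s,\a,\b):=\Ex_{\p}[\exp(i\tfrac{\pi}{4}\,\mathbbm{1}[\p\cdot\s^T=1]((-1)^{\p\cdot\a^T}-(-1)^{\p\cdot\b^T}))]$, a quantity depending only on the $\FF_2$-linear relations among $\s,\a,\b$.

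The crux is a short case analysis of $\mu(\s,\a,\b)$ over the joint distribution of $(\p\cdot\s^T,\p\cdot\a^T,\p\cdot\b^T)$ for $\s\ne\0$: one finds $\mu(\s,\a,\b)=1$ if and only if $\a=\b$; $|\mu(\s,\a,\b)|=\sqrt{10}/4$ exactly when one of $\a,\b$ lies in $\{\0,\s\}$ and the other does not; and $|\mu(\s,\a,\b)|\le 3/4$ otherwise. Hence in the double sum for $\hat\Pr(\s)^2$ the $2^n$ diagonal pairs each contribute exactly $1$, the at most $4\cdot 2^n$ ``special'' pairs contribute at most $(\sqrt{10}/4)^k\le 1$ in modulus, and the fewer than $2^{2n}$ remaining pairs contribute at most $(3/4)^k$; averaging with weight $2^{-2n}$ yields $\Ex_P[\hat\Pr(\s)^2]\le 5\cdot 2^{-n}+(3/4)^k$. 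Summing over the $2^n-1$ nonzero $\s$ gives $\sum_{\s\ne\0}\Ex_P[\hat\Pr(\s)^2]\le 5+2^n(3/4)^k$, which is $O(1)$ as soon as $2^n(3/4)^k=O(1)$, i.e.\ as soon as $k\ge n/\log_2(4/3)+O(1)$ (in particular $k\ge 3n$ suffices, and the subset model is automatically in range). Tracing the constants gives $\Ex_P[2^{-S_2}]\le 7\cdot 2^{-n}$, hence $\Ex_P[S_2]\ge n-\log_2 7$, which together with $\Ex_P[S_2]\le n$ completes the argument modulo the hypothesis on $k$.

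The main obstacle is not the moment computation---the two preceding paragraphs are essentially bookkeeping once Theorem~\ref{thm:bias} is available---but identifying the \emph{right} reading of ``a randomly selected X-program of width $n$'', and with it the relationship between the number of elements $k$ and the width $n$. The calculation shows the bias-squared sum behaves like a genuine $O(1)$ term plus $2^n(3/4)^k$, so the statement ``$S_2=n-O(1)$'' requires $k$ to grow at least like $cn$ with $c>1/\log_2(4/3)\approx 2.41$; for $k$ between $n$ and roughly $2.41n$ the same computation in fact shows the expected collision entropy is only $\Theta(n)$ with a leading constant strictly below $1$, so some density hypothesis on the program is unavoidable (and the subset model is a natural one that makes it disappear). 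A secondary difficulty is that upgrading from an in-expectation statement to a with-high-probability one would require controlling higher moments of $\hat\Pr(\s)$, or equivalently proving that the weight-enumerator data of the matroid minors $P_\s$ concentrate---which for the $\pi/8$ weight distributions is not obviously true without additional effort.
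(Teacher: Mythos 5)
The statement you are proving is left \emph{open} in the paper: it appears as Conjecture~\ref{conj:entropy}, with the surrounding text explicitly noting that proving it ``is future work.'' There is therefore no paper proof to compare against; your second-moment argument is new work. For the model you fix---$k$ rows drawn i.i.d.\ uniformly from $\FF_2^n$---the core computation checks out. The Jensen reduction to bounding $\sum_{\s\ne\0}\Ex_P\bigl[(2\Pr(\X\cdot\s^T=0)-1)^2\bigr]$ is clean, the factorisation into $2^{-2n}\sum_{\a,\b}\mu(\s,\a,\b)^k$ follows from line~(\ref{eqn:wooo}) and row-independence, and I have re-derived your case analysis of $|\mu(\s,\a,\b)|$: value $1$ on the diagonal $\a=\b$; value $\sqrt{10}/4$ when exactly one of $\a,\b$ lies in $\{\0,\s\}$; and value $3/4$, $1/\sqrt2$ or $1/2$ in the remaining off-diagonal cases (the two latter occurring when $\a+\b=\s$), all $\le 3/4$. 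The resulting $\Ex_P[2^{-S_2}]\le 7\cdot 2^{-n}$ for $k\ge n/\log_2(4/3)$, and hence $\Ex_P[S_2]\ge n-\log_2 7$, is correct as far as I can tell.

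Two caveats, both of which you partially flag yourself. First, your remark that ``the same computation in fact shows'' the expected entropy has leading constant strictly below $1$ when $n\le k\lesssim 2.41n$ is not established by what you wrote: you have only an \emph{upper} bound on $\Ex_P[2^{-S_2}]$, and Jensen converts that into a \emph{lower} bound on $\Ex_P[S_2]$. An upper bound on $\Ex_P[S_2]$ would need a matching lower bound on $\Ex_P[2^{-S_2}]$ together with a concentration argument, since $2^{-x}$ is convex and Jensen runs the wrong way for that direction. Hedge or delete that sentence. Second, and more material to the paper's actual use of the conjecture: the ensemble you analyse is not the one deployed in the protocol of \S\ref{sect:recommend}, where the published matrix embeds a QR-code submatroid among $\sim q$ obfuscating rows orthogonal to a hidden $\s$, followed by column reduction. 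For that hidden direction the matroid minor $P_\s$ is highly structured, not random, and for the other directions the rows are conditioned to avoid $\s$; so your calculation is good evidence but does not directly certify that Alice's actual challenge instances have collision entropy $n-O(1)$. If this argument is put forward as a proof of the conjecture in the sense the paper needs it, that residual gap should be stated explicitly.
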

\medskip

This conjecture is perhaps not directly relevant to the `hardness' of the $\IQP$ paradigm itself, but it is implicitly relevant to the design of the kind of hypothesis test that can legitimately be used to constitute the final part of the interactive proof game discussed next.  It is future work to prove this conjecture and gain a better understanding of random matroids in the context of quantum computation.

\section{Interactive Protocol}  \label{sect:protocol}

One would naturally like to find some `use' for the ability to sample from the probability distribution that arises from a temporally unstructured quantum computation; a `task' or `proof' that can be completed using \eg{} an X-program, which could presumably not be completed by purely classical means.
In this section we develop our main construction towards that goal~: a two-player interactive protocol game, with classical message passing, in which a Prover uses an $\IQP$ oracle simply to demonstrate that he does have access to an $\IQP$ oracle.

Perhaps such algorithms constructed in $\BPP^\IQP$ will be found to be the \emph{simplest} algorithms for demonstrating quantum computing, provided it is believed that they cannot be efficiently simulated classically.  Certainly such algorithms will stand a good chance of being much simpler, and requiring far fewer qubits, than are the algorithms in $\BPP^\cFS$ (or $\FH_2$, \cf{} Chapter~\ref{chap:FH}) which are for solving reasonably hard instances of certain $\NP$ problems.

\subsection{At a glance}

This section gives a brief overview of our protocol.
\emph{Alice} plays the role of the Challenger/Verifier, while \emph{Bob} plays the role of the Prover.
There are three aspects of design involved in specifying an actual ``Alice \& Bob'' game~:

\begin{itemize}
  \item[A)]   a code/matroid construction, for Alice to select a problem $P$, to send to Bob;
  \item[B)]   an architecture or technique by which Bob is able to take samples from the $\IQP$ distribution of the challenge he receives, to send back to Alice;
  \item[A')]  an hypothesis test for Alice to use to verify (or reject) Bob's attempt.
\end{itemize}

Alice uses secret random data to obfuscate a `causal' matroid $P_\s$ inside a larger matroid $P$, and the latter she publishes (as a matrix) to Bob.  Bob interprets matrix $P$ as an X-program to be run several times, with $\theta = \pi/8$.  He collects the returned samples, and sends them to Alice.  Alice then uses her secret knowledge of `where' in $P$ the special $P_\s$ matroid is hidden, in order to run a statistical test on Bob's data, to validate or refute the notion that Bob has the ability to run X-programs.

This application is perhaps the simplest known protocol, requiring (say) $\sim200$ qubits, that could be expected to convince a skeptic of the existence of some \emph{computational} quantum effect.  The reason for this is that there seems to be no classical method to fake even a \emph{classical transcript} of a run of the interactive game between Challenger and Prover, without actually \emph{being} (or subverting the secret random data of) the classical Challenger.  In this sense, the verification may be said to be ``device-independent''.

In \S\ref{sect:heuristics} we make an analysis\footnotemark{} of some (best-known) classical cheating strategies for Bob, though these are shown to be insufficient in general.
\footnotetext{The details of precisely how to make a good hypothesis test are omitted from this work for the sake of brevity, but sourcecode is available.}

\subsection{More details}

Consider therefore the following game, played between Alice and Bob.
Alice, also called the Challenger/Verifier, is a classical player with access to a private random number generator.
Bob, also called the Prover, is a supposedly quantum player, whose goal is to convince Alice that he can access an $\IQP$ oracle, \ie{} run X-programs.  The rules of this game are that he has to convince her simply by sending classical data, and so in effect Bob offers to act as a remote $\IQP$ oracle for Alice, while Alice is initially skeptical of Bob's true $\IQP$ abilities.

\subsubsection*{Alice's challenge}

The game begins with Alice choosing some code $\Code_\s$ that has certain properties amenable to her analysis. 
In particular, she chooses the code $\Code_\s$ in such a way that all the known classical cheating strategies of~\S\ref{sect:heuristics} are defeated.  Details are given in~\S\ref{sect:recommend}.

She then finds a matrix $P_\s$ whose columns generate the code (not necessarily as a basis), and ensures that there is some $\s$ that is not orthogonal to any of the rows of $P_\s$.  The vector $\s$ should be thought of not as a structural property of the code $\Code_\s$, but as a secret `locator' that she can use to `pinpoint' $P_\s$ even after it has later been obfuscated. 

\emph{Obfuscation} of $P_\s$ is achieved by appending arbitrary rows that \emph{are} orthogonal to $\s$.  This gives rise to matrix $P$.  The matroid $P$ has $P_\s$ as a submatroid, in the sense that removal of the correct set of rows will recover $P_\s$.
Alice publishes to Bob a representation of matroid $P$ that hides the structure that she has embedded.  Random row permutations are appropriate, and reversible column operations likewise leave the matroid invariant (though the latter will affect $\s$ and must therefore be tracked by Alice).

\subsubsection*{Bob's proof}

Bob, by hypothesis being capable of sampling from an $\IQP$ distribution, may interpret the published $P$ as an X-program, to be run with the (constant) action set to $\theta = \pi/8$ (say).  He will be able to generate random vectors which independently have the correct bias in the direction (unknown to him) $\s$, \ie{} the correct probability of being orthogonal to Alice's secret $\s$, in accordance with Theorem~\ref{thm:bias}.  Although he may still be entirely unable to recover this $\s$ from such samples, he nonetheless can send to Alice a list of these samples as proof that he is `$\IQP$-capable'.  
Note that Bob's strategy is error-tolerant, because if each run of the $\IQP$ oracle were to use a `noisy' $\theta$ value, then the overall proof that he generates will still be valid, providing the noise is small and unbiased and independent between runs.  
Note also that Bob can manage several runs in one oracle call, if desired, simply by concatenating the matrix $P$ with itself diagonally.  That is to say, we even avoid \emph{classical} temporal structure (\emph{adaptive feed-forward}) on Bob's part, and so can regard his part of the protocol as lying within $\P^{\IQP[1]}$.

\subsubsection*{Alice's verification}

Since Alice knows the secret value $\s$, and can presumably compute the value $\Pr(\X \cdot \s^T=0)$ from the code's weight enumerator polynomial (see Theorem~\ref{thm:bias} and recall that she is free to choose any $\Code_\s$ that suits her purpose), it is not hard for her to use a hypothesis test to confirm that the samples Bob sends are \emph{commensurate} with having been sampled independently from the same distribution that an X-program generates.
That is to say, Alice will not try to test whether Bob's data \emph{definitely fits the correct $\IQP$ distribution,} but she will ensure that it has the particular characteristic of a strong bias in the secret direction $\s$.  This enables her to test the null hypothesis that Bob is cheating, from the alternative hypothesis that Bob has non-trivial quantum computational power.  

\emph{This requires belief in several conjectures on Alice's part.}
She must believe that there is a classical separation between quantum and classical computing; in particular that an $\IQP$ oracle is not classically efficiently approximately simulable---at least she must believe that Bob doesn't know any good simulation tricks.
And she must believe that her problem is hard---at least she should believe that the problem of identifying the location of $P_\s$ within $P$ is not a $\BPP$ problem---on the assumption that the matroid $P_\s$ is known.

If he passes her hypothesis test, Bob will have `proved' to Alice that he ran a quantum computation on her program, provided she is confident that there is no feasible way for Bob to simulate the `proof' data classically efficiently, \ie{} provided she has performed her hypothesis test correctly against a plausibly best null hypothesis.

Since Alice will test to see whether Bob's data has a strong bias in direction~$\s$ (known only to her), she should first of all ensure that Bob's data does not have a strong bias in many directions simultaneously.  This is easily done by removing all  `short circuits' (\ie{} all the empty rows and all the duplicate rows) from Bob's data, before testing it.  Bob's data would not be expected to contain short circuits if the collision entropy of the distribution were high, and so Conjecture~\ref{conj:entropy} is relevant in this sense~: we believe that the collision entropy of an $\IQP$ distribution formed as described in~\S\ref{sect:recommend} will indeed be large.

\subsubsection*{Significance}

This kind of interactive game could be of much significance to validation of early quantum computing architectures, since it gives rise to a simple way of `tomographically ascertaining' the actual presence of at least \emph{some} quantum computing, modulo some relatively basic complexity assumptions, in a `device-independent' fashion.  
In this sense it is to quantum computation what Bell violation experiments are to quantum communication.
(We have serendipitously identified a construction for which the probability gap---quantum $85.4\%$ over classical $75\%$---precisely matches the gap available from Bell's inequalities.  See Lemma~\ref{lem:QRCode}, \S\ref{sect:recommend}.)

Note that this `testing concept' does not use the $\IQP$ paradigm to compute any data that is unknown to \emph{everyone} (since Alice must know $\s$ if her verification is to work), nor does it directly provide Bob with any `secret' data that could be used as a witness to validate an $\NP$ language membership claim (Bob doesn't really `learn' anything from his experiments).  Its only effect is to provide Bob with data that he \emph{can't} apparently use for any purpose other than to pass on to Alice as a `proof' of $\IQP$-capability.  It is an open problem to find something more commonly associated with computation---perhaps deciding a decision language, for example---that can be achieved specifically with $\IQP$ oracle calls.

\subsection{Recommended construction method}  \label{sect:recommend}

This section covers a specific example of a construction methodology (with implicit test methodology) for Alice, which we conjecture to be asymptotically secure (against cheating Prover) and efficient (for both Prover and Verifier).
We emphasise here that it seems not unreasonable for Alice to \emph{believe} that Bob can have no classical cheating strategy \emph{so long as} none such has been published nor proven to exist, and so our protocol may still serve as a demonstration (if not a proof) of a genuinely quantum computing phenomenon, despite the lack of proof of any simulation conjecture.

\subsubsection*{Recipe for codes}

The family of codes that we suggest Alice should employ within the context of the game outlined above are the \emph{quadratic residue codes}.  These will be shown to have the significant property that there is a non-negligible gap between the quantum- and best-known-classical-approximation expectation values for the bias in the secret direction, both of which are significantly below unity.  

Consider a quadratic residue code over $\FF_2$ with respect to the prime $q$, chosen so that $q+1$ is a multiple of eight.  The rank of such a code is $(q+1)/2$, and the length is $q$.
A quadratic residue code is a cyclic code, and can be specified by a single cyclic generator.  There are several ways of defining these, but the simplest definition for our purposes is as follows.
\begin{definition}
  The \emph{quadratic residue code} (QR-Code) of prime length $q$, where $8$ divides $q+1$, is a cyclic code over $\FF_2$ generated by the codeword that has a 1 in the $j$th place if and only if the Legendre symbol $\left(\frac{j}{q}\right)$ equals 1 (\ie{} if and only if $j$ is a non-zero quadratic residue modulo $q$).
\end{definition}
For example, if $q=7$ (the smallest example) then the non-zero quadratic residues modulo $q$ are $\{1,2,4\}$, and so the quadratic residue code in question is the rank-$4$ code spanned by the various rotations of the generator $(0,1,1,0,1,0,0,0)^T$.  

\medskip
\begin{lemma}  \label{lem:QRCode}
When $q$ is a prime and 8 divides $q+1$, then the quadratic residue code $\Code$ of length $q$ has rank $(q+1)/2$, and it satisfies
\begin{eqnarray}  \label{eqn:QRCodestats1}
  \Ex_{\c \sim \Code} \left[~ \cos^2\Bigl(~ \frac\pi8 ( q ~-~ 2 \cdot wt(\c) ) ~\Bigr) ~\right]
  &=& \cos^2( \pi/8 ) ~~=~~ 0.854\ldots
\end{eqnarray}
Moreover, it also satisfies 
\begin{eqnarray}  \label{eqn:QRCodestats2}
  \Pr\Bigl(~ \c_1^T \cdot \c_2 = 0 ~|~ \c_1, \c_2 \sim \Code ~\Bigr)
  &=& 3/4 ~~=~~ 0.75,
\end{eqnarray}
which is relevant to certain classical strategies (described in \S\ref{sect:heuristics}).
\end{lemma}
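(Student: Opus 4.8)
The plan is to establish the two statistical facts about quadratic residue codes by exploiting their well-known algebraic structure, in particular the fact that the extended quadratic residue code (obtained by adding an overall parity check) is self-dual and doubly even when $8 \mid q+1$. First I would recall the standard coding-theory facts: for $q$ prime with $q \equiv -1 \pmod 8$, the QR-code $\Code$ of length $q$ has rank $(q+1)/2$, and its extension $\hat{\Code}$ of length $q+1$ is self-dual, so every codeword of $\hat{\Code}$ has weight divisible by $4$. This is classical (see \cite{book:Ox92}-adjacent literature on self-dual codes). The key consequence is that for every codeword $\c \in \Code$, either $wt(\c) \equiv 0 \pmod 4$ (when the parity bit is $0$, i.e. $wt(\c)$ even) or $wt(\c) \equiv 3 \pmod 4$ (when the parity bit is $1$, i.e. $wt(\c)$ odd, so $wt(\c)+1 \equiv 0 \pmod 4$). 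In either case $wt(\c) \equiv 0$ or $3 \pmod 4$.

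For line~(\ref{eqn:QRCodestats1}), I would compute $\cos^2\bigl(\frac\pi8(q - 2wt(\c))\bigr)$ for the two possible residues of $wt(\c)$ modulo $4$. Since $q \equiv -1 \equiv 7 \pmod 8$, write $q - 2wt(\c) \pmod{16}$: if $wt(\c) \equiv 0 \pmod 4$ then $q - 2wt(\c) \equiv 7 \pmod 8$ and $\cos^2(\frac{7\pi}{8}) = \cos^2(\frac\pi8)$; if $wt(\c) \equiv 3 \pmod 4$ then $2wt(\c) \equiv 6 \pmod 8$, so $q - 2wt(\c) \equiv 1 \pmod 8$ and $\cos^2(\frac{\pi}{8})$ again. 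Hence the integrand is the \emph{constant} $\cos^2(\pi/8)$ for every codeword, so the expectation is $\cos^2(\pi/8) = 0.854\ldots$ trivially. (One should double-check the two cases carefully; the fact that the answer is the same constant for both weight-classes is exactly the point, and it relies on the doubly-even property of $\hat{\Code}$ together with $q \equiv 7 \pmod 8$.)

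For line~(\ref{eqn:QRCodestats2}), I would use self-orthogonality of the extended code. For $\c_1, \c_2 \in \Code$, the $\FF_2$-inner product $\c_1^T \c_2$ over the $q$ coordinates equals the extended inner product minus the product of the two parity bits; but $\hat{\Code}$ self-dual means the extended inner product is $0$, so $\c_1^T \c_2 = p(\c_1)p(\c_2)$ where $p(\c) \in \FF_2$ is the parity bit (equivalently $wt(\c) \bmod 2$). Thus $\c_1^T \c_2 = 1$ iff both $\c_1$ and $\c_2$ have odd weight. It remains to compute $\Pr(wt(\c) \text{ odd} \mid \c \sim \Code)$. Here I would invoke the weight-enumerator symmetry of QR-codes (or directly: the map $\c \mapsto \c + \mathbf{1}$ where $\mathbf 1$ is the all-ones vector; $\mathbf 1 \in \Code$ since $q$ is odd and $\Code$ contains the all-ones word, and this map flips parity while being a bijection of $\Code$), giving exactly half the codewords odd weight --- wait, that would give $\Pr(\text{both odd}) = 1/4$, not the claimed $3/4$. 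I must instead observe that $\c_1^T\c_2 = 0$ iff \emph{not both} are odd, which happens with probability $1 - (1/2)(1/2) = 3/4$. So the hard part is really just getting the parity-distribution right: I expect the main obstacle to be verifying that exactly half of the codewords of $\Code$ have even weight, which follows from $\mathbf{1} \in \Code$ having odd weight $q$ (so translation by $\mathbf 1$ is a parity-reversing involution on $\Code$), and then assembling the pieces correctly to land on $3/4$ rather than $1/4$.

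In summary, both parts reduce to elementary consequences of the self-duality and double-evenness of the extended QR-code, plus the congruence $q \equiv 7 \pmod 8$; the only real care needed is in the case analysis of $\cos^2(\frac\pi8(q-2wt(\c)))$ modulo the relevant period and in correctly reading off the probability $3/4$ from the parity structure. I would present line~(\ref{eqn:QRCodestats1}) first since it is the cleaner computation, then line~(\ref{eqn:QRCodestats2}), citing standard references for the structural facts about quadratic residue codes rather than reproving them.
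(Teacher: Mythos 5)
Your proof is correct and follows essentially the same route as the paper: invoke the standard facts that the extended QR-code of length $q+1$ is self-dual and doubly even, deduce $wt(\c) \equiv 0$ or $3 \pmod 4$ for the unextended code, and read off both claims. Your treatment of line~(\ref{eqn:QRCodestats1}) is in fact a small tidy-up of the paper's: since $q\equiv 7 \pmod 8$, both residue classes of $wt(\c)\bmod 4$ make $q-2\,wt(\c)\equiv \pm1 \pmod 8$, so the integrand $\cos^2\bigl(\tfrac\pi8(q-2\,wt(\c))\bigr)$ is the constant $\cos^2(\pi/8)$ for every codeword; the paper's ``half the time 0 and half the time $-1$'' equidistribution claim is therefore not actually needed for that line. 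Your involution $\c\mapsto\c+\mathbf{1}$ also supplies a clean justification for the parity equidistribution that the paper asserts without proof, and your derivation $\c_1^T\c_2 = p(\c_1)p(\c_2)$ from self-duality of the extension, followed by the complement $1-\tfrac14=\tfrac34$, matches the paper's ``non-orthogonal iff both odd-parity'' argument exactly.
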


\begin{proof}
The proof for the rank of the code is a well established result from classical coding theory (see~\cite{book:MandS}).  Other classical results of coding theory include that quadratic residue codes are a parity-bit short of being self-dual and doubly even.  That is, the extended quadratic code, with length $q+1$, obtained by appending a single parity bit to each codeword, has every codeword weight a multiple of 4 and every two codewords orthogonal.  

For line~(\ref{eqn:QRCodestats1}) this means that the (unextended) code has codeword weights which, modulo 4, are half the time 0 and half the time $-1$.  On putting these values into the left side of the formula, we immediately obtain the right side.  
For line~(\ref{eqn:QRCodestats2}) this means that in the (unextended) code, any two codewords are non-orthogonal if and only if they are both odd-parity, which happens a quarter of the time; whence the formula follows.
\end{proof}

The corollary here is that if Alice uses one of these codes for her `causal' $\Code_\s$, then if Bob runs a series of X-programs (with constant $\theta = \pi/8$) described by the (larger) matrix $P$, the data samples he recovers should be orthogonal to the hidden $\s$ about $85.4\%$ of the time (\cf{} Theorem~\ref{thm:bias}); whereas if Bob tries to cheat using the classical strategy outlined in \S\ref{sect:heuristics}, then his data samples will tend to be orthogonal to the hidden $\s$ only about $75\%$ of the time (\cf{} Lemma~\ref{lem:classprob}).  
Alice's hypothesis test therefore basically consists in measuring this single characteristic, \emph{after having filtered duplicate and null data samples from Bob's dataset.}
We conjecture that Bob has no pragmatic way of boosting these signals, at least not without feedback from Alice or by expending exponential computing resources.
(In fact, it would suffice for Alice to take any singly-punctured doubly-even code for this application.)

Note that \emph{with} exponential time on his hands, Bob could choose to simulate classically an $\IQP$ oracle, in order to obtain a dataset with a bias in direction $\s$ that is approximately $85.4\%$.  Alternatively, he could consider every possible $\s$ in turn, and test to see whether the matroid obtained by deleting rows orthogonal to his guess is in fact correspondent to a quadratic residue code, assuming he knew that this had been Alice's strategy.  For these reasons, $q$ should be fairly large in any practical example (say a few hundred), to preclude such exhaustive cheating strategies.

\subsubsection*{Recipe for obfuscation}

Having chosen $q$ as outlined above, and constructed a $q$-by-$(q+1)/2$ binary matrix generating a quadratic residue code, Alice needs to obfuscate it.  The easiest way to manage this process is not to start with a particular secret $\s$ in mind, but rather to recognise the obfuscation problem as a \emph{matroid} problem, proceeding as follows~:
\begin{itemize}
  \item 
Append a column of 1s to the matrix~: this does not change the code spanned by its columns since the all-ones (full-weight) vector is always a codeword of a quadratic code.
Other redundant column codewords may also be appended, if desired.  
  \item
Append many (say $q$) extra rows to the matrix, each of which is random, subject to having a zero in the column lately appended.  This gives rise to a $2q$-point matroid, and  ensures that there now \emph{is} an $\s$ such that the causal submatroid (quadratic residue matroid) is defined by non-orthogonality of the rows to that $\s$.
  \item
Reorder the rows randomly.  This has no effect on the matroid that the matrix represents, nor on the hidden causal submatroid.  Nor does it affect $\s$, the `direction' in which the sumbatroid is hidden.
  \item
Now column-reduce the matrix.  There is no (desirable) structure within the particular form of the matrix before column-reduction, nothing that affects either codes or matroids.  Echelon-reduction provides a canonical representative for the overall matroid, while stripping away any redundant columns that would otherwise cost an unnecessary qubit, when interpreted as an X-program.  By providing a canonical representative, it closes down the possibility that information in Alice's original construction of a basis for her causal code might leak through to Bob, which might be useful to him in guessing $\s$.  Rather more importantly, this reduction actually serves to \emph{hide} $\s$.  (We can be sure by zero-knowledge reasoning that this hiding process is random~: echelon reduction is canonical and therefore supervenes any column-scrambling process, including a random one.)
  \item
Finally, one might sort the rows, though this is unnecessary.  The resulting matrix is the one to publish.  It will have at least $(q+1)/2$ columns, since that is the rank of the causal submatroid hidden inside.  
\end{itemize}

\subsection{Mathematical problem description} \label{sect:mathythingo}

What this method of obfuscation amounts to---mathematically speaking---is a situation whereby for each suitable prime $q$, we start by acknowledging a particular (public) $q$-point binary matroid $Q$, \emph{viz} the one obtained from the QR-Code of length~$q$.  Then an `instance' of the obfuscation consists of a published $2q$-point (say) binary matroid $P$; and there is to be a hidden `obfuscation' subset $O$ such that $Q = P\backslash O$; and the practical instances occur with $P$ chosen effectively at random, subject only to these constraints.  (One could choose to make $O$ bigger than $q$ points if that were desired.)  This has the feel of a fairly generic hidden substructure problem, so it seems likely that it should be \textbf{NP}-hard to determine the location of the hidden $Q$, given $P$ and the appropriate promise of $Q$'s existence within.  
More syntactically, we should like to prove that it is \textbf{NP}-complete to decide the related matter of \emph{whether or not} $P$ is of the specified form, given only a matrix for $P$.  Clearly this problem is in \textbf{NP}, since one could provide $Q$ \emph{in the appropriate basis} as an explicit witness.  We conjecture this problem to be \textbf{NP}-complete.

\medskip
\begin{conjecture}  \label{conj:NPc}
The language of matroids $P$ that contain a quadratic-residue code submatroid $Q$ \emph{by point deletion}, where the size of $Q$ is at least half the size of $P$, is \textbf{NP}-complete under polytime reductions. 
\end{conjecture}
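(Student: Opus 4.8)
\textbf{Proof proposal for Conjecture~\ref{conj:NPc}.}

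\emph{Membership in} $\NP$ is essentially the observation already made above. A witness consists of the set $O$ of rows to be deleted together with a short certificate that $P \backslash O$ represents the quadratic-residue matroid of its own length: a row-permutation and an invertible $\FF_2$ column transformation putting the surviving submatrix into cyclic form, plus a check that the resulting cyclic generator has support exactly the non-zero quadratic residues modulo $q := |P \backslash O|$. Legendre symbols are computable in polynomial time, $q$ must be a prime with $8 \mid q+1$ (otherwise reject), and the promise $|Q| \ge |P|/2$ is self-certifying from the row counts, so everything is checkable in polynomial time and the language lies in $\NP$.

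\emph{$\NP$-hardness: the plan.} The plan is to reduce from a pattern-containment problem for binary matroids. Let $\textsc{Restr}$ denote: given binary matrices $M$ and $N$, decide whether $M$ has a restriction (point-deletion minor) isomorphic to the matroid of $N$. $\textsc{Restr}$ is $\NP$-complete --- it contains $\textsc{Clique}$, since for the $3$-connected template $N = M(K_k)$, Whitney's $2$-isomorphism theorem gives that $G$ contains a $k$-clique if and only if the cycle matroid $M(G)$ has a restriction isomorphic to $M(K_k)$. The work is therefore to remove the freedom of choosing $N$: in Conjecture~\ref{conj:NPc} the template is pinned to be a quadratic-residue code. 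First I would dispose of the bookkeeping obstacle: each instance of size $s$ must fix a prime $q \equiv 7 \pmod 8$ with $q = \Theta(s)$, which is available with only polynomial overhead by effective bounds on primes in arithmetic progressions, so the reduction stays polynomial and the clause $|Q| \ge |P|/2$ can be met exactly (say $|P| = 2q$, $|O| = q$).

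\emph{The core construction, and the main obstacle.} For the genuine content one designs, from a $3$-SAT (or $\textsc{Exact\ Cover}$) instance $\varphi$, a $2q$-row binary matrix $P_\varphi$ together with a ``planted'' $q$-subset of its rows that yields the quadratic-residue matroid of length $q$ precisely when $\varphi$ is satisfiable, the satisfying assignment being encoded by \emph{which} $q$ rows survive. The ``yes'' direction is then just verification of the plant. The hard direction --- and the step I expect to be the main obstacle --- is completeness of ``no'': one must rule out \emph{every} point-deletion of $P_\varphi$ down to $q$ rows producing a quadratic-residue matroid, not merely the designed ones. Since matroid isomorphism is a global condition and quadratic-residue codes carry a large automorphism group ($\mathrm{PGL}_2(q)$ on the extended code) together with a rich supply of sub-configurations, there are exponentially many candidate deletion sets to defeat at once. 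The intended mechanism is a rigidity lemma: thread a cheaply computable $\FF_2$-matroid invariant through any candidate deletion --- for instance the pairwise ``parity of support intersection'' statistic, pinned to $3/4$ for a quadratic-residue code (Lemma~\ref{lem:QRCode}), or the weight enumerator reduced modulo small integers, constrained by the code being a parity-bit short of doubly even and self-dual --- and show these invariants force any admissible survivor set to be a $\mathrm{PGL}_2(q)$-image of the planted one, whence the $\mathrm{SAT}$ reading is forced. Proving such a lemma --- essentially, that no accidental quadratic-residue matroid hides inside the gadget padding --- seems to demand new structural facts about which matroids occur as restrictions of binary matrices assembled from quadratic-residue blocks, and this is where I expect the argument to be delicate, since gluing quadratic-residue codes by parallel connection or $2$-sums destroys the quadratic-residue property and so a naive gadget-by-gadget simulation of a general $\textsc{Restr}$ template is unlikely to work. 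The fallback, should that route be blocked, is to reduce instead from an $\NP$-complete coding problem calibrated to quadratic-residue codes themselves --- a hidden-structure/subset-selection problem with the weight parameter and $q$ chosen to align --- trading generic gadgetry for number-theoretic control of the template, although the ``exclude all accidental copies'' obstacle then reappears in a more arithmetic form.
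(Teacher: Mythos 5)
The statement you were asked to prove is explicitly labelled a \emph{conjecture} in the paper, and the paper does not prove it. What the paper establishes is precisely the easy direction---membership in $\NP$---in a single sentence (``Clearly this problem is in $\NP$, since one could provide $Q$ in the appropriate basis as an explicit witness''), and then remarks on the structural analogy with the Clique/Independent-Set/Node-Cover theorem for graphs before stating ``Nonetheless, we know of no way to adapt the proof to fit Conjecture~\ref{conj:NPc}.''

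Your proposal is therefore correct as far as it goes, and matches the paper on the part the paper actually proves: your $\NP$-membership argument is a fleshed-out version of the author's one-line observation (you add the useful details that $q$ must be prime with $8 \mid q+1$, that Legendre symbols are polytime-computable, and that the size promise is self-certifying from row counts). Your $\NP$-hardness plan then heads in exactly the direction the paper gestures at---reducing from a Clique-flavoured containment problem, with the $3$-connectedness of $M(K_k)$ and Whitney's theorem playing the role the paper's Proposition plays---but you run into the same wall the author ran into, and you say so: the rigidity lemma that would exclude ``accidental'' quadratic-residue restrictions of the padded matrix is left unproven, and you correctly flag it as the place where new structural information about QR codes as binary-matroid restrictions would be needed. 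Your observation that parallel connection and $2$-sums destroy the QR property, so a gadget-by-gadget reduction is unlikely to compose, is a concrete version of why the author's ``we know of no way to adapt the proof.'' In short: there is a genuine gap, it is the $\NP$-hardness half, you have identified it accurately and honestly rather than papering over it, and it is the same gap that kept this statement a conjecture in the paper. Nothing in your sketch is wrong, but it does not constitute a proof, and you should present it as a plan with an identified open step rather than as a completed argument.
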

\medskip

These sorts of conjecture are apparently independent of conjectures about hardness of classical efficient $\IQP$ simulation, since they indicate that \emph{actually identifying the hidden data} is hard, even (presumably) for a universal quantum computer.
And even should this conjecture prove false, we know of no reason to think that a quantum computer would be much better than a classical one at finding the hidden $Q$, notwithstanding Grover's quadratic speed-up for exhaustive search. 

One might compare the structure of Conjecture~\ref{conj:NPc} to that of the following important \emph{theorem} from graph theory~:

\medskip
\begin{proposition}
The language of graphs $G$ that contain a complete graph $K$ \emph{by vertex deletion}, where the size of $K$ is at least half the size of $G$, is \textbf{NP}-complete under polytime reductions. 
\end{proposition}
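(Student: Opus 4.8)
The plan is to reduce from a known \textbf{NP}-complete problem, the most natural candidate being \textsc{Clique} itself (deciding whether a graph $H$ on $v$ vertices contains a clique on $\lceil v/2 \rceil$ vertices, or more simply the standard parameterised \textsc{Clique}), and then pad. First I would observe that membership in \textbf{NP} is immediate: a witness is a list of the $\ge |G|/2$ vertices whose induced subgraph is complete, and verifying that an induced subgraph on a given vertex subset is a complete graph takes polynomial time. So the work is entirely in the hardness direction.

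For hardness, the key idea is a padding construction. Given an instance $(H,k)$ of \textsc{Clique} where $H$ has $v$ vertices, I would build a graph $G$ as follows: take $H$, and adjoin a large clique $K_N$ on fresh vertices (with $N$ chosen as a polynomial in $v$ and $k$), joined to $H$ in a carefully chosen way — the cleanest choice is probably to make the new clique vertices \emph{completely} adjacent to each other and completely adjacent to every vertex of $H$, i.e. form the join $H * K_N$. Then a clique in $G$ on a vertex set $S$ consists of a clique $S\cap V(H)$ in $H$ together with \emph{any} subset $S \cap V(K_N)$ of the padding vertices, since the padding vertices are universal. Thus $G$ has a clique of size $m$ by vertex deletion iff $H$ has a clique of size $\ge m - N$. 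Choosing $N$ so that ``$G$ has a clique on at least half its vertices'' translates exactly into ``$H$ has a clique of size $\ge k$'' is then a matter of solving $m \ge |G|/2 = (v+N)/2$ against $m - N \ge k$; setting $N := v + 2k$ (or similar — the exact constant needs checking) makes the threshold $\lceil(v+N)/2\rceil$ land so that the required residual clique in $H$ has size precisely $k$. One must be slightly careful: a clique in $G$ could also be entirely inside $K_N$, giving a clique of size $N < (v+N)/2$ when $v>N$, so one needs $N \ge v$, which the choice $N = v+2k$ satisfies; this guarantees the ``free'' part alone never reaches the threshold and the reduction is faithful in both directions.

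The step I expect to be the main obstacle — or at least the one demanding the most care — is getting the arithmetic of the padding exactly right so that the half-size threshold is \emph{tight} in both directions simultaneously: the reduction must map yes-instances to yes-instances and no-instances to no-instances, and since the statement fixes the ratio at exactly one half, there is no slack to absorb off-by-one errors. A secondary subtlety is that \textsc{Clique} is usually stated with $k$ part of the input and not tied to $|V(H)|$; if one wants a clean self-contained reduction it may be cleaner to first reduce to ``half-clique'' (clique of size $\lceil v/2\rceil$), which is itself \textbf{NP}-complete by an easy padding argument, and then the join-with-$K_N$ trick becomes essentially trivial with $N = v$. I would therefore structure the proof as: (i) half-clique is \textbf{NP}-complete (quick padding lemma); (ii) given a half-clique instance $H$ on $v$ vertices, output $G = H * K_v$ on $2v$ vertices, and note $G$ has a clique on $\ge v$ of its $2v$ vertices iff $H$ has a clique on $\ge v/2$ of its vertices, because any clique of $G$ splits as (clique of $H$) $\sqcup$ (arbitrary subset of the $K_v$ part) and one takes all $v$ padding vertices for free; (iii) conclude. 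The whole argument is short; the only thing to grind through is verifying the two inequalities in step (ii), which I would not belabour here.
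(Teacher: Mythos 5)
Your overall plan is sound, and it is worth noting that the paper's own ``proof'' is only a citation (``This is a classic result. See e.g.\ Papadimitriou\ldots''), so an explicit reduction is genuinely worth writing out; padding a \textsc{Clique} instance $(H,k)$ on $v$ vertices by joining on a fresh clique $K_N$ of universal vertices is indeed the standard way to do it. But the two numerical points you flag as ``needing checking'' both come out with the wrong sign, and the ``cleaner'' variant in your step (ii) is not merely redundant but false.

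For the padding constant: equating the half threshold $(v+N)/2$ with $N+k$ gives $N = v - 2k$, not $v + 2k$. Indeed, with $N = v + 2k$ the clique $K_N$ alone has size $v + 2k \ge v + k = |G|/2$, so \emph{every} instance becomes a yes-instance. Relatedly, to preclude the free clique reaching the threshold on its own you need $N < v$, not $N \ge v$: a clique contained in $K_N$ has size at most $N$, and $N \ge (v+N)/2 \iff N \ge v$. Both conditions hold for $N = v - 2k$ whenever $k \ge 1$. One must also arrange $k \le v/2$ so that $N \ge 0$, but that is harmless since \textsc{Clique} remains NP-complete under this restriction (e.g.\ the standard reduction from \textsc{3-Sat} produces $k = v/3$). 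Finally, your step (ii) has exactly the defect you warned against: $G = H * K_v$ always has a $v$-clique (take all of $K_v$), so $G$ always contains a complete graph on half its vertices regardless of $H$, and the claimed equivalence fails. It is also superfluous: once step (i) establishes that Half-Clique is NP-complete, the proposition follows at once, because a subgraph obtained by vertex deletion is precisely an induced subgraph, and requiring it to be complete on $\ge |G|/2$ vertices is precisely Half-Clique.
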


This is a classic result.  See \eg{} \cite{book:Papa}, where the problem in different guises is called `Clique' and `Independent Set' and `Node Cover'.  Nonetheless, we know of no way to adapt the proof to fit Conjecture~\ref{conj:NPc}.

\subsubsection*{Challenge}  \label{sect:challenge}

It seems reasonable to conjecture that, using the methodology described, with a QR-code having a value $q \sim 500$, it is very easy to create randomised Interactive Game challenges for $\IQP$-capability, whose distributions have large entropy, which should lead to datasets that would be easy to validate and yet infeasible to forge without an $\IQP$-capable computing device (or knowledge of the secret $\s$ vector).  We propose such challenges as being appropriate `targets' for early quantum architectures, since such challenges\footnotemark{} would essentially seem to be the simplest ones available (at least in terms of inherent temporal structure and number of qubits) that can't apparently be classically met.  
\footnotetext{Accordingly, Michael Bremner and I have posted on the internet a challenge problem of size $q=487$ (\href{http://quantumchallenges.wordpress.com}{http://quantumchallenges.wordpress.com}), to help motivate further study.
This challenge website includes the source code (C) used to make the challenge matrix, and also the source code of the program that we would use to check candidate solutions, excluding only the secret seed value that we used to randomise the problem.}

\section{Heuristics}  \label{sect:heuristics}

The idea behind this two-party protocol is essentially a cryptographic one.  There is an analogy to Public Key Cryptography, if one thinks of $P$ as a public key, $\s$ as a secret key, and $\IQP$ as a kind of `computational trapdoor'.  In this section, we attempt to push the analogy a little further, describing the best-known classical `attack' strategies, and also give an accounting of our failure to find a \emph{decision language} for proving the worth of $\IQP$.

It is tempting to think that it would be desirable to have $\Pr( \X \cdot \s^T = 0 )=1$, so that Bob stands a chance of finding many vectors that are \emph{surely} orthogonal to $\s$, thereby allowing for actually learning $\s$ via Gaussian elimination, thus genuinely  computing something non-trivial.  But we shall see (Theorem~\ref{thm:unitbound}) that this is precisely the condition that makes $\s$ efficiently learnable using the classical techniques described below.
This is why the code selected for the construction in~\S\ref{sect:recommend} gave a bias of $0.854\ldots$, well below 1.
For the same reason, it seems hard to find decision languages that plausibly lie in the difference $\BPP^\IQP \backslash \BPP$.

\subsubsection*{Directional derivatives}

Suppose we wish to construct a probability distribution that arises from some purely classical methods, which can be used to approximate our $\IQP$ distribution.  Our motivation here is to check whether any purported application for an $\IQP$ oracle might not be efficiently implemented without any quantum technology.  We proceed using the relatively \emph{ad hoc} methods of linear differential cryptanalysis.

For the case $\theta = \pi/8$, we will need to consider only second-order derivatives.  The same sort of method will apply to the case $\theta = \pi/2^{d+1}$ using $d$th order derivatives, but the presentation would not be improved by considering that general case here.

In terms of a binary matrix/X-program $P$, proceed by defining 
\begin{eqnarray}  \label{eqn:def_f}
  f     &:&  \FF_2^n ~\rightarrow~ \ZZ/16\ZZ, \nonumber \\
  f(\a) &\equiv& \sum_{\p \in P} (-1)^{\p \cdot \a^T} \pmod{16},
\end{eqnarray}
and notate discrete directional derivatives as 
\begin{eqnarray} \label{eqn:deriv}
  f_\d(\a) &\equiv& f(\a) - f(\a\oplus\d) \pmod{16}.  
\end{eqnarray}

Consider also the \emph{second} derivatives of $f$, given by
\begin{eqnarray}  \label{eqn:deriv2}
  f_{\d,\e}(\a) 
    &\equiv&  f_\e(\a) ~-~ f_\e(\a\oplus\d) \pmod{16} \nonumber \\
    &\equiv&  2\sum_{\p \in P_\e} (-1)^{\p \cdot \a^T} 
              \left( 1 ~-~ (-1)^{\p \cdot \d^T} \right) \pmod{16} \nonumber \\
    &\equiv&  4\!\!\!\!\sum_{\p \in P_\d \cap P_\e} (-1)^{\p \cdot \a^T} \pmod{16} \nonumber \\
    &\equiv&  4\!\!\!\!\sum_{\p \in P_\d \cap P_\e} ~\prod_{j:p_j=1} \Bigl(~ 1 ~-~ 2a_j ~\Bigr) \pmod{16} \nonumber \\
    &\equiv&  \sum_{\p \in P_\d \cap P_\e} \left(~ 4 ~+~ 8\!\!\!\sum_{j~:~p_j=1} \!\!a_j ~\right) \pmod{16},
\end{eqnarray}
each of which is quite patently a linear function in the bits $(a_1, \ldots, a_n)$ of $\a$, as a function with codomain the ring $\ZZ/16\ZZ$, regardless of the choice of directions~$\d,\e$.

\medskip
\begin{lemma}
With $f$ defined as per line~(\ref{eqn:def_f}), and $\X$ the random variable of Lemma~\ref{lem:thing}, for all~$\s$,
\begin{eqnarray}  \label{eqn:piby8}
  \Pr( \X \cdot \s^T = 0 ) 
    &=&  \Ex_\a \left[ \cos^2\Bigl(~ \frac\pi{16} \cdot f_\s(\a) ~\Bigr) \right],
\end{eqnarray}
and so the $\IQP$ probability distribution (in the case $\theta=\pi/8$) may be viewed as a function of $f$ rather than as a function of $P$.
\end{lemma}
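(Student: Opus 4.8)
The plan is to reduce the identity to the character-sum computation already performed in the proof of Theorem~\ref{thm:bias}, and then to absorb the $\pmod{16}$ arithmetic of line~(\ref{eqn:def_f}) into the phase. The whole argument is a routine Walsh--Hadamard manipulation; the only thing requiring care is consistent bookkeeping of the modulus $16$.

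First I would pass to the Hadamard basis to diagonalise $\exp(i\H_P)$. Writing $Q_B\ket\0 = \ket{+}^{\otimes n} = 2^{-n/2}\sum_\a\ket\a$ and using $Q_B\,(\prod_{j:p_j=1}X_j)\,Q_B = \prod_{j:p_j=1}Z_j$, one finds $\exp(i\H_P)\ket\0 = 2^{-n/2}\sum_\a \exp\!\bigl(i\tfrac{\pi}{8}\sum_{\p\in P}(-1)^{\p\cdot\a^T}\bigr)\ket\a$ after a second application of $Q_B$, where $\theta=\pi/8$ is fixed throughout. Since the map $z\mapsto e^{i\pi z/8}$ has period $16$ in $z$, the phase depends on the integer $\sum_{\p\in P}(-1)^{\p\cdot\a^T}$ only through its residue modulo $16$, and that residue is exactly $f(\a)$ as defined at line~(\ref{eqn:def_f}). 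Hence the amplitude is $\bra\x\exp(i\H_P)\ket\0 = 2^{-n}\sum_\a (-1)^{\x\cdot\a^T}\,e^{i\pi f(\a)/8}$.

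Next I would square this, obtaining $\Pr(\X=\x) = 2^{-2n}\sum_{\a,\b}(-1)^{\x\cdot(\a\oplus\b)^T}\,e^{i\pi(f(\a)-f(\b))/8}$, substitute $\b=\a\oplus\d$, and note $f(\a)-f(\a\oplus\d)\equiv f_\d(\a)\pmod{16}$ by line~(\ref{eqn:deriv}). Summing over all $\x$ with $\x\cdot\s^T=0$ and applying the orthogonality identity $\sum_{\x:\x\cdot\s^T=0}(-1)^{\x\cdot\d^T} = 2^{n-1}\bigl(\{\d=\0\}+\{\d=\s\}\bigr)$ collapses the $\d$-sum to its two terms $\d=\0$ and $\d=\s$. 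Since $f_\0\equiv 0$ the first term contributes $1$, leaving $\Pr(\X\cdot\s^T=0) = \tfrac12\bigl(1+\Ex_\a[e^{i\pi f_\s(\a)/8}]\bigr)$ — which is precisely line~(\ref{eqn:wooo}) re-read in terms of $f$. (Alternatively one can skip the previous paragraph entirely and simply quote line~(\ref{eqn:wooo}) from the proof of Theorem~\ref{thm:bias}, observing that $\sum_\p(-1)^{\p\cdot\a^T}\bigl(1-(-1)^{\p\cdot\s^T}\bigr) = f(\a)-f(\a\oplus\s) = f_\s(\a)$ modulo $16$.)

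Finally, since the left-hand side is real, the expectation on the right equals its own real part $\Ex_\a[\cos(\pi f_\s(\a)/8)]$; applying the half-angle identity $\tfrac12(1+\cos\phi)=\cos^2(\phi/2)$ with $\phi=\pi f_\s(\a)/8$ yields $\Pr(\X\cdot\s^T=0)=\Ex_\a[\cos^2(\pi f_\s(\a)/16)]$, as claimed. The only genuine subtlety — what I would flag as the ``hard'' part, though it is really just bookkeeping — is to verify throughout that every factor $e^{i\pi(\cdot)/8}$ depends on its integer argument only modulo $16$, so that systematically replacing the raw signed sums by the $\ZZ/16\ZZ$-valued quantities $f$ and $f_\s$ is legitimate; with that in hand the computation carries no real obstacle.
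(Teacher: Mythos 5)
Your proof is correct and follows essentially the same route as the paper: the paper likewise starts from line~(\ref{eqn:wooo}), rewrites the exponent as $\tfrac{\pi}{8}(f(\a)-f(\a\oplus\s))=\tfrac{\pi}{8}f_\s(\a)$, takes the real part, and applies the half-angle identity $\tfrac12(1+\cos\phi)=\cos^2(\phi/2)$ together with linearity of expectation. Your optional rederivation of (\ref{eqn:wooo}) from scratch and your explicit flag that $e^{i\pi z/8}$ depends on $z$ only modulo $16$ are extra bookkeeping the paper leaves implicit, but the argument is the same.
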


\begin{proof}
Starting from the proof of Theorem~\ref{thm:bias}, line~(\ref{eqn:wooo}), 
\begin{eqnarray}
  \Pr(\X \cdot \s^T = 0)
    &=& \frac12\left(~ 1 ~+~ \Ex_\a \left[ e^{ i\theta \sum_\p (-1)^{\p \cdot \a^T} \Bigl(1 - (-1)^{\p \cdot \s^T}\Bigr) } \right]
        ~\right) \nonumber \\
    &=& \frac12\left(~ 1 ~+~ \Ex_\a \left[ \exp\left( \frac{i\pi}8 \bigl( f(\a) - f(\a\oplus\s) \bigr) \right) \right] ~\right) \nonumber \\
    &=& \frac12\left(~ 1 ~+~ \Ex_\a \left[ \cos\Bigl(~ \frac\pi8 \cdot f_\s(\a) ~\Bigr) \right] ~\right).
\end{eqnarray}
The second line above is obtained immediately from the first, using the definition of $f$.  
The third line follows because the expression is real-valued.  
The conclusion follows from a basic trigonometric identity, and linearity of the expectation operator.
\end{proof}

And so \emph{if} there is a hidden $\s$ such that $\Pr( \X \cdot \s^T = 0 )=1$, \emph{then} that implies $f_\s(\a) \equiv 0 \pmod{16}$ for all $\a$.
This is essentially a non-oracular form of the kind of function that arises in applications of Simon's Algorithm (\cf{} \cite{lit:Si97}), with $\s$ playing the role of a \emph{hidden shift}.
One could find linear equations for such an $\s$ if it exists, because it would follow immediately that $f_\a(\s) \equiv f_\a(\0)$ for all $\a$, and hence $f_{\d,\e}(\s) \equiv f_{\d,\e}(\0)$ for any directions $\d,\e$, which---by line~(\ref{eqn:deriv2})---is equivalent with 
\begin{eqnarray}  \label{eqn:whatever}
  \left( \sum_{\p \in P_\d \cap P_\e} \!\!\p \right) \cdot \s^T &=& 0.
\end{eqnarray}

\subsubsection*{Classical sampling}

To make use of this specific second-order differential property, we need to analyse the probability distribution that a classical player can generate efficiently from it.
Proceed by defining a new probability distribution for a new random variable $\Y$, as follows~:
\begin{eqnarray}  \label{eqn:classprob}
  \Pr(\Y=\y) 
    &:=&  \Pr_{\d,\e} \left(~ \sum_{\p \in P_\d \cap P_\e} \!\!\p ~=~ \y ~\right).
\end{eqnarray}
This may be classically rendered, simply by choosing $\d,\e \in \FF_2^n$ independently with a uniform distribution, and then returning the sum of all rows in $P$ that are not orthogonal to either $\d$ or~$\e$.

\medskip
\begin{lemma}  \label{lem:classprob}
The classical simulable distribution on the random variable $\Y$ defined in line~(\ref{eqn:classprob}) satisfies
\begin{eqnarray}  \label{eqn:classfromcode}
  \Pr( \Y \cdot \s^T = 0 ) 
    &=&  \Pr\Bigl(~ \c_1^T \cdot \c_2 = 0 ~~|~~ \c_1,\c_2 \sim \Code_\s ~\Bigr) \\
    &=&  \frac12\left(~ 1 ~+~ 2^{-rank(~ P_\s^T \cdot~ P_\s ~)} ~\right),
\end{eqnarray}
and so the bias of $\Y$ in direction $\s$ is a function of the matroid $P_\s$.
\end{lemma}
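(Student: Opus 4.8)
The plan is to unwind the definition of $\Y$ given at line~(\ref{eqn:classprob}) and identify the resulting expression with a statement purely about pairs of codewords of $\Code_\s$. First I would observe that, since $\p \cdot \d^T$ and $\p \cdot \e^T$ select exactly the rows belonging to $P_\d$ and $P_\e$ respectively, the sum $\sum_{\p \in P_\d \cap P_\e} \p$ is the sum of those rows $\p$ of $P$ with $\p \cdot \d^T = 1$ \emph{and} $\p \cdot \e^T = 1$. Dotting this with $\s$, we get $\left(\sum_{\p : \p\cdot\d^T=1,~\p\cdot\e^T=1} \p\right) \cdot \s^T = \sum_{\p} (\p\cdot\d^T)(\p\cdot\e^T)(\p\cdot\s^T) \pmod 2$. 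Here the product runs over all rows of $P$, but only rows with $\p \cdot \s^T = 1$ contribute, i.e. only rows of $P_\s$; so the quantity equals $\sum_{\p \in P_\s} (\p \cdot \d^T)(\p \cdot \e^T)$. Now as $\p$ ranges over the rows of $P_\s$, the bit $\p \cdot \d^T$ traces out the $\p$th coordinate of the codeword $\c_1 := P_\s \cdot \d^T \in \Code_\s$, and likewise $\p \cdot \e^T$ gives the $\p$th coordinate of $\c_2 := P_\s \cdot \e^T$. Hence $\Y \cdot \s^T = \sum_{\text{rows } \p} (\c_1)_\p (\c_2)_\p = \c_1^T \cdot \c_2 \pmod 2$. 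Since $\d,\e$ are uniform and independent over $\FF_2^n$, the induced pair $(\c_1,\c_2)$ is uniform over $\Code_\s \times \Code_\s$ (the map $\d \mapsto P_\s \cdot \d^T$ is a surjective linear map onto $\Code_\s$, so pushes the uniform distribution to the uniform distribution on $\Code_\s$). This gives the first displayed equality.

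For the second equality, I would reduce the probability $\Pr(\c_1^T\cdot\c_2 = 0)$ over uniform independent $\c_1,\c_2 \sim \Code_\s$ to a rank count of the Gram-type matrix $P_\s^T \cdot P_\s$ over $\FF_2$. Writing $\c_1 = P_\s \d^T$, the inner product $\c_1^T \c_2 = \d \, (P_\s^T P_\s) \, \e^T$ is a bilinear form in $(\d,\e)$ whose matrix is $G := P_\s^T \cdot P_\s$ over $\FF_2$. For a fixed $\e$, the map $\d \mapsto \d G \e^T$ is an $\FF_2$-linear functional of $\d$; it is identically zero exactly when $G\e^T = 0$, and otherwise it vanishes on exactly half of all $\d$. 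So $\Pr_\d(\d G \e^T = 0 \mid \e) = 1$ if $G\e^T = 0$ and $= \tfrac12$ otherwise. Averaging over $\e$, with $\Pr_\e(G\e^T = 0) = 2^{-rank(G)}$ (the kernel of $\e\mapsto G\e^T$ has dimension $n - rank(G)$), yields $\Pr(\c_1^T\c_2 = 0) = 2^{-rank(G)}\cdot 1 + (1 - 2^{-rank(G)})\cdot\tfrac12 = \tfrac12(1 + 2^{-rank(G)})$, which is exactly $\tfrac12(1 + 2^{-rank(P_\s^T P_\s)})$.

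Finally I would remark that both the quantity $\Pr(\c_1^T \c_2 = 0)$ and the rank $rank(P_\s^T P_\s)$ are invariants of the matroid $P_\s$: a change of representing matrix for the matroid corresponds to an invertible column operation $P_\s \mapsto P_\s A$ (plus possible duplication/deletion of zero columns and row reordering), and $A^T (P_\s^T P_\s) A$ has the same rank as $P_\s^T P_\s$; equivalently, $\Code_\s$ is determined up to code-isomorphism, and the distribution of $\c_1^T\c_2$ only depends on the code's weight/inner-product structure, which is isomorphism-invariant. (Row reordering and column duplication merely reorder/repeat the coordinates of codewords, leaving $\c_1^T\c_2$ unchanged mod $2$ only up to the duplication — but appending a duplicate column $c$ to $P_\s$ adds $c c^T$ to $G$, and one checks this does not change which $\e$ satisfy $G\e^T = 0$... ) The cleanest route, and the one I would actually write, is to phrase everything directly in terms of a uniform pair of codewords of $\Code_\s$ as in the first displayed line, making matroid-invariance immediate, and to treat the rank formula as a concrete evaluation valid for \emph{any} chosen representing matrix $P_\s$. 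I expect the main obstacle to be precisely this bookkeeping about well-definedness under the matroid equivalence — in particular handling column duplication cleanly — but this is exactly the kind of representation-independence already navigated in the discussion following Theorem~\ref{thm:bias}, so the argument there can be reused almost verbatim.
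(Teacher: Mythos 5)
Your proposal is correct and takes essentially the same approach as the paper: both reduce $\Pr(\Y\cdot\s^T=0)$ to $\Pr_{\d,\e}\bigl(\d\,(P_\s^T P_\s)\,\e^T = 0\bigr)$, identify the pair $(P_\s\d^T, P_\s\e^T)$ with a uniform pair of codewords of $\Code_\s$, and then evaluate the probability using the rank of the Gram matrix $P_\s^T P_\s$. The only cosmetic difference is in the last step, where the paper invokes reduction to Smith Normal Form by unimodular actions while you condition directly on whether $(P_\s^T P_\s)\e^T$ vanishes; both are equivalent elementary rank-counting arguments, and your closing remark rightly notes that matroid-invariance is most transparent from the codeword formulation in the first displayed equality.
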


\begin{proof}
Starting from line~(\ref{eqn:classprob}),
\begin{eqnarray}
  \Pr( \Y \cdot \s^T = 0 )
    &=& \sum_{\y ~:~ \y \cdot \s^T = 0}  ~~ \Pr_{\d,\e} \left(~ \sum_{\p \in P_\d \cap P_\e} \!\!\p ~=~ \y ~\right) \\
    &=& \Pr_{\d, \e} \left(~ \sum_{\p \in P_\d \cap P_\e} \!\!\p \cdot \s^T ~=~ 0 ~\right) \nonumber \\
    &=& \Pr_{\d, \e} \left(~ wt(~ P\cdot\d^T ~\wedge~ P\cdot\e^T ~\wedge~ P\cdot\s^T ~) \equiv 0 \pmod2 ~\right) \nonumber \\
    &=& \Pr_{\d, \e} \left(~ wt(~ P_\s\cdot\d^T ~\wedge~ P_\s\cdot\e^T ~) \equiv 0 \pmod2 ~\right) \nonumber \\
    &=& \Pr_{\d, \e} \left(~ \d\cdot P_\s^T \cdot P_\s\cdot\e^T = 0 ~\right). \nonumber
\end{eqnarray}
The \emph{wedge operator} $\wedge$ here denotes the logical \emph{And} between binary column-vectors.

The first line of the Lemma follows from the direct substitutions $\c_1 = P_\s \cdot \d^T$, $\c_2 = P_\s \cdot \e^T$.
The second line follows because unimodular actions on the left or right of a quadratic form (such as $(P_\s^T \cdot P_\s)$) affect neither its rank nor the probabilities derived from it; so it suffices to consider the cases where it is in Smith Normal Form, \ie{} diagonal, which are trivially verified.    
Since this expression is patently invariant under invertible linear action on the right and permutation action on the left of $P_\s$, it too is a matroid invariant.  
\end{proof}

\subsubsection*{Correlation}

Thus we have established some kind of correlation between random variables $\X$ and~$\Y$.

\medskip
\begin{theorem}  \label{thm:unitbound}
In the established notation, for X-programs with fixed $\theta=\pi/8$,
\begin{eqnarray}  \label{eqn:unitbound}
  \Pr( \X \cdot \s^T = 0 ) = 1 
    ~~&\Rightarrow&~~  \Pr( \Y \cdot \s^T = 0 ) = 1.
\end{eqnarray}
\end{theorem}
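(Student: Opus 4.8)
The plan is to chain together the two explicit formulas already proved in the excerpt: line~(\ref{eqn:piby8}), which says
\[
  \Pr(\X \cdot \s^T = 0) = \Ex_\a\left[ \cos^2\Bigl( \tfrac{\pi}{16} f_\s(\a) \Bigr) \right],
\]
and Lemma~\ref{lem:classprob}, which gives $\Pr(\Y \cdot \s^T = 0)$ as a matroid invariant equal to $\frac12(1 + 2^{-rank(P_\s^T \cdot P_\s)})$. So the first step is to unpack the hypothesis $\Pr(\X \cdot \s^T=0)=1$: since $\cos^2$ is bounded by $1$, the expectation over $\a$ can equal $1$ only if $\cos^2(\tfrac{\pi}{16} f_\s(\a)) = 1$ for every $\a \in \FF_2^n$, i.e. $f_\s(\a) \equiv 0 \pmod{16}$ identically. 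This is exactly the observation made in the text immediately after that lemma.

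Next I would push this pointwise vanishing down to the second derivatives. Because $f_\s(\a) \equiv 0$ for all $\a$, we have $f(\a) \equiv f(\a\oplus\s)$ for all $\a$, hence by symmetry of the discrete difference $f_\a(\s) \equiv f_\a(\0) \pmod{16}$ for all $\a$, and therefore $f_{\d,\e}(\s) \equiv f_{\d,\e}(\0) \pmod{16}$ for all directions $\d, \e$. Now invoke the computation at line~(\ref{eqn:deriv2}): since $f_{\d,\e}$ is an explicit \emph{linear} function of the bits of its argument with codomain $\ZZ/16\ZZ$, the congruence $f_{\d,\e}(\s) \equiv f_{\d,\e}(\0)$ is equivalent to the linear constraint of line~(\ref{eqn:whatever}),
\[
  \left( \sum_{\p \in P_\d \cap P_\e} \p \right) \cdot \s^T = 0,
\]
valid for \emph{every} pair $\d, \e$. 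Here I should be slightly careful about the reduction from a mod-$16$ statement to a mod-$2$ statement: from line~(\ref{eqn:deriv2}) the value $f_{\d,\e}(\a) - f_{\d,\e}(\0)$ is exactly $8 \sum_{\p \in P_\d \cap P_\e}\sum_{j:p_j=1} a_j \pmod{16}$, so it vanishes iff that inner double sum is even, which is precisely $\bigl(\sum_{\p \in P_\d \cap P_\e} \p\bigr)\cdot \a^T \equiv 0 \pmod 2$; setting $\a = \s$ gives the claim.

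Finally I would read off the conclusion from the classical-sampling side. By definition line~(\ref{eqn:classprob}), a sample $\y$ of $\Y$ is exactly of the form $\sum_{\p \in P_\d \cap P_\e}\p$ for uniformly chosen $\d,\e$. We have just shown that \emph{every} such sum is orthogonal to $\s$, so $\Pr(\Y \cdot \s^T = 0) = 1$, which is the statement of Theorem~\ref{thm:unitbound}. The main obstacle, such as it is, is the bookkeeping in the preceding paragraph: one must make sure the passage through the ring $\ZZ/16\ZZ$ is handled correctly (the factor of $8$ in line~(\ref{eqn:deriv2}) is what makes a mod-$16$ identity collapse to the mod-$2$ orthogonality condition), and one must confirm that the symmetry step $f_\s(\a)\equiv 0 \ \forall \a \Rightarrow f_{\d,\e}(\s)\equiv f_{\d,\e}(\0)\ \forall \d,\e$ really does follow just from $f$ being a function on $\FF_2^n$ (it does: $f_{\d,\e}(\s) = f(\s) - f(\s\oplus\d) - f(\s\oplus\e) + f(\s\oplus\d\oplus\e)$, and applying $f(\a)\equiv f(\a\oplus\s)$ to each of the four terms turns this into $f(\0)-f(\d)-f(\e)+f(\d\oplus\e) = f_{\d,\e}(\0)$). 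Everything else is direct substitution into formulas already established in the excerpt, so no genuinely new idea is required.
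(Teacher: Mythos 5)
Your proof is correct, but it takes a genuinely different route from the paper's. The paper's actual proof is a short coding-theoretic argument: from Theorem~\ref{thm:bias} with $\theta=\pi/8$ the hypothesis forces $n_\s \equiv 2\,wt(\c) \pmod 8$ for every $\c\in\Code_\s$, hence all weights agree mod $4$, hence (since $\0\in\Code_\s$) the code is doubly even and therefore self-orthogonal ($P_\s^T P_\s = 0$ over $\FF_2$), and then the explicit formula $\Pr(\Y\cdot\s^T=0)=\frac12(1+2^{-rank(P_\s^T P_\s)})$ from Lemma~\ref{lem:classprob} immediately gives $1$. Your version instead formalises the directional-derivative discussion that appears in the text immediately \emph{before} the theorem: unwind line~(\ref{eqn:piby8}) to get $f_\s\equiv 0 \pmod{16}$ pointwise, pass to the second derivatives $f_{\d,\e}$, and use the $8$-factor in line~(\ref{eqn:deriv2}) to collapse the mod-$16$ identity to the mod-$2$ orthogonality constraint of line~(\ref{eqn:whatever}), then read off the conclusion directly from the definition of $\Y$ at line~(\ref{eqn:classprob}) rather than via the rank formula. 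Both are valid; yours is longer but more self-contained (it never invokes the doubly-even/self-orthogonal implication), and it exhibits precisely the mechanism that the paper's heuristics section is trying to illuminate, so it arguably fits the surrounding exposition better. One small bonus of your route: you avoid the paper's slightly loose phrasing that doubly even codes are ``self-dual'' (they are self-orthogonal, which is all that's needed), since you never have to make that step at all. Your bookkeeping of the $8\cdot(\cdot)\pmod{16}\Leftrightarrow(\cdot)\pmod 2$ reduction and the four-term expansion verifying $f_{\d,\e}(\s)\equiv f_{\d,\e}(\0)$ are both correct.
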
  

\begin{proof}
By Theorem~\ref{thm:bias}, the antecedent gives, for all $\c \in \Code_\s, ~n_\s \equiv 2 wt(\c) \pmod8$, where $n_\s$ is again the length of the code $\Code_\s$.
This entails that every codeword in $\Code_\s$ has the same weight modulo 4, including the null codeword, so $\Code_\s$ must be doubly even (which means every codeword has a weight a multiple of 4).
It is easy to see that doubly even linear codes are self-dual (which means that a word is a codeword if and only if it is orthogonal to every codeword).
By Lemma~\ref{lem:classprob}, the consequent is obtained.
\end{proof}

The only counterexamples to the \emph{converse} implication seem to occur in the trivial cases whereby the binary matroid $P_\s$ has circuits of length 2, \ie{} where $P_\s$ has repeated rows. 

This random variable $\Y$ is the `best classical approximation' that we have been able to find for $\X$.  
(The intuition is that it captures all of the `local' information in the function $f$, which is to say all the `local' information in the matroid $P$, so that the only data left unaccounted for and excluded from use within building this classical distribution is the `non-local' matroid information, which is readily available to the quantum distribution via the magic of quantum superposition.) 
There seems to be no other sensible way of processing $P$ (or $f$) classically, to obtain useful samples efficiently, though it also seems hard to make any rigorous statement to that effect.

\medskip
\begin{conjecture}  \label{conj:best}
The classical method defined in this section, yielding random variable $\Y$, is asymptotically classically optimal (when comparing average-case behaviour and restricting to polynomial time) for the simulation of $\IQP$ distributions arising from constant-action $\theta=\pi/8$ X-programs.
\end{conjecture}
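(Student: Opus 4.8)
\textbf{Proof proposal (a strategy for Conjecture~\ref{conj:best}).}
The plan is to recast the claim as a two-part statement and attack the parts separately. First, one must fix a precise notion of ``simulation quality'': following \S\ref{sect:protocol}, the natural figure of merit is the average-case directional bias, namely the supremum over polynomial-time classical samplers $\cA$ of $\Ex_{P}\bigl[\Pr(\cA(P)\cdot\s^T=0)\bigr]$, where $P$ is the obfuscated matroid of width $n$ carrying a hidden direction $\s$, drawn as in \S\ref{sect:recommend} and post-filtered of duplicate and empty rows so that the pathological length-two circuits noted after Theorem~\ref{thm:unitbound} cannot arise. The goal is to show that this supremum is asymptotically no larger than $\frac12\bigl(1+2^{-\mathrm{rank}(P_\s^T P_\s)}\bigr)$, the value attained by the variable $\Y$ of line~(\ref{eqn:classprob}) via Lemma~\ref{lem:classprob}; the matching lower bound is already given by that lemma, so only the upper bound against $\cA$ is genuinely at issue. (One would similarly want the statement phrased for general quality measures --- statistical distance, or collision-entropy discrepancy --- but the directional-bias version is the one that matters for the protocol and is where the content lies.)

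The second and main part is this upper bound, and the key structural input is Theorem~\ref{thm:bias}: the bias of \emph{any} distribution in direction $\s$ is a matroid invariant of the minor $P_\s$, while the differential analysis of \S\ref{sect:heuristics} (lines~(\ref{eqn:deriv2})--(\ref{eqn:whatever})) shows that the only $\FF_2$-linear shadow of $\s$ visible through bounded-order derivatives of the function $f$ is captured exactly by the sums $\sum_{\p\in P_\d\cap P_\e}\p$, \ie{} by $\Y$. I would first prove the upper bound rigorously against a restricted model of classical computation --- an ``algebraic/differential adversary'' (or a statistical-query-type model) that accesses $P$ only through polynomially many evaluations of $f$ and its low-order directional derivatives --- by an information-theoretic argument: conditioned on everything such an adversary sees, $\s$ remains uniform over an affine subspace cut out by the second-derivative constraints of line~(\ref{eqn:whatever}), and no sampler measurable with respect to that view can beat the $\Y$-bias. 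The bridge to \emph{all} polynomial-time algorithms would then lean on the $\theta=\pi/8$ simplification of \S\ref{sect:three}: since only monomials of degree $\le 3$ in the $x$-variables affect the induced unitary, every ``relevant'' feature of $P$ is a low-degree polynomial feature, and one would try to argue that any poly-time extraction of such features is simulable inside the restricted model.

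The hard part will be exactly this last bridge. Ruling out \emph{arbitrary} polynomial-time classical strategies is, in effect, an unconditional average-case circuit lower bound; by Proposition~\ref{propos:PostIQP} ($\mathbf{PostIQP}=\PP$) and the surrounding discussion, an unconditional proof of even a weak form of Conjecture~\ref{conj:best} would entail a separation of the flavour $\BPP\neq\BPP^\IQP$, hence consequences currently out of reach. I therefore expect the realistic deliverable to be one of two things. Either (a) a \emph{conditional} theorem, deriving Conjecture~\ref{conj:best} from Conjecture~\ref{conj:NPc} (or more directly from the hardness of locating a hidden quadratic-residue minor inside a random $P$) together with Conjecture~\ref{conj:entropy} to rule out many-direction bias; or (b) an \emph{unconditional} theorem in the restricted-adversary model sketched above, certifying $\Y$ optimal against all ``differential'' or ``statistical'' attacks --- which is the precise sense in which the heuristic of \S\ref{sect:heuristics} can presently be shown best possible. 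Pinning down the right restricted model --- broad enough to subsume the natural attacks, yet amenable to a clean information-theoretic lower bound --- is the principal technical obstacle, and getting it right is most of the work.
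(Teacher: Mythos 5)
The paper offers no proof of this statement: it is explicitly presented as an open conjecture (``it also seems hard to make any rigorous statement to that effect'') and left unresolved, so there is no paper argument to compare yours against. Your ``proposal'' is therefore best read not as a proof but as a diagnosis of why the statement remains conjectural, and as such it is essentially sound: you correctly isolate the obstruction (an unconditional average-case lower bound against all poly-time samplers, which would entail complexity separations currently out of reach --- the paper's own surrounding discussion of $\mathbf{PostIQP}=\PP$ and \S\ref{sect:openconj} supports this), you correctly identify the structural inputs (the matroid-invariance of directional bias from Theorem~\ref{thm:bias}, and the second-order differential analysis of lines~(\ref{eqn:deriv2})--(\ref{eqn:whatever}) that extracts exactly the linear constraints $\Y$ uses), and your two fallback deliverables --- a conditional theorem leaning on Conjectures~\ref{conj:NPc} and~\ref{conj:entropy}, or an unconditional optimality proof in a restricted ``differential adversary'' model --- are the natural ones.

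Two cautionary notes. First, the bridge you sketch from the $\theta=\pi/8$ degree-$\le 3$ reduction of \S\ref{sect:three} to the claim that ``every relevant feature is a low-degree polynomial feature, hence simulable by the restricted adversary'' is not a small step; there is no reason in principle that a poly-time classical sampler must factor through low-order derivatives of $f$, and turning that into a theorem would require a genuinely new restricted computational model together with a proof that it captures the attacks of interest. Until that model is pinned down, option (b) is a programme, not a proof. Second, your chosen figure of merit (average-case directional bias in the hidden $\s$) is the right one for the protocol, but it is worth being explicit that optimality for that functional is strictly weaker than the conjecture's claim about ``simulation of $\IQP$ distributions,'' which in the language of \S\ref{sect:op} would be phrased via statistical distance; Theorem~\ref{thm:unitbound} shows $\Y$ matches $\X$ in the degenerate case of unit bias, but it does not by itself control $\|P_\X - P_\Y\|_1$ in general. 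A careful formulation should say which notion of optimality is meant before the proof strategy can even be assessed.
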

\medskip

This conjecture lends credence to the design methodology of \S\ref{sect:protocol}.  It means that if Bob wishes to cheat, using classical techniques only and not expending sufficient time to search exhaustively for $\s$, then so far as we are aware, the best he can realistically hope to do is to use this random variable $\Y$ to make data items $75\%$ of which \emph{ought} to be orthogonal to $\s$, while hoping that in fact surprisingly many of them will turn out to be orthogonal to 1, thereby `fooling' Alice's hypothesis test.  His chances of succeeding naturally depend on how much data Alice requires for her hypothesis test, and how she trades off the probability of making a \emph{Type I error} (accepting data sampled classically from $\Y$, for example) versus the probability of making a \emph{Type II error} (rejecting data despite its having been sampled from $\X$ using $\IQP$ methods).  As far as we are aware, neither random variable $\X$ nor $\Y$ seems to be particularly useful for actually learning $\s$ for sure.

\section{Summary}

We have made a thorough study of the simplest (`temporally unstructured') part of the Clifford-Diagonal hierarchy, by considering the mathematical structures that underpin the notion of an X-program or $\IQP$ oracle.  We have looked at some different methods for conceptually implementing such a computational process and, using ideas from graph state computing (\S\ref{sect:physics}), have found an implementation for $\IQP$ within logarithmic quantum-circuit depth (\cf{}~\cite{lit:MN98}).   

Specialising to constant-action X-programs with $\theta=\pi/8$, we have shown that modulo post-selection (\S\ref{sect:postselection}) they are as powerful as $\BQP$ computation, despite the fact that they contain no temporal structure, and can always be rewritten as 3-local X-programs (\S\ref{sect:three}).  We have proposed a family of easily described challenge problems (\S\ref{sect:protocol}) that seems to capture well the complexity of this kind of problem, in context of a two-party protocol, exploiting a natural cryptographic analogy (\S\ref{sect:heuristics}).

We have given several conjectures of an open-ended nature, to indicate directions for possible future work.  We might also recommend the further study of matroid invariants through quantum techniques, or perhaps the invariants of \emph{weighted} matroids, since they seem to be the natural objects of $\IQP$ computation as hitherto circumscribed.
This would seem to be fertile ground for developing examples of things that only genuine quantum computers can achieve.

Note that if it weren't for the correlation described in Theorem~\ref{thm:unitbound}, then it would be possible to conceive of a mechanism whereby an $\IQP$-capable device could compute an actual secret or witness to something (\eg{} learn $\s$), so that the computation wouldn't require two rounds of player interaction to achieve something non-trivial.  Yet as it stands, it is an open problem to suggest tasks for this paradigm involving no communication nor multi-party concepts.




\clearpage
\pagestyle{empty}
\cleardoublepage

This version of the document was typeset from the source-file on \today.

\bigskip
Contact the author~:
\href{mailto:djs@cantab.net}{djs\_at\_cantab.net}


\begin{thebibliography}{99}

\bibitem{lit:Aa04}  
Aaronson, S.  2005.
  Quantum computing, postselection, and probabilistic polynomial-time,
  \emph{Proc. Royal Soc. A}, 461 pp. 3473--3483. 
  (\href{http://arXiv.org/abs/quant-ph/0412187}{quant-ph/0412187})

\bibitem{lit:Democritus}
Aaronson, S.  2006.
  (Lecture series) \emph{Quantum Computing since Democritus},
  Univ. Waterloo.  (\href{http://www.scottaaronson.com/democritus/}{www.scottaaronson.com/democritus/})

\bibitem{lit:AG0406}  
Aaronson, S.  \&  Gottesman, D.  2004.
  Improved Simulation of Stabilizer Circuits,  
  (\href{http://arXiv.org/abs/quant-ph/0406196}{quant-ph/0406196})

\bibitem{lit:ADH97}
Adleman, L.  \&  DeMarrais, J.  \&  Huang, M. D.  1997.
  Quantum Computability,
  \emph{SIAM J. Comp.} 26,  pp. 1524--1540.

\bibitem{lit:Aha0301}  
Aharonov, D.  2003.
  A simple Proof that Toffoli and Hadamard are Quantum Universal, 
  (\href{http://arXiv.org/abs/quant-ph/0301040}{quant-ph/0301040})
  
\bibitem{lit:AGIK07}
Aharonov, D.  \&  Gottesman, D.  \&  Irani, S.  \&  Kempe, J.  2009.
  The power of quantum systems on a line,
  (\href{http://arXiv.org/abs/0705.4077}{arXiv.org/0705.4077})

\bibitem{lit:Alb06}
Albert, M.  2006.
  The propensity theory: a decision-theoretic restatement,
  \emph{Synthese}, Vol. 156, no. 3, pp. 587--603.  Springer Netherlands. 

\bibitem{lit:ASV06}  
Ambainis, A.  \&  Schulman, L. J.  \&  Vazirani, U.  2006.
  Computing with Highly Mixed States, 
  \emph{Journal of the ACM}, Vol. 53, No 3, pp. 507--531.

\bibitem{lit:AB06}
Anders, S.  \&  Briegel, H. J.  2006.
  Fast simulation of stabilizer circuits using a graph-state representation,
  \emph{Phys. Rev. A} 73, 022334, APS.

\bibitem{lit:BvDR08}
Bacon, D.  \&  van Dam, W.  \&  Russell, A.  2008.
  Analyzing Algebraic Quantum Circuits Using Exponential Sums,
  (unpublished~: see \href{http://tinyurl.com/qpo7s2}{tinyurl.com/qpo7s2})

\bibitem{lit:Bar87}
Barrington, D. A.  1987.
  Bounded-Width Polynomial-Size Branching Programs Recognize Exactly Those Languages in $\NC^1$,
  \emph{Jour. Comp. \& Sys. Sci.}  Vol 38, pp. 150--164. 

\bibitem{lit:BB0401}  
Benjamin, S. C.  \&  Bose, S.  2004.
  Quantum Computing in Arrays Coupled by `Always On' Interactions, 
  (\href{http://arXiv.org/abs/quant-ph/0401071}{quant-ph/0401071})

\bibitem{lit:BBBV97}
Bennett, C. H.  \&  Bernstein, E.  \&  Brassard, G.  \&  Vazirani, U.  1997.
  Strengths and Weaknesses of Quantum Computing,
  \emph{SIAM J. Comput.,} Vol. 26(5), pp. 1510--1523.

\bibitem{lit:BV97}
Bernstein, E.  Vazirani, U.  1997.
  Quantum complexity theory. 
  \emph{SIAM J. Comput.,} Vol. 26(5), pp. 1411--1473.

\bibitem{lit:BK05}  
Bravyi, S.  \&  Kitaev, A.  2005.
  Universal quantum computation with ideal Clifford gates and noisy ancillas,
  \emph{Phys. Rev. A} 71, 022316, APS.

\bibitem{lit:Browne06}  
Browne, D. E.  \&  Briegel, H. J.  2006.
  OneÐway Quantum Computation, 
  \emph{Lectures on quantum information}, 359--380. 
  (\href{http://arXiv.org/abs/quant-ph/0603226}{quant-ph/0603226})

\bibitem{lit:BDHM92}
Buntrock, G.  \&  Damm, C.  \&  Hertrampf, U.  \&  Meinel, C.  1992.
  Structure and Importance of Logspace-MOD Class,
  \emph{Math. Systems Theory} Vol. 25, pp. 223--237, Springer-Verlag NY.

\bibitem{lit:CDEL04} 
Christandl, M.  \&  Datta, N.  \&  Ekert, E.  \&  Landahl, A.  2004.
  Perfect State Transfer in Quantum Spin Networks,
  \emph{Phys. Rev. Lett.} 92, 187902, APS.

\bibitem{lit:vanDamthesis}
van Dam, W.  1996.
  \emph{Quantum Cellular Automata},
  M. Thesis, Univ. Nijmegen.

\bibitem{lit:Damm90}
Damm, C.  1990.
  Problems complete for $\ParL$,
  \emph{Inf. Proc. Lett.} 36, pp. 247--250.
   
\bibitem{lit:DHHMNO}
Dawson, C. M.  \&  Haselgrove, H. L.  \&  Hines, A. P.  \&  Mortimer, D.  \&  Nielsen, M. A.  \&  Osborne, T. J.  2004.
  Quantum computing and polynomial equations over the finite field $\ZZ_2$,
  (\href{http://arXiv.org/abs/quant-ph/0408129}{quant-ph/0408129})

\bibitem{lit:DM0505}
Dawson, C. M.  \&  Nielsen, M. A.  2005.
  The Solovay-Kitaev Algorithm,
  \emph{Quantum Information \& Computation}  Vol. 6, no. 1, pp. 81--95.
  (\href{http://arXiv.org/abs/quant-ph/0505030}{quant-ph/0505030})

\bibitem{lit:DBE95}
Deutsch, D.  \&  Barenco, A.  \&  Ekert, A.  1995.
  Universality in Quantum Computation,
  \emph{Proc. R. Soc. Lond. A} 449, pp. 669--677.
  (\href{http://arXiv.org/abs/quant-ph/9505018}{quant-ph/9505018})
   
\bibitem{lit:DKRS04}
Draper, T. G. \&  Kutin, S. A. \&  Rains, E. M. \& Svore, K. M.  2004.
  A Logarithmic-Depth Quantum Carry-Lokahead Adder,
  (\href{http://arXiv.org/abs/quant-ph/0406142}{quant-ph/0406142})
   
\bibitem{lit:Everett}
Everett. H,  1957.
  ``Relative State'' Formulation of Quantum Mechanics,
  \emph{Reviews of Modern Physics}, vol. 29, no. 3.
  
\bibitem{lit:FGHZ03}
Fenner, S.  \&  Green, F.  \&  Homer, S.  \&  Zhang, Y.  2003.
  Bounds on the Power of Constant-Depth Quantum Circuits,
  (\href{http://arXiv.org/abs/quant-ph/0312209}{quant-ph/0312209})

\bibitem{lit:FL03}
Frey, G.  \&  Lange, T.  2003.
  Mathematical Background of Public Key Cryptography,
  (preprint~: \href{http://www.exp-math.uni-essen.de/zahlentheorie/preprints/preprintfreylange03.pdf}{www.exp-math.uni-essen.de})

\bibitem{lit:FT0601}  
Fitzsimons, J.  \&  Twamley, J.  2006.
  Globally controlled quantum wires for perfect qubit transport, mirroring and quantum computing,
  (\href{http://arXiv.org/abs/quant-ph/0601120}{quant-ph/0601120})

\bibitem{lit:FXBJ06}  
Fitzsimons, J.  \&  Xiao, L.  \&  Benjamin, S.  \&  Jones, J.  2006.
  Quantum Information Processing with Delocalised Qubits under Global Control,
  (\href{http://arXiv.org/abs/quant-ph/0606188}{quant-ph/0606188})
    
\bibitem{lit:Fuchsthesis}
Fuchs, C. A.  1995.
  \emph{Distinguishability and Accessible Information in Quantum Theory},
  PhD. Thesis, Univ. New Mexico, Albuquerque.
  (\href{http://arXiv.org/abs/quant-ph/9601020}{quant-ph/9601020})

\bibitem{lit:Grover96}
Grover, L. K.  1996.
  A fast quantum mechanical algorithm for database search, 
  \emph{Proc. 28th Ann. ACM Symp. Theory of Computing,} pp. 212.
  (\href{http://arXiv.org/abs/quant-ph/9605043}{quant-ph/9605043})

\bibitem{lit:Halesthesis}
Hales, L. R.  2002.
  \emph{The Quantum Fourier Transform and Extensions of the Abelian Hidden Subgroup Problem},
  Univ. Calif. Berkeley.

\bibitem{lit:Hal0205}  
Hallgren, S.  2007.
  Polynomial-time quantum algorithms for Pell's equation and the principal ideal problem, 
  \emph{Journal of the ACM}.

\bibitem{lit:Hoeffding}
Hoeffding, W.  1963.
  Probability inequalities for sums of bounded random variables,
  \emph{Journal of the American Statistical Association}, Vol. 58, No. 301, 13--30.
   
\bibitem{lit:Hoy02}
H\o{}yer, P. \& Spalek, R.  2005.
  Quantum circuits with unbounded fan-out, 
  \emph{Theory of Computing},  1, no.~5, 81--103. 
  (\href{http://arXiv.org/abs/quant-ph/0208043}{quant-ph/0208043})

\bibitem{lit:Joz98}	
Jozsa, R.  1998.
  Quantum algorithms and the Fourier transform, 
  \emph{Proc. Royal Soc. Lond. A}. 

\bibitem{lit:Joz0302}  
Jozsa, R.  2003.
  Quantum computation in algebraic number theory: Hallgren's efficient quantum algorithm for solving Pell's equation, 
  \emph{Annals of Physics}  306  p. 241.

\bibitem{lit:KS0501}  
Karbach, P.  \&  Stolze, J.  2005.
  Spin Chains as Perfect Quantum State Mirrors,
  (\href{http://arXiv.org/abs/quant-ph/0501007}{quant-ph/0501007})

\bibitem{lit:Kit9511}  
Kitaev, A. Y.  1996.
  Quantum measurements and the Abelian stabilizer problem, 
  (\href{http://arXiv.org/abs/quant-ph/9511026}{quant-ph/9511026})
  
\bibitem{lit:Kit97}
Kitaev, A. Y.  1997.
  Quantum computations: algorithms and error correction,
  \emph{Russian Math. Surveys,} Vol. 52, No. 6, pp. 1191--1249.

\bibitem{lit:KL9812}  
Knill, E.  \&  Laflamme,	R.  1998.
  Power of One Bit of Quantum Information, 
  \emph{Phys. Rev. Lett.} 81 (25), 5672, APS.

\bibitem{lit:KL99}  
Knill, E.  \&  Laflamme, R.  1999.
  Quantum Computation and Quadratically Signed Weight Enumerators, 
  (\href{http://arXiv.org/abs/quant-ph/9909094}{quant-ph/9909094})

\bibitem{book:MandS}
MacWilliams, F. J.  \&  Sloane, J. A.  1996.
  \emph{The theory of error-correcting codes}, 
  Amsterdam: Elsevier Science.

\bibitem{lit:MN98}  
Moore, C. \& Nilsson, M.  2002.
  Parallel Quantum Computation and Quantum Codes,
  \emph{SIAM Journal on Computing} 31(3) pp. 799--815.
  (\href{http://arXiv.org/abs/quant-ph/9808027}{quant-ph/9808027})
  
\bibitem{lit:NW08}
Nagaj, D.  \&  Wocjan, P.  2008.
  Hamiltonian Quantum Cellular Automata in 1D,
  (\href{http://arXiv.org/abs/0802.0886}{arXiv.org/0802.0886})
 
\bibitem{lit:NRS0009}  
Nebe, G.  \&  Rains, E. M.  \&  Sloane, N. J. A.  2000.
  The Invariants of the Clifford Groups,  
  (\href{http://arXiv.org/abs/math/0001038}{math/0001038})

\bibitem{book:NandC}  
Nielsen, M.  \&  Chuang, I. 	2000.
  \emph{Quantum Computation and Quantum Information},  
  Cambridge University Press.	

\bibitem{book:Ox92}  
Oxley, J. G.  1992.
  \emph{Matroid theory}, 
  New York: Oxford University Press.

\bibitem{book:Papa}  
Papadimitriou, C. H.  1994.
  \emph{Computational Complexity},  
  Reading, MA: Addison-Wesley.

\bibitem{book:Penrose}
Penrose, R.  2005.
  \emph{The Road to Reality: A Complete Guide to the Laws of the Universe},
  New York.  Alfred A. Knopf, Inc.

\bibitem{lit:Raus03}  
Raussendorf, R.  2003.
  \emph{Messungsbasiertes Quantenrechnen mit Clusterzust\"anden},
  Ph.D. Thesis, Ludwig-Maximilians-Universit\"at M\"unchen.

\bibitem{lit:Rau05}   
Raussendorf, R.  2005.
  A quantum cellular automaton for universal quantum computation, 
  \emph{Phys. Rev. A} 72, 022301, APS. 

\bibitem{lit:R0501}  
Raussendorf, R.  2005.
  Quantum computation via translation-invariant operations on a chain of qubits, 
  \emph{Phys. Rev. A} 72, 052301, APS.

\bibitem{lit:RB01}  
Raussendorf, R.  \&  Briegel, H. J.  2001.
  A one-way quantum computer, 
  \emph{Phys. Rev. Lett.} 86, pp. 5188--5191, APS.

\bibitem{lit:SWVC08}  
Schuch, N.  \&  Wolf, M. M.  \&  Vollbrecht, K. G. H.  \&  Cirac, J. I.  2008. 
  On entropy growth and the hardness of simulating time evolution,
  \emph{New J. Phys.} 10, 033032. 
  (\href{http://arXiv.org/abs/0801.2078}{arXiv.org/0801.2078})

\bibitem{lit:SV99}
Schulman, L. \& Vazirani, U. V.  1999.
  Molecular Scale Heat Engines and Scalable Quantum Computation,
  \emph{31st STOC},  pp. 322--329.

\bibitem{lit:SW04}  
Schumacher, B.  \&  Werner, R. F.  2004.
  Reversible quantum cellular automata, 
  (\href{http://arXiv.org/abs/quant-ph/0405174}{quant-ph/0405174})

\bibitem{lit:Sev0601}
Severini, S.  2006.
  Universal quantum computation with unlabeled qubits,
  (\href{http://arXiv.org/abs/quant-ph/0601078}{quant-ph/0601078v3})

\bibitem{me:RQLN}  
DJS.  2006.
  On the Role of Hadamard Gates in Quantum Circuits,
  \emph{Quantum Information Processing}, Vol 5, No 3.	

\bibitem{me:IQC}  
DJS.  \&  Bremner, M.~J.  2008.
  Temporally unstructured quantum computation,
  \emph{Proc. Royal Soc. A}  doi: 10.1098/rspa.2008.0443  
  (\href{http://arXiv.org/abs/0809.0847}{arXiv.org/0809.0847})

\bibitem{me:UPQCA}  
DJS.  \&  Franz, T.  \&  Werner, R. F.  2006.
  Universally Programmable Quantum Cellular Automaton, 
  \emph{Phys. Rev. Lett.} 97, 020502, APS.

\bibitem{lit:Shi0205}  
Shi, Y.  2002.
  Both Toffoli and Controlled-NOT need little help to do universal quantum computation, 
  (\href{http://arXiv.org/abs/quant-ph/0205115}{quant-ph/0205115})

\bibitem{lit:Shi0312}  
Shi, Y.  2003.
  Quantum and Classical Tradeoffs, 
  (\href{http://arXiv.org/abs/quant-ph/0312213}{quant-ph/0312213})

\bibitem{lit:Shor95}  
Shor, P.  1994.
  Algorithms for quantum computation: Discrete logarithms and factoring, 
  \emph{Proc. 35th Ann. Symp. on Foundations of CS}, pp. 124--134.  
  (\href{http://arXiv.org/abs/quant-ph/9508072}{quant-ph/9508072})

\bibitem{lit:SJ08}
Shor, P. W.  \&  Jordan, S. P.  2008.
  Estimating Jones Polynomials is a Complete Problem for One Clean Qubit,
  (\href{http://arXiv.org/abs/0707.2831}{arXiv.org/0707.2831v3})

\bibitem{lit:Si97}  
Simon, D. R.  1997.
  On the power of quantum computing, 
  \emph{SIAM J. Comput.}, Vol. 26, No. 5, pp. 1474--1483.

\bibitem{lit:TD02}  
Terhal, B. M.  \&  DiVincenzo, D. P.  2004.
  Adaptive quantum computation, constant depth quantum circuits and Arthur-Merlin games,
  \emph{Quant. Inf. Comp.} 4, pp. 134--145. 
  (\href{http://arXiv.org/abs/quant-ph/0205133}{quant-ph/0205133})

\bibitem{lit:vdN08}
Van den Nest, M.  2008.
  Classical simulation of quantum computation, the Gottesman-Knill theorem, and slightly beyond,
  (\href{http://arXiv.org/abs/0811.0898}{arXiv.org/0811.0898})

\bibitem{lit:Vlasov} 
Vlasov, A. Y. 2007.
  Programmable Quantum Networks with Pure States, 
  \emph{Computer Science and Quantum Computing}, Nova Science Publishers, Ch. 2.
  (\href{http://arXiv.org/abs/quant-ph/0503230}{quant-ph/0503230})

\bibitem{lit:Vya03}
Vyalyi, M. N.  2003.
  Hardness of approximating the weight enumerator of a binary linear code, 
  (\href{http://arXiv.org/abs/cs/0304044}{cs/0304044})

\bibitem{lit:Wat95}  
Watrous, J.  1995.
  On one-dimensional quantum cellular automata, 
  \emph{Proc. 36th Ann. Symp. Foundations of Computer Science} pp. 528--537. 

\bibitem{lit:Watthesis}
Watrous, J. H.  1998.
  \emph{Space-Bounded Quantum Computation,}
  Ph.D. Thesis, Univ. Wisconsin, Madison.
  
\bibitem{lit:Wat04}
Watrous, J.  2004.
  On the complexity of Simulating Space-Bound Quantum Computations,
  \emph{Computational Complexity} 12, pp. 48--84, Springer.

\bibitem{lit:Wie0808}
Wiesner, K.  2008.
  Quantum Cellular Automata,
  (\href{http://arXiv.org/abs/0808.0679}{arXiv.org/0808.0679})

\bibitem{lit:YS06} 
Yoran, N.  \&  Short, A. J.  2006.
  Classical simulation of limited-width cluster state quantum computation,
  \emph{Phys. Rev. Lett.} 96, 1717, APS.
  (\href{http://arXiv.org/abs/quant-ph/0601178}{quant-ph/0601178})

\bibitem{lit:YBB06}
Yung, M-H.  \&  Benjamin, S. C.  \&  Bose, S.  2006.
  A Processor Core Model for Quantum Computing,
  (\href{http://arXiv.org/abs/quant-ph/0508165}{quant-ph/0508165})
  
\end{thebibliography}
\end{document}